\documentclass[11pt]{article}
\usepackage[dvipsnames]{xcolor}
\definecolor{Gred}{RGB}{219, 50, 54}
\definecolor{ToCgreen}{RGB}{0, 128, 0}

\usepackage[margin=1.0in]{geometry}
\usepackage[T1]{fontenc}
\usepackage[scale=0.97]{XCharter}

\usepackage{babel}
\usepackage[spacing=true,kerning=true,babel=true,tracking=true]{microtype}

\usepackage[normalem]{ulem}
\usepackage{qcircuit}
\usepackage{graphicx}
\usepackage{amsmath,amssymb,amsthm,mathrsfs,amsfonts,dsfont}
\usepackage{mathtools}
\usepackage{subfigure, epsfig}
\usepackage{braket}
\usepackage{bm}
\usepackage{enumerate}
\usepackage{xcolor}
\usepackage{comment}
\usepackage{scalefnt}
\usepackage{authblk}
\usepackage{nicematrix}
\usepackage{booktabs}
\definecolor{Gred}{RGB}{219, 50, 54}
\definecolor{ToCgreen}{RGB}{0, 128, 0}
\usepackage[colorlinks]{hyperref}
\usepackage{cleveref}
\hypersetup{
  colorlinks = true,
  citecolor  = ToCgreen,
  linkcolor  = Sepia,
  filecolor  = Gred,
  urlcolor   = Gred
  }
\usepackage{makecell,multirow,tabularx,tabulary}

\usepackage{pagecolor}
\usepackage{lmodern}
\usepackage[utf8]{inputenc}

\usepackage{tikzscale}
\usepackage{pgfplots}
\pgfplotsset{compat=newest}
\usetikzlibrary{plotmarks}
\usetikzlibrary{arrows.meta}
\usetikzlibrary{external}
\usepgfplotslibrary{patchplots}
\usepackage{grffile}
\usepackage{enumitem}

\DeclareMathOperator{\tr}{Tr}
\DeclareMathOperator{\Tr}{Tr}

\DeclarePairedDelimiter\ceil{\lceil}{\rceil}

\newcommand{\coleq}{\mathrel{\mathop:}\nobreak\mkern-1.2mu=}

\newcommand{\mc}{\mathcal}

\newcommand{\mr}{\mathrm}

\newcommand{\mbb}{\mathbb}

\newcommand{\ketbra}[2]{{\vert #1 \rangle\! \langle #2 \vert}}

\newcommand{\E}{\mathop{\mbb{E}}}

\newcommand{\bb}{\begin{equation}\begin{aligned}\hspace{0pt}}
\newcommand{\bbb}{\begin{equation*}\begin{aligned}}
\newcommand{\ee}{\end{aligned}\end{equation}}
\newcommand{\eee}{\end{aligned}\end{equation*}}
\newcommand{\eqt}[1]{\stackrel{\mathclap{\text{\scriptsize \mbox{#1}}}}{=}}
\newcommand{\leqt}[1]{\stackrel{\mathclap{\text{\scriptsize \mbox{#1}}}}{\leq}}
\newcommand{\geqt}[1]{\stackrel{\mathclap{\text{\scriptsize \mbox{#1}}}}{\geq}}

\makeatletter
\newcommand{\pushright}[1]{\ifmeasuring@#1\else\omit\hfill$\displaystyle#1$\fi\ignorespaces}
\makeatother

\newtheorem{lemma}{Lemma}

\newtheorem{proposition}{Proposition}

\newtheorem{theorem}{Theorem}

\definecolor{applegreen}{rgb}{0.55, 0.71, 0.0}

\newcommand{\comments}[1]{}
\usepackage{algorithm}  
\usepackage[noend]{algpseudocode}  
\newcommand{\algorithmfootnote}[2][\footnotesize]{%
  \let\old@algocf@finish\@algocf@finish%
  \def\@algocf@finish{\old@algocf@finish%
    \leavevmode\rlap{\begin{minipage}{\linewidth}
    #1#2
    \end{minipage}}%
  }%
}
\usepackage{xparse}
\NewDocumentCommand{\LeftComment}{s m}{%
  \Statex \IfBooleanF{#1}{\hspace*{\ALG@thistlm}}\(\triangleright\) #2}
\algnewcommand{\LineComment}[1]{\Statex // #1}

\begin{document}

\title{What resources are needed for\\ optimal learning of bosonic Gaussian states?}
\author[1]{Senrui Chen\thanks{\texttt{csenrui@gmail.com}}}
\author[2]{Francesco Anna Mele\thanks{\texttt{francesco.mele@sns.it}}}
\author[3]{Marco Fanizza}
\author[1]{Alfred Li}
\author[1]{Zachary Mann}
\author[1]{Hsin-Yuan~Huang}
\author[1]{Yanbei Chen}
\author[1]{John Preskill\thanks{\texttt{preskill@caltech.edu}}}
\affil[1]{\small \textit{Institute for Quantum Information and Matter, Caltech, Pasadena, CA 91125, USA}}
\affil[2]{\small \textit{Scuola Normale Superiore, Piazza dei Cavalieri 7, 56126 Pisa, Italy}}
\affil[3]{\small \textit{Inria, T\'el\'ecom Paris -- LTCI, Institut Polytechnique de Paris, France}}
\date{\today}
\maketitle

\begin{abstract}
Continuous-variable systems enable key quantum technologies in computation, communication, and sensing.
Bosonic Gaussian states emerge naturally in various such applications, including gravitational-wave and dark-matter detection. A fundamental question is how to characterize an unknown bosonic Gaussian state from as few samples as possible. Despite decades-long exploration, the ultimate efficiency limit remains unclear.
In this work, we study the necessary and sufficient number of copies to learn an $n$-mode Gaussian state, with energy less than $E$, to $\varepsilon$ trace distance closeness with high probability. A particular focus is what quantum resources and algorithmic capacity are needed to achieve the optimal sample complexity.

We prove a lower bound of $\Omega(n^3/\varepsilon^2)$ for Gaussian measurements, matching the best known upper bound up to doubly-log energy dependence, and ${\Omega}(n^2/\varepsilon^2)$ for arbitrary measurements.
We further show an upper bound of $\widetilde{O}(n^2/\varepsilon^2)$ given that the Gaussian state is promised to be either pure or passive. 
Interestingly, while Gaussian measurements suffice for nearly optimal learning of pure Gaussian states, non-Gaussian measurements are provably required for optimal learning of passive Gaussian states.
Finally, focusing on learning single-mode Gaussian states via non-entangling Gaussian measurements, we provide a nearly tight bound of $\widetilde\Theta(E/\varepsilon^2)$ for any non-adaptive schemes, showing adaptivity is indispensable for nearly energy-independent scaling.
As a key technical tool of independent interest, we establish sharp bounds on the trace distance between Gaussian states in terms of the total variation distance between their Wigner functions. In particular, this yields a nearly tight sample complexity of $\widetilde{\Theta}(n^{2}/\varepsilon^{2})$ for learning the Wigner distribution of any Gaussian state to $\varepsilon$ total variation distance, achievable with Gaussian measurements.
Our results significantly advance quantum learning theory in the bosonic regimes and have practical impact in quantum sensing and benchmarking applications.
\end{abstract}

\clearpage

\newpage

\tableofcontents

\newpage
\section{Introduction}\label{sec:intro}

\noindent Bosonic (continuous-variable) quantum information studies information processing in infinite-dimensional quantum systems, which describe a wide spectrum of physical systems including optical and microwave photons, vibrational phonons, motional modes of trapped atoms and ions, and axions. 
The field is as old as the notion of entanglement, tracing back to the EPR paradox~\cite{einstein1935can}.
Today, bosonic systems continue to be a competitive platform for cutting-edge quantum technologies, including fault-tolerant quantum computation based on bosonic codes~\cite{gottesman2000encoding,sivak2023real,campagne2020quantum}, demonstration of quantum advantage~\cite{aaronson2011computational,hamilton2017gaussian,zhong2020quantum,oh2024entanglement,Liu_2025}, continuous-variable quantum key distribution (CV-QKD)~\cite{grosshans2003quantum,jouguet2013experimental,Pirandola_2017,Mele_2025_natphot}, and interferometer-based quantum metrology for gravitational wave detections~\cite{aasi2015advanced,tse2019quantum}.

An important class of bosonic quantum states are known as \emph{Gaussian states}, defined as Gibbs states of quadratic Hamiltonians in the bosonic quadrature operators, or equivalently, states with Gaussian Wigner functions (to be defined later). 
The role of Gaussian states in bosonic quantum information is as fundamental as the role of Gaussian distributions in classical statistics and probability: both arise universally through the central limit theorem~\cite{beigi2024optimal,beigi2025monotonicity} and admit a rich analytic toolkit~\cite{weedbrook2012gaussian}. 
As a concrete example, in interferometer-based sensing setups such as LIGO, the signal-carrying output light is well approximated by a Gaussian state when the signal is modeled as a well-behaved classical stochastic process~\cite{gardner2025stochastic}. Gaussian states also play a crucial role in Gaussian Boson Sampling~\cite{hamilton2017gaussian} and Gaussian-modulation CV-QKD proposals~\cite{grosshans2003quantum}.

One fundamental question in quantum information is \emph{quantum state tomography}. That is, to learn a complete description of an unknown state by measuring many copies of it. The minimal number of copies to complete learning up to certain figures of merit (usually the trace distance) is called the \emph{sample complexity} of this task.
Experimentally speaking, a smaller sample complexity means one can measure a signal to a given precision using fewer experimental runs or shorter sensing time, depending on the concrete setup.
While tomography of bosonic states has been explored both in theory and experiments~\cite{serafini2023quantum}, the ultimate sample complexity is not well-understood, in contrast to its discrete-variable counterpart~\cite{ODonnell2016,Haah2017,chen2023does}. 
Recent work~\cite{mele2025learning} proves tomography of general bosonic states is extremely hard even with energy constraints, while tomography of Gaussian states can be done using a far more manageable number of samples~\cite{mele2025learning,Bittel_2025,fanizza2025,bittel2025energy}. {However, it remains open whether the current state-of-the-art protocols of~\cite{bittel2025energy} achieve the optimal sample complexity for Gaussian state tomography, since no meaningful lower bounds are currently known for this task. }%

In this work, we make significant progress towards filling this gap. We study the following questions:
\begin{center}
    {\em
        (1) What is the sample complexity for learning an $n$-mode Gaussian state with energy no more than $E$\\ to trace distance $\varepsilon$ with probability at least $2/3$?\\
        (2) What quantum resources and algorithmic capacity are needed to achieve sample-optimal learning?
    }
\end{center}
Many bosonic applications involve $n$-mode Gaussian states rather than single-mode. For example, to describe broadband searches for signals at unknown frequencies (e.g., in axion dark-matter searches~\cite{backes2021quantum, salemi2021search}), a large number of frequency-bin bosonic modes will be needed. Understanding how the complexity of learning scales with the number of modes $n$ is crucial for determining the ultimate efficiency of probing such elusive signals. 
The metric of trace distance is standard in the literature of quantum state tomography, as it precisely captures the maximum success probability of distinguishing two quantum states using any quantum measurements~\cite{helstrom1969quantum}. The success probability of $2/3$ is arbitrary and can be amplified straightforwardly.

We answer these questions under various different assumptions on the unknown Gaussian states and the measurement strategies. Our results can be summarized as follows (see also Table.~\ref{tab:bounds_summary}):
\begin{enumerate}
    \item We show a lower bound of $\Omega(n^3/\varepsilon^2)$ for classical (including Gaussian) measurements and ${\Omega}(n^2/\varepsilon^2)$ for arbitrary measurements. (See Sec.~\ref{sec:main_pre} for definitions of classical and Gaussian measurements.) The former establishes the protocols in~\cite{bittel2025energy} as nearly-optimal among all Gaussian measurement schemes\sloppy.
    We also show a tight lower bound of $\Omega(\bar E^2n^3/\varepsilon^2)$ for heterodyne measurements, matching the upper bound obtained in~\cite{bittel2025energy}. Here $\bar E$ is an upper bound on the per-mode energy of the unknown state. 
    \item With the additional assumptions that the unknown Gaussian state is either pure or passive, we show a nearly-tight bound of $\widetilde{\Theta}(n^2/\varepsilon^2)$.\footnote{Throughout this paper, tildes mean we hide log factors in $n$, $\varepsilon$, and polyloglog factors in $\bar E$. Similarly when we use the terms \emph{nearly-optimal} and \emph{nearly-tight}.} While for pure Gaussian states this can be achieved by Gaussian measurements, for passive Gaussian states this can only be achieved by non-classical (and thus non-Gaussian) measurements, as we show a tight bound of $\Theta(n^3/\varepsilon^2)$ for classical measurements in this task, manifesting a \emph{non-Gaussian advantage} in Gaussian state tomography. 
    \item For learning one-mode Gaussian states using non-entangling Gaussian measurements, we show a nearly tight bound of $\widetilde\Theta(E/\varepsilon^2)$ for non-adaptive schemes (which means the measurement setting cannot be chosen based on previous rounds of outcomes). This contrasts the doubly-logarithmic energy dependence from~\cite{bittel2025energy}, showing that adaptivity is necessary for nearly energy-independent sample complexity.
    \item In establishing these results, a crucial ingredient is to understand the connection and difference between learning a Gaussian state to $\varepsilon$ trace distance and learning its Wigner distribution to $\varepsilon$ total variational (TV) distance. The latter closely resembles the classical task of learning Gaussian distributions from samples.  
    As a byproduct of our main results, we obtain a nearly-tight sample complexity bound of $\widetilde{\Theta}(n^2/\varepsilon^2)$ for learning the Wigner distributions of any $n$-mode Gaussian state to $\varepsilon$ TV distance, which can be achieved by Gaussian measurements (see Theorem~\ref{th:main_wigner}).
    We also obtain a sharp bound between Gaussian states' trace distance and their Wigner TV distance (see Lemma~\ref{le:main_wigner_relation}): For any two $n$-mode Gaussian states $\rho_1,\rho_2$ with respective Wigner distributions $W_1,W_2$, it holds that $\Omega(\mr{TV}(W_1,W_2))\le D_\mr{tr}(\rho_1,\rho_2)\le \sqrt n\cdot O(\mr{TV}(W_1,W_2))$ whenever $D_\mr{tr}(\rho_1,\rho_2)$ is smaller than an absolute constant. 
    Moreover, there exist instances for any $n$ such that $D_\mr{tr}(\rho_1,\rho_2) \ge\frac14\sqrt n\,\mr{TV}(W_1,W_2)$. 
    We also show this $\sqrt n$ separation vanishes for pure Gaussian states.
    We believe these results are of independent interest.
\end{enumerate}

\paragraph{Notes added.} A few months after the first version of this manuscript was posted to arXiv, two preprints appear that further improved upon our results: First, \cite{chen2026optimal_merged} obtains that $\Theta(n^2/\varepsilon^2)$ samples are necessary and sufficient to learn any bosonic Gaussian state, closing a gap from our work. 
Combined with out results, this shows a \emph{non-Gaussian advantage} for learning general bosonic Gaussian states.
Second, \cite{chen2026log} shows that (among other results) there exists a \emph{non-Gaussian} non-adaptive single-copy measurement that learns any zero-mean one-mode Gaussian state using $\Theta(1/\varepsilon^2)$ samples. Combined with our Theorem~\ref{th:main_nonada}, this shows another form of non-Gaussian advantage.

\begin{table*}[!tp]\label{tab:summary}
    \centering
    \renewcommand{\arraystretch}{1.25}
    \begin{tabulary}{\textwidth}{@{} L C C @{}}
        \toprule
        Class of states & Class of measurements & Bounds \\
        \midrule

        \multirow{3}{*}{All Gaussian}
        & Classical (including Gaussian)
          & $\widetilde{O}(n^3/\varepsilon^2)$\,\cite{bittel2025energy},~ ${\Omega}(n^3/\varepsilon^2)$\,Thm.~\ref{th:main_lo_gaussian} \\
          \cmidrule(lr){2-3}
        & Any 
          & $\widetilde{O}(n^3/\varepsilon^2)$\,\cite{bittel2025energy},~ ${\Omega}(n^2/\varepsilon^2)$\,Thm.~\ref{th:main_lo_any} \\
          \cmidrule(lr){2-3}
        & Heterodyne 
          & ${O}(\bar E^2n^3/\varepsilon^2)$\,\cite{bittel2025energy},~ ${\Omega}(\bar E^2n^3/\varepsilon^2)$\,Thm.~\ref{th:main_lo_heterodyne} \\
          
        \midrule
        Pure Gaussian
          & Any or Gaussian
          & $\widetilde{\Theta}(n^2/\varepsilon^2)$~Thm.~\ref{th:main_lo_any},\,\ref{th:main_up_pure}\\

        \midrule
        \multirow{2}{*}{Passive Gaussian}
          & Any 
          & $\widetilde{\Theta}(n^2/\varepsilon^2)$~Thm.~\ref{th:main_up_passive}\\
          \cmidrule(lr){2-3}
          & Classical (including Gaussian)
          & {${\Theta}(n^3/\varepsilon^2)$~Thm.~\ref{th:main_lo_gaussian},\,\ref{th:main_up_passive}}\\

        \midrule
        1-mode Gaussian
          & Non-adaptive 1-copy Gaussian 
          & $\widetilde{\Theta}(\bar{E}/\varepsilon^2)$~Thm.~\ref{th:main_nonada}\\

        \bottomrule
    \end{tabulary}

    \caption{ Summary of results. The task is learning an $n$-mode Gaussian states $\rho(\mu,\Sigma)$ to $\varepsilon$ trace distance closeness with probability at least $2/3$, given the promise that $\|\Sigma\|_{\mr{op}}\le \bar E$. 
    Here $\widetilde O$, $\widetilde \Omega$, $\widetilde \Theta$, suppress $\log n$, $\log \varepsilon^{-1}$, and $\mr{poly}\log\log {\bar E}$ factors.}
    \label{tab:bounds_summary}
\end{table*}

\section{Preliminaries}\label{sec:main_pre}

Here, we introduce some background of bosonic quantum information. More details can be found in App.~\ref{sec:pre}. 
An $n$-mode bosonic system has Hilbert space $\mc H_{\mathrm{op}}^n\cong L^2(\mbb R^n)$, with $2n$ quadrature operators $\hat R\coleq (\hat x_1,\cdots,\hat x_n,\hat p_1,\cdots,\hat p_n)$ that satisfy the canonical commutation relation $[\hat R_k,\hat R_l]=i\Omega_{k,l}$ where $\Omega=\begin{pmatrix}
    0_n & I_n\\-I_n& 0_n
\end{pmatrix}$. The mean $\mu$ and covariance $\Sigma$ of a quantum state $\rho$ is given by 
\bb
    \mu_i\coleq \Tr(\hat R_i\rho),\quad \Sigma_{k,l}\coleq\frac12\Tr\left(\{\hat R_k,\hat R_l\}\rho\right) - \mu_k\mu_l.
\ee
Define the displacement operator $\hat D(\xi)\coleq\exp(i\xi^T\Omega\hat R)$ for all $\xi\in\mbb R^{2n}$. 
For any operator $\hat O$, the \emph{Wigner function} of it is defined by $W_{\hat O}(r)\coleq\frac{1}{(2\pi)^{2n}}\int\!\mr{d}^{2n}\xi\,\exp(i\xi^T\Omega r)\Tr[\hat O\hat D(\xi)]$. 
For a density operator $\rho$, its Wigner function satisfies the normalization condition $\int\!\mr{d}^{2n}r\,W_\rho(r)=1$. Any observable expectation value can be computed using Wigner functions according to $\Tr(\rho\hat O)= (2\pi)^n\int\!\mr{d}^{2n}r\,W_\rho(r) W_{\hat O}(r)$.

A \emph{Gaussian state} $\rho$ is a quantum state whose Wigner function is a Gaussian distribution, i.e., $W_\rho(r)=\mc N(\mu,\Sigma)$. They are completely determined by their mean and covariance, usually denoted by $\rho(\mu,\Sigma)$. Given a $2n$ by $2n$ real positive definite matrix $\Sigma$, the necessary and sufficient conditions for it to be a covariance matrix of some Gaussian state is $\Sigma+ \frac12i\Omega\ge 0$. Furthermore, it corresponds to a pure Gaussian state if $\det\Sigma=4^{-n}$. 
A Gaussian state is called \emph{passive} if it can be obtained by applying passive linear optical operations on a thermal states (see App.~\ref{sec:pre} for more details). 
Familiar examples of Gaussian states include coherent states, squeezed states, and thermal states.

Any measurement on a quantum system can be mathematically described by a POVM, i.e., a set of positive operators that sum or integrate to identity. We refer to a POVM as a \emph{classical measurement} if every element of it has a non-negative Wigner function. 
Here “classical” refers to the existence of a classical phase-space description of the measurement statistics, not to the physical implementation of the measurement device.
The most well-known example of classical measurements are Gaussian measurements (or, general-dyne measurements), which is of the form $\{\frac1{(2\pi)^n}D(\xi)\rho(0,V)D(\xi)^\dagger\}_{\xi\in\mbb R^{2n}}$ where $\rho(0,V)$ is a Gaussian state (thus $V+\frac i2\Omega\ge0$) called the seed of this measurement. 
The outcome distribution obtained by applying a Gaussian measurement with seed covariance matrix $V$ on a Gaussian state $\rho(\mu,\Sigma)$ is given by $\mc N(\mu,\Sigma + V)$.
Two examples of Gaussian measurements that are commonly used in quantum optics experiments are heterodyne measurements ($V=\frac12I_{2n}$) and homodyne measurements ($V=K\frac12\mr{diag}(a,\cdots,a,a^{-1},\cdots,a^{-1})K^T$ where $a\rightarrow\infty$ and $K$ is an orthogonal symplectic matrix determining the measured quadratures).
As for non-classical measurements, one typical example is \emph{photon number counting}, i.e., projecting to the eigen basis of the number operator $\hat n_i\coleq\frac12(\hat x_i^2+\hat p_i^2-1)$ at each mode. 
We will soon see that these two classes of measurements have sharply different power for Gaussian state learning.

For two quantum states $\rho$ and $\sigma$, their trace distance is defined as $D_\mr{tr}(\rho,\sigma)=\frac12\tr|\rho-\sigma|$. For two classical probability distributions $p$ and $q$ over $\mc X$, their total variation distance is defined as $\mr{TV}(p,q)=\frac12\sum_{x\in\mc X}|p(x)-q(x)|$. Both distances operationally characterize the distinguishability between two quantum states and classical distributions, respectively.

\section{Results}\label{sec:results}

\subsection{Warm-up: Learning Gaussian Wigner distributions}\label{sec:main_wigner}

Since any Gaussian state is characterized by a Wigner function that is a Gaussian distribution (i.e., a \emph{Gaussian Wigner distribution}), it is natural to compare the state learning problem to the classical task of learning a $2n$-dimensional classical Gaussian distribution to $\varepsilon$ TV distance from samples, the sample complexity of which is known to be $\Theta(n^2/\varepsilon^2)$~\cite{ashtiani2020near,devroye2020minimax}.
Here, we ask the following question: How many copies of an $n$-mode Gaussian state are needed to learn its Wigner distribution to $\varepsilon$ TV distance with high probability? 
The answer is not a priori clear from the classical results, as accessing quantum states can be very different than accessing classical samples.
Nevertheless, we show that a nearly-tight sample complexity bound for this task is given by $\widetilde{\Theta}(n^2/\varepsilon^2)$, as formalized by the following theorem.
\begin{theorem}[Nearly-optimal learning of Gaussian Wigner distributions]\label{th:main_wigner}
    $N=\Omega(n^2/\varepsilon^2)$ copies of an $n$-mode Gaussian state $\rho(\mu,\Sigma)$ are necessary to learn its Wigner distribution to $\varepsilon$ TV distance with probability $2/3$ using any measurements. Furthermore, there exists a Gaussian measurement scheme that achieves this using $N=\widetilde{O}(n^2/\varepsilon^2)=O\!\left(n^2/\varepsilon^2 + (n +\log\log\log \bar E )\log\log \bar E\right)$ copies given the promise that $\|\Sigma\|_\mr{op}\le \bar E$.
\end{theorem}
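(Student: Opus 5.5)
We split the argument into a lower bound and an upper bound.

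\emph{Lower bound.} The plan is to embed a hard packing of Gaussian states into a Holevo/Fano argument. We take a family of $n$-mode Gaussian states $\rho(0,\Sigma_z)$ indexed by $z$ in a Gilbert--Varshamov code $\{0,1\}^{m}$ with $m=\Theta(n^2)$. A natural choice is $\Sigma_z = \bar\Sigma + \frac{\varepsilon}{n}\, S(z)$, where $\bar\Sigma$ is a thermal covariance of constant temperature, say $\bar\Sigma=\nu I$ with $\nu\ge 1$. The perturbation $S(z)$ is a symmetric matrix with entries $\pm1$ on an off-diagonal block determined by $z$. The thermal background keeps $\Sigma_z+\frac i2\Omega\ge0$ for small $\varepsilon$.

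Two facts then need to be checked:
\begin{enumerate}
\item Distinct codewords give Wigner distributions at TV distance $\Omega(\varepsilon)$. This follows from the classical estimate $\mr{TV}(\mc N(0,\Sigma_1),\mc N(0,\Sigma_2))=\Theta(\min\{1,\|\Sigma_1^{-1/2}\Sigma_2\Sigma_1^{-1/2}-I\|_F\})$ together with the fact that the code has Hamming distance $\Omega(n^2)$.
\item The Holevo information of $N$ copies is $O(N\varepsilon^2)$. Here we bound the mutual information by $N\max_z D(\rho_z\|\rho_{\bar\Sigma})$. The relative entropy between nearby Gaussian states is quadratic in the covariance perturbation, with a constant depending only on $\nu$, so $D=O(\|\tfrac{\varepsilon}{n}S\|_F^2)=O(\varepsilon^2)$.
\end{enumerate}
Fano then gives $N\varepsilon^2\gtrsim \log|\mathcal C|=\Omega(n^2)$, which is the claimed lower bound. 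This argument holds for arbitrary (even entangled, adaptive) measurements, since Holevo bounds everything.

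\emph{Upper bound.} We estimate $\mu,\Sigma$ by a two-stage Gaussian measurement scheme.

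In the first stage, we use the adaptive energy-reduction procedure underlying the doubly-logarithmic energy dependence of \cite{bittel2025energy}. It iteratively applies heterodyne measurements to obtain a coarse estimate $\hat\Sigma_0$ with $\hat\Sigma_0^{-1/2}\Sigma\hat\Sigma_0^{-1/2}$ within constant factors of identity. This is paired with a symplectic (Williamson) normalization: we choose a symplectic $S_0$, implementable by Gaussian unitaries, with $S_0\Sigma S_0^T$ of $O(1)$ operator norm and bounded below by $\frac12 I$-type constraints. This accounts for the $O((n+\log\log\log\bar E)\log\log\bar E)$ term, since each of the $O(\log\log\bar E)$ rounds needs $O(n)$ samples plus a union-bound overhead.

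In the second stage, we heterodyne the normalized state. The samples are i.i.d.\ $\mc N(S_0\mu, S_0\Sigma S_0^T+\frac12 I)$. We run the classical $O(n^2/\varepsilon^2)$ Gaussian learner on these samples, obtaining an estimate of $\Sigma'+\frac12 I$ (where $\Sigma'=S_0\Sigma S_0^T$) to relative Frobenius error $\varepsilon$, and of the mean to Mahalanobis error $\varepsilon$. We subtract $\frac12 I$. Because $\Sigma'$ is bounded below by a constant, subtracting the known noise increases the relative Frobenius error only by a constant factor. Mapping back through $S_0^{-1}$ preserves Wigner TV, since TV is invariant under a common invertible linear map. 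The classical TV characterization then gives TV error $O(\varepsilon)$.

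\emph{Main obstacle.} The main difficulty is the deconvolution step: heterodyne adds vacuum noise, and learning $\Sigma$ from $\Sigma+\frac12I$ loses precision if $\Sigma$ has eigenvalues near $\frac12$ relative to the noise. The normalization must therefore guarantee $\Sigma'\gtrsim I$ in a way compatible with the uncertainty principle. Symplectic eigenvalues are $\ge\frac12$, but ordinary eigenvalues of a squeezed state can be tiny. I would try choosing $S_0$ to bring $\Sigma$ close to a diagonal Williamson form $\mr{diag}(\nu_i,\nu_i)$ with $\nu_i\ge\frac12$, so that $\Sigma'\ge\frac12 I$ up to constant error. Heterodyne noise then at most triples the covariance, giving the constant-factor loss.
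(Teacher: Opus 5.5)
Your upper bound is the paper's argument. The paper invokes the learn-to-unsqueeze lemma of \cite{bittel2025energy} (Lemma~\ref{le:unsqueeze}), which guarantees $\|\Sigma'^{-1}\|_{\mathrm{op}}\le 4$, i.e.\ $\Sigma'\ge\frac14 I$. It then heterodynes, runs the classical estimator, subtracts $\frac12 I$, and uses $(\Sigma'+\frac12 I)^{-1}\ge\frac13\Sigma'^{-1}$ to transfer the errors, exactly as you describe. So the ``main obstacle'' you raise is settled by that lemma's guarantee. Your Williamson-normalization alternative also works: symplectic eigenvalues are Loewner-monotone, so $\hat\Sigma_0\ge\Sigma/c_2$ forces $\hat D\ge\frac{1}{2c_2}I$, and with $S_0=\hat S^{-1}$ you get $S_0\Sigma S_0^T\ge c_1\hat D\ge\frac{c_1}{2c_2}I$. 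Drop the claim that $S_0\Sigma S_0^T$ can be made $O(1)$ in operator norm. Symplectic eigenvalues are invariant under $S_0$, so for example a thermal state with $\nu\sim\bar E$ cannot be normalized this way. Only the lower bound is needed, so nothing breaks.

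Your lower bound is correct but takes a different route from the paper. The paper uses the Ashtiani-type packing by rank-$\lceil n/9\rceil$ projectors $U_aU_a^T$, with $\Sigma_a$ equal to twice a pure covariance. Every $\rho_a$ then has all symplectic eigenvalues equal to $1$, and its Hamiltonian matrix is exactly $H_a=\frac{\log 3}{2}\Sigma_a^{-1}$. Joint convexity plus the identity $D(\rho_1\|\rho_2)+D(\rho_2\|\rho_1)=\Tr[(\Sigma_1-\Sigma_2)(H_2-H_1)]$ then gives $\chi\le N\cdot O(\varepsilon^2)$ by a direct trace computation (Lemma~\ref{le:marco_holevo_bound}). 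The $\Omega(\varepsilon)$ Wigner-TV separation is inherited from the pure-state trace distances via Lemma~\ref{le:pure-trace-tvd}, so one ensemble serves both Theorem~\ref{th:wigner} and Theorem~\ref{th:non-gaussian-lower-stronger}.

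Your GV-code perturbation $\nu I+\frac{\varepsilon}{n}S(z)$ makes the separation a one-line Frobenius computation, and your reference-state bound $\chi\le N\max_z D(\rho_z\|\rho_{\nu I})$ is fine. The price is that the dimension-free estimate $D(\rho_z\|\rho_{\nu I})\le C_\nu\|\frac{\varepsilon}{n}S(z)\|_F^2$ becomes a genuine perturbative lemma. It is the crux of your $O(N\varepsilon^2)$ bound, and you assert it rather than prove it. It does hold:
\begin{itemize}
\item Since $S(z)$ is traceless, $D(\rho_z\|\rho_{\nu I})=S(\rho_{\nu I})-S(\rho_z)$, and the first-order term vanishes.
\item The Gaussian entropy is a smooth spectral function of $\Sigma^{1/2}i\Omega\Sigma^{1/2}$. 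Its spectrum stays near $\pm\nu$ because $\|\frac{\varepsilon}{n}S(z)\|_{\mathrm{op}}=O(\varepsilon)$, which yields a Frobenius-norm Hessian bound depending only on $\nu$.
\item Alternatively, feed a Frobenius--Lipschitz bound for $\Sigma\mapsto H$ near $\nu I$ into the symmetrized identity above.
\end{itemize}
You should include one of these arguments. The paper's constant-symplectic-spectrum ensemble is exactly what lets it avoid this step.
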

\begin{proof}[Proof sketch]
The upper bound is essentially established using the nearly energy-independent adaptive learning algorithm proposed in~\cite{bittel2025energy}. At a high level, the algorithm starts with a heterodyne measurement and recursively adjusts the Gaussian measurement seed to be more aligned with the unknown state, based on the obtained measurement outcomes. After $\Theta(\log\log{\bar E})$ steps, each measuring $\Theta(n+\log\log\log\bar E)$ copies, a sufficiently good seed is found with high probability.
Using this seed, one can effectively sample from the Wigner distribution of $\rho$ with a negligible
amount of quantum noise, thus retrieve the $O(n^2/\varepsilon^2)$ sample complexity for learning classical Gaussians.
A more detailed explanation and careful analysis on the role of adaptivity in this algorithm will be presented in Sec.~\ref{sec:energy_adaptivity}.

As for the lower bound, we use a proof technique that will repeatedly appear in the rest of this paper, which is based on the Fano's method~\cite{yu1997assouad}. We first construct a large ensemble of Gaussian states of size $2^{\Omega(n^2)}$ such that the distance separating each pair of states is at least $\Omega(\varepsilon)$. For this proof, the distance refers to the TV distance between Wigner distributions, but we will make use of trace distance in other proofs. The ensemble construction, inspired by the classical literature of~\cite{ashtiani2020near}, is intuitively depicted in Fig.~\ref{fig:construction} and detailed in App.~\ref{sec:wigner_learning}: Basically, it is obtained by applying a large number of beam splitter networks to some fixed initial Gaussian product states, with each network labeling an output state of the ensemble.
Next, we define a communication task between two players, Alice and Bob: Alice selects a state $\rho$ from the ensemble uniformly at random and sends $N$ copies of $\rho$ to Bob. Bob then tries to guess Alice's choice of $\rho$ by measuring these $N$ copies using any measurement. If a learning algorithm that satisfies the assumption of Theorem~\ref{th:main_wigner} exists, Bob can use that to guess correctly with probability at least $2/3$. Fano's inequality then implies Alice and Bob share at least $\Omega(n^2)$ bits of mutual information. On the other hand, we develop a technique to upper bound the Holevo quantity of the state ensemble by $N\cdot O(\varepsilon^2)$, which puts an upper bound on the achievable mutual information according to the Holevo theorem~\cite{holevo1973bounds}. We then conclude that $N=\Omega(n^2/\varepsilon^2)$ copies of the state $\rho$ are necessary.
\end{proof}

\begin{figure}
    \centering
    \includegraphics[width=0.5\linewidth]{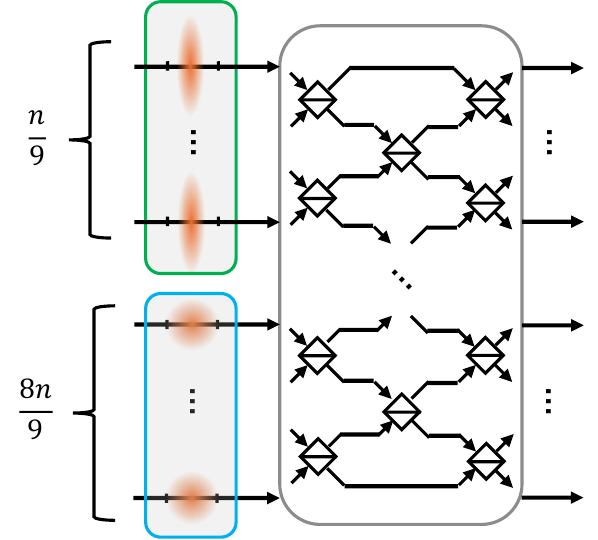}
    \caption{An illustration of the Gaussian state ensembles used in our lower bound proof. The left two boxes depict a total of $n$ initial single-mode Gaussian states. In this specific example, they consist of $n/9$ pure squeezed states and $8n/9$ vacuum states, which is used for the proof of Theorem~\ref{th:main_lo_any}. For the other lower bound proofs, we will make different choices of the initial states. The right box depicts a Gaussian unitary consisting solely of beam splitters. Each choice of the unitary defines an output Gaussian state of the ensemble, and we have $2^{\Omega(n^2)}$ different choices in total. This second part is the same for all ensembles that we use in different lower bound proofs.}
    \label{fig:construction}
\end{figure}

At first glance, Theorem~\ref{th:main_wigner} appears to address our problem. Indeed, learning Gaussian Wigner distributions from quantum samples has similar sample complexity to learning Gaussian distributions from classical samples, and Gaussian measurement achieves the nearly optimal performance. 
However, this is \emph{not} the same as learning the state up to trace distance.
We obtain the following result on the relation between trace distance and Wigner TV distance for Gaussian states. 

\begin{lemma}[Sharp bounds between trace distance and Wigner TV distance for Gaussian states]\label{le:main_wigner_relation}
    There exist absolute constants $c_0,c_1,c_2,c_3\ge0$ such that the following holds:
    For any $n\in\mbb N_+$, let $\rho_1$ and $\rho_2$ be any $n$-mode Gaussian states such that $D_\mr{tr}(\rho_1,\rho_2)\le c_0$. Then,
    \bb
        c_1\mr{TV}(W_{\rho_1},W_{\rho_2})\le D_\mr{tr}({\rho_1,\rho_2})\le c_2\sqrt n\,\mr{TV}(W_{\rho_1},W_{\rho_2}).
    \ee
    Furthermore, when both $\rho_1$ and $\rho_2$ are pure,
    \bb
        D_\mr{tr}({\rho_1,\rho_2})\le c_3\mr{TV}(W_{\rho_1},W_{\rho_2}).
    \ee
    All inequalities are sharp for any $n$ up to constants.
\end{lemma}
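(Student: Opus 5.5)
The plan is to reduce both sides to explicit functions of the means and covariances, after first putting the pair into a canonical frame. Trace distance is invariant under Gaussian unitaries. Wigner TV distance is invariant under the induced affine symplectic maps $r\mapsto Sr+d$ of phase space. So I may assume $\mu_1=0$ and $\Sigma_1=D\oplus D$ with $D=\mr{diag}(\nu_1,\dots,\nu_n)$, $\nu_j\ge\frac12$ (Williamson form). Write $\delta\mu=\mu_2$ and $\Delta=\Sigma_1^{-1/2}(\Sigma_2-\Sigma_1)\Sigma_1^{-1/2}$. For the classical side I will use the known characterization of TV distance between Gaussians (Devroye--Mehrabian--Reddad, also used in \cite{ashtiani2020near}):
\[
\mr{TV}(W_{\rho_1},W_{\rho_2})\asymp \min\bigl\{1,\max(\|\Delta\|_F,\ \|\Sigma_1^{-1/2}\delta\mu\|_2)\bigr\}.
\]
The whole lemma then becomes a comparison between $D_\mr{tr}$ and $\|\Delta\|_F+\|\Sigma_1^{-1/2}\delta\mu\|$ in this frame.

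\emph{Lower bound $c_1\mr{TV}\le D_\mr{tr}$.} Here I would go through fidelity. If $D_\mr{tr}\le c_0$, then the Bures distance is small, so the states are close and a second-order expansion is justified. The expansion is controlled by the quantum (SLD) Fisher metric of the Gaussian family, which is explicit in the symplectic eigenbasis. For the mean, the quantum metric is $\delta\mu^T M\delta\mu$ with $M\ge\Sigma_1^{-1}$. For the covariance, the entry pairing modes $j,k$ carries a weight of order $1/(\nu_j\nu_k\mp\frac14)$. The corresponding classical Fisher weight for $\mc N(0,\Sigma_1)$ is of order $1/(\nu_j\nu_k)$, i.e.\ $\frac14\|\Delta\|_F^2$. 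Comparing entrywise gives $\text{Bures}^2\gtrsim \|\Delta\|_F^2+\|\Sigma_1^{-1/2}\delta\mu\|^2\gtrsim \mr{TV}^2$. Combining this with $D_\mr{tr}\ge 1-F$ and an exact fidelity formula, rather than only the infinitesimal one, handles the non-infinitesimal regime. The smallness constant $c_0$ is what keeps us inside the quadratic regime.

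\emph{Upper bounds.} For general states I will prove a scale-invariant trace-norm bound in the normalized frame, $D_\mr{tr}(\rho_1,\rho_2)\le C\bigl(\|\Delta\|_1+\|\Sigma_1^{-1/2}\delta\mu\|\bigr)$. The Frobenius-type analogue fails near pure modes, so a trace-norm bound is what is needed here. To get it, I will first interpolate along a path of Gaussian states. Second, I will bound the derivative of the state in trace norm by decomposing the generator mode-by-mode (Cauchy--Schwarz/triangle inequality over the $2n\times 2n$ entries). Third, I will use the existing energy-dependent bounds of the form $D_\mr{tr}\lesssim \Tr|\Sigma_1-\Sigma_2|+\|\delta\mu\|$ from \cite{mele2025learning,bittel2025energy}, applied after the normalization $\Sigma_1\mapsto D\oplus D$ so that the relevant energy scale is absorbed. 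Then $\|\Delta\|_1\le\sqrt{2n}\|\Delta\|_F$ gives the $c_2\sqrt n$ bound. For pure states, both states are $\hat U|0\rangle$ for Gaussian unitaries. Trace distance equals $\sqrt{1-F}$, and the pure-state overlap formula $F=\det(\cdot)^{-1/2}\exp(\cdot)$ expands to Frobenius order. The quantum metric restricted to the pure manifold has bounded weights, since all $\nu_j=\frac12$ and tangent directions preserve purity. That bounded quantum metric matches $\|\Delta\|_F^2$ up to constants, giving $c_3$.

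\emph{Sharpness.} For the lower bound and the pure-state bound, single-mode displacement or squeezing examples suffice. For the $\sqrt n$ gap I will take $\rho_1=\ket{0}\!\bra{0}^{\otimes n}$ and $\rho_2$ the $n$-fold thermal state with mean photon number $\delta\sim\varepsilon/n$. Then $D_\mr{tr}\approx 1-(1-\delta)^n\approx n\delta$, while $\Delta$ is diagonal with entries $\sim\delta$, so $\mr{TV}\asymp\sqrt n\,\delta$. This gives $D_\mr{tr}\ge\frac14\sqrt n\,\mr{TV}$.

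The main obstacle is the general upper bound, i.e.\ a scale-invariant trace-norm estimate valid uniformly as symplectic eigenvalues approach $\frac12$. The smooth Fisher-metric argument breaks down there, as the rank-change phenomenon in the sharpness example shows. I would first try to make the path-integration argument work with a mode-wise telescoping over a product form in the Williamson basis. This is where I am least confident the constants stay absolute.
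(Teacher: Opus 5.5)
Your pure-state bound and your sharpness example are fine and follow the paper. For the pure-state bound, make sure the exact formula, and not just its Taylor expansion, is quadratic: purity makes the eigenvalues of $\Sigma_2^{-1/2}\Sigma_1\Sigma_2^{-1/2}$ come in reciprocal pairs, which kills the first-order term of $\det(\Sigma_1+\Sigma_2)$.

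Both general inequalities, however, have gaps. For $c_1\mathrm{TV}\le D_{\mathrm{tr}}$, your entrywise comparison of the SLD metric with the classical Fisher metric at $\Sigma_1$ is correct, but it is purely infinitesimal, and the passage to finite separations is not justified.
\begin{itemize}
\item Saying that $D_{\mathrm{tr}}\le c_0$ puts the parameters close enough for a second-order expansion presupposes the conclusion. It would also need remainder control uniform in $n$ and as $\nu_j\to\tfrac12$, where the metric is singular.
\item Integrating a metric comparison does not help. The Bures distance is an infimum of lengths over all paths of states, including non-Gaussian ones. A comparison on the Gaussian submanifold therefore only lower-bounds the intrinsic submanifold distance, which is itself at least the Bures distance, so it goes the wrong way.
\item Falling back on the exact mixed-state Gaussian fidelity formula is a heavy computation you have not sketched. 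Also, with the squared convention, $D_{\mathrm{tr}}\ge 1-F$ requires one state to be pure; in general only $D_{\mathrm{tr}}\ge 1-\sqrt F$ holds.
\end{itemize}
The missing idea is data processing through a single general-dyne measurement with seed $V=\Sigma_1$. It maps $\rho_1,\rho_2$ to $\mathcal N(\mu_1,2\Sigma_1)$ and $\mathcal N(\mu_2,\Sigma_1+\Sigma_2)$. Their Frobenius and Mahalanobis parameters are exactly $\tfrac12\|\Delta\|_F$ and $\tfrac1{\sqrt2}\|\Sigma_1^{-1/2}\delta\mu\|_2$. Two applications of Lemma~\ref{le:mahalanobis} then give the bound at finite distance, with $c_0$ only serving to stay in that lemma's regime. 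Your Fisher-metric comparison is essentially the infinitesimal version of this, since the QFI dominates the Fisher information of that measurement.

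For the general upper bound, your third step fails as stated. The Williamson normalization removes squeezing but not the thermal scale $\nu_j$. So neither $\Tr|\Sigma_1-\Sigma_2|+\|\delta\mu\|$ nor even the dimensionless Lemma~\ref{le:perturbation} is $O(\|\Delta\|_1+\|\Sigma_1^{-1/2}\delta\mu\|_2)$ with an absolute constant. For example, take two modes with $\Sigma_1=\mathrm{diag}(1,\nu,1,\nu)$ and $\Sigma_2=\Sigma_1+t(e_{x_1}e_{x_2}^T+e_{x_2}e_{x_1}^T)$, $|t|<\tfrac12$. Both bounds give order $|t|$, whereas $\|\Delta\|_1=2|t|/\sqrt\nu$, so you lose an energy-dependent factor $\sqrt\nu$.

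Similarly, a triangle inequality over the $2n\times 2n$ entries of the generator produces the entrywise $\ell_1$ norm. This can exceed $\|\Delta\|_1$ by a factor $2n$ (take $\Delta\propto vv^T$ with $v$ flat), and would cost you the $\sqrt n$.

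What you need is the integral form contained in the proof of Theorem~8 of \cite{bittel2025energy}, where the derivative is normalized by the covariance at the current point of the path:
\[ D_{\mathrm{tr}}\le \tfrac{1}{\sqrt2}\|\Sigma_1^{-1/2}\delta\mu\|_2+\tfrac{1+\sqrt3}{4}\int_0^1\big\|\Sigma_\alpha^{-1/2}(\Sigma_2-\Sigma_1)\Sigma_\alpha^{-1/2}\big\|_1\,\mathrm{d}\alpha,\qquad \Sigma_\alpha=\Sigma_1+\alpha(\Sigma_2-\Sigma_1). \]
The integrand is symmetric with eigenvalues $\lambda_j/(1+\alpha\lambda_j)$, where $\lambda_j$ are the eigenvalues of $\Delta$. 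Hence the integral equals $\sum_j|\log(1+\lambda_j)|\le 2\|\Delta\|_1$ once $\|\Delta\|_{\mathrm{op}}\le\tfrac12$, which small Wigner TV guarantees. Then $\|\Delta\|_1\le\sqrt{2n}\|\Delta\|_F$ finishes. The paper reaches the same point via Cauchy--Schwarz, the bound $\sum_j\lambda_j^2/(1+\lambda_j)\le\|\Delta\|_F\,\|\Sigma_2^{-1/2}\Sigma_1\Sigma_2^{-1/2}-I\|_F$, and a relabeling of the two states.
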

\noindent Although prior works have noticed that the trace distance and Wigner TV distance can be arbitrarily far apart~\cite{sabapathy2021bosonic, Lami_2021_dh}, to the best of our knowledge, Lemma~\ref{le:main_wigner_relation} is the first to sharply characterize their relationship in the context of Gaussian states. 

While the full proof of Lemma~\ref{le:main_wigner_relation} is postponed to App.~\ref{sec:wigner_relation}, let us highlight one example where \emph{the trace distance between two Gaussian states can be a factor of $O(\sqrt n)$ larger than their Wigner TV distance}:  Fix any $\varepsilon\in(0,1)$. Let $\Sigma_0=\frac12I_{2n}$ and $\Sigma_1=\frac12(1+\frac\varepsilon n)I_{2n}$ be the covariance matrices of two zero-mean Gaussian states $\rho_0$ and $\rho_1$, respectively. Physically, $\rho_0$ is the $n$-mode vacuum state while $\rho_1$ is a near-vacuum thermal state. The TV distance between their Wigner distributions can be upper bounded by $\mr{TV}(W_{\rho_0},W_{\rho_1})\le \frac1{\sqrt 2}\|\Sigma_0^{-1/2}\Sigma_1\Sigma_0^{-1/2}-I_{2n}\|_\mr{F} = \varepsilon n^{-1/2}$; In contrast, the trace distance satisfy $D_\mr{tr}(\rho_0,\rho_1)\ge \varepsilon/4$. This can been seen by applying a photon number counting measurement on both states. While $\rho_0$ always yields vacuum, the probability $\rho_1$ yields vacuum is $\braket{0_n|\rho_1|0_n}=(1+\varepsilon/2n)^{-n} \le 1 - \varepsilon/4$. Details for these calculations become clear in App.~\ref{sec:pre}. In conclusion, {$D_\mr{tr}(\rho_0,\rho_1)\ge\frac14\sqrt n\cdot\mr{TV}(W_{\rho_0},W_{\rho_1})$}
as in our claim.

Consequently, learning a Gaussian Wigner function to $\varepsilon$ TV distance is generally weaker than learning a Gaussian state to $\varepsilon$ trace distance, which is the problem we wish to address. For the latter, what we can now conclude by combining Theorem~\ref{th:main_wigner} and Lemma~\ref{le:main_wigner_relation} is a sample complexity lower bound of $\Omega(n^2/\varepsilon^2)$ for any measurements, and an upper bound of $\widetilde O(n^3/\varepsilon^2)$ using only \emph{Gaussian measurements}~\cite{bittel2025energy}. 
The remainder of this paper is devoted to developing a more fine-grained understanding of this task.

\subsection{Lower bounds on Gaussian state learning}

\noindent Our first result is a lower bound of $\Omega(n^3/\varepsilon^2)$ for any classical measurements:
\begin{theorem}[{Lower bound with classical measurements}]\label{th:main_lo_gaussian}
    $N=\Omega(n^3/\varepsilon^2)$ copies are necessary to learn an $n$-mode Gaussian states $\rho(\mu,\Sigma)$ to $\varepsilon$ trace distance closeness with probability $2/3$ using classical measurements, even with the promise that $\|\Sigma\|_\mr{op}=O(1)$ and $\rho$ is passive. A POVM is classical if all of its elements have non-negative Wigner functions.
\end{theorem}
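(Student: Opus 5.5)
We use the Fano-type argument sketched for Theorem~\ref{th:main_wigner}, with the Holevo bound replaced by a bound on the classical mutual information available to a classical measurement. We build an ensemble $\{\rho_U\}$ of $2^{\Omega(n^2)}$ passive Gaussian states. First fix a product of thermal states, half with mean photon number $\bar\nu$ and half with $\bar\nu(1+\delta)$, where $\bar\nu=\Theta(1)$. Then apply beam-splitter networks $U$ taken from a packing of the orthogonal (or unitary) group, as in Fig.~\ref{fig:construction}. The output covariances are $\Sigma_U=(\bar\nu+\tfrac12)I + \bar\nu\delta\,P_U$, where $P_U$ is a rank-$n$ projector (doubled to quadratures). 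Such states are passive and satisfy $\|\Sigma_U\|_{\mr{op}}=O(1)$.

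We choose $\delta=\Theta(\varepsilon/\sqrt n)$ so that the states are pairwise $\Omega(\varepsilon)$-separated in trace distance. Here the near-vacuum behaviour is the point. Pairwise Frobenius separation $\|P_U-P_{U'}\|_F=\Omega(\sqrt n)$ together with $\delta\sim\varepsilon/\sqrt n$ gives Wigner TV separation $\Omega(\varepsilon)$. That already yields trace distance $\Omega(\varepsilon)$ by the lower inequality of Lemma~\ref{le:main_wigner_relation}, and the packing gives $2^{\Omega(n^2)}$ elements.

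Second, we bound what a classical measurement learns. If a POVM element $M$ has $W_M\ge0$, then $p(x|\rho)=(2\pi)^n\int W_\rho W_{M_x}$. Hence any classical measurement of $\rho$ factors as sampling $r\sim W_\rho$ followed by a classical channel $r\mapsto x$. This holds also for collective measurements on $\rho^{\otimes N}$, whose Wigner function is the product distribution. By data processing, the mutual information is at most $I(U;r_1,\dots,r_N)\le N\max_{U,U'}\mr{KL}(W_{\rho_U}\|W_{\rho_{U'}})$, which is at most $N\cdot O(\|\Sigma_U-\Sigma_{U'}\|_F^2)=O(N n\delta^2)=O(N\varepsilon^2)$.

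This is no better than Theorem~\ref{th:main_wigner}, so the construction must instead exploit the $\sqrt n$ gap in Lemma~\ref{le:main_wigner_relation}. The refined choice is $\delta=\Theta(\varepsilon/ n)$ with thermal base near vacuum ($\bar\nu$ small or zero, as in the vacuum-versus-thermal example). Then trace distances are governed by photon-number-type quantities and stay $\Omega(\varepsilon)$, while Wigner KL becomes $O(n\cdot n\delta^2/\!\cdots)$. The target is a Wigner KL of order $\varepsilon^2/n$. Fano then gives $N\varepsilon^2/n\gtrsim n^2$, that is, $N=\Omega(n^3/\varepsilon^2)$.

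The main obstacle is this trace-distance separation step. Concretely, for vacuum-like base states $\tfrac12 I+\eta P_U$ with $\eta=\Theta(\varepsilon/n)$, we must show $D_{\mr{tr}}(\rho_U,\rho_{U'})=\Omega(\varepsilon)$ whenever $\|P_U-P_{U'}\|_F^2=\Omega(n)$. I would try to compute the fidelity of the near-vacuum Gaussian states to first order in $\eta$. In this regime the states are roughly $|0\rangle\langle0|+\eta\sum$ (single-photon excitations in the subspace $P_U$). The trace distance is then about $\eta\|P_U-P_{U'}\|_1\sim\eta n=\varepsilon$, since the one-photon sector sees a trace norm rather than a Frobenius norm. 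Meanwhile the Wigner KL is about $\|\eta(P_U-P_{U'})\|_F^2/(1/2)^2\sim\eta^2 n=\varepsilon^2/n$. Choosing the packing so that $\|P_U-P_{U'}\|_1=\Omega(n)$ (Grassmannian packing) and controlling the higher-order terms in $\eta$ completes the argument.
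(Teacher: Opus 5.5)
Your final argument is essentially the paper's proof. It uses the same three ingredients:
\begin{itemize}
\item an ensemble of near-vacuum passive states $\tfrac12 I_{2n}+\eta\,(P\oplus P)$, with $\eta=\Theta(\varepsilon/n)$ excitation on $\Theta(n)$ modes indexed by a Grassmannian packing (the paper takes $\eta=\varepsilon/(2n)$ and $P=U_aU_a^T$ of rank $s=\lceil n/9\rceil$, with $\|U_a^TU_b\|_F^2\le n/18$);
\item the simulation of POVMs with nonnegative Wigner functions by i.i.d.\ Wigner samples;
\item Fano's inequality with per-copy KL divergence $O(\eta^2 n)=O(\varepsilon^2/n)$.
\end{itemize}
The first half of your proposal (thermal base with $\bar\nu=\Theta(1)$) is a dead end, which you correctly abandon. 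Note also that ``$\bar\nu$ and $\bar\nu(1+\delta)$ with $\bar\nu\to 0$'' degenerates to the vacuum. The ensemble you actually need, and use in your last paragraph, is vacuum on the complement of $P$ and mean photon number $\eta$ on its range.

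The one step that differs is the trace-distance separation you flag as the main obstacle. The paper closes it with an explicit measurement:
\begin{itemize}
\item rotate by $V_a^T\oplus V_a^T$;
\item discard the $s$ excited modes;
\item test the remaining $n-s$ modes for vacuum.
\end{itemize}
Then $\rho_a$ passes with probability $1$, while $\rho_b$ passes with probability $\det(I+\eta W_a^TU_bU_b^TW_a)^{-1}\le 1-\frac{\varepsilon}{3n}(s-\|U_a^TU_b\|_F^2)\le 1-\varepsilon/54$.

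Your one-photon-sector route also works, and it needs neither a fidelity expansion nor control of higher-order terms:
\begin{itemize}
\item Zero-mean passive states commute with the total number operator. Hence $\|\rho_U-\rho_{U'}\|_1$ is at least the trace norm of the difference of the one-photon blocks.
\item For a rank-$k$ excitation, that block is exactly $(1+\eta)^{-k}\frac{\eta}{1+\eta}P$. This gives $D_{\mathrm{tr}}(\rho_U,\rho_{U'})\ge\frac{\eta}{2}(1+\eta)^{-k-1}\|P-P'\|_1$.
\item Since $P-P'$ has eigenvalues in $[-1,1]$, $\|P-P'\|_1\ge\|P-P'\|_F^2$. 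So the usual Frobenius-separated packing already gives $\|P-P'\|_1=\Omega(n)$, and no separate trace-norm packing is needed.
\end{itemize}
Your bound uses the full trace norm of $P-P'$. The paper's vacuum test sees only the one-sided quantity $\Tr((I-P_a)P_b)=\tfrac12\|P_a-P_b\|_F^2$. Both are $\Omega(n)$, so both give $\Omega(\varepsilon)$ separation and the same $\Omega(n^3/\varepsilon^2)$ conclusion.
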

\noindent Recall that Gaussian measurements are classical measurements.
Theorem~\ref{th:main_lo_gaussian} implies that \cite{bittel2025energy}'s adaptive learning algorithm with sample complexity $\widetilde O(n^3/\varepsilon^2)$ is \emph{nearly optimal} among all Gaussian measurement schemes, up to doubly-logarithmic energy dependence which is not physically significant. 
Theorem~\ref{th:main_lo_gaussian} also implies that, if one uses only classical measurements, learning an $n$-mode Gaussian state to $\varepsilon$ trace distance is \emph{strictly harder} than learning its Wigner distributions to $\varepsilon$ TV distance, as well as learning a classical $2n$-dimensional Gaussian to $\varepsilon$ TV distance from classical samples. 
We also remark that the lower bound holds even for learning states without entanglement, as all passive Gaussian states are non-entangled (see App.~\ref{sec:pre}).

\begin{proof}[Proof Sketch for Theorem~\ref{th:main_lo_gaussian}]

One important fact we use is the following: any classical measurement $\{E\}$ on $N$ copies of an unknown Gaussian state $\rho$ can be exactly simulated by $N$ classical samples from its Wigner distribution $W_\rho$~\cite{mari2012positive}. Indeed, since both the unknown state and the POVM elements have non-negative Wigner functions, the measurement outcome distribution can be expressed as $\Pr(E|\rho^{\otimes N})=(2\pi)^{nN}\int \mr{d}^{2nN}r\,W_{E}(r)W_{\rho^{\otimes N}}(r) = \int \mr{d}^{2nN}r\,\Pr(E|r)\prod_{i=1}^N W_{\rho}(r_i)$\sloppy. Thus, with $N$ classical samples $\{r\coleq r_1,\cdots,r_N\}$ from $W_\rho$, one just has to sample from the conditional distribution $P(E|r)$ to exactly simulate the measurement statistics.

Based on the above fact, the proof is conceptually similar to that of Theorem~\ref{th:main_wigner}: We first construct an ensemble of Gaussian states of size $2^{\Omega(n^2)}$ with pairwise trace distance at least $\Omega(\varepsilon)$. 
Here, the ensemble is constructed by applying passive Gaussian unitaries on a collection of single-mode vacuum states and thermal states, thus all states in the ensemble are passive, see Fig.~\ref{fig:construction}.
Then, consider a \emph{classical communication} (instead of quantum communication in Theorem~\ref{th:main_wigner}) task between Alice and Bob: Alice selects a state $\rho$ from the ensemble uniformly at random, and then sends $N$ classical samples from its Wigner distribution $W_\rho$ to Bob. Bob then tries to guess Alice's choice by processing the samples. If a classical learning scheme exists, Bob can guess correctly with probability at least $2/3$ (using the simulation argument from the previous paragraph~\cite{mari2012positive}). Fano's inequality implies the mutual information between Alice and Bob is at least $\Omega(n^2)$. With some calculation, one can upper bound the mutual information by $N\cdot O(\varepsilon^2/n)$. We then conclude $N =\Omega(n^3/\varepsilon^2)$. Full details can be found in App.~\ref{sec:gaussian-lower}.
\end{proof}

\medskip
\noindent Going beyond classical measurements, the learner can do something not simulatable using Wigner samples. We have the following lower bound for arbitrary measurements:
\begin{theorem}[{Lower bound with arbitrary measurements}]\label{th:main_lo_any}
$N=\Omega\left({n^2}/{\varepsilon^2}\right)$ copies are necessary to learn an $n$-mode Gaussian states $\rho(\mu,\Sigma)$ to $\varepsilon$ trace distance closeness with probability $2/3$ using any POVM measurements, even with the promise that $\|\Sigma\|_\mr{op}=O(1)$ and $\rho$ has zero mean. Furthermore, even if $\rho$ is promised to be pure, $N=\widetilde{\Omega}(n^2/\varepsilon^2)=\Omega(\frac{n^2}{\varepsilon^2\log(n/\varepsilon)})$ copies are still necessary. 
\end{theorem}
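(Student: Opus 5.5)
The plan is to use the Fano/Holevo method already described in the proof sketch of Theorem~\ref{th:main_wigner}. For the general (mixed, zero-mean) case, the lower bound follows almost directly by combining Theorem~\ref{th:main_wigner} with Lemma~\ref{le:main_wigner_relation}. Suppose a learner outputs $\hat\rho$ with $D_\mr{tr}(\hat\rho,\rho)\le\varepsilon$, where $\varepsilon\le c_0$. We may assume $\hat\rho$ is Gaussian: projecting onto the nearest Gaussian state in the promised class costs at most a factor of $2$. Then the lower inequality of Lemma~\ref{le:main_wigner_relation} gives $\mr{TV}(W_{\hat\rho},W_\rho)\le\varepsilon/c_1$. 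So a trace-distance learner is also a Wigner-TV learner at accuracy $O(\varepsilon)$. I would check that the hard ensemble of Theorem~\ref{th:main_wigner} is zero-mean with $\|\Sigma\|_\mr{op}=O(1)$. It consists of beam-splitter networks applied to fixed product states (squeezed plus vacuum), so it is zero-mean, and the squeezing can be taken to be constant. Under that check, $N=\Omega(n^2/\varepsilon^2)$ follows.

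Rather than rely on the reduction, I would also give the direct argument, since it carries over to the pure case. The construction follows Fig.~\ref{fig:construction}. Take $n/9$ single-mode squeezed vacua with constant squeezing and $8n/9$ vacua. Apply passive unitaries $U_k=\exp(\frac{\varepsilon}{n}\cdot G_k)$, where each $G_k$ is a generator of beam-splitter rotations mixing the squeezed block with the vacuum block, specified by a sign matrix $S_k\in\{\pm1\}^{(n/9)\times(8n/9)}$. By Gilbert--Varshamov there are $2^{\Omega(n^2)}$ such sign matrices with pairwise Hamming distance $\Omega(n^2)$. The resulting covariance matrices satisfy $\|\Sigma_k-\Sigma_l\|_\mr{F}=\Theta(\varepsilon)$ for every pair. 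Using the lower inequality of Lemma~\ref{le:main_wigner_relation}, together with the standard Frobenius characterization of Gaussian TV distance, this gives pairwise trace distance $\Omega(\varepsilon)$.

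Next, Fano's inequality requires mutual information $\Omega(n^2)$. By Holevo's theorem, the mutual information is at most $\chi\le N\cdot\max_k D(\rho_k\|\bar\rho)$ for any reference state $\bar\rho$. I would take $\bar\rho=\rho_0$, the unrotated product state, and bound $D(\rho_k\|\rho_0)$. The difficulty is that in the pure case the relative entropy between distinct pure states is infinite. For the mixed statement I would therefore make the vacuum modes slightly thermal, with a constant mean photon number. Then $\rho_0$ is full rank and $D(\rho_k\|\rho_0)$ is a smooth function of the rotation angle. Its first-order term vanishes, so it is $O(\|\text{angle}\|_\mr{F}^2)=O(\varepsilon^2)$, which yields $N=\Omega(n^2/\varepsilon^2)$.

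For the pure case I would replace relative entropy with a direct bound on the Holevo quantity $\chi=S(\bar\rho_N)$, where $\bar\rho_N=\frac1M\sum_k\rho_k^{\otimes N}$ is the average of pure states. Each $\rho_k^{\otimes N}$ is close to $\rho_0^{\otimes N}$, with infidelity $O(N\varepsilon^2)$. The difference from $\rho_0^{\otimes N}$ is supported mostly on an effective subspace: the relevant excitations are pairs of photons moved from the squeezed modes into the vacuum modes. Truncating this subspace to photon number $O(\log(n/\varepsilon))$ and bounding the entropy of a mixture with weight $\delta=O(N\varepsilon^2)$ off $\rho_0^{\otimes N}$ gives $S\lesssim \delta\log(\dim/\delta)$, i.e.\ $\chi=O(N\varepsilon^2\log(n/\varepsilon))$. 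Comparing with Fano gives $N=\Omega(n^2/(\varepsilon^2\log(n/\varepsilon)))$. I expect this entropy estimate to be the main obstacle. It requires a careful continuity bound of Fannes type on an infinite-dimensional space, which is where an energy cutoff produces the logarithmic loss. An alternative is to pass first through a Gaussian-preserving noise channel, such as a tiny amount of loss, so that the states become mixed, and then apply the relative-entropy argument. The cost is a log factor from choosing the loss level as $\mr{poly}(\varepsilon/n)$.
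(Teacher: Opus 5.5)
Your mixed-state $\Omega(n^2/\varepsilon^2)$ argument is correct and is the paper's own route: Theorem~\ref{th:main_wigner} combined with the lower inequality of Lemma~\ref{le:main_wigner_relation}. The hard ensemble (in the appendix, $\Sigma_a=2\Sigma'_a$) is zero-mean with $\|\Sigma_a\|_\mr{op}\le 1+\varepsilon/\sqrt n$, as you anticipated. Your direct version uses a single reference state $\rho_0$ instead of the paper's pairwise symmetrized relative entropies (Lemma~\ref{le:marco_holevo_bound}), which is a fine variant. It has one slip, though: heating only the vacuum modes does not make $\rho_0$ full rank. The $n/9$ squeezed vacua stay pure, and the beam splitters move weight out of $\mathrm{supp}\,\rho_0$, so $D(\rho_k\|\rho_0)=\infty$. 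You need to thermalize every mode, as $2\Sigma'_a$ does. Then $D(\rho_k\|\rho_0)\le\Tr[(\Sigma_k-\Sigma_0)(H_0-H_k)]=O(\varepsilon^2)$ holds uniformly in $n$, with no Taylor-remainder argument needed.

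The pure case has a genuine gap. You treat $\bar\rho_N$ as $\rho_0^{\otimes N}$ plus a weight $\delta=O(N\varepsilon^2)$ and bound $S(\bar\rho_N)\lesssim\delta\log(\dim/\delta)$. To contradict Fano, however, you need $\chi\le N\cdot O(\varepsilon^2\log(n/\varepsilon))$ at $N$ of order $n^2/(\varepsilon^2\log(n/\varepsilon))$, which is exactly the regime $N\varepsilon^2\gg 1$. There $\rho_k^{\otimes N}$ is nearly orthogonal to $\rho_0^{\otimes N}$, so $\delta$ is no longer a weight. Moreover, the typical number of excitations above $\rho_0^{\otimes N}$ is itself of order $N\varepsilon^2$, so no truncation at $O(\log(n/\varepsilon))$ photons captures the state.

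The missing idea is to reduce to one copy and bound its entropy by Gaussian extremality rather than a Fannes-type estimate. Since $\bar\sigma_N$ has covariance $(\frac1M\sum_a\Sigma'_a)^{\oplus N}$, one gets $S(\bar\sigma_N)\le N\,S(\sigma^\star)$, where $\sigma^\star$ is the Gaussian state with the averaged covariance (Eq.~\eqref{eq:bound_the_entropy}). Subadditivity followed by extremality applied to $\frac1M\sum_a\sigma_a$ gives the same result. The symplectic eigenvalues of $\sigma^\star$ are $\frac12\sqrt{1+\frac{\varepsilon^2}{n+\varepsilon\sqrt n}b_i(1-b_i)}\le\frac12(1+\frac{\varepsilon^2}{8n})$. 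Hence $S(\sigma^\star)\le n\,g(\varepsilon^2/16n)=O(\varepsilon^2\log(n/\varepsilon))$. The logarithm comes from $g(x)\approx x\log(1/x)$ at $x\sim\varepsilon^2/n$, not from an energy cutoff.

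Your fallback also fails as phrased. Sending the ensemble states through a loss channel can only decrease the Holevo quantity (data processing), so a bound for the noisy ensemble says nothing about the pure one. It can be repaired by keeping the states pure and smoothing only the reference state in $\chi\le\frac1M\sum_k D(\rho_k^{\otimes N}\|\tau^{\otimes N})$. Take $\tau$ Gaussian with covariance $(1+2\eta)\Sigma_0$ for a pure $\Sigma_0$, and let $\alpha_i^{\pm1}$ be the eigenvalues of $\Sigma_0^{-1/2}\Sigma_k\Sigma_0^{-1/2}$. A direct computation gives $D(\rho_k\|\tau)=n\log(1+\eta)+\frac14\log(1+\eta^{-1})\sum_i(\alpha_i-1)^2/\alpha_i$. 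With $\eta=\varepsilon^2/n$ this is $O(\varepsilon^2\log(n/\varepsilon))$. Choosing $\tau=\sigma^\star$ instead turns the average into exactly $S(\sigma^\star)$, recovering the paper's bound $N\,S(\sigma^\star)$.
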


\begin{proof}[Proof Sketch for Theorem~\ref{th:main_lo_any}]
The $\Omega(n^2/\varepsilon^2)$ lower bound is a direct corollary of Theorem~\ref{th:main_wigner} and Lemma~\ref{le:main_wigner_relation}.
For the pure state lower bound, we plug in a different Gaussian state ensemble to the proof framework of 
Theorem~\ref{th:main_wigner}: The ensemble is constructed by applying a beam splitter network on a collection of single-mode vacuum states and pure squeezed states, thus all states in the ensemble are pure, see Fig.~\ref{fig:construction}. We then consider the same quantum communication task as in the proof of Theorem~\ref{th:main_wigner} and apply the Holevo theorem. For the pure ensemble, we use a different method to obtain an upper bound of $N\cdot O(\varepsilon^2\log(n/\varepsilon))$ on the Holevo quantity. This thus yields an sample complexity lower bound of $\Omega\left(\frac{n^2}{\varepsilon^2\log(n/\varepsilon)}\right)$.
Details can be found in App.~\ref{sec:any-lower}. 
\end{proof}

\medskip
\noindent Finally, using a similar ensemble but with highly-squeezed states, we also obtain the following lower bound for heterodyne measurements. 
In particular, it matches the heterodyne upper bound analyzed in \cite{bittel2025energy}, thus establishing a tight bound of $\Theta(\bar E^2 n^3/\varepsilon^2)$. 
The proof strategy is very similar and is postponed to App.~\ref{sec:heterodyne}.
\begin{theorem}[Lower bound with heterodyne measurements]\label{th:main_lo_heterodyne}
    $N=\Omega\left(\bar E^2 n^3/\varepsilon^2\right)$ copies are necessary 
    to learn an $n$-mode Gaussian state $\rho(\mu,\Sigma)$ with the promise that $\|\Sigma\|_\mr{op}\le\bar E$ to $\varepsilon$ trace distance closeness with probability $2/3$ using heterodyne measurements. 
\end{theorem}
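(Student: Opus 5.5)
We prove Theorem~\ref{th:main_lo_heterodyne} with the same Fano-type framework as Theorem~\ref{th:main_lo_gaussian}, now in a purely classical communication setting. Heterodyne on $N$ copies of $\rho(\mu,\Sigma)$ yields $N$ i.i.d.\ samples from $\mc N(\mu,\Sigma+\frac12 I_{2n})$, and adaptivity does not help because the heterodyne POVM is fixed. Any heterodyne learner therefore lets Bob identify Alice's state from $N$ samples of the smoothed distribution $Q_\rho\coleq\mc N(0,\Sigma_\rho+\frac12 I)$. By Fano's inequality, the mutual information must be $\Omega(\log|\mc S|)$ for an ensemble $\mc S$ with pairwise trace distance $\ge 2\varepsilon$. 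Because the samples are i.i.d., the mutual information is at most $N\max_{\rho,\rho'\in\mc S}\mr{KL}(Q_\rho\|Q_{\rho'})$.

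\textbf{Ensemble.} Start from $n/2$ single-mode squeezed states with covariance $\mr{diag}(\bar E,\frac1{4\bar E})$ together with $n/2$ identical reference modes; the paper's earlier constructions use a constant fraction as squeezed, and that works equally well here. Apply a beam-splitter network $U_{\theta}=\exp(\frac{\delta}{\sqrt n}\sum_{j\le n/2<k}\theta_{jk}A_{jk})$, where $A_{jk}$ generates the beam splitter mixing modes $j,k$ and $\theta\in\{\pm1\}^{n^2/4}$ ranges over a Gilbert--Varshamov packing of size $2^{\Omega(n^2)}$ with pairwise Hamming distance $\Omega(n^2)$.

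\textbf{Trace distance separation.} Show $D_\mr{tr}(\rho_\theta,\rho_{\theta'})=\Omega(\varepsilon)$. If the reference modes are vacuum, the states are pure, and the rotation $R$ moves squeezed covariance $\bar E$ into a direction where the other state has variance $\frac1{4\bar E}$. Fidelity-based formulas for pure Gaussians then give an infidelity of order $\bar E^2\|\Delta\|_F^2$, with $\|\Delta\|_F^2\sim\delta^2 n$ for rotation perturbation $\Delta=\frac\delta{\sqrt n}(\theta-\theta')$. So $D_\mr{tr}\gtrsim \bar E\delta\sqrt n$, and choosing $\delta\asymp\varepsilon/(\bar E\sqrt n)$ gives the required separation. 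This step needs care. Only the anisotropic part, squeezed against antisqueezed quadratures, contributes the $\bar E$ factor, so the reference modes should be squeezed in the conjugate quadrature, or equivalently the beam splitters should mix $x_j$ with $p_k$. This ensures that a rotation maps large variance onto small variance.

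\textbf{KL upper bound.} Heterodyne adds $\frac12 I$, which floors all variances at order $1$ and destroys the $\bar E$ amplification. Write $\Sigma_\theta+\frac12 I=R_\theta D R_\theta^T$ with $D=\mr{diag}$ of entries $\approx\bar E$ or $\approx\frac12$. Then
\[
\mr{KL}\approx\tfrac14\|D^{-1/2}(R_{\theta}^TR_{\theta'}DR_{\theta'}^TR_\theta-D)D^{-1/2}\|_F^2 .
\]
An off-diagonal rotation entry of size $\delta/\sqrt n$ between a $\bar E$-direction and an $O(1)$-direction produces an entry of size $(\bar E-\tfrac12)\delta/\sqrt n$. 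After normalization this becomes $\sqrt{\bar E}\,\delta/\sqrt n$, so the KL is $O(n^2\cdot\bar E\delta^2/n)=O(\bar E n\delta^2)=O(\varepsilon^2/\bar E)$.

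The heuristic count just given does not reach the target, which signals that the normalization must be done more carefully. The main obstacle is exactly this matching: the squeezing should be chosen so that the trace distance scales like $\bar E\delta\sqrt n$, while the heterodyne KL is $O(\delta^2 n\cdot n^{-1})$ per sample. One way to get this is to take both blocks squeezed, with the variance gap inside the quantum state being of size $\bar E$ vs $1/\bar E$ and the ratio preserved in fidelity. After adding $\frac12$, the gap becomes $\bar E$ vs $\frac12$, an $O(1)$ contrast only in the $p$ block. Tuning these choices so that $\mr{KL}=O(\varepsilon^2/(\bar E^2 n))$ per sample and combining with Fano ($\Omega(n^2)\le N\cdot O(\varepsilon^2/(\bar E^2 n))$) yields $N=\Omega(\bar E^2n^3/\varepsilon^2)$. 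I would first verify the single-pair ($n=2$) calculation explicitly to fix these exponents before scaling up.
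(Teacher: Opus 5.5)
\textbf{There is a genuine gap: you never construct an ensemble that works, and the one you sketch cannot be tuned into one.}

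Your own count already shows the problem. With beam-splitter perturbations of pure squeezed states, and $\delta$ chosen so that $D_{\mathrm{tr}}=\Theta(\varepsilon)$, the per-copy heterodyne KL is of order $\varepsilon^2/\bar E$. Fano then gives only $N=\Omega(\bar E n^2/\varepsilon^2)$, which is short by a factor $\bar E n$.

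Your proposed repair (conjugately squeezed reference modes, ``both blocks squeezed'') is still an ensemble of pure states related by beam-splitter rotations, and purity is the real obstruction.

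\emph{Why the $\bar E$ factor is lost.} For pure states, to leading order the normalized perturbation $X=\Sigma^{-1/2}\Delta\Sigma^{-1/2}$ has a rigid form. In the Euler eigenbasis of $\Sigma$ (anti-squeezed quadratures first) it is $\begin{pmatrix}P&Q\\Q&-P\end{pmatrix}$ with $P,Q$ symmetric. So you cannot perturb the squeezed block without perturbing the anti-squeezed block equally. Heterodyne leaves the anti-squeezed entries intact up to constants, and damps the mixed entries $Q$ only by a factor of order $\bar E^{-1/2}$ in amplitude. Hence the heterodyne KL is $\gtrsim D_{\mathrm{tr}}^2/\bar E$ for any local pure-state ensemble.

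\emph{Why the $n$ factor is lost.} For pure states, $D_{\mathrm{tr}}$ and Wigner TV agree up to constants (Lemma~\ref{le:pure-trace-tvd}). So nothing in a pure ensemble can produce the extra factor of $n$.

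\textbf{The missing idea.} You need a mixed-state ensemble in which the distinguishing signal is excess noise confined to the squeezed quadratures. The paper takes $\Sigma_a=\frac12\begin{pmatrix}\lambda I_n&0\\0&\lambda^{-1}(I_n+\frac{\varepsilon}{n}U_aU_a^T)\end{pmatrix}$ with $\lambda=2\bar E$ and the packing $\{U_a\}$ of Lemma~\ref{le:ashtiani6-4}. Two separate effects then supply the two missing factors.

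First, undo the common squeezing. This is a fixed Gaussian unitary, so trace distances are unchanged, and it maps the ensemble to $\frac12\mathrm{diag}(I_n,I_n+\frac{\varepsilon}{n}U_aU_a^T)$. Photon counting on the modes orthogonal to $U_a$ detects the $\varepsilon/n$ excess noise at first order, summed over $\Theta(n)$ modes. This gives $D_{\mathrm{tr}}(\rho_a,\rho_b)\ge\varepsilon/90$. Meanwhile, classical-sample KL divergences are second order, namely $O(\varepsilon^2/n)$. This linear-versus-quadratic gap exists only for mixed states, and it supplies the factor $n$.

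Second, heterodyne outcomes are distributed as $\mathcal N(0,\Sigma_a+\frac12 I)$. The $x$-block is independent of $a$. The $p$-block, rescaled by $(1+\lambda^{-1})^{-1}$, becomes $\frac12 I_n+\frac{\varepsilon}{2n(1+\lambda)}U_aU_a^T$. This is exactly the classical-measurement ensemble with $\varepsilon$ replaced by $\varepsilon/(1+\lambda)$, so $\mathrm{KL}=O(\varepsilon^2/(n\lambda^2))$. The added vacuum noise swamps the squeezed variance $\frac{1}{2\lambda}$, and this supplies the factor $\bar E^2$.

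Fano then gives $\Omega(n^2)\le N\cdot O(\varepsilon^2/(n\bar E^2))$, i.e.\ $N=\Omega(\bar E^2n^3/\varepsilon^2)$. Your overall framework (Fano, i.i.d.\ heterodyne samples, KL tensorization) is the right one; only the ensemble has to change.
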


\subsection{Upper bounds on Gaussian state learning}

At this point, the major open problem is whether there exists a learning algorithm that achieves the $\Omega(n^2/\varepsilon^{2})$ lower bound from Theorem~\ref{th:main_lo_any}. 
In the following, we provide new upper bounds on learning two subclasses of Gaussian states, and provide a reduction result on achieving the optimal sample complexity for learning all Gaussian states.

\medskip
\noindent We have the following upper bound for learning pure Gaussian states:
\begin{theorem}[{Nearly-optimal learning of pure Gaussian states}]\label{th:main_up_pure}
    A number of
    \bb\label{eq:upp_puregauss} 
    N=\widetilde{O}(n^2/\varepsilon^2)=O\!\left(n^2/\varepsilon^2 + (n +\log\log\log \bar E )\log\log \bar E\right)
    \ee
    copies are sufficient to learn an $n$-mode Gaussian state $\rho(\mu,\Sigma)$ to $\varepsilon$ trace distance closeness with probability $2/3$ given the promise that $\rho$ is pure and $\|\Sigma\|_\mr{op}\le\bar E$. Moreover, this can be achieved by Gaussian measurements. 
\end{theorem}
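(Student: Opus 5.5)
The plan is to reduce pure-state tomography to learning the Wigner distribution, and then transfer the accuracy guarantee using the pure-state part of Lemma~\ref{le:main_wigner_relation}.

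\textbf{Step 1 (learn the Wigner distribution).} First, I will run the Gaussian-measurement algorithm from Theorem~\ref{th:main_wigner}. It uses
\[
O\!\left(n^2/\varepsilon'^2 + (n+\log\log\log\bar E)\log\log\bar E\right)
\]
copies and outputs a Gaussian distribution $\mathcal N(\hat\mu,\hat\Sigma)$ with $\mr{TV}(\mathcal N(\hat\mu,\hat\Sigma),W_\rho)\le\varepsilon'$ with probability at least $2/3$. Here $\varepsilon'=c\,\varepsilon$ for a small absolute constant $c$.

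\textbf{Step 2 (project to a pure state).} The output $\hat\Sigma$ need not be a valid covariance matrix, let alone a pure one. So the second step is a projection onto the class of pure Gaussian Wigner distributions. Among all pure covariance matrices $\Sigma'$ (those with $\Sigma'+\tfrac i2\Omega\ge0$ and $\det\Sigma'=4^{-n}$) and all means $\mu'$, I will choose $(\mu',\Sigma')$ minimizing $\mr{TV}(\mathcal N(\mu',\Sigma'),\mathcal N(\hat\mu,\hat\Sigma))$. An approximate minimizer within a factor of $2$ suffices.

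This choice is justified because the true $(\mu,\Sigma)$ is feasible, so the minimum value is at most $\varepsilon'$. The triangle inequality then gives
\[
\mr{TV}(W_{\rho(\mu',\Sigma')},W_\rho)\le 3\varepsilon'.
\]
The output is the pure Gaussian state $\hat\rho=\rho(\mu',\Sigma')$. Only sample complexity is claimed, so computational efficiency of this projection is irrelevant; existence of the minimizer, or a near-minimizer, suffices.

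\textbf{Step 3 (transfer to trace distance).} Now $\hat\rho$ and $\rho$ are both pure, and the pure-state bound of Lemma~\ref{le:main_wigner_relation} gives
\[
D_\mr{tr}(\hat\rho,\rho)\le 3c_3\varepsilon'\le\varepsilon
\]
for suitable $c$. That lemma, however, carries the hypothesis $D_\mr{tr}(\rho_1,\rho_2)\le c_0$, and we do not know a priori that $\hat\rho$ is close to $\rho$ in trace distance. This is the main obstacle.

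To handle it, I would use the left inequality of the lemma in a continuity argument. Connect $\rho$ to $\hat\rho$ by a continuous path of pure Gaussian states along which the Wigner TV distance to $\rho$ stays $O(\varepsilon')$. Trace distance varies continuously along such a path, so it cannot jump past $c_0$: while it stays at most $c_0$, the pure-state bound caps it at $O(c_3\varepsilon')\ll c_0$. Such a path could be obtained, for instance, by interpolating the symplectic factorization $\Sigma=\tfrac12 SS^T$ with displacements.

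The cleaner alternative is to prove directly that a small Wigner TV distance forces a small trace distance for pure Gaussian states. For pure states, the fidelity $|\langle\psi_1|\psi_2\rangle|^2=(2\pi)^n\int W_1W_2$ is a Gaussian overlap that can be compared with the Bhattacharyya coefficient $\int\sqrt{W_1W_2}$. Both are explicit functions of $\mu_1-\mu_2$ and $\Sigma_1^{-1}\Sigma_2$, and small TV forces the Bhattacharyya coefficient to be close to $1$. The remaining task is to show that the same forces the fidelity close to $1$, unconditionally; this comparison is what I would try first.

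With either route, the total copy count is that of Theorem~\ref{th:main_wigner} with $\varepsilon'=\Theta(\varepsilon)$, which is \eqref{eq:upp_puregauss}. Every measurement used is Gaussian.
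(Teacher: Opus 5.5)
Your proposal is correct. It shares the paper's three-step shape (learn the Wigner distribution, project to a pure state, convert TV to trace distance), but your projection step is different.

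\textbf{Projection step.} The paper does not call Theorem~\ref{th:main_wigner} as a black box. It reruns the unsqueeze-then-heterodyne pipeline and extracts from the same $O(n^2/\varepsilon^2)$ heterodyne samples a stronger \emph{operator-norm} guarantee, $\|\hat\Sigma_0-\Sigma\|_\mr{op}\le 3\varepsilon/(4\sqrt n)$, for the unsqueezed state. It then projects explicitly through the Williamson decomposition, $\hat\Sigma_0=\hat S\hat D\hat S^T\mapsto\tfrac12\hat S\hat S^T$, and bounds $\|\hat D-\tfrac12 I\|_\mr{op}$ via Weyl's inequality applied to $-\Omega V\Omega V$. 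Only afterwards does it pass to Frobenius norm, where the $\sqrt{2n}$ loss is absorbed by the $\varepsilon/\sqrt n$ scale, and then to TV via Lemma~\ref{le:mahalanobis}.

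Your minimum-TV projection onto the realizable class of pure Gaussian Wigner distributions needs nothing beyond the TV guarantee, loses only a factor $3$, and works for any Wigner learner. Its cost is that it is not an explicit or obviously efficient estimator, since the set of pure covariance matrices is nonconvex and Gaussian TV has no closed form. The paper's estimator, by contrast, runs in polynomial time. The statement only concerns copy count and Gaussian measurements, so your route suffices.

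\textbf{Step 3.} The obstacle you identify is an artifact of the main-text phrasing of Lemma~\ref{le:main_wigner_relation}. The pure-state bound is actually proved as Lemma~\ref{le:pure-trace-tvd} under the hypothesis $\mr{TV}<1/600$ alone, so it applies directly. Its proof is essentially your \emph{cleaner alternative}: the pure-state fidelity formula, with Lemma~\ref{le:mahalanobis} controlling both the Mahalanobis mean shift and $\sum_i(\alpha_i-1)^2$ for the reciprocal-paired eigenvalues of $\Sigma_2^{-1/2}\Sigma_1\Sigma_2^{-1/2}$. This gives $1-F=O(\mr{TV}^2)$.

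If you go through the Bhattacharyya coefficient, note that for pure Gaussians the determinant factors coincide, so $\mathrm{BC}^2\le F\le \mathrm{BC}$. However, the generic bound $1-\mathrm{BC}\le\mr{TV}$ would only yield $D_\mr{tr}=O(\sqrt{\mr{TV}})$, and hence an $\varepsilon^{-4}$ copy count. You therefore still need the Gaussian-specific estimate $1-\mathrm{BC}=O(\mr{TV}^2)$, which again rests on Lemma~\ref{le:mahalanobis}.

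Your continuity route also works, with two fixes. First, it uses the pure-state upper bound, not the left inequality. Second, you must exhibit a path that provably keeps TV small; interpolating symplectic factorizations does not by itself guarantee this. One such path is $\Sigma_t=\Sigma^{1/2}(\Sigma^{-1/2}\Sigma'\Sigma^{-1/2})^t\Sigma^{1/2}$ with $\mu_t=\mu+t(\mu'-\mu)$. It stays pure because $\tfrac{1}{\sqrt2}\Sigma^{-1/2}$ is symplectic, and its deviation parameters relative to $(\mu,\Sigma)$ are monotone in $t$.
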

\noindent Recall that Theorem~\ref{th:main_lo_any} gives a lower bound of $\widetilde\Omega(n^2/\varepsilon^2)$  for learning pure Gaussian states. 
{Hence, $\widetilde{\Theta}(n^2/\varepsilon^2)$ copies are both necessary and sufficient for learning pure Gaussian states. Interestingly, this optimal scaling can be achieved using only Gaussian measurements. This means non-Gaussian resources provide essentially no advantages for pure Gaussian state tomography.

\begin{proof}[Proof Sketch for Theorem~\ref{th:main_up_pure}]
    The key technical ingredient is from Lemma~\ref{le:main_wigner_relation}: For any two pure Gaussian states, their trace distance is upper bounded by their Wigner TV distance up to a constant.
    The proof for Theorem~\ref{th:main_up_pure} has three steps:
    First use the adaptive learning algorithm in~\cite{bittel2025energy} to learn the Wigner distribution of $\rho$ to $\varepsilon$ TV distance using $\widetilde{O}(n^2/\varepsilon^2)$ samples, then project the estimator to a pure Gaussian state with carefully controlled error, and finally use Lemma~\ref{le:main_wigner_relation} to convert from Wigner TV distance to trace distance. The full details are presented in App.~\ref{sec:pure-upper}.
\end{proof}

\medskip
\noindent The second subclass of Gaussian states that allows for improved efficiency of tomography are passive Gaussian states. Recall that those are Gaussian states that can be obtained by applying passive linear optical operations on thermal states. We have that:
\begin{theorem}[{Non-Gaussian advantage in learning passive Gaussian states}]\label{th:main_up_passive}
        A number of 
        \bb
        N=\widetilde{O}(n^2/\varepsilon^2)=O\!\left(n^2/\varepsilon^2 + (n +\log\log\log \bar E )\log\log \bar E\right)
        \ee
        copies are sufficient to learn an $n$-mode Gaussian states $\rho$ to $\varepsilon$ trace distance closeness with probability $2/3$ given the promise that $\rho$ is passive and $\|\Sigma\|_\mr{op}\le\bar E$, which nearly matches the lower bound of $N=\Omega(n^2/\varepsilon^2)$. However, classical measurements {(including Gaussian measurements)} need $\Theta(n^3/\varepsilon^2)$ samples to complete the same task. 
\end{theorem}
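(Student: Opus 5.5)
The plan is to prove the upper bound with one specific non-classical strategy: Gaussian measurements plus photon-number counting. The lower bounds come for free. The $\Omega(n^2/\varepsilon^2)$ bound for arbitrary measurements is Theorem~\ref{th:main_lo_any}, and its ensemble is passive: vacuum plus thermal inputs through beam splitters, as in Fig.~\ref{fig:construction}. The classical bound $\Omega(n^3/\varepsilon^2)$ is exactly Theorem~\ref{th:main_lo_gaussian}, whose ensemble is passive. The matching classical upper bound $\widetilde O(n^3/\varepsilon^2)$ is the Gaussian-measurement algorithm of~\cite{bittel2025energy}. So only the $\widetilde O(n^2/\varepsilon^2)$ upper bound is new.

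I treat a passive state as zero-mean and parametrize it by its $n\times n$ Hermitian correlation matrix $C_{ij}=\Tr(\rho\,\hat a_i^\dagger\hat a_j)$. Writing $C=U\,\mathrm{diag}(\nu)\,U^\dagger$, the state is the passive unitary $U$ applied to a product of thermal states with occupations $\nu_k$.

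\textbf{Step 1: a Fisher-type distance formula.} For two passive states I would show
\[
D_\mr{tr}(\rho_C,\rho_{C'})^2\lesssim \big\|(C+I)^{-1/2}(C'-C)\,C^{-1/2}\big\|_F^2
\]
(suitably symmetrized) whenever the right-hand side is small. The route is through the fidelity of Gaussian Gibbs states, since both are Gibbs states of number-conserving quadratic Hamiltonians. The Wigner-TV metric instead weights the errors by $(C+\tfrac12 I)^{-1}$ on both sides, giving the weight $(\nu_i+\tfrac12)^{-1}(\nu_j+\tfrac12)^{-1}$. The two metrics agree up to constants on large-occupation directions and differ only when some $\nu_k\ll 1$. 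This is exactly the vacuum-versus-near-thermal example after Lemma~\ref{le:main_wigner_relation}.

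\textbf{Step 2: Gaussian stage.} I would run the Gaussian scheme of Theorem~\ref{th:main_wigner} with $\widetilde O(n^2/\varepsilon^2)$ copies to learn the Wigner distribution to TV distance $\varepsilon$. This gives an estimate $\hat C$ with $\|(\hat C+\tfrac12)^{-1/2}(\hat C-C)(\hat C+\tfrac12)^{-1/2}\|_F\lesssim\varepsilon$. I then project $\hat C$ onto valid passive covariances. Next I split the estimated eigenbasis into a "large" block, where $\hat\nu_k\ge\tau$ (a threshold of order a constant), and a "small" block $S$. The Gaussian estimate already controls all errors except those involving $S$, where Step 1 demands the sharper weight $\nu^{-1/2}$ rather than $(\nu+\tfrac12)^{-1}$.

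\textbf{Step 3: photon-counting stage on the near-vacuum block.} I would apply the passive unitary $\hat U^\dagger$ to fresh copies. On block $S$ the state is then close to vacuum, with total mean photon number $t=\Tr C_S$. Here I measure photon number. Conditioned on detecting exactly one photon in $S$, which happens with probability $\approx t$, the post-measurement state is, to leading order, the single-photon state whose one-particle density matrix is $C_S/t$ on the $|S|$-dimensional one-excitation subspace. Further passive rotations before counting can be folded into a tomography of this $|S|\le n$-dimensional density matrix. Optimal qudit tomography (e.g.~\cite{ODonnell2016,Haah2017}), or single-copy tomography with a log loss, learns it to trace distance $\delta$ from $\widetilde O(n^2/\delta^2)$ one-photon events. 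The $2\times2$-type interference between $S$ and the large block can be handled the same way, using one-photon events that mix the two blocks.

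By Step 1, the small-block contribution to $D_\mr{tr}$ is about $\sqrt t\,\delta$, so I take $\delta=\varepsilon/\sqrt t$. The number of one-photon events collected is about $Nt$, so the requirement reads $Nt\gtrsim n^2 t/\varepsilon^2$. The factors of $t$ cancel, giving $N=\widetilde O(n^2/\varepsilon^2)$ independent of $t$. The $\mathrm{polyloglog}\,\bar E$ terms come only from the adaptive seed search in Step 2.

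\textbf{Main obstacle.} The hard step is making Steps 1 and 3 rigorous and compatible. I need (i) the sharp Fisher-type bound on trace distance for passive states, uniform in the occupations, and (ii) control of the residual misalignment of $\hat U$. In particular, the multi-photon events and the cross terms between the large and small blocks must be shown to contribute only lower-order errors. My first attempt would be to bound the Bures distance between passive Gibbs states by a perturbative expansion around $\hat C$. I would then treat the photon-counting data as a likelihood problem on the cross terms, using the Gaussian-stage accuracy as a warm start, so that the only remaining unknowns live in the one-excitation sector.
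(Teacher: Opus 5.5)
\textbf{Gap 1: the upper bound fails in the regime of many weakly occupied modes.} Step~3 treats block $S$ as globally near vacuum, so that heralded single photons occur with probability $\approx t$ and multi-photon events are lower order. Nothing forces $t=\Tr C_S\ll 1$. The one-photon probability is $\prod_{k\in S}(1+\nu_k)^{-1}\sum_{k\in S}\frac{\nu_k}{1+\nu_k}\le t\,e^{-\Omega(t)}$, and with $\tau=\Theta(1)$ you can have $t=\Theta(n)$. Changing $\tau$ does not help. Take $n$ modes, each with occupation $n^{-1/2}$.
\begin{itemize}
\item If $\tau>n^{-1/2}$, then $t=\sqrt n$ and heralded single photons occur with probability of order $\sqrt n\,e^{-\sqrt n}$.
\item If $\tau\le n^{-1/2}$, these modes go to the Gaussian stage. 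Its Wigner-TV guarantee of $\varepsilon$ permits a uniform occupation shift of order $\varepsilon/\sqrt n$. That moves the total photon number (mean $\sqrt n$, standard deviation $\approx n^{1/4}$) by order $\varepsilon\sqrt n$, i.e.\ a trace-distance error of order $\varepsilon n^{1/4}$.
\end{itemize}
Covering this regime would need a genuinely multi-photon tomography of the block, which you do not supply.

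Two further steps also fail as written. (i) Step~1 is a Bures-type bound with weight of order $(\nu_j+\nu_k)^{-1}$ on $S\times S$. Trace-distance accuracy $\delta$ on $\sigma=C_S/t$ does not control it when $\sigma$ has small eigenvalues: spurious weight $\delta$ on a vacuum direction costs $\sqrt{t\delta}$, not $\sqrt t\,\delta$. Near vacuum the trace distance is in fact linear to leading order, $\approx|t-\hat t|+t\|\sigma-\hat\sigma\|_1$, but you would then need an argument gluing this to the Bures-controlled remainder. (ii) Single-copy tomography of a full-rank $d$-dimensional state needs $\Omega(d^3/\delta^2)$ copies, not $\widetilde O(d^2/\delta^2)$. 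So you need collective measurements on the heralded photons. That in turn requires a QND measurement of the total photon number in $S$, since per-mode counting destroys the coherences you want to learn.

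\textbf{Gap 2: the lower-bound bookkeeping.} The $\Omega(n^2/\varepsilon^2)$ bound of Theorem~\ref{th:main_lo_any} uses the ensemble of Theorem~\ref{th:non-gaussian-lower-stronger}, which is built from \emph{squeezed} inputs. Its covariances $2\Sigma'_a$ have unequal $x$- and $p$-blocks, so they do not commute with the symplectic form and are not passive. That bound therefore says nothing under the passive promise. The pure-state bound is no help either, since the only pure passive state is the vacuum. The paper proves the passive bound separately (Theorem~\ref{th:non-gaussian-passive-lower}). It adds a background occupation $\frac{1}{2n}$ per mode to the ensemble of Theorem~\ref{th:main_lo_gaussian}, making the states non-degenerate, so that Lemma~\ref{le:marco_holevo_bound} bounds the Holevo quantity by $N\cdot O(\varepsilon^2)$. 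Similarly, the energy-free classical $O(n^3/\varepsilon^2)$ bound comes from heterodyne via $\Tr\Sigma^{-1}\le 4n$ and \cite[Theorem~10]{bittel2025energy}, not from the adaptive scheme with its $\log\log\bar E$ terms.

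\textbf{The missing idea for the upper bound} is the passive random purification channel. It maps $\rho^{\otimes N}$ to $N$ copies of a random Gaussian purification with $\|\Sigma_{\mathrm{pure}}\|_{\mathrm{op}}\le 4\|\Sigma\|_{\mathrm{op}}$. Theorem~\ref{th:pure-upper} then applies; Wigner-TV accuracy suffices there by Lemma~\ref{le:pure-trace-tvd}. Tracing out the purifying modes only decreases trace distance. Purity removes the gap between Wigner TV and trace distance at its source, so no near-vacuum block analysis, misalignment control, or multi-photon corrections are needed.
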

Crucially, the above theorem yields a rigorous \emph{non-Gaussian advantage} (or, a \emph{non-classical advantage}) for learning passive Gaussian states. To our knowledge, this is the first provable and rigorous separation demonstrating a non-Gaussian advantage in quantum learning theory with continuous-variable systems~\cite{mele2025learning,bittel2025energy}\sloppy. 
Similar notions of non-Gaussian advantages have appeared in quantum metrology research (e.g.,~\cite{gardner2025stochastic,tsang2016quantum,wang2025limitations}) but are quite different from our setting.

}%

\begin{proof}[Proof Sketch for Theorem~\ref{th:main_up_passive}]
    The lower bound \(\Omega(n^3/\varepsilon^2)\) for learning passive Gaussian states via classical measurements follows from Theorem~\ref{th:main_lo_gaussian}, while the fact that heterodyne detection attains the same scaling can be shown by using the heterodyne upper bound obtained in~\cite{bittel2025energy}. 
    The lower bound \(\Omega(n^2/\varepsilon^2)\) for learning passive Gaussian states via general measurements is very similar to Theorem~1 and is proven in Theorem~\ref{th:non-gaussian-passive-lower}.

    The key to achieving the $\widetilde{O}(n^2/\varepsilon^2)$ upper bound is a recently developed toolkit known as the passive random purification channel~\cite{mele_purif_2025,walter_purif_2025}, denoted by $\Lambda^{(n,N)}$, which is an extension of similar objects in discrete-variable quantum systems~\cite{tang2025,pelecanos2025,random_pur_simple} to the bosonic regime. $\Lambda^{(n,N)}$ is a quantum channel that maps $N$ copies of any $n$-mode passive Gaussian state, $\rho^{\otimes N}$, to $N$ copies of its purification $\Psi^{\otimes N}_O$, although $\Psi_O$ is random (but fixed among the $N$ copies) and is labeled by an orthogonal symplectic operator $O$. Moreover, each $\Psi_O$ is guaranteed to be a $2n$-mode Gaussian state whose covariance matrix satisfies $\|\Sigma_U\|_\mr{op}\le4\|\Sigma\|_\mr{op}$. In math,
    \bb
    \Lambda^{(n,N)}\!\left(\rho_A^{\otimes N}\right)
    =\mathbb{E}_{O}\!\left[\ketbra{\Psi_O}{\Psi_O}^{\otimes N}\right]
    =
    \mathbb{E}_{O}\!\left[\left(
    \left(I_A\otimes U_O\right)
    \ketbra{\Psi}{\Psi}
    \left(I_A\otimes U_O\right)^\dagger
    \right)^{\otimes N}\right],
    \ee
    where $\Psi$ is a canonical purification of $\rho$ and each $\ket{\Psi_O}\equiv I_A\otimes U_O\ket{\Psi}$ is also a purification of $\rho$. The expectation is with respect to the Haar measure over the symplectic orthogonal group. Note that, $\Lambda^{(n,N)}$ can be an \emph{entangling} and \emph{non-Gaussian} channel in general, but its action on $N$ copies of a passive Gaussian state is guaranteed to yield a mixture of pure product Gaussian states. 
    We also note that, despite its clear mathematical existence~\cite{mele_purif_2025}, finding an efficient circuit to implement $\Lambda^{(n,N)}$ is an open problem beyond the scope of the current work.
    
    Thanks to this, our learning algorithm first applies $\Lambda^{(n,N)}$ to $N$ copies of the unknown passive Gaussian state, then applies the nearly-optimal pure Gaussian state tomography algorithm in Theorem~\ref{th:main_up_pure}, and finally traces out the purifying ancillary modes. 
    Since the partial trace never increase trace distance, this clearly yields an upper bound of $\widetilde{O}(n^2/\varepsilon^2)$.
    Importantly, $\Lambda^{(n,N)}$ is not a Gaussian channel, thus the protocol is not a Gaussian measurement scheme and not limited by Theorem~\ref{th:main_lo_gaussian}.
    More details are presented in App.~\ref{sec:passive-upper}.
\end{proof}

One open problem in this direction is how to construct a random purification channel for general (non-passive) Gaussian states. The main technical challenge there dealing with the non-compactness of the symplectic group, which is necessary to describe squeezing. Though solving this problem is left for future works, we provide the following reduction result, which follows from the same reasoning as in the proof of Theorem~\ref{th:main_up_passive}. 
\begin{proposition}
    If there exists a random purification channel for all Gaussian states, such that all the purified states are Gaussian states whose covariance matrices $\Sigma_U$ satisfy $\|\Sigma_U\|_\mr{op} \le \mr{poly}(\|\Sigma\|_\mr{op})$, then $N=\widetilde O(n^2/\varepsilon^2)$ samples are sufficient to learn any $n$-mode Gaussian states to $\varepsilon$ trace distance closeness with $2/3$ probability.
\end{proposition}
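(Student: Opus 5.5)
We follow the same template as the proof of Theorem~\ref{th:main_up_passive}, replacing the passive random purification channel $\Lambda^{(n,N)}$ with the hypothesized general one. Call it $\Lambda$. By assumption, for every $n$-mode Gaussian state $\rho$ with covariance $\Sigma$,
\bbb
\Lambda(\rho^{\otimes N})=\E_{U}\big[\ketbra{\Psi_U}{\Psi_U}^{\otimes N}\big],
\eee
where each $\Psi_U$ is a pure Gaussian state on $n+m$ modes with $\Tr_{B}\Psi_U=\rho$ and covariance $\Sigma_U$ satisfying $\|\Sigma_U\|_\mr{op}\le \mr{poly}(\|\Sigma\|_\mr{op})\le \mr{poly}(\bar E)$. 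We may take $m=n$, and in any case $m=O(n)$ so as not to affect the scaling; this should be part of the hypothesis, or follow from the fact that Gaussian purifications need at most $n$ extra modes.

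The algorithm has three steps.
\begin{enumerate}
\item Apply $\Lambda$ to the $N$ input copies.
\item Run the pure-state Gaussian tomography protocol of Theorem~\ref{th:main_up_pure} on the $(n+m)$-mode output, with energy bound $\bar E'=\mr{poly}(\bar E)$ and accuracy $\varepsilon$. It outputs an estimate $\hat\Psi$.
\item Output $\hat\rho=\Tr_B\hat\Psi$.
\end{enumerate}

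\textbf{Correctness.} Condition on the random label $U$. The protocol then receives $N$ copies of a fixed pure Gaussian state $\Psi_U$ satisfying the promise of Theorem~\ref{th:main_up_pure}. So with probability at least $2/3$ we get $D_\mr{tr}(\hat\Psi,\Psi_U)\le\varepsilon$. Because the trace distance is monotone under partial trace,
\bbb
D_\mr{tr}(\hat\rho,\rho)=D_\mr{tr}(\Tr_B\hat\Psi,\Tr_B\Psi_U)\le\varepsilon.
\eee
The success guarantee holds for every $U$, so averaging over $U$ (the mixture produced by $\Lambda$) preserves success probability at least $2/3$.

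\textbf{Sample count.} Substituting $n\to n+m=O(n)$ and $\bar E\to \mr{poly}(\bar E)$ into Eq.~\eqref{eq:upp_puregauss} gives
\bbb
O\big(n^2/\varepsilon^2+(n+\log\log\log\bar E)\log\log\bar E\big)=\widetilde O(n^2/\varepsilon^2),
\eee
because $\log\log(\bar E^{c})=\log\log\bar E+\log c$.

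\textbf{Main obstacle.} The only subtle point is this polynomial energy blow-up. It is exactly why the hypothesis requires $\mr{poly}(\|\Sigma\|_\mr{op})$: the dependence of Theorem~\ref{th:main_up_pure} on energy is only doubly logarithmic, so a polynomial blow-up costs just a constant factor inside the $\log\log$. A second check is that the learner of Theorem~\ref{th:main_up_pure} works on a product of identical copies. Conditioned on $U$, the output of $\Lambda$ is exactly $\Psi_U^{\otimes N}$, so this is immediate.
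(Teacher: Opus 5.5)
Your proof is correct and follows the reasoning the paper invokes (it defers to the proof of Theorem~\ref{th:main_up_passive}): apply the hypothesized purification channel, run the pure-state learner of Theorem~\ref{th:main_up_pure} on the purifications (implicitly $2n$ modes, as in the passive case) with energy bound $\mathrm{poly}(\bar E)$, and trace out the purifying modes, using monotonicity of trace distance under partial trace. Your observation that a polynomial energy blow-up only shifts $\log\log\bar E$ by an additive constant is exactly why the hypothesis can allow $\mathrm{poly}(\|\Sigma\|_\mathrm{op})$.
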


\subsection{On Energy Dependence and Roles of Adaptivity}\label{sec:energy_adaptivity}

Finally, we seek to address the following question: \emph{What resources or capacities enable the nearly energy-independent sample complexity?}  Focusing on the experimentally feasible class of schemes using non-entangling (i.e., single-copy) Gaussian measurements, we find that \emph{adaptivity} is the key ingredient for energy independence, as can be elucidated even in learning a single-mode Gaussian state:
\begin{theorem}[Adaptivity is necessary for energy-independent tomography via Gaussian schemes]\label{th:main_nonada}
    Focus on one-mode systems and non-adaptive, non-entangling, Gaussian measurements. Then $N=\Omega(\bar E/\varepsilon^2)$ copies are necessary to learn a Gaussian state $\rho(\mu,\Sigma)$ to $\varepsilon$ trace distance closeness with probability $2/3$ given the promise that $\|\Sigma\|_\mr{op}\le\bar E$. Furthermore, there exists a scheme that achieves this using $N=O(\bar E\log\bar E/\varepsilon^2)$ samples.
\end{theorem}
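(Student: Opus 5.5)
We treat the two halves of Theorem~\ref{th:main_nonada} separately. For a non-adaptive, non-entangling Gaussian scheme on one mode, copy $i$ is measured with a fixed general-dyne seed $V_i$ (with $V_i+\frac i2\Omega\ge0$), chosen before any data are seen. By the preliminaries, this measurement returns a sample from $\mc N(\mu,\Sigma+V_i)$. The data are therefore independent, non-identically distributed Gaussian samples whose law depends only on $(\mu,\Sigma)$ and the fixed list $\{V_i\}$.

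\textbf{Lower bound.} We build a family of hard instances and bound the KL divergence they induce, averaged over the learner's seed list, using a Le Cam two-point or small Fano argument. The instances are highly squeezed pure states $\rho_\theta$ with covariance $R_\theta\,\mr{diag}(\bar E,\frac1{4\bar E})R_\theta^T$, where $R_\theta$ is a phase rotation and $\theta$ is taken from a grid on $[0,\pi)$.
\begin{enumerate}[label=(\roman*)]
    \item For pure states, Lemma~\ref{le:main_wigner_relation} makes the trace distance comparable to the Wigner TV distance. Rotating by $\delta\theta$ moves the covariance by about $\bar E\,\delta\theta$ along the squeezed direction, so the trace distance is about $\bar E\,\delta\theta$. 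We set $\delta\theta\approx\varepsilon/\bar E$.
    \item We estimate the information one measurement with seed $V$ gives about $\theta$. The KL divergence between $\mc N(0,\Sigma_\theta+V)$ and $\mc N(0,\Sigma_{\theta'}+V)$ is about $\|(\Sigma_\theta+V)^{-1/2}(\Sigma_\theta-\Sigma_{\theta'})(\Sigma_\theta+V)^{-1/2}\|_F^2$. Because $\det V\ge\frac14$, $V$ cannot be narrower than $\frac1{4\bar E}$ along the thin direction unless it is wide in another direction. A short $2\times2$ calculation should show this per-copy KL is at most $O(\bar E\,\delta\theta^2)$ for most $\theta$: roughly, only when $V$ is nearly aligned with $\theta$, to within an angle of order $1/\bar E$, can it exceed $O(\bar E\,\delta\theta^2)$. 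With $\delta\theta=\varepsilon/\bar E$ this gives $O(\varepsilon^2/\bar E)$.
    \item The seeds are fixed in advance and there are $\Theta(\bar E)$ essentially distinct orientations. We draw $\theta$ uniformly from this grid and average the per-copy information over $\theta$ for any fixed $V_i$. This gives $O(\varepsilon^2/\bar E)$ per copy, uniformly in $V_i$. The aligned seeds give $O(\varepsilon^2)$, but only for an $O(1/\bar E)$ fraction of $\theta$.
    \item To make this a Fano argument, each grid point gets a local two-point perturbation $\theta,\theta+\delta\theta$ (alternatively, an Assouad-style construction). Then $N\cdot O(\varepsilon^2/\bar E)=\Omega(1)$, so $N=\Omega(\bar E/\varepsilon^2)$.
\end{enumerate}
\textbf{Main obstacle.} The hard step is making the per-copy information bound rigorous, namely a clean inequality: the average over $\theta$ of $\mathrm{KL}$, maximized over all $V$, is $O(\varepsilon^2/\bar E)$. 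This includes the general $2\times2$ case with $V$ itself squeezed and rotated. I would first diagonalize $V$ and write $\Sigma_\theta+V$ explicitly, then bound the Fisher information $\partial_\theta$ in closed form as a function of the angle between $V$ and $\Sigma_\theta$. Integrating over that angle should produce the $1/\bar E$ factor.

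\textbf{Upper bound.} We use a non-adaptive scheme that covers all orientations: homodyne or squeezed-seed measurements on a grid of about $\bar E$ rotation angles, with squeezing about $\sqrt{\bar E}$, each used on $O(\log\bar E/\varepsilon^2)$ copies.
\begin{enumerate}[label=(\roman*)]
    \item From the grid, we pick the orientation(s) whose outcome variance is smallest. This localizes the thin direction of $\Sigma$ to within the precision needed.
    \item With $\mu$ and $\Sigma$ estimated in that adapted frame, the estimator achieves Wigner TV error $\varepsilon$.
    \item We convert Wigner TV error to trace distance with Lemma~\ref{le:main_wigner_relation} ($n=1$). The $\log\bar E$ factor comes from a union bound over the $\bar E$ grid settings.
\end{enumerate}
Equivalently, this step can be cited as a non-adaptive variant of the analysis of \cite{bittel2025energy}, with total sample count $O(\bar E\log\bar E/\varepsilon^2)$.
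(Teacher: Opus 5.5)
\textbf{Your upper bound has a genuine gap at step (i).} The map $\phi\mapsto\Sigma_\phi=b+(a-b)\sin^2(\phi-\theta)$ is flat at its minimum, and each $\hat\Sigma_\phi$ carries relative error $\varepsilon$. So choosing the orientation with the smallest empirical variance locates $\theta$ only to $|\phi_{\min}-\theta|\sim\sqrt{\varepsilon}\,\kappa^{-1/2}$, even with arbitrarily many angles. With $\Theta(\bar E)$ angles the worst case is only $\Theta(\kappa^{-1/2})$, where $\kappa=a/b$ can be as large as $4\bar E^2$. Your own lower-bound computation shows that an orientation error $\delta\theta$ costs trace distance $\asymp\sqrt\kappa\,\delta\theta$. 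Accuracy $\varepsilon$ therefore needs $|\hat\theta-\theta|\lesssim\varepsilon\kappa^{-1/2}$, while the argmin leaves a trace-distance error of order $\sqrt\varepsilon$, or even $\Theta(1)$.

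Step (ii) cannot absorb this. A non-adaptive scheme cannot re-measure in the ``adapted frame''. Reading the off-diagonal entry from an angle far from $\theta$ (say $\phi_{\min}+\pi/4$, where the variance is $\sim a$) gives error $\sim\varepsilon a$, whereas Wigner-TV error $\varepsilon$ needs $\sim\varepsilon\sqrt{ab}$, a factor $\sqrt\kappa$ short.

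The missing idea is the core of Algorithm~\ref{alg:non-ada}. Use $\phi_{\min}$ and a rough $\hat\kappa=\hat\Sigma_{\max}/\hat\Sigma_{\min}$ only for coarse localization. Then take pre-sampled angles $\phi_\pm$ offset by $\Theta(\hat\kappa^{-1/2})$ on either side of $\phi_{\min}$. There $\Sigma_{\phi_\pm}=O(b)$, so the shot noise is $O(\varepsilon b)$, while $|\partial_\theta\Sigma_\phi|\sim\sqrt{ab}$. Solving the three variances for $(\theta,a,b)$ then gives $|\hat\theta-\theta|=O(\varepsilon\kappa^{-1/2})$ and relative errors $O(\varepsilon)$ in $a,b$ (Lemma~\ref{le:estimators}). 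The $\Theta(\bar E)$ random angles are needed precisely so that such offset angles exist.

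Two smaller points on the upper bound. The seed must add only $O(1/\bar E)$ variance along the measured quadrature (the paper takes $c=E$), since $b$ can be as small as $1/(4\bar E)$. The closing citation does not help either: the scheme of \cite{bittel2025energy} is adaptive, and its non-adaptive heterodyne version costs $\Theta(\bar E^2/\varepsilon^2)$.

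\textbf{Your lower bound is a viable alternative, but its key estimate is only asserted.} You perturb the orientation of a fixed pure squeezed state by $\delta\theta=\varepsilon/\bar E$. The paper instead fixes a uniformly random orientation $\phi$ and perturbs the squeezed variance, $\frac12a^{-1}\to(\frac12+\varepsilon)a^{-1}$. That perturbation is a rank-one PSD change along the thin axis, so the exact Gaussian $\chi^2$ formula and the matrix determinant lemma reduce everything to the closed-form average $\E_\phi(A^{-1})_{22}=a/(1+a)$. Yours is an indefinite rank-two change $\propto R_\theta\sigma_xR_\theta^T$. You therefore have to bound the Fisher information $\frac12\Tr[(C^{-1}\partial_\theta C)^2]$, with $C=\Sigma_\theta+V$, averaged over the relative angle and uniformly in the seed's squeezing.

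This bound is true. Using $\frac1\pi\int_0^\pi(A\cos^2\alpha+B\sin^2\alpha)^{-1}\mr{d}\alpha=(AB)^{-1/2}$, one gets $O(\bar E)$ for every pure seed. The quadratic approximation of the KL is also legitimate, because $C\ge\Sigma_\theta$ forces $\|C^{-1/2}(\Sigma_{\theta+\delta\theta}-\Sigma_\theta)C^{-1/2}\|_{\mr{op}}=O(\bar E\,\delta\theta)=O(\varepsilon)$ uniformly.

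Your heuristic, however, is off. For a seed matched to the state but misaligned by $1/\bar E\ll\alpha\ll1$, the Fisher information decays only like $3/\alpha^2$. It therefore exceeds $\bar E$ on a window of width $\bar E^{-1/2}$, and this tail contributes more to the average than the aligned window. You must integrate over $\alpha$, not just count aligned angles.

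Finally, finish with Le Cam, with the center $\theta$ revealed after the measurements, as the paper does:
$p_{\mathrm{succ}}\le\frac12+\frac12\E_\theta\mr{TV}(P_\theta,P_{\theta+\delta\theta})$.
A Fano bound over the whole grid via pairwise KL would not work, since distant grid points are far apart in KL.
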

That is, $\widetilde\Theta(\bar E\varepsilon^{-2})$ copies are necessary and sufficient for non-adaptive single-copy Gaussian measurements in learning one-mode Gaussian states.
Note that the adaptive learning algorithm in~\cite{bittel2025energy} also uses single-copy Gaussian measurements and only needs $O(\varepsilon^{-2} +\log\log\bar E \log\log\log\bar E)$ samples; it works by recursively and adaptively refining the measurement seed to be more aligned with the unknown state, thus effectively ``unsqueezing'' it. Theorem~\ref{th:main_nonada} shows that adaptivity is necessary to achieve this very weak energy dependence. Also recall that using only heterodyne measurements requires $\Theta(\bar E^2\varepsilon^{-2})$ copies (see Theorem~\ref{th:main_lo_heterodyne}), which has quadratically worse energy dependence than Theorem~\ref{th:main_nonada}. Indeed, our nearly-optimal non-adaptive protocol uses phase-space angle-randomized homodyne measurements. 
To our knowledge, this is the first rigorous and nearly-tight sample-complexity analysis of randomized homodyne measurements in the context of Gaussian state tomography, despite their widespread experimental use~\cite{breitenbach1997measurement,lvovsky2009continuous} and earlier theoretical investigation~\cite{ohliger2012compressedsensing}. 
In Fig.~\ref{fig:one-mode}, we illustrate an example comparing how heterodyne, random-angle homodyne, and the adaptive general-dyne~\cite{bittel2025energy} protocols work in learning an unknown single-mode squeezed state.

\begin{figure}[!htp]
    \centering
    \includegraphics[width=0.8\linewidth]{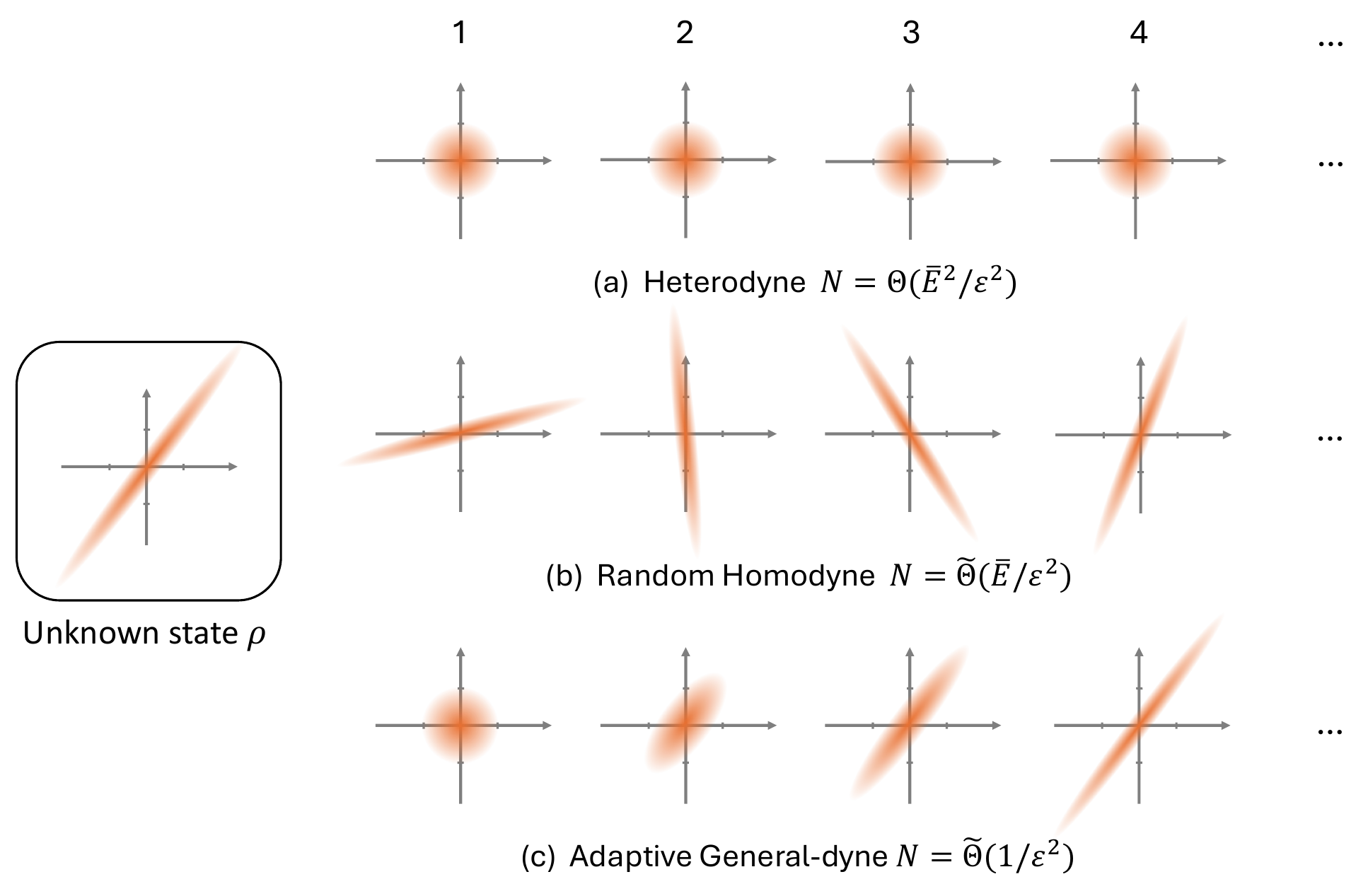}
    \caption{Comparison of three different single-copy Gaussian measurement schemes. The left box shows the Wigner function of an unknown single-mode Gaussian state $\rho$. On the right show the Wigner functions of the Gaussian measurement seeds for the first few copies of $\rho$ of three different protocols. (a) Heterodyne measurements, with sample complexity $\Theta(\bar E^2/\varepsilon^2)$, whose seeds are always the vacuum state. (b) Angle-randomized homodyne, which is a nearly optimal non-adaptive scheme proposed in Theorem~\ref{th:main_nonada} with sample complexity $\widetilde{\Theta}(\bar E/\varepsilon^2)$, whose seeds are angle-randomized squeezed states (which approximate the ideal homodyne). (c) Adaptive general-dyne measurements as proposed in~\cite{bittel2025energy}, which achieves nearly energy-independent sample complexity of $\widetilde{\Theta}(1/\varepsilon^2)$. The seeds are recursively and adaptively refined to be aligned with $\rho$ so as to maximize sensitivity.}
    \label{fig:one-mode}
\end{figure}

\begin{proof}[Proof Sketch for Theorem~\ref{th:main_nonada}]
For the lower bound, we apply a proof technique inspired by~\cite{chen2024adaptivity} in discrete-variable quantum systems. We construct a continuous family of binary hypothesis testing tasks parameterized by $\phi\in[0,\pi)$: distinguishing two single-mode zero-mean Gaussian states $\rho_{0,\phi}$ and $\rho_{1,\phi}$ with covariance matrices $\Sigma_{0,\phi}=\frac12R_\phi\mr{diag}(a;a^{-1})R_\phi^T$ and $\Sigma_{1,\phi}=\frac12R_\phi\mr{diag}(a;(1+2\varepsilon)a^{-1})R_\phi^T$ respectively. Here $R_\phi$ is the standard 2-by-2 rotation matrix and $a = 2\bar E$. Physically, both $\rho_{0,\phi}$ and $\rho_{1,\phi}$ are squeezed along the phase space angle $\phi+\frac\pi2$, while the latter is slightly less squeezed than the former, to the extent that their trace distance is $\Omega(\varepsilon)$. Intuitively, for a Gaussian measurement to properly distinguish these two states, the seed covariance matrix $V$ needs to be squeezed along the same angle to prevent quantum noises from destroying the signal. However, when $\phi$ is randomized over $[0,\pi)$, no fixed $V$ can distinguish all pairs of $\rho_{0,\phi}$ and $\rho_{1,\phi}$ with high probability, due to the constraints $V+\frac12i\Omega\ge0$. While an adaptive scheme can quickly search for a pretty good $V$ as in~\cite{bittel2025energy}, non-adaptive schemes must pay a factor of $\Theta(\bar E)$ of copies to distinguish with high overall success probability. More details about the lower bound are presented in App.~\ref{sec:non-ada-lower}.

On the other hand, we present an explicit non-adaptive protocol to achieve the upper bound in App.~\ref{sec:non-ada-upper}. 
At a high level, the protocol conducts homodyne measurements\footnote{Here, the homodyne measurement does not need to be ideal -- it can be realized by a general-dyne with finitely-squeezed $V$. This is explicitly analyzed in App.~\ref{sec:non-ada-upper}.} along a few random phase space angles on $\Theta(\bar E\log \bar E/\varepsilon^2)$ copies and heterodyne measurements on $\Theta(1/\varepsilon^2)$ copies. 
Based on the homodyne outcomes, the algorithm first roughly estimate the condition number of $\Sigma$, i.e., the ratio between the largest and smallest eigenvalues. Denote the estimator by $\hat\kappa$. If $\hat\kappa$ is smaller than a constant, one simply uses the heterodyne estimator from~\cite{bittel2025energy} which suffices to solve the learning tasks using $\Theta(1/\varepsilon^2)$ copies. Otherwise, one uses the randomized homodyne data to first roughly estimate the highly squeezed direction of $\Sigma$, and then solves for $\Sigma$ and $\mu$ using homodyne outcomes along a few carefully post-selected directions. The overhead of $\bar E$ in sample complexity is to ensure such ``good'' phase space angles exist among the random angles sampled with high probability. 

The use of a randomized homodyne direction is mainly for the simplicity of analysis. In practice, we expect a uniformly spaced choice of homodyne angles would work as well. We also believe the small number of additional heterodyne measurements is not necessary, given a more careful analysis, and that this algorithm can be extended to learn multi-mode Gaussian states.
These will be left for future work.   
\end{proof}

\section{Discussion}\label{sec:discussion}

\subsection{Prior works and context of our results} 
{
Historically, the problem of quantum state tomography first emerged in the early 1990s in the setting of continuous-variable quantum systems~\cite{smithey_measurement_1993,lvovsky2009continuous}, and only later became a central object of study for finite-dimensional systems. In the continuous-variable regime, experimental tomography protocols have traditionally relied on homodyne and heterodyne detection~\cite{dariano_tomographic_1995,lvovsky2009continuous,wallentowitz_reconstruction_1995,smithey_measurement_1993,babichev_homodyne_2004} to reconstruct phase-space descriptions of the state~\cite{serafini2023quantum}, including the Wigner, characteristic, and Husimi functions~\cite{lvovsky2009continuous,leonhardt_measuring_1995,smithey_measurement_1993,serafini2023quantum}. While these techniques have long been routine in quantum optics, they historically did not provide guarantees for the reconstruction error in trace distance---arguably the most operationally meaningful notion of distance between quantum states.

Only recently has continuous-variable quantum state tomography with provable trace-distance guarantees begun to attract systematic theoretical attention~\cite{mele2025learning,Bittel_2025,fanizza2025,bittel2025energy,mele2025symplecticrank,Zhao_2025,iosue2025}. Specifically, the problem of learning Gaussian states in trace distance has recently been investigated in Refs.~\cite{mele2025learning,Bittel_2025,fanizza2025,bittel2025energy}, primarily with the goal of establishing upper bounds on the sample complexity. Prior to the present work, however, essentially no lower bounds on the sample complexity of Gaussian state tomography were known, leaving its fundamental limits largely unresolved.

}

The problem of Gaussian state learning was first introduced in~\cite{mele2025learning}, where simple tomography protocols based on estimating the first and second moments of the state via homodyne detection were shown to achieve sample complexity scaling polynomially with both the number of modes and the energy. These protocols were subsequently refined in~\cite{fanizza2025,Bittel_2025}, leading to progressively tighter upper bounds on the sample complexity. All these algorithms rely on estimating the first and second moments of the Gaussian state via heterodyne detection, with the best known upper bound for heterodyne-based tomography being $O(\bar{E}^2 n^3/\varepsilon^2)$~\cite{bittel2025energy}. In our work, we show that this scaling is in fact optimal for heterodyne-based tomography, proving that its sample complexity is 
$\Theta(\bar{E}^2 n^3/\varepsilon^2)$
(see Theorem~\ref{th:main_lo_heterodyne}). Moreover, we show that such a polynomial dependence on the energy is unavoidable for any single-copy, {non-adaptive} Gaussian tomography algorithm (see Theorem~\ref{th:main_nonada}), and is therefore not specific to heterodyne- or homodyne-based schemes. This observation naturally motivates the study of single-copy Gaussian algorithms that additionally allow for adaptivity.

Indeed, in~\cite{bittel2025energy}, a single-copy adaptive Gaussian tomography algorithm was proposed that achieves an almost energy-independent sample complexity, with only a doubly logarithmic dependence on the energy. The key idea~\cite{bittel2025energy} is to first adaptively ``unsqueeze'' the state and then estimate its first and second moments. This yields a tomography algorithm that relies exclusively on Gaussian operations and has sample complexity $\widetilde O(n^3/\varepsilon^2)$, which currently represents the best known upper bound on the sample complexity of learning Gaussian states. Our work complements this result by proving that \emph{any} tomography protocol that relies exclusively on Gaussian operations must require at least $\Omega(n^3/\varepsilon^2)$ samples (see Theorem~\ref{th:main_lo_gaussian}). In particular, we show that the protocol of~\cite{bittel2025energy} is sample-optimal within the class of tomography algorithms that employ only Gaussian operations, thereby establishing that the sample complexity of learning Gaussian states using Gaussian operations is $\widetilde\Theta(n^3/\varepsilon^2)$.  Finally, dropping the Gaussianity assumption on the operations, we show that any arbitrary (non-Gaussian) tomography algorithm must require at least $\Omega(n^2/\varepsilon^2)$ samples. Altogether, this implies that the sample complexity of learning Gaussian states is lower bounded by $\Omega(n^2/\varepsilon^2)$ and upper bounded by $\widetilde O(n^3/\varepsilon^2)$ (see Theorem~\ref{th:main_lo_any}).

A central technical ingredient underlying the sample-complexity upper bounds discussed above~\cite{mele2025learning,Bittel_2025,fanizza2025,bittel2025energy} is the use of estimates that relate the trace distance between two Gaussian states to suitable distances between their first and second moments~\cite{mele2025learning,Bittel_2025,fanizza2025,holevo2024estimates,mele2025achievablerates}. Such estimates make it possible to convert guarantees on moment estimation into corresponding guarantees on the trace-distance error of the reconstructed Gaussian state. Early bounds of this type were not tight~\cite{mele2025learning}, but were subsequently refined in~\cite{holevo2024estimates,fanizza2025}, ultimately culminating in a tight characterization of the trace distance between mixed Gaussian states in~\cite{Bittel_2025,bittel2025energy} (see also~\cite{mele2025achievablerates,martinezcifuentes2026calculating} for an algorithm that computes the trace distance between two Gaussian states up to a prescribed precision). Related trace-distance bounds in the non-Gaussian setting were derived in~\cite{mele2025symplecticrank}. {In this work, we further contribute to this line of research by establishing sharp bounds between the trace distance and Wigner TV distance of Gaussian states (Lemma~\ref{le:main_wigner_relation}): For general Gaussian states $\rho,\sigma$, it holds that $\Omega(\mr{TV}(W_{\rho},W_{\sigma}))\le D_{\mr{tr}}(\rho,\sigma)\le O(\sqrt n\cdot\mr{TV}(W_{\rho},W_{\sigma}))$; for pure Gaussian states, $D_{\mr{tr}}(\psi,\phi)= \Theta(\mr{TV}(W_{\psi},W_{\phi}))$.
These bounds are key ingredients in our main results: they yield the optimal sample complexity $\widetilde\Theta(n^2/\varepsilon^2)$ for learning pure Gaussian states (Theorem~\ref{th:main_up_pure}) and the lower bound $\Omega(n^2/\varepsilon^2)$ for learning general Gaussian states (Theorem~\ref{th:non-gaussian-lower-stronger}). 
We expect these results to have much more implications, such as for Gaussian data hiding~\cite{sabapathy2021bosonic, Lami_2021_dh,wang2025gaussian,turner2026optimal}.

 }%

Beyond tomography, our work also contributes to the literature on provable quantum advantages in learning tasks with continuous-variable systems~\cite{oh2024entanglement,Liu_2025,coroi2025}. In particular, Refs.~\cite{oh2024entanglement,Liu_2025,coroi2025} establish provable quantum advantages enabled by entanglement with an ancillary quantum memory, showing that learning without access to such a resource is provably less efficient. Here, we establish a different type of quantum advantage enabled by non-Gaussian resources for the task of learning passive Gaussian states (see Theorem~\ref{th:main_up_passive}): while Gaussian protocols require $\Omega(n^3/\varepsilon^2)$ samples, allowing non-Gaussian operations enables learning with only $\widetilde O(n^2/\varepsilon^2)$ samples. This yields a strict polynomial separation between Gaussian and non-Gaussian algorithms for learning passive Gaussian states.

More broadly, the present work fits into the rapidly growing literature on quantum learning theory with continuous-variable systems~\cite{mele2025learning,Bittel_2025,fanizza2025,bittel2025energy,becker_cs,gandhari2023,oh2024entanglement,Liu_2025,coroi2025,fawzi2024optimalfidelity,Wu_2024,mobus2023dissipation,Zhao_2025,mele2025symplecticrank,iosue2025,fanizza2025_cv_unitary,girardi2025gaussian,markovich2025,rosati2024learning}. This line of research includes studies on tomography of Gaussian states~\cite{mele2025learning,Bittel_2025,fanizza2025,bittel2025energy}, tomography of structured classes of non-Gaussian states~\cite{mele2025learning,mele2025symplecticrank,Zhao_2025,iosue2025,markovich2025}, tomography of Gaussian unitary channels~\cite{fanizza2025_cv_unitary}, bosonic variants of classical shadows~\cite{becker_cs,gandhari2023}, property testing~\cite{girardi2025gaussian}, fidelity estimation~\cite{fawzi2024optimalfidelity}, provable quantum advantages in learning tasks~\cite{oh2024entanglement,Liu_2025,coroi2025}, and Hamiltonian learning~\cite{fanizza2025,mobus2023dissipation}\sloppy.

Besides quantum learning theory, quantum sensing and metrology in bosonic systems and with Gaussian states have also been well investigated~\cite{giovannetti2011advances,nichols2018multiparameter}, with applications in quantum imaging~\cite{boto2000quantum,tsang2016quantum,nair2016far,tham2017beating}, gravitational wave detection~\cite{aasi2015advanced,tse2019quantum}, etc. 
Quantum metrology mainly focuses on the asymptotic regime where the estimation error is much smaller than all the other parameters. In that regime, quantum Fisher information and quantum Cram\'er-Rao bounds are powerful tools to characterize the ultimate efficiency. In contrast, our results works in the non-asymptotic regime where many features cannot be revealed from a metrological analysis, such as the role of adaptivity in energy-dependent sample complexity scaling. It is an interesting direction to better understand the connections and differences between quantum learning theory and quantum metrology~\cite{chen2026instance}.

\subsection{Open problems}

\paragraph{Sample-optimal learning of general Gaussian states.}
A central open problem arising from our work is to determine the optimal sample complexity for learning mixed Gaussian states. We conjecture that the correct scaling is $\widetilde{\Theta}(n^2/\varepsilon^2)$, thereby saturating the general lower bound established in Theorem~\ref{th:main_lo_any} and coinciding with the sample complexity for learning pure Gaussian states established in Theorem~\ref{th:main_up_pure}. A promising route toward achieving this result would be establishing the existence of a random purification channel~\cite{tang2025,pelecanos2025,random_pur_simple} for general Gaussian states, extending recent developments for passive Gaussian states~\cite{mele_purif_2025,walter_purif_2025}. If such a channel can be realized, an optimal learning strategy would proceed by first applying this channel to the copies of the unknown mixed Gaussian state, thereby producing copies of a Gaussian purification, and then performing the sample-optimal tomography for pure Gaussian states introduced in Theorem~\ref{th:main_up_pure} on the resulting Gaussian purifications. This approach would yield a non-Gaussian tomography algorithm with sample complexity $\widetilde{O}(n^2/\varepsilon^2)$, matching the lower bound of Theorem~\ref{th:main_lo_any} up to logarithmic factors.

A related open question concerns the role of energy constraints. The best-known tomography protocols for both mixed Gaussian states~\cite{bittel2025energy} and pure Gaussian states (see Theorem~\ref{th:main_up_pure}) exhibit a mild double-logarithmic dependence on the energy. It remains unclear whether this mild dependence reflects a fundamental limitation or is merely an artifact of existing techniques.

\paragraph{Adaptivity for more general measurements.} We have shown that non-adaptive non-entangled (i.e., single-copy) Gaussian measurements suffer from a linear-in-energy sample complexity lower bound. An open question is whether such a bound holds against entangled (i.e., multi-copy) measurement. In particular, \cite{bittel2025energy} shows non-adaptive continuous-variable Bell measurements on a Gaussian state $\rho$ and its conjugate $\rho^T$ can achieve energy-independent sample complexity. When conjugate states are not available, it remains unclear whether entangled Gaussian measurements need adaptivity to achieve energy independence. It also remains unclear whether non-Gaussian measurements need adaptivity. 

\paragraph{Learning Gaussian processes.}
Another important direction for future work concerns tomography of Gaussian processes. Can our techniques be extended to establish lower and upper bounds on the query complexity of learning Gaussian channels? At present, upper bounds are known only in the restricted setting of Gaussian unitary channels~\cite{fanizza2025_cv_unitary}, while the general case of non-unitary Gaussian channels remains open and lower bounds are entirely unexplored. {Physically, this problem is even more closely tied to gravitational wave and dark matter searches, as these signals can usually be modeled as a Gaussian random displacement channel~\cite{gardner2025stochastic}. It is interesting to investigate whether non-Gaussian input states and coherent feedback provide an advantage for learning such channels.}

\paragraph{Experimental demonstrations.}
A big motivation for Gaussian state tomography comes from experiments. Thus, it is interesting to implement our sample-optimal learning protocols in real experiments. While some of our protocols (e.g., Theorem~\ref{th:main_nonada}) requires only Gaussian measurements that are experimentally feasible, others (e.g., Theorem~\ref{th:main_up_passive}) require highly complicated operations such as a random purification channel. For the latter, can we find an experimentally friendly non-Gaussian protocol that is also sample-optimal?

\subsection{Potential practical applications}
Beyond theoretical interest, we expect the formalism of sample-optimal Gaussian state learning to have practical impact. Here, we discuss two directions for potential practical applications. The first is in broadband weak signal searches. Examples include stochastic gravitational wave detection~\cite{allen1999detecting} and axionic dark matter searches~\cite{backes2021quantum, malnou2019squeezed}. In those settings, the signal detector is often modeled by a multi-mode bosonic Hilbert space, where each mode corresponds to a frequency bin. If we assume the detector is initialized to a Gaussian state (as is usually done in experiments), the output signal-carrying state will be a multi-mode Gaussian state, and the signal detection task becomes a multi-mode Gaussian state learning task. For detecting a broadband weak signal, which corresponds to a large number of modes $n$ and a small error tolerance $\varepsilon$, achieving optimal sample complexity with respect to both $n$ and $\varepsilon$ is thus highly advantageous, as it will maximize the chance of successful detection with fixed experimental resources. 
{In practice, the signal-carrying multi-mode Gaussian state is unlikely to be completely arbitrary; rather, it typically possesses additional structure. For example, frequency modes may exhibit finite-range correlations or decay of correlations beyond a characteristic bandwidth}.
An interesting open question is how to model the structure and design efficient learning algorithms that exploit such structure. 
{Another example from gravitational-wave detection is the search for the waveforms' deviations from general relativity; such deviations are often parametrized as a linear combination of basis functions, and one would like to bound the distribution of coefficients in this combination~\cite{Isi2019hierarchical,Seymour2026gravitational}. }

Another area of potential applications is quantum device characterization and benchmarking. In recent years, bosonic systems have become strong candidates for quantum computation. A series of works aims at demonstrating quantum computational advantage in Gaussian boson sampling~\cite{hamilton2017gaussian,zhong2020quantum,madsen2022quantum}, which applies photon number counting on randomly generated Gaussian states to create distributions that are classically hard to simulate. 
A sample-optimal Gaussian state learning algorithm can be used to efficiently characterize the states generated in this setup; this might be helpful for certifying the advantage, or might identify structure that makes classical simulation easier~(for example, by leveraging photon loss in the protocol~\cite{oh2024classical}).

\phantomsection
\section*{Acknowledgments} 
\addcontentsline{toc}{section}{Acknowledgments}
We thank Rana Adhikari, Lennart Bittel, Su Direkci, Jens Eisert, Ludovico Lami, Shruti Maliakal, Antonio Anna Mele, Lee McCuller, and Sisi Zhou for helpful discussions.
We thank Zhihan Zhang for pointing out a minor gap in the proof of Theorem~\ref{th:non-ada-upper} in the first version of this manuscript and for helping us address it.
S.C., A.L., Z.M., H.H., Y.C., J.P. acknowledge funding provided by the Institute for Quantum Information and Matter, an NSF Physics Frontiers Center (NSF Grant PHY-2317110). F.A.M.\ acknowledges financial support from the European Union (ERC StG ETQO, Grant Agreement no.\ 101165230). F.A.M.\ thanks California Institute of Technology for hospitality.
H.H. and J.P. acknowledge support the U.S. Department of Energy, Office of Science, National Quantum Information Science Research Centers, Quantum Systems Accelerator. H.H. acknowledges support from the Broadcom Innovation Fund.
{Y.C.\ is supported by the Simons Foundation (Award Number 568762), and by NSF Grant PHY-2309231.}

\paragraph{Statement of AI usage.}
The main results of this paper are obtained by human from late 2025 to early 2026. During this time, ChatGPT 5.4 was used to check correctness and contribute minor ideas. Its only significant technical contribution is in Theorem~\ref{th:non-gaussian-lower} where it suggests a way to bound the von Neumann entropy as described in Eq.~\eqref{eq:bound_the_entropy}.
The whole manuscript is human-written with minor assistance from ChatGPT 5.5 and 5.6 Sol for checking typos and errors.
The authors take full responsibility for this work.

\newpage
\appendix
\renewcommand{\thelemma}{\Alph{section}.\arabic{lemma}}
\renewcommand{\theHlemma}{\Alph{section}.\arabic{lemma}}
\numberwithin{lemma}{section}
\setcounter{theorem}{0}
\setcounter{lemma}{0}
\setcounter{figure}{0}

\renewcommand{\thetheorem}{S\arabic{theorem}}
\renewcommand{\theHtheorem}{S\arabic{theorem}}

\renewcommand{\theequation}{S\arabic{equation}}
\renewcommand{\theHequation}{S\arabic{equation}}

\renewcommand{\thefigure}{S\arabic{figure}}
\renewcommand{\theHfigure}{S\arabic{figure}}

\section{Notations and Preliminaries}\label{sec:pre}

Denote the quadrature operators as $\hat R\coleq[\hat x_1,\cdots,\hat x_n,\hat p_1,\cdots,\hat p_n]$, satisfying
\bb
[\hat R_j, \hat R_k] = i\Omega_{jk}.
\ee
Here $\Omega \coleq \begin{pmatrix}
    0_n & I_n\\-I_n & 0_n
\end{pmatrix}.$
Define the annihilation operator as
\bb
\hat a \coleq \frac{\hat x + i\hat p}{\sqrt2}
\ee
Recall that a Gaussian state is a state whose Wigner function is a Gaussian distribution.
The mean $\mu$ and covariance $\Sigma$ of a Gaussian state $\rho(\mu,\Sigma)$ is defined as
\bb
\mu_i\coleq\Tr(\rho \hat R_i),\quad \Sigma_{ij} \coleq \frac12\Tr(\rho\{\hat R_i-\mu_i,\hat R_j - \mu_j\}).
\ee
In particular, for Vacuum states $\Sigma_{\text{vacuum}}=\frac12 I$. Note that our choice of convention is different from from~\cite{bittel2025energy}.

\medskip

A sufficient and necessary condition for a positive definite matrix $\Sigma$ to be a valid Gaussian covariance matrix is $\Sigma+\frac12i\Omega\ge 0$. Given a valid $\Sigma$, it represents a pure Gaussian state if and only if $\det\Sigma = 4^{-n}$. 
Equivalently, A positive definite matrix $\Sigma$ is a valid Gaussian covariance matrix if and only if $\Sigma^{-1} \le 4\Omega^T \Sigma \Omega$, and the equality holds if and only if it represents a pure Gaussian state.

Any real positive-definite matrix $V$ has the following Williamson decomposition: $V=SDS^T$. Here $S$ is a symplectic matrix, meaning that $S\Omega S^T =\Omega$; $D=\mr{diag}(\nu_1,\nu_2,\cdots,\nu_n,\nu_1,\nu_2,\cdots,\nu_n)$ where $\nu_i>0$ are the symplectic eigenvalues of $V$. $V$ is a valid Gaussian covariance matrix if and only if $\nu_i\ge1/2$ for all $i$. For pure Gaussian covariance matrices, all symplectic eigenvalues are $1/2$ and $V=\frac12SS^T$.

Any symplectic matrix $S$ admits an Euler decomposition $S=K_1ZK_2$ where $K_1,K_2$ are orthogonal symplectic matrix and $Z=(e^{r_1},\cdots,e^{r_n},e^{-r_1},\cdots,e^{-r_n})$. Combined with the Williamson decomposition, one can see that (1) the (standard) eigenvalues of any pure Gaussian covariance matrix take the form $\{\frac12\lambda_1,\cdots,\frac12\lambda_n,\frac12\lambda_1^{-1},\cdots,\frac12\lambda_n^{-1}\}$; and (2) the eigenvalues of a general Gaussian covariance matrix can be paired in a way that the product of each pair is at least $1/4$. %
{One can prove claim (2) as follows. By Williamson's decomposition, for any covariance matrix $V$ there exist a symplectic matrix $S$ and a diagonal matrix $D\ge \frac{I}{2}$ such that $V(\rho)=SDS^\intercal$, and since $D\ge \frac{I}{2}$ this directly gives $V\ge \frac12 SS^\intercal$. Moreover, by the Euler decomposition, the eigenvalues of $SS^\intercal$ come in reciprocal pairs, i.e.~, $\{z_1,1/z_1,\ldots,z_n,1/z_n\}$ with $z_1,\ldots,z_n\ge1$. Finally, combining $V\ge \frac12 SS^\intercal$ with Weyl's monotonicity theorem~\cite{BHATIA-MATRIX} gives the desired conclusion.}

{A Gaussian state is called \emph{passive} if it has zero mean and its covariance matrix can be written in Williamson form as \(V = K D K^T\), where \(K\) is symplectic orthogonal. Equivalently, a passive Gaussian state is a state obtained by applying a passive Gaussian unitary, i.e.~a Gaussian unitary associated with a symplectic orthogonal matrix \(K\), to a thermal state. Note that any passive Gaussian state is unentangled. More precisely, every \(n\)-mode passive Gaussian state is fully separable across the \(n\) modes, i.e.~it can be written as a (Gaussian) convex combination of tensor products of $n$ single-mode (Gaussian) states. This follows from two facts: first, every single-mode thermal state admits a Gaussian convex decomposition into single-mode coherent states~\cite{serafini2023quantum}; second, a passive Gaussian unitary maps any tensor product of single-mode coherent states into another tensor product of single-mode coherent states~\cite{serafini2023quantum}. Therefore, a passive Gaussian state, being obtained by applying a passive Gaussian unitary to a tensor product of single-mode thermal states, is itself a convex combination of tensor products of single-mode coherent states, and hence is fully separable.}

\medskip

Given two probability density functions $p,~q$ over the same measurable space $\mc X$ the total variation distance is $\mr{TV}(p,q)\coleq\int\mr{d}x\frac12\|p(x)-q(x)\|$; the KL divergence is $\mr{KL}(p\|q)\coleq\int\mr{d}x\,p(x)\log\frac{p(x)}{q(x)}$; and the $\chi^2$ divergence is $\chi^2(p\|q)\coleq\int\mr{d}x\, \frac{\left(p(x)-q(x)\right)^2}{q(x)}$. For the latter two to be well-defined, it requires that $q(x)>0$ whenever $p(x)>0$. For two density operators $\rho,~\sigma$ on the $n$-mode bosonic Hilbert space, the trace distance is $D_{\mr{tr}}(\rho,\sigma)\coleq\frac12\Tr|\rho-\sigma|$ where $|\cdot|$ denotes the operator absolute value. All operator norm in this work should be understood as Schatten p-norm, i.e., $\|A\|_{p}\coleq \Tr(|A|^p)^{\frac1p}$. Additionally, $\|A\|_\mr{op}\coleq\|A\|_\infty$ and $\|A\|_\mr{F}\coleq\|A\|_2$.

$\log$ is base $e$ throughout the paper.

\subsection{Known results for bosonic Gaussian states}

\begin{lemma}\cite{spedalieri2012limit}
For two $n$-mode Gaussian states $\rho_1=\rho(\mu_1,\Sigma_1)$ and $\rho_2=\rho(\mu_2,\Sigma_2)$, when at least one of them is pure, their fidelity is given by,
\bb
  F(\rho_1,\rho_2) \coleq \Tr(\rho_1\rho_2) =  \frac{1}{\sqrt{\det(\Sigma_1+\Sigma_2)}}\exp\left(-\frac12(\mu_1-\mu_2)^T(\Sigma_1+\Sigma_2)^{-1}(\mu_1-\mu_2)\right).
\ee
\end{lemma}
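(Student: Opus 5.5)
The plan is to compute $\Tr(\rho_1\rho_2)$ through the phase-space overlap formula recalled in Sec.~\ref{sec:main_pre}, and then to use purity only to identify this overlap with the fidelity. Suppose, without loss of generality, that $\rho_1=\ketbra{\psi}{\psi}$ is pure. Then the fidelity reduces to $\bra{\psi}\rho_2\ket{\psi}=\Tr(\rho_1\rho_2)$, in the squared-fidelity convention used in the statement. So it suffices to evaluate $\Tr(\rho_1\rho_2)$ for two arbitrary Gaussian states, and purity plays no further role.

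\textbf{Step 1 (overlap formula).} Apply $\Tr(\rho\hat O)=(2\pi)^n\int \mr d^{2n}r\,W_\rho(r)W_{\hat O}(r)$ with $\rho=\rho_1$ and $\hat O=\rho_2$. This gives
\bb
\Tr(\rho_1\rho_2)=(2\pi)^n\int \mr d^{2n}r\,\mc N(r;\mu_1,\Sigma_1)\,\mc N(r;\mu_2,\Sigma_2).
\ee
For rigor, I would justify this identity for two trace-class operators by the quantum Plancherel theorem for the Weyl transform, $\Tr(A^\dagger B)=\frac{1}{(2\pi)^n}\int\mr d^{2n}\xi\,\overline{\Tr[A\hat D(\xi)]}\,\Tr[B\hat D(\xi)]$. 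One then converts characteristic functions to Wigner functions by the ordinary Fourier--Plancherel identity on $\mbb R^{2n}$. Both Gaussian characteristic functions, $\Tr[\rho_j\hat D(\xi)]$, are explicit Gaussians, so all integrals converge absolutely and no limiting argument is needed.

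\textbf{Step 2 (Gaussian integral).} The product-of-Gaussians identity $\int \mc N(r;\mu_1,\Sigma_1)\mc N(r;\mu_2,\Sigma_2)\,\mr dr=\mc N(\mu_1;\mu_2,\Sigma_1+\Sigma_2)$ is standard. It is the density at zero of the difference of two independent Gaussian vectors, or equivalently follows by completing the square. In dimension $2n$, this equals
\bb
(2\pi)^{-n}\det(\Sigma_1+\Sigma_2)^{-1/2}\exp\!\left(-\tfrac12(\mu_1-\mu_2)^T(\Sigma_1+\Sigma_2)^{-1}(\mu_1-\mu_2)\right).
\ee
Multiplying by the prefactor $(2\pi)^n$ from Step 1 yields exactly the claimed expression.

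As a sanity check, take $\rho_1=\rho_2$ equal to the vacuum. Then $\Sigma_1+\Sigma_2=I_{2n}$ and the formula returns $1$, which confirms the normalization conventions. The only delicate point is the analytic justification of the overlap formula in Step 1. Given that this formula is already stated earlier in the paper, the remaining work is routine.
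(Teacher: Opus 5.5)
Your argument is correct. The paper gives no derivation of its own: it imports the formula from the cited reference, remarking only that the reference uses a different vacuum-noise convention.

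Your route is self-contained and proves slightly more than the statement. The right-hand side equals the Hilbert--Schmidt overlap $\Tr(\rho_1\rho_2)$ for \emph{any} two Gaussian states. Purity enters only when you identify this overlap with the squared fidelity, via $\bra{\psi}\rho_2\ket{\psi}=\Tr(\rho_1\rho_2)$.

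You also compute directly in the paper's normalization: $\Sigma_{\mathrm{vac}}=\frac12 I_{2n}$, overlap prefactor $(2\pi)^n$, and quantum Plancherel prefactor $(2\pi)^{-n}$ for $\hat D(\xi)=\exp(i\xi^T\Omega\hat R)$. These constants are mutually consistent, and your vacuum check confirms them. So your derivation certifies the constants after the convention change, which is exactly where importing a formula from a differently normalized source could go wrong.

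The overlap argument cannot reach the mixed--mixed case. There $\Tr(\rho_1\rho_2)$ is no longer the fidelity, and one needs the much more involved general Gaussian fidelity formula. The paper explicitly avoids that case, so nothing is lost.
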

\noindent Throughout this paper, we use the squared convention of fidelity. 
Note that \cite{spedalieri2012limit} uses a different vacuum noise convention than ours.
The fidelity between two mixed $n$-mode Gaussian states is known from~\cite{banchi2015quantum}, but we will not use it here due to its complicated expression.

\begin{lemma}\cite[Theorem~8]{bittel2025energy}\label{le:perturbation}
    For any two Gaussian states $\rho(\mu_1,\Sigma_1)$ and $\rho(\mu_2,\Sigma_2)$, it holds that 
    \bb 
        D_\mr{tr}(\rho(\mu_1,\Sigma_1),\rho(\mu_2,\Sigma_2)) \le \frac{1}{\sqrt{2}}\|\Sigma_1^{-1/2}(\mu_2-\mu_1)\|_2 + \cfrac{1 + \sqrt3}{8}\Tr\left((\Sigma_1^{-1} + \Sigma_2^{-1})\left|\Sigma_1-\Sigma_2\right|\right).
    \ee
\end{lemma}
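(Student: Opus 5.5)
The plan is to split the two contributions with the triangle inequality through the intermediate state $\rho(\mu_2,\Sigma_1)$:
\[
D_\mr{tr}(\rho(\mu_1,\Sigma_1),\rho(\mu_2,\Sigma_2))\le D_\mr{tr}(\rho(\mu_1,\Sigma_1),\rho(\mu_2,\Sigma_1))+D_\mr{tr}(\rho(\mu_2,\Sigma_1),\rho(\mu_2,\Sigma_2)).
\]
I would treat the first term (pure displacement at fixed covariance $\Sigma_1$) and the second term (pure covariance change at fixed mean) separately.

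\textbf{Displacement term.} I would use the Williamson decomposition $\Sigma_1=SDS^T$ and conjugate by the Gaussian unitary $U_S^\dagger$, which preserves trace distance. This reduces the problem to a product of thermal states with symplectic eigenvalues $\nu_j$, displaced by $\delta=S^{-1}(\mu_2-\mu_1)$. In this frame $\|\Sigma_1^{-1/2}(\mu_2-\mu_1)\|_2^2=\sum_j|\delta_j|^2/\nu_j$, where $\delta_j\in\mbb R^2$ is the displacement of mode $j$.

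Next I would write each thermal mode as a Gaussian mixture of displaced vacua, so that $\rho(\delta,D)=\E_{x\sim\mc N(\delta,D-\frac12 I)}\ketbra{x}{x}$, where $\ket{x}$ is the coherent state. Then I split $\delta=\delta^a+\delta^b$ mode-wise:
\begin{itemize}
\item the part $\delta^b$ is absorbed classically, costing the TV distance $\frac12\|(D-\frac12)^{-1/2}\delta^b\|_2$;
\item the part $\delta^a$ is paid quantumly through the coherent-state overlap, costing at most $\sqrt{1-e^{-|\delta^a|^2/2}}\le\frac1{\sqrt2}|\delta^a|$.
\end{itemize}
Choosing $\delta^a_j=\delta_j/(2\nu_j)$ and $\delta^b_j=\delta_j(1-1/(2\nu_j))$ and applying Cauchy--Schwarz across modes should give a bound of the form $\frac{1}{\sqrt2}\big(\sum_j|\delta_j|^2/\nu_j\big)^{1/2}$. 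I expect the main obstacle here to be getting exactly the constant $1/\sqrt2$, especially for $\nu_j$ near $1/2$, where the classical channel degenerates. If this constant is not reached, I would instead compute the Gaussian fidelity at fixed $\Sigma_1$ directly and apply Fuchs--van de Graaf.

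\textbf{Covariance term.} Let $\Delta=\Sigma_2-\Sigma_1$ and introduce $\Sigma_3=\Sigma_1+|\Delta|$, which dominates both $\Sigma_1$ and $\Sigma_2$. Then $\rho(\mu_2,\Sigma_3)$ is obtained from $\rho(\mu_2,\Sigma_1)$, and also from $\rho(\mu_2,\Sigma_2)$, by a classical additive Gaussian noise channel $\mc N_Y$ with $Y\ge0$. So it suffices to bound $D_\mr{tr}(\rho(\mu,\Sigma),\rho(\mu,\Sigma+Y))$.

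For this I would interpolate along $\Sigma(s)=\Sigma+sY$ and use
\[
\frac{\mr d}{\mr ds}\rho(\mu,\Sigma(s))=\tfrac12\sum_{ij}Y_{ij}\,\mc L_i\mc L_j\big(\rho(\mu,\Sigma(s))\big),
\]
where the $\mc L_i$ are the commutator derivations generated by the quadratures. The key estimate is
\[
\Big\|\tfrac{\mr d}{\mr ds}\rho\Big\|_1\le c\,\Tr\!\big(\Sigma(s)^{-1}Y\big),
\]
which I would prove after diagonalizing $\Sigma(s)$ symplectically, reducing it to single-mode thermal computations.

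Integrating in $s$ and using $\Sigma(s)\ge\Sigma_1$ for the first leg, and $\Sigma(s)\ge\Sigma_2$ for the second leg, gives a bound proportional to $\Tr((\Sigma_1^{-1}+\Sigma_2^{-1})|\Delta|)$. In the second leg, $Y=|\Delta|-\Delta\le 2|\Delta|$, so the constants need careful bookkeeping to reach $\frac{1+\sqrt3}{8}$. I consider proving the derivative estimate uniformly in the symplectic eigenvalues to be the crux of this term.
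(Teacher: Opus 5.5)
\textbf{Overall.} The displacement step is correct, but the covariance step has a genuine gap: the detour is too long for the stated constant, and the key derivative estimate is missing.

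\textbf{Displacement term.} This step works and needs no fallback. With your split, the two costs are $Q=\bigl(\sum_j|\delta_j|^2/(8\nu_j^2)\bigr)^{1/2}$ and $C=\bigl(\sum_j|\delta_j|^2(2\nu_j-1)/(8\nu_j^2)\bigr)^{1/2}$. Hence $Q^2+C^2=\frac14\sum_j|\delta_j|^2/\nu_j$, and
\[
Q+C\le\sqrt{2(Q^2+C^2)}=\tfrac1{\sqrt2}\|\Sigma_1^{-1/2}(\mu_2-\mu_1)\|_2 .
\]
Modes with $\nu_j=\frac12$ cause no trouble, because $\delta^b_j=0$ there.

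\textbf{Covariance term: the detour through $\Sigma_3$ is too long.} The route through $\Sigma_3=\Sigma_1+|\Delta|$ cannot reach the constant $\frac{1+\sqrt3}{8}$, however you bound the legs. Take one mode, equal means, $\Sigma_2=\frac12 I$, $\Sigma_1=\frac12 I+L\,e_1e_1^T$, and let $L\to0$. Expanding around the vacuum,
\[
\rho(0,\tfrac12 I+y\,e_1e_1^T)=\ketbra{0}{0}+\tfrac y2\bigl(\ketbra{1}{1}-\ketbra{0}{0}+\tfrac1{\sqrt2}(\ketbra{0}{2}+\ketbra{2}{0})\bigr)+O(y^2),
\]
and the bracket has trace norm $1+\sqrt3$. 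The exact leg distances therefore sum to $D_\mr{tr}(\rho(0,\Sigma_1),\rho(0,\Sigma_3))+D_\mr{tr}(\rho(0,\Sigma_3),\rho(0,\Sigma_2))=\frac{3(1+\sqrt3)}{4}L+O(L^2)$. The right-hand side of the lemma is only $\frac{1+\sqrt3}{8}\Tr\bigl((\Sigma_1^{-1}+\Sigma_2^{-1})|\Delta|\bigr)=\frac{1+\sqrt3}{2}L+O(L^2)$. So the triangle inequality alone already overshoots by a factor $3/2$.

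\textbf{Covariance term: the start-point bound loses the averaging.} Bounding $\Sigma(s)^{-1}$ by its value at the start of each leg replaces the endpoint average $\frac12(\Sigma_1^{-1}+\Sigma_2^{-1})$ by the larger endpoint. That average is exactly where the $8$ in $\frac{1+\sqrt3}{8}$ comes from.

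\textbf{How the paper's route avoids both losses.} The paper relies on the cited result; the integral bound imported from \cite{bittel2025energy} in the proof of Lemma~\ref{le:at-most-sqrtn} shows its structure. It interpolates along the straight segment $\Sigma(\alpha)=\Sigma_1+\alpha(\Sigma_2-\Sigma_1)$. It bounds the trace-distance derivative by $\frac{1+\sqrt3}{4}\|\Sigma(\alpha)^{-1/2}\Delta\,\Sigma(\alpha)^{-1/2}\|_1\le\frac{1+\sqrt3}{4}\Tr\bigl(\Sigma(\alpha)^{-1}|\Delta|\bigr)$. It then uses operator convexity, $\Sigma(\alpha)^{-1}\le(1-\alpha)\Sigma_1^{-1}+\alpha\Sigma_2^{-1}$, which integrates to $\frac12\Tr\bigl((\Sigma_1^{-1}+\Sigma_2^{-1})|\Delta|\bigr)$.

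Your two-leg scheme can be repaired in the same spirit. Let $\Delta_\pm$ be the positive and negative parts of $\Delta$, and choose $\Sigma_3=\Sigma_1+\Delta_+=\Sigma_2+\Delta_-$. Apply convexity on each straight leg, and use $\Sigma_3^{-1}\le\Sigma_1^{-1}$ and $\Sigma_3^{-1}\le\Sigma_2^{-1}$.

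\textbf{The derivative estimate is not established.} You rightly call this the crux, and the reduction you suggest does not work for mixed multimode states. Williamson-diagonalizing $\Sigma(s)=SDS^T$ does not diagonalize $Y$: $S^{-1}YS^{-T}$ is a general positive matrix coupling modes with different $\nu_j$.

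Even for rank-one $Y=uu^T$, the generator is the second derivative of $t\mapsto\hat D(tu)\rho\hat D(tu)^\dagger$. For a product state this contains cross terms, each a tensor product of a first derivative in mode $j$ and one in mode $k$. Bounding these by the triangle inequality gives control only by $(\sum_j\|D^{-1/2}u_j\|)^2$, where $u_j$ is the mode-$j$ block of $S^{-1}u$. What you need is control by $u^T\Sigma(s)^{-1}u=\sum_j\|D^{-1/2}u_j\|^2$, so this loses up to a factor $n$.

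Only when all $\nu_j$ coincide (e.g.\ for pure states) can a passive rotation move $u$ into a single mode. There the computation above gives exactly $\frac{1+\sqrt3}{8}\,u^T\Sigma^{-1}u$ in trace distance, which is where the $1+\sqrt3$ comes from.

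For mixed states, decomposing $\rho$ as a Gaussian mixture of displaced pure states with covariance $\Gamma\le\Sigma$ does not suffice either. It yields bounds in $u^T\Gamma^{-1}u\ge u^T\Sigma^{-1}u$, which is the wrong direction. You need a genuinely multimode estimate
\[
\bigl\|\tfrac{\mr d}{\mr ds}\rho\bigr\|_1\le\tfrac{1+\sqrt3}{2}\Tr\bigl(\Sigma(s)^{-1}Y\bigr),
\]
uniform in the symplectic spectrum. One possible route is to split the second derivative between the pure seed and the classical noise, as you did for the mean. As it stands, the covariance term is not proved.
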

\noindent Note that the additional factor of $1/\sqrt2$ in the first term is due to our different convention of covariance matrices from~\cite{bittel2025energy}.
Related but different upper bounds on Gaussian state trace distance are also given in~\cite{holevo2024estimates,fanizza2025}.

\subsection{Known results for classical Gaussians}

Here, we review some known results about classical Gaussian distributions. We refer the readers to~\cite{canonne_gaussian_testing} for a more detailed review about learning and testing classical Gaussians.

\medskip
\noindent The following result from \cite{arbas2023polynomial} gives a tight characterization of the TV distance between two Gaussian distributions. A weaker version of this result that applies to same-mean Gaussians appears in~\cite{devroye2018total}.

\begin{lemma}\cite[Theorem~1.8]{arbas2023polynomial}\label{le:mahalanobis}
  Let $\mu_1,\mu_2\in\mbb R^d$ and $\Sigma_1,\Sigma_2$ be $d\times d$ positive definite matrices. Suppose ${\mathrm{TV}}(\mc N(\mu_1,\Sigma_1), \mc N(\mu_2,\Sigma_2))\le 1/600$. Define
  \bb
    \Delta \coleq \max\left\{\|\Sigma_2^{-1/2}\Sigma_1\Sigma_2^{-1/2}-I\|_F,\|\Sigma_2^{-1/2}(\mu_2-\mu_1)\|_2\right\}.
  \ee
  Then
  \bb
    \frac1{200}\Delta \le {\mathrm{TV}}(\mc N(\mu_1,\Sigma_1), \mc N(\mu_2,\Sigma_2)) \le \frac1{\sqrt2}\Delta.
  \ee
  Note that swapping the labels 1 and 2 also yields a valid bound.
\end{lemma}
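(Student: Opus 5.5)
The plan is to reduce to a canonical form and then treat the two inequalities separately: an explicit divergence computation for the upper bound, and a test-statistic (data-processing) argument for the lower bound.

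\textbf{Reduction.} Total variation distance is invariant under invertible affine maps $x\mapsto Mx+b$. Applying $x\mapsto \Sigma_2^{-1/2}(x-\mu_2)$, we may therefore assume $\mu_2=0$ and $\Sigma_2=I$. Write $A\coleq\Sigma_2^{-1/2}\Sigma_1\Sigma_2^{-1/2}$, with eigenvalues $\lambda_1,\dots,\lambda_d$, and $m\coleq\Sigma_2^{-1/2}(\mu_1-\mu_2)$, so that $\Delta=\max\{\|A-I\|_F,\|m\|_2\}$. A further rotation diagonalizes $A$. The symmetric remark at the end of the statement is handled by checking that, under the hypothesis $\mathrm{TV}\le 1/600$, the two choices of reference covariance give comparable $\Delta$. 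Concretely, once we know all $\lambda_i\in[1/2,2]$, we have $\|A^{-1}-I\|_F\le 2\|A-I\|_F$.

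\textbf{Upper bound.} If $\Delta\ge\sqrt2$ the bound is trivial, since $\mathrm{TV}\le 1$. Otherwise every $|\lambda_i-1|<\sqrt2$, which still leaves small $\lambda_i$ near $0$ possible.

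I would use the squared Hellinger distance, since it has a closed form for Gaussians and factorizes over the eigenbasis:
\begin{equation*}
1-H^2=\prod_i\Big(\frac{4\lambda_i}{(1+\lambda_i)^2}\Big)^{1/4}\exp\Big(-\tfrac14 m^T(A+I)^{-1}m\Big).
\end{equation*}
Then $\mathrm{TV}\le\sqrt{2}\,H$, or the sharper $\mathrm{TV}\le H\sqrt{2-H^2}$. Bounding $-\log\frac{4\lambda}{(1+\lambda)^2}$ by a constant times $(\lambda-1)^2$ on the relevant range, and bounding $(A+I)^{-1}\le I$, gives $H^2\lesssim\|A-I\|_F^2+\|m\|^2$.

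Alternatively, Pinsker's inequality applied to $\mathrm{KL}(\mathcal N(m,A)\|\mathcal N(0,I))=\tfrac12\sum_i(\lambda_i-1-\log\lambda_i)+\tfrac12\|m\|^2$ works directly when all $\lambda_i\ge 1/2$. Getting the exact constant $1/\sqrt2$ needs some care with the elementary inequality; I would first try the KL route with a separate argument for eigenvalues $\lambda_i<1/2$.

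\textbf{Lower bound.} This is the main step. By data processing, any statistic $T(x)$ satisfies $\mathrm{TV}(\mathcal N_1,\mathcal N_2)\ge\mathrm{TV}(T_\#\mathcal N_1,T_\#\mathcal N_2)$. I would split into three cases.

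\emph{(i) The mean term dominates.} Project onto the direction $m/\|m\|$. This gives two one-dimensional Gaussians with mean shift $\|m\|$ and variances $1$ and $u^TAu$. A direct one-dimensional computation gives TV $\gtrsim\min(\|m\|,1)$, provided the variance mismatch does not cancel the shift; if it does, we are in case (ii).

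\emph{(ii) Some eigenvalue is far from $1$,} say $|\lambda_i-1|\ge c$. Projecting onto that eigenvector gives a constant TV, which contradicts $\mathrm{TV}\le 1/600$. Hence every $\lambda_i\in[1-c,1+c]$, and the same one-dimensional projection also gives $\mathrm{TV}\gtrsim\max_i|\lambda_i-1|$.

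\emph{(iii) The Frobenius norm is spread over many small deviations.} Use the quadratic statistic $T(x)=x^T(A-I)x$. Its mean differs between the two distributions by $\mathrm{tr}((A-I)^2)+m^T(A-I)m$, which is about $\|A-I\|_F^2$. Its standard deviation under either distribution is $O(\|A-I\|_F)$, using $\lambda_i\approx1$. The one-sided Chebyshev inequality, or a second-moment (Paley--Zygmund type) argument that compares the separation of means to the standard deviations, then gives TV $\gtrsim\|A-I\|_F$ when this quantity is small.

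I expect case (iii) to be the main obstacle. Getting a linear rather than quadratic dependence on $\|A-I\|_F$ needs the separation of means relative to the standard deviation, not merely a variance bound. Chebyshev inequalities lose constants there, which is consistent with the weak constant $1/200$. I would first try the explicit Chebyshev threshold test at the midpoint of the two means, combined with $\mathrm{TV}\ge|\Pr_1(E)-\Pr_2(E)|$.

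Combining (i)--(iii) with $\Delta=\max\{\cdot,\cdot\}$ gives $\mathrm{TV}\ge\Delta/200$ in the regime $\mathrm{TV}\le 1/600$.
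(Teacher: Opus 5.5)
The paper gives no proof of this lemma (it is imported from \cite[Theorem~1.8]{arbas2023polynomial}), so your outline has to stand on its own. It breaks at case (iii), which is where the real content lies, since $\|A-I\|_F$ can exceed $\max_i|\lambda_i-1|$ by a factor $\sqrt d$. For $T(x)=x^T(A-I)x$ the mean separation is $\delta=\|A-I\|_F^2+m^T(A-I)m\approx\|A-I\|_F^2$, and the standard deviation is $\sigma\approx\sqrt2\,\|A-I\|_F$. So $\delta/\sigma\approx\|A-I\|_F/\sqrt2\ll 1$ in exactly the regime you need. The Cantelli threshold test at the midpoint then gives only $\mathrm{TV}\ge 1-2\sigma^2/(\sigma^2+\delta^2/4)$, which is negative unless $\delta>2\sigma$. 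This is not a matter of constants. From first and second moments of a statistic the best you can extract is $\delta\le\sqrt{2\,\mathrm{TV}}\,\big(\mathbb{E}_1(T-c)^2+\mathbb{E}_2(T-c)^2\big)^{1/2}$, i.e.\ $\mathrm{TV}\gtrsim(\delta/\sigma)^2\asymp\|A-I\|_F^2$. This is sharp in general: move mass $\eta$ of a law with variance $\sigma^2$ to distance $\sigma/\sqrt\eta$; the means shift by $\sigma\sqrt\eta$, the variances stay comparable, and $\mathrm{TV}\le\eta$. So the comparison of mean gap to standard deviation that you propose cannot yield a linear bound. A Berry--Esseen argument for $T$ would need $\max_i|\lambda_i-1|\ll\|A-I\|_F^2$, which leaves an intermediate regime that your case (ii) does not cover either.

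The missing idea is to exploit the Gaussian likelihood structure, applying the moment-ratio idea to the score rather than to the mean gap of $T$. For example, interpolate $P_t=\mathcal N\big(tm,\,I+t(A-I)\big)$, $t\in[0,1]$. Take the bounded test $f=\mathrm{sign}(s)$, where $s(x)=m^Tx+\tfrac12\big(x^T(A-I)x-\Tr(A-I)\big)$ is the score $\partial_t\log p_t$ at $t=0$. Then $\tfrac{d}{dt}\mathbb{E}_{P_t}f\big|_{t=0}=\mathbb{E}_{P_0}|s|\ge\|s\|_2^3/\|s\|_4^2\ge\|s\|_2/9$ by hypercontractivity of degree-two Gaussian polynomials, and $\|s\|_2^2=\|m\|^2+\tfrac12\|A-I\|_F^2\ge\Delta^2/2$. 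Meanwhile $\big|\tfrac{d^2}{dt^2}\mathbb{E}_{P_t}f\big|\le\int|\partial_t^2p_t|=O(\|m\|^2+\|A-I\|_F^2)$ once $\|A-I\|_{\mathrm{op}}$ is below a small constant, which your case (ii) guarantees under the hypothesis. Hence $2\,\mathrm{TV}\ge\mathbb{E}_{P_1}f-\mathbb{E}_{P_0}f\ge c\Delta-C\Delta^2$, which is linear for small $\Delta$. When $\|A-I\|_F$ is not small, first pass to the marginal on a block of eigen-directions where it is of constant size. Matching the specific constants $1/200$ and $1/600$ would still need careful bookkeeping.

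The rest of your outline is sound:
\begin{itemize}
\item Projecting onto an eigenvector and testing a symmetric interval shows that any $\lambda_i\notin(1/2,2)$ forces $\mathrm{TV}\ge 0.16$ whatever the mean.
\item So under the hypothesis, Pinsker together with $\lambda-1-\log\lambda\le(\lambda-1)^2$ for $\lambda\ge 1/2$ gives exactly $\mathrm{TV}\le\tfrac12(\|A-I\|_F^2+\|m\|^2)^{1/2}\le\Delta/\sqrt2$. The hypothesis is genuinely needed here: for $d=1$, $m=0$, $\lambda\to0$ one has $\Delta\to1$ but $\mathrm{TV}\to1$.
\item In case (i) no cancellation can occur: with $\Phi$ the standard normal CDF, the half-line test $\{u^Tx>\|m\|/2\}$ gives $\mathrm{TV}\ge\Phi(\|m\|/2)-\tfrac12$ for every value of $u^TAu$.
\item The label-swap remark follows by relabeling, since the hypothesis is symmetric.
\end{itemize}
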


The following result characterizes the optimal sample complexity for learning a Gaussian distribution to small total variation distance in the realizable setting. This was explicitly proved in \cite{ashtiani2020near} (specifically, the lower bound is in \cite[Theorem 6.3]{ashtiani2020near}), while the upper bound is in \cite[Theorem C.1]{ashtiani2020near}); see also \cite[Theorem 1.16]{diakonikolas2019} for a concise statement of the result.

\begin{lemma}\cite{ashtiani2020near}
  There exists an algorithm taking $O((d^2+d\log\delta^{-1})/\varepsilon^2)$ i.i.d. samples $v_1,\cdots,v_{2m}$ from an unknown $d$-dimensional Gaussian distribution $\mc N(\mu,\Sigma)$ and run in $O((d^4+d^3\log\delta^{-1})/\varepsilon^2)$ time, that outputs an estimate $\hat\mu,\hat\Sigma$ such $\mr{TV}(\mc N(\mu,\Sigma), \mc N(\hat\mu,\hat\Sigma))\le \varepsilon/2$ with probability at least $1-\delta$. In particular, the estimators are the empirical mean and covariance,
  \bb
    \hat\mu = \frac1m \sum_{i=1}^m v_i, \quad \hat\Sigma = \frac1{2m} \sum_{i=1}^m (v_{2i} - v_{2i-1})(v_{2i} - v_{2i-1})^T.
  \ee
  Moreover, the sample complexity lower bound for this task is $N=\Omega(\frac{d^2}{\varepsilon^2})$ (fixing success probability to be at least $2/3$), matching the upper bound\,\footnote{The lower bound given in~\cite{ashtiani2020near} is $\Omega(\frac{d^2}{\varepsilon^2\log(1/\varepsilon)})$. A more careful use of Fano's inequality (e.g. as in the proof for Theorem~\ref{th:gaussian-lower} in the current paper) removes the log factor. See also~\cite{devroye2020minimax} for this tightened lower bound.}.
\end{lemma}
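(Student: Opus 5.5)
Both parts of the statement reduce to standard Gaussian facts. The upper bound reduces, by affine equivariance, to a concentration estimate for the empirical covariance. The lower bound is a Fano argument over a packing of near-identity covariance matrices.

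\emph{Upper bound.} The estimators $\hat\mu,\hat\Sigma$ are affine-equivariant, and TV distance is invariant under invertible affine maps. So we may assume $\mu=0$ and $\Sigma=I_d$. Then $w_i\coleq(v_{2i}-v_{2i-1})/\sqrt2\sim\mc N(0,I_d)$ are i.i.d., so $\hat\Sigma=\frac1m\sum_i w_iw_i^T$ is a normalized Wishart matrix.

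Two concentration facts are needed.
\begin{itemize}
\item Standard Gaussian covariance estimation (e.g.\ sub-exponential Bernstein with a net argument, or Hanson--Wright applied entrywise) gives $\|\hat\Sigma-I\|_\mr{op}\le 1/2$ and $\|\hat\Sigma-I\|_\mr F\le C\big(d/\sqrt m+\sqrt{d\log(1/\delta)/m}\big)$ with probability $1-\delta/2$, once $m\gtrsim d+\log(1/\delta)$.
\item Gaussian norm concentration gives $\|\hat\mu\|_2\le C\sqrt{(d+\log(1/\delta))/m}$ with probability $1-\delta/2$.
\end{itemize}

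We then bound the TV distance directly, without using the precondition of Lemma~\ref{le:mahalanobis}. The KL divergence is
\bb
\mr{KL}(\mc N(0,I)\,\|\,\mc N(\hat\mu,\hat\Sigma))=\tfrac12\big(\Tr\hat\Sigma^{-1}-d+\log\det\hat\Sigma+\hat\mu^T\hat\Sigma^{-1}\hat\mu\big).
\ee
On the event $\|\hat\Sigma-I\|_\mr{op}\le1/2$, a second-order Taylor expansion in the eigenvalues of $\hat\Sigma$ bounds this by $O(\|\hat\Sigma-I\|_\mr F^2+\|\hat\mu\|_2^2)$. Pinsker's inequality then gives
\bb
\mr{TV}\le O\big(\|\hat\Sigma-I\|_\mr F+\|\hat\mu\|_2\big).
\ee
Choosing $m=C'(d^2+d\log\delta^{-1})/\varepsilon^2$ makes the right-hand side at most $\varepsilon/2$. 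The running time is dominated by forming $\hat\Sigma$, which costs $O(md^2)$.

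\emph{Lower bound.} We use Fano's method with zero-mean hypotheses $\Sigma_A=I+\frac{\varepsilon'}{d}A$, where $A$ ranges over symmetric matrices with zero diagonal and $\pm1$ off-diagonal entries.
\begin{itemize}
\item \emph{Packing.} By a probabilistic Gilbert--Varshamov argument, there is a family $\mc A$ of size $2^{c d^2}$ in which every pair differs in at least $d^2/8$ entries.
\item \emph{Operator-norm control.} Every $A\in\mc A$ additionally satisfies $\|A\|_\mr{op}\le C\sqrt d$. This holds with probability $1-e^{-\Omega(d)}$ for a random sign matrix, so we can draw the codewords at random, discard the bad ones, and keep most of them.
\item \emph{Separation.} For $A\neq B$ in $\mc A$, $\|\Sigma_A-\Sigma_B\|_\mr F\ge c\varepsilon'$. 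Because $\Sigma_B$ is within $O(\varepsilon'/\sqrt d)$ of $I$ in operator norm, Lemma~\ref{le:mahalanobis} gives $\mr{TV}(\mc N(0,\Sigma_A),\mc N(0,\Sigma_B))\ge c\varepsilon'/200$. Its hypothesis $\mr{TV}\le1/600$ holds by the upper bound $\mr{TV}\le\Delta/\sqrt2$ for small $\varepsilon'$.
\item \emph{KL bound.} Each hypothesis satisfies $\mr{KL}(\mc N(0,\Sigma_A)\|\mc N(0,I))=O(\|\Sigma_A-I\|_\mr F^2)=O(\varepsilon'^2)$.
\end{itemize}
Set $\varepsilon'=C''\varepsilon$ so that the packing distance exceeds $2\varepsilon$. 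A learner with accuracy $\varepsilon$ and success probability $2/3$ then identifies $A$, and Fano's inequality forces $I(A;\text{samples})\ge\Omega(d^2)$. On the other hand, $I(A;\text{samples})\le N\max_A\mr{KL}(\mc N(0,\Sigma_A)\|\mc N(0,I))=O(N\varepsilon^2)$. Hence $N=\Omega(d^2/\varepsilon^2)$.

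The main obstacle is the packing step. It must simultaneously provide exponential size, Frobenius separation of order $\varepsilon$, and operator-norm control so that the KL bound is quadratic. The random-construction-plus-pruning argument above is how I would handle it. Using $I$ as the single reference distribution in the mutual-information bound is what avoids the extra $\log(1/\varepsilon)$ factor.
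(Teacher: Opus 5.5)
Your proposal is correct.

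\textbf{Upper bound.} This is essentially the argument the paper cites from \cite{ashtiani2020near}: whiten, concentrate the empirical mean and the Wishart matrix, and convert to TV. Routing the conversion through KL and Pinsker is a good choice, because it avoids the $\mr{TV}\le 1/600$ precondition of Lemma~\ref{le:mahalanobis}. One caveat: entrywise Hanson--Wright with a union bound only gives $\|\hat\Sigma-I\|_\mr F\lesssim d\sqrt{\log(d/\delta)/m}$. The bound you state comes from the net/Bernstein (or Lipschitz-concentration) route. The net argument is in its sub-Gaussian regime once $m\gtrsim d^2+\log(1/\delta)$, which your final choice of $m$ satisfies.

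\textbf{Lower bound.} Here your route genuinely differs from the one the paper points to.
\begin{itemize}
\item The paper's route uses the projector packing of Lemma~\ref{le:ashtiani6-4}, i.e.\ $\Sigma_a=I+\gamma U_aU_a^T$ with $\gamma\asymp\varepsilon/\sqrt d$ in the classical setting. All $\Sigma_a$ are isospectral, so the $\log\det$ terms cancel and the pairwise KL is an exact trace computation. The mutual information is then bounded by the average pairwise KL via convexity, as in the proof of Theorem~\ref{th:gaussian-lower}.
\item You instead use a Gilbert--Varshamov family of symmetric sign matrices and the single reference $\mc N(0,I)$ in the variational formula for mutual information.
\end{itemize}
Both give $I\le N\cdot O(\varepsilon^2)$. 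So the single reference is not what removes the $\log(1/\varepsilon)$; what matters is bounding the mutual information directly by $N$ times a per-sample KL of order $\varepsilon^2$. Your construction is more elementary and self-contained. The paper's projector structure is what it needs elsewhere: it lifts to passive and pure Gaussian ensembles, supports the photon-counting separation test, and keeps the Holevo bounds computable.

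\textbf{Two simplifications.} First, the step you flag as the main obstacle is unnecessary. Since $\|A\|_\mr{op}\le\|A\|_\mr F=\sqrt{d(d-1)}<d$, every $\Sigma_A$ already satisfies $\|\Sigma_A-I\|_\mr{op}<\varepsilon'$. That suffices for the quadratic KL bound: using $\Tr A=0$ and $-\log(1+x)\le -x+x^2$ for $x\ge-1/2$, one gets $\mr{KL}(\mc N(0,\Sigma_A)\|\mc N(0,I))=-\tfrac12\log\det\Sigma_A\le\tfrac12\|\Sigma_A-I\|_\mr F^2$ for $\varepsilon'\le 1/2$. It also gives $\|\Sigma_B\|_\mr{op}\le 1+\varepsilon'$ in the separation step, so no operator-norm pruning is needed.

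Second, you do not need to verify the $\mr{TV}\le1/600$ hypothesis of Lemma~\ref{le:mahalanobis} via that lemma's own upper bound, which as stated is itself conditional on the same hypothesis. If the hypothesis fails, the pair is already more than $2\varepsilon$ apart for small $\varepsilon$.
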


The following results are about different divergence measures between two Gaussian distributions and the connections to their mean values and covariance matrices.

\begin{lemma}\cite{ashtiani2020near, gil2013renyi}
  For any two full-rank Gaussians $\mc N(\mu_1,\Sigma_1)$ and $\mc N(\mu_2,\Sigma_2)$,
  \bb
    & 2\mathrm{TV}(\mc N(\mu_1,\Sigma_1), \mc N(\mu_2,\Sigma_2))^2
    \\\le~& \mathrm{KL}(\mc N(\mu_1,\Sigma_1)\|\mc N(\mu_2,\Sigma_2))
    \\=~&\frac12\left(\Tr(\Sigma_2^{-1}\Sigma_1 - I) - \log\det(\Sigma_2^{-1}\Sigma_1) + (\mu_1 - \mu_2)^T\Sigma_2^{-1}(\mu_1 - \mu_2)\right).
    \\\leqt{(i)}~& \chi^2(\mc N(\mu_1,\Sigma_1)\|\mc N(\mu_2,\Sigma_2)) 
    \\=~& \frac{\det(\Sigma_2)}{\sqrt{\det(\Sigma_1)\det(2\Sigma_2 - \Sigma_1)}}\exp\left((\mu_1 - \mu_2)^T(2\Sigma_2 - \Sigma_1)^{-1}(\mu_1 - \mu_2)\right) - 1.
  \ee
  For (i) to hold, it further requires $2\Sigma_2 - \Sigma_1> 0$.
\end{lemma}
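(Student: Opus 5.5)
The first inequality $2\,\mathrm{TV}^2\le\mathrm{KL}$ is Pinsker's inequality (natural log, so the constant is exactly $2$), and holds for any pair of distributions. The remaining three claims are a closed-form KL computation, a general comparison $\mathrm{KL}\le\chi^2$, and a closed-form $\chi^2$ computation. I would first reduce to a convenient normalization: both divergences are invariant under a common invertible affine change of variables $x\mapsto A x+b$. Taking $A=\Sigma_2^{-1/2}$ and $b=-\Sigma_2^{-1/2}\mu_2$ makes the second Gaussian standard, $\mc N(0,I)$, and the first $\mc N(m,M)$ with $m=\Sigma_2^{-1/2}(\mu_1-\mu_2)$ and $M=\Sigma_2^{-1/2}\Sigma_1\Sigma_2^{-1/2}$. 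All the claimed expressions are invariant under this map, so it suffices to verify them in the normalized setting.

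For the KL formula, write $\mathrm{KL}=\E_{x\sim\mc N(m,M)}[\log p_1(x)-\log p_2(x)]$. Using the explicit Gaussian log-densities:
\begin{itemize}
\item the $2\pi$ normalizations cancel, leaving $-\tfrac12\log\det M$;
\item the first quadratic form has expectation $-\tfrac d2$;
\item the second gives $+\tfrac12\E\|x\|^2=\tfrac12(\Tr M+\|m\|^2)$.
\end{itemize}
Summing yields $\tfrac12(\Tr(M-I)-\log\det M+\|m\|^2)$. Undoing the normalization via cyclicity of trace and $\det M=\det(\Sigma_2^{-1}\Sigma_1)$ gives the stated expression.

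For step (i), $\mathrm{KL}\le\chi^2$, I would use Jensen's inequality on the concave logarithm:
\bb
\mathrm{KL}(p\|q)=\E_p\log\tfrac pq\le\log\E_p\tfrac pq=\log(1+\chi^2(p\|q))\le\chi^2(p\|q).
\ee
This holds whenever $\chi^2$ is finite. The condition $2\Sigma_2-\Sigma_1>0$ is exactly what makes $\int p_1^2/p_2<\infty$, so I would state finiteness as a consequence of this condition.

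For the $\chi^2$ formula, compute $\int p_1^2/p_2 - 1$ directly. In the normalized frame, the exponent of $p_1^2/p_2$ is
\bb
-(x-m)^TM^{-1}(x-m)+\tfrac12\|x\|^2=-\tfrac12 x^TBx+2m^TM^{-1}x-m^TM^{-1}m,
\ee
with $B=2M^{-1}-I$. The matrix $B$ is positive definite precisely when $2I-M>0$, i.e.\ when $2\Sigma_2-\Sigma_1>0$. Completing the square gives the Gaussian integral $(2\pi)^{d/2}\det B^{-1/2}$ times $\exp(2m^TM^{-1}B^{-1}M^{-1}m-m^TM^{-1}m)$. The prefactor combines as $\det M^{-1}\cdot\det B^{-1/2}=\det M^{-1/2}\det(2I-M)^{-1/2}$, using $\det B=\det(2I-M)/\det M$.

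The main obstacle is the matrix algebra in the exponent. I need $2M^{-1}(2M^{-1}-I)^{-1}M^{-1}-M^{-1}=(2I-M)^{-1}$. I would check this by writing $2M^{-1}-I=M^{-1}(2I-M)$: the first term becomes $2M^{-1}(2I-M)^{-1}$, and $2M^{-1}(2I-M)^{-1}-M^{-1}=M^{-1}(2I-(2I-M))(2I-M)^{-1}=(2I-M)^{-1}$. Finally I translate back: $\det\Sigma_2/\sqrt{\det\Sigma_1\det(2\Sigma_2-\Sigma_1)}$ and $(\mu_1-\mu_2)^T(2\Sigma_2-\Sigma_1)^{-1}(\mu_1-\mu_2)$.
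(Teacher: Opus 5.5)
Your proposal is correct. Pinsker gives $2\,\mathrm{TV}^2\le\mathrm{KL}$ with the paper's normalization of TV. Your normalized computations of the KL and $\chi^2$ closed forms check out, including the identity $2M^{-1}(2M^{-1}-I)^{-1}M^{-1}-M^{-1}=(2I-M)^{-1}$, the determinant bookkeeping $\det M^{-1}\det B^{-1/2}=\det M^{-1/2}\det(2I-M)^{-1/2}$, and the translation back to $\Sigma_1,\Sigma_2$. The Jensen chain $\mathrm{KL}\le\log(1+\chi^2)\le\chi^2$ is also sound.

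The paper gives no proof of this lemma. It cites the literature: Pinsker and the Gaussian KL formula, and the closed-form Gaussian R\'enyi divergences. From the latter the $\chi^2$ expression follows via $\chi^2=e^{D_2}-1$, where $D_2$ is the order-$2$ R\'enyi divergence. Your argument is a self-contained version of that order-$2$ computation, done by hand in the frame where the second Gaussian is standard.

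Your version also clarifies something the paper's wording obscures. Inequality (i) holds unconditionally, with $\chi^2=+\infty$ allowed. The hypothesis $2\Sigma_2-\Sigma_1>0$ is needed only for the closed-form $\chi^2$ expression to be valid, since it is exactly the condition for $\int p_1^2/p_2<\infty$.
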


{\subsection{Clarifications on energy constraint}\label{sec:energy_constraint}

Throughout the paper, we assume that the unknown state $\rho$ obeys the energy constraint
\bb
\|\Sigma\|_{\mathrm{op}} \le E,
\ee
where $\Sigma$ denotes the covariance matrix of $\rho$. A widely used alternative in the Gaussian-tomography literature is to assume a bound on the mean energy of $\rho$~\cite{mele2025learning,Bittel_2025,bittel2025energy}. We adopt the operator-norm formulation mainly for notational convenience, since our sample-complexity bounds depend explicitly on $\|\Sigma\|_{\mathrm{op}}$; all of our results can be translated directly to a mean-energy constraint.

The condition $\|\Sigma\|_{\mathrm{op}} \le E$ also has a simple operational meaning. As shown in Lemma~\ref{lem:meaning_op} below, $\|\Sigma\|_{\mathrm{op}}$ coincides with the largest variance that a fixed quadrature observable (e.g., the position quadrature of the first mode $\hat{x}_1$) can attain after applying an arbitrary passive Gaussian unitary to $\rho$. In other words, it controls the worst-case quadrature fluctuations that can be generated by mixing the modes through a passive interferometer. Moreover, for states with vanishing first moment, the same constraint upper bounds the maximum energy that can be concentrated onto any fixed output mode using passive linear optics.

\begin{lemma}\label{lem:meaning_op}
Let $\rho$ be an $n$-mode state, and let $\hat{x}_1$ denote the position operator of the first mode. Then,
\bb\label{eq:meaning_op}
\|\Sigma\|_{\mathrm{op}}
=
\sup_{G}\Bigl(\Tr[\hat{x}_1^2\, G\rho G^\dagger]-\Tr[\hat{x}_1\, G\rho G^\dagger]^2\Bigr),
\ee
where the supremum ranges over all passive Gaussian unitaries $G$.

Furthermore, if $\rho$ has vanishing first moment, then
\bb\label{eq:energy_concentration}
\sup_{G}\Tr[\hat{E}_1\, G\rho G^\dagger]\le \|\Sigma\|_{\mathrm{op}},
\ee
where $\hat{E}_1\coloneqq (\hat{x}_1^2+\hat{p}_1^2)/2$ denotes the energy operator of the first mode.
\end{lemma}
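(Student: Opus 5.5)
The plan is to reduce both claims to statements about how passive Gaussian unitaries act on the first two moments.

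Let $G$ be a passive Gaussian unitary with orthogonal symplectic matrix $K$. Conjugation by $G$ acts on quadratures as $G^\dagger \hat R G = K\hat R$, up to the convention of $K$ versus $K^T$, which is immaterial because the group is closed under transposition. Hence $G\rho G^\dagger$ has mean $K\mu$ and covariance $K\Sigma K^T$.

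\textbf{First claim.} The variance of $\hat x_1$ in $G\rho G^\dagger$ equals the $(1,1)$ entry of the transformed covariance:
\[
\Tr[\hat x_1^2 G\rho G^\dagger]-\Tr[\hat x_1 G\rho G^\dagger]^2=(K\Sigma K^T)_{11}=e_1^TK\Sigma K^Te_1=v^T\Sigma v,
\]
where $v\coleq K^Te_1$ is a unit vector. This gives the upper bound $\le\|\Sigma\|_{\mr{op}}$ immediately. For equality, I need the orbit $\{K^Te_1\}$ to cover all unit vectors in $\mbb R^{2n}$. Under the identification $\mbb R^{2n}\cong\mbb C^n$ via $(x,p)\mapsto x+ip$, the orthogonal symplectic group is $U(n)$, and $U(n)$ acts transitively on the unit sphere of $\mbb C^n$, which is exactly the unit sphere of $\mbb R^{2n}$. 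So I choose $K$ with $K^Te_1$ equal to a top eigenvector of $\Sigma$, and the supremum is attained. The one step needing care is this transitivity, specifically checking that $e_1$ (the $x_1$ direction) corresponds to a unit vector of $\mbb C^n$ under the identification. It does, so this step is routine.

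\textbf{Second claim.} Passive unitaries preserve zero mean, so the transformed state still has $\mu=0$. Then
\[
\Tr[\hat E_1\, G\rho G^\dagger]=\tfrac12\bigl[(K\Sigma K^T)_{11}+(K\Sigma K^T)_{n+1,n+1}\bigr],
\]
because the second moments equal the covariance entries when the mean vanishes. Each diagonal entry of $K\Sigma K^T$ is at most $\|K\Sigma K^T\|_{\mr{op}}=\|\Sigma\|_{\mr{op}}$, so the average is $\le\|\Sigma\|_{\mr{op}}$. Taking the supremum over $G$ gives \eqref{eq:energy_concentration}.

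The proof is elementary throughout. The only point requiring verification is the orthogonal-symplectic $\leftrightarrow U(n)$ identification in our $(x_1,\dots,x_n,p_1,\dots,p_n)$ ordering: $K=\begin{pmatrix}A&-B\\B&A\end{pmatrix}$ with $A+iB$ unitary.
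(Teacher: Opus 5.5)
Your proposal is correct and follows essentially the same route as the paper: you write the $\hat x_1$-variance of $G\rho G^\dagger$ as $w^T\Sigma w$ for a unit vector $w$, use transitivity of passive unitaries on unit vectors to attain $\|\Sigma\|_{\mathrm{op}}$, and bound the zero-mean energy by the two relevant diagonal entries of the transformed covariance. The differences are only cosmetic: you prove transitivity directly via the identification $\mathrm{Sp}(2n)\cap\mathrm{O}(2n)\cong U(n)$, where the paper cites a lemma of Lami et al., and your diagonal-entry bound treats the $\hat p_1$ term explicitly, where the paper tacitly reuses the $\hat x_1$ statement (which requires composing with a phase rotation).
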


\begin{proof}
By the variational characterization of the operator norm, there exists $\bar{\bm{w}}\in\mathbb{R}^{2n}$ with
$\bar{\bm{w}}^\intercal \bar{\bm{w}}=1$ such that $\|\Sigma\|_{\mathrm{op}}=\bar{\bm{w}}^\intercal \Sigma\,\bar{\bm{w}}$. By using the definition of the covariance matrix, we obtain
\bb\label{eq:var_form}
\|\Sigma\|_{\mathrm{op}}
=
\Tr\!\left[\rho\,(\bar{\bm{w}}^\intercal \hat{\bm{R}})^2\right]
-
\Tr\!\left[\rho\,(\bar{\bm{w}}^\intercal \hat{\bm{R}})\right]^2.
\ee
Moreover, \cite[Lemma 13]{Lami_2020} guarantees the existence of a passive Gaussian unitary $\bar G$ such that
$\bar{\bm{w}}^\intercal \hat{\bm{R}}=\bar G^\dagger \hat{x}_1 \bar G$. Substituting into~\eqref{eq:var_form} gives $\|\Sigma\|_{\mathrm{op}}
=
\Tr[\hat{x}_1^2\,\bar G\rho \bar G^\dagger]
-
\Tr[\hat{x}_1\,\bar G\rho \bar G^\dagger]^2$, and therefore
\bb
\|\Sigma\|_{\mathrm{op}}
\le
\sup_{G}\Bigl(\Tr[\hat{x}_1^2\, G\rho G^\dagger]-\Tr[\hat{x}_1\, G\rho G^\dagger]^2\Bigr).
\ee
Conversely, for any passive Gaussian unitary $G$, there exists
$\bm{w}\in\mathbb{R}^{2n}$ with $\bm{w}^\intercal\bm{w}=1$ such that $G^\dagger \hat{x}_1 G=\bm{w}^\intercal \hat{\bm{R}}$.
Hence,
\bb
\Tr[\hat{x}_1^2\, G\rho G^\dagger]-\Tr[\hat{x}_1\, G\rho G^\dagger]^2
=
\bm{w}^\intercal \Sigma\,\bm{w}
\le
\|\Sigma\|_{\mathrm{op}}.
\ee
Taking the supremum over $G$ establishes~\eqref{eq:meaning_op}.

We now prove~\eqref{eq:energy_concentration}. If $\rho$ has vanishing first moment, then so does $G\rho G^\dagger$ for any passive Gaussian unitary $G$; and hence~\eqref{eq:meaning_op} implies that
\bb
\Tr[\hat{x}_1^2\, G\rho G^\dagger]\le \|\Sigma\|_{\mathrm{op}}
\qquad\text{and}\qquad
\Tr[\hat{p}_1^2\, G\rho G^\dagger]\le \|\Sigma\|_{\mathrm{op}}\,,
\ee
which directly yields~\eqref{eq:energy_concentration}.
\end{proof}
}

\section{On Gaussian Wigner distributions}

\subsection{Bounds between trace distance and Wigner TV distance for Gaussian states} \label{sec:wigner_relation}

In this section, we prove the following relation between Gaussian state trace distance and Gaussian Wigner TV distance:
\begin{theorem}\label{th:wigner_relation}
    There exist absolute constants $c_0,c_1,c_2,c_3\ge0$ such that the following holds:
    For any $n\in\mbb N_+$, let $\rho_1$ and $\rho_2$ be any $n$-mode Gaussian states such that $D_\mr{tr}(\rho_1,\rho_2)\le c_0$. Then,
    \bb
        c_1\mr{TV}(W_{\rho_1},W_{\rho_2})\le D_\mr{tr}({\rho_1,\rho_2})\le c_2\sqrt n\,\mr{TV}(W_{\rho_1},W_{\rho_2}).
    \ee
    Furthermore, when both $\rho_1$ and $\rho_2$ are pure,
    \bb
        D_\mr{tr}({\rho_1,\rho_2})\le c_3\mr{TV}(W_{\rho_1},W_{\rho_2}).
    \ee
    All inequalities are sharp up to constants.
\end{theorem}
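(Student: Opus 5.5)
The plan is to reduce everything to the tight Gaussian TV characterization of Lemma~\ref{le:mahalanobis}, after a symplectic normalization. Trace distance is invariant under Gaussian unitaries. Wigner functions transform covariantly, $W_{U_S\rho U_S^\dagger}(r)=W_\rho(S^{-1}r)$, and TV is invariant under invertible affine maps of $\mbb R^{2n}$. So by Williamson we may assume $\Sigma_1=D=\mr{diag}(\nu,\nu)\ge \frac12 I$, and in the pure case $\Sigma_1=\frac12 I$. Write $X\coleq\Sigma_2-\Sigma_1$, $\delta\coleq\mu_2-\mu_1$, and
\[
\Delta_W\coleq\max\{\|D^{-1/2}XD^{-1/2}\|_\mr F,\ \|D^{-1/2}\delta\|_2\}.
\]
Then $\mr{TV}(W_{\rho_1},W_{\rho_2})=\Theta(\Delta_W)$ whenever it is small.

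\textbf{Lower bound $c_1\mr{TV}\le D_\mr{tr}$.} Heterodyne on $\rho_i$ outputs $\mc N(\mu_i,\Sigma_i+\frac12 I)$, so by data processing
\[
\mr{TV}\bigl(\mc N(\mu_1,D+\tfrac12 I),\mc N(\mu_2,\Sigma_2+\tfrac12I)\bigr)\le D_\mr{tr}(\rho_1,\rho_2)\le c_0 .
\]
Choose $c_0\le 1/600$. Lemma~\ref{le:mahalanobis} then gives $\Delta_\mr{het}\le 200\,D_\mr{tr}$, where $\Delta_\mr{het}$ is defined with $D+\frac12 I$ in place of $D$. Since $\frac12 I\le D$, we have $D\le D+\frac12 I\le 2D$, and this matrix commutes with $D$. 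Hence
\[
\|D^{-1/2}XD^{-1/2}\|_\mr F\le 2\,\|(D+\tfrac12 I)^{-1/2}X(D+\tfrac12 I)^{-1/2}\|_\mr F ,
\]
and similarly for the mean term, so $\Delta_W\le 2\Delta_\mr{het}\le 400\,D_\mr{tr}$. Finally, the upper half of Lemma~\ref{le:mahalanobis} gives $\mr{TV}(W_1,W_2)\le\Delta_W/\sqrt2$. Where the lemma's hypothesis $\mr{TV}\le 1/600$ is in doubt, I would instead use Pinsker together with the KL formula, which is valid because $\Delta_W$ is small.

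\textbf{Upper bound $D_\mr{tr}\le c_2\sqrt n\,\mr{TV}$.} We may assume $\mr{TV}(W_1,W_2)$ is small, since otherwise the bound is trivial. Then $\Delta_W=O(\mr{TV})$ is small, so $\Sigma_2\ge\frac12\Sigma_1$ and $\Sigma_2^{-1}\le 2\Sigma_1^{-1}$. Lemma~\ref{le:perturbation} bounds the mean term by $\frac1{\sqrt2}\|D^{-1/2}\delta\|_2\le\Delta_W$, and bounds the covariance term by $O(\Tr(D^{-1}|X|))$.

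The key step, and the one I expect to be the main obstacle, is to show
\[
\Tr(D^{-1}|X|)=O\bigl(\|D^{-1/2}XD^{-1/2}\|_1\bigr),
\]
after which Cauchy--Schwarz over $2n$ dimensions gives $\le\sqrt{2n}\,\Delta_W$. The difficulty is that $|X|$ and $D$ do not commute.

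My first attempt would write $|X|=XV$ with $V=\mr{sgn}(X)$. Setting $Y\coleq D^{-1/2}XD^{-1/2}$, this gives $\Tr(D^{-1}|X|)=\Tr(Y\,D^{1/2}VD^{-1/2})$, which one would then try to control by $\|Y\|_1$. That route would require $\|D^{1/2}VD^{-1/2}\|_\mr{op}=O(1)$, which is not generally true. If the direct inequality fails, the fallback is to bypass the symmetric form and prove the needed bound directly in the relative coordinates. Concretely, I would redo the perturbative estimate behind Lemma~\ref{le:perturbation}, i.e.\ the derivative of $\rho(\mu,\Sigma)$ along $X$ measured in trace norm, expressed in terms of $Y$ rather than $X$. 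The aim is a bound of order $\|Y\|_1\le\sqrt{2n}\|Y\|_\mr F$.

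\textbf{Pure case.} Set $\Sigma_1=\frac12 I$ and $\Sigma_2=\frac12 I+X$, so that $\|X\|_\mr F$ and $\|\delta\|_2$ are both $O(\mr{TV})$. The fidelity lemma of~\cite{spedalieri2012limit} gives
\[
F=\det(I+X)^{-1/2}\exp\bigl(-\tfrac12\delta^T(I+X)^{-1}\delta\bigr).
\]
Purity means $\det(I+2X)=1$, i.e.\ $\Tr\log(I+2X)=0$, which forces $\Tr X=O(\|X\|_\mr F^2)$. Expanding,
\[
\log\det(I+X)=\Tr X-\tfrac12\|X\|_\mr F^2+O(\|X\|^3)=O(\|X\|_\mr F^2).
\]
Hence $1-F=O(\|X\|_\mr F^2+\|\delta\|^2)$, and since both states are pure, $D_\mr{tr}=\sqrt{1-F}=O(\Delta_W)=O(\mr{TV})$.

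\textbf{Sharpness.} Sharpness of the $\sqrt n$ factor is the vacuum-versus-near-vacuum-thermal example given in the main text. Sharpness of the other two inequalities follows from a single-mode small displacement of the vacuum, where both distances are $\Theta(\|\delta\|)$.
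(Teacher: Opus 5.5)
Your lower bound is fine: you use heterodyne after putting $\Sigma_1$ in Williamson form, which is a general-dyne seed comparable to $\Sigma_1$, while the paper takes the seed $V=\Sigma_1$ directly. Your pure-state argument is also fine: you use $\det(I+2X)=1$ to cancel the first-order term of $\log\det(I+X)$, where the paper uses the reciprocal eigenvalue pairs of $\Sigma_2^{-1/2}\Sigma_1\Sigma_2^{-1/2}$. Your sharpness examples are fine as well.

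The gap is in the $\sqrt n$ upper bound, at exactly the step you flagged, and your primary route there cannot be repaired. The inequality $\Tr(D^{-1}|X|)=O(\|D^{-1/2}XD^{-1/2}\|_1)$ fails even for physical perturbations. Take two modes with $D=\mathrm{diag}(1,L,1,L)$ and $X=t(e_{x_1}e_{x_2}^T+e_{x_2}e_{x_1}^T+e_{p_1}e_{p_2}^T+e_{p_2}e_{p_1}^T)$ with $|t|\le\frac12$. Then $D+X\ge\frac12 I$, so it is a valid covariance matrix. One gets $\Tr(D^{-1}|X|)=2|t|(1+L^{-1})$ but $\|D^{-1/2}XD^{-1/2}\|_1=4|t|/\sqrt L$. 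So Lemma~\ref{le:perturbation} gives $\Theta(|t|)$, while the target $c_2\sqrt n\,\mr{TV}(W_1,W_2)=O(\|D^{-1/2}XD^{-1/2}\|_\mr{F})$ is $O(|t|/\sqrt L)$. The mismatch is unbounded in the energy. Locally squeezing the modes does not rescue it either, since squeezing rescales the $x$- and $p$-couplings in opposite directions. Hence no argument that uses Lemma~\ref{le:perturbation} as a black box can give $c_2\sqrt n\,\mr{TV}$.

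Your fallback is the right idea and is what the paper does, but it carries the entire content of this direction, and the proposal does not supply it. What you need is a symplectically invariant derivative estimate along $\Sigma_\alpha\coleq\Sigma_1+\alpha X$, normalized by $\Sigma_\alpha$ rather than $\Sigma_1$. The paper takes from the proof of Theorem~8 of \cite{bittel2025energy} the bound
\[
D_\mr{tr}(\rho_1,\rho_2)\le\frac{1}{\sqrt2}\|\Sigma_1^{-1/2}\delta\|_2+\frac{1+\sqrt3}{4}\int_0^1\bigl\|\Sigma_\alpha^{-1/2}X\Sigma_\alpha^{-1/2}\bigr\|_1\,\mathrm{d}\alpha .
\]
This implies Lemma~\ref{le:perturbation}, via $\|\Sigma_\alpha^{-1/2}X\Sigma_\alpha^{-1/2}\|_1\le\Tr(\Sigma_\alpha^{-1}|X|)$ and operator convexity of the inverse, but it is strictly stronger.

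Given this bound, the rest is short. Let $Y\coleq\Sigma_1^{-1/2}X\Sigma_1^{-1/2}$ have eigenvalues $\lambda_j$. Then $\Sigma_\alpha=\Sigma_1^{1/2}(I+\alpha Y)\Sigma_1^{1/2}$, so the symmetric matrix $\Sigma_\alpha^{-1/2}X\Sigma_\alpha^{-1/2}$ is similar to $(I+\alpha Y)^{-1}Y$. Its trace norm is $\sum_j|\lambda_j|/(1+\alpha\lambda_j)\le 2\|Y\|_1\le 2\sqrt{2n}\,\|Y\|_\mr{F}$ once $\|Y\|_\mr{op}\le\frac12$. The paper instead applies Cauchy--Schwarz together with $\int_0^1\lambda^2(1+\alpha\lambda)^{-2}\,\mathrm{d}\alpha=\lambda^2/(1+\lambda)$.

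Without citing or proving this pointwise trace-norm bound, your proposal does not establish $D_\mr{tr}\le c_2\sqrt n\,\mr{TV}(W_{\rho_1},W_{\rho_2})$.
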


\medskip
\noindent We decompose the above theorem into a few lemmas and prove them one by one.

\begin{lemma}\label{le:WignerTV-smaller-than-Dtr}
    For any two $n$-mode Gaussian states $\rho(\mu_1,\Sigma_1)$ and $\sigma(\mu_2,\Sigma_2)$, it holds that,
    \begin{equation}
        D_\mr{tr}(\rho,\sigma)\ge \frac{\sqrt2}{400}\,\mr{TV}(W_\rho,W_\sigma),
    \end{equation}
    given that $D_\mr{tr}(\rho,\sigma)\le\sqrt2/240000$.
\end{lemma}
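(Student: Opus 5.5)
The plan is to lower bound the trace distance by the total variation distance of the outcome distributions of a single well-chosen Gaussian measurement. By data processing, $D_\mr{tr}(\rho,\sigma)\ge \mr{TV}(P_\rho,P_\sigma)$ for any POVM. I would then compare the two Gaussian outcome distributions with the two Wigner distributions using Lemma~\ref{le:mahalanobis} in both directions.

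\textbf{Step 1: normalize $\sigma$.} Both distances are invariant under a common Gaussian unitary. For the trace distance this is unitary invariance. For the Wigner TV distance, the Wigner functions transform by the same invertible affine map $r\mapsto Sr+d$, which preserves TV. So I first apply the Gaussian unitary associated with $S^{-1}$, where $\Sigma_2=SDS^T$ is the Williamson decomposition. After this, $\Sigma_2=D$ with $D\ge \frac12 I$ diagonal, and I keep calling the transformed first state $\rho(\mu_1,\Sigma_1)$.

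\textbf{Step 2: heterodyne and data processing.} I then perform heterodyne measurement on both states. The outcome distributions are $\mc N(\mu_1,\Sigma_1+\frac12 I)$ and $\mc N(\mu_2,A)$ with $A\coleq D+\frac12 I$. Since $D\ge\frac12 I$, we have $A\le 2D$, and hence $A^{-1}\ge \frac12 D^{-1}$.

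\textbf{Step 3: compare the $\Delta$ parameters.} Let $X\coleq\Sigma_1-D$, so that the covariance difference is unchanged by the added vacuum noise. For the heterodyne distributions,
\[
\|A^{-1/2}(\Sigma_1+\tfrac12 I)A^{-1/2}-I\|_F^2=\Tr(A^{-1}XA^{-1}X).
\]
The map $(P,Q)\mapsto\Tr(PXQX)$ is monotone in each positive argument, because $XQX\ge0$ and $PXQX\ldots$ reduces to traces of products of positive operators. Applying $A^{-1}\ge\frac12 D^{-1}$ in both slots gives
\[
\Tr(A^{-1}XA^{-1}X)\ge\tfrac14\Tr(D^{-1}XD^{-1}X)=\tfrac14\|D^{-1/2}\Sigma_1D^{-1/2}-I\|_F^2 .
\]
Similarly, $\|A^{-1/2}(\mu_2-\mu_1)\|_2\ge\frac1{\sqrt2}\|D^{-1/2}(\mu_2-\mu_1)\|_2$. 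Therefore $\Delta_{\rm het}\ge\frac12\Delta_W$, where $\Delta$ is defined as in Lemma~\ref{le:mahalanobis} with reference covariance the second one.

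\textbf{Step 4: chain the bounds.} The hypothesis $D_\mr{tr}\le\sqrt2/240000<1/600$ gives $\mr{TV}_{\rm het}\le D_\mr{tr}\le 1/600$. This is the regime where the lower bound of Lemma~\ref{le:mahalanobis} applies to the heterodyne distributions. Combining this with the upper bound of Lemma~\ref{le:mahalanobis} for the Wigner distributions, which has no closeness precondition, yields
\[
D_\mr{tr}\ge \mr{TV}_{\rm het}\ge \tfrac1{200}\Delta_{\rm het}\ge\tfrac1{400}\Delta_W\ge \tfrac{\sqrt2}{400}\mr{TV}(W_\rho,W_\sigma).
\]

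\textbf{Main obstacle.} The delicate point is Step 3. Heterodyne adds noise, which could in principle wash out the Frobenius discrepancy. The Williamson normalization in Step 1 is exactly what guarantees that the added noise $\frac12 I$ is at most the reference covariance itself, so only a constant factor is lost. Without that normalization, a heavily squeezed $\Sigma_2$ would make heterodyne arbitrarily insensitive.
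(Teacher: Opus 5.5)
Your argument is correct and follows essentially the paper's route. Both proofs lower bound $D_\mr{tr}$ by the TV distance of a general-dyne outcome distribution whose seed is dominated by the reference covariance. Both then apply the lower half of Lemma~\ref{le:mahalanobis}, lose a factor $\frac12$, and convert back to the Wigner TV distance with the upper half.

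The only difference is the seed. The paper takes $V=\Sigma_1$ directly, which makes $(2\Sigma_1)^{-1/2}(\Sigma_2+\Sigma_1)(2\Sigma_1)^{-1/2}-I=\frac12(\Sigma_1^{-1/2}\Sigma_2\Sigma_1^{-1/2}-I)$ an exact identity. Your Williamson normalization plus heterodyne amounts to the seed $\frac12SS^T\le\Sigma_2$, so you need your trace-monotonicity step, which is valid.

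One correction: the upper bound $\mr{TV}\le\Delta/\sqrt2$ in Lemma~\ref{le:mahalanobis} does need a closeness condition. It fails, for instance, for $\mc N(0,1)$ versus $\mc N(0,\sigma^2)$ with reference variance $1$ and $\sigma\to0$, where $\Delta\to1$ but $\mr{TV}\to1$. This is harmless here. Your chain already gives $\Delta_W\le 2\Delta_{\rm het}\le 400\,D_\mr{tr}\le\sqrt2/600$. In that regime the bound follows from the Gaussian KL formula and Pinsker, using $\lambda-\log(1+\lambda)\le\lambda^2$ for $\lambda\ge-\frac12$ to get $\mr{KL}\le\Delta_W^2$. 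The paper's step (iv) relies on the same implicit fact.
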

\noindent All constants here are not fundamental and are improvable given a more careful analysis.

\begin{proof}
    Denote the covariance matrices of $\rho$ and $\sigma$ by $\Sigma_1$ and $\Sigma_2$, respectively.
    \bb
        &D_\mr{tr}(\rho,\sigma)\\
        \geqt{(i)}~&
        \sup_{V:\, V+\frac{i}{2}\Omega\ge0}\mr{TV}\left(\mc {N}(\mu_1,\Sigma_1 + V),\,\mc {N}(\mu_2,\Sigma_2+V)\right)\\
        \geqt{(ii)}~& \sup_{V:\, V+\frac{i}{2}\Omega\ge0}\frac1{200}\max\left\{\left\|(\Sigma_1+V)^{-\frac12}(\Sigma_2+V)(\Sigma_1+V)^{-\frac12}-I_{2n}\right\|_\mr{F},~\left\|(\Sigma_1+V)^{-\frac12}(\mu_2-\mu_1)\right\|_2\right\}\\
        \geqt{(iii)}~&\frac1{400}\max\left\{\left\|\Sigma_1^{-1/2}\Sigma_2\Sigma_1^{-1/2}-I_{2n}\right\|_\mr{F},~\left\|\Sigma_1^{-\frac12}(\mu_2-\mu_1)\right\|_2\right\}\\
        \geqt{(iv)}~& \frac{\sqrt2}{400}\mr{TV}( W_{\rho},W_{\sigma}).
    \ee    
    Here (i) uses that trace distance is lower bounded by TV distance of any measurement outcome distributions, in particular, any general-dyne measurement. (ii) uses the characterization for Gaussian TV distance given in Lemma~\ref{le:mahalanobis}. (iii) holds by choosing $V\coleq\Sigma_1$, which is a valid choice as $\Sigma_1$ is a valid Gaussian covariance matrix. (iv) uses the other inequality in Lemma~\ref{le:mahalanobis}.
    Note that, the assumption on $D_{\tr}(\rho,\sigma)$ is chosen to guarantee Lemma~\ref{le:mahalanobis} can be applied.
\end{proof}

\begin{lemma}\label{le:at-most-sqrtn}
    Let $\rho$ and $\sigma$ be $n$-mode Gaussian states with Wigner functions satisfying $\mr{TV}(W_\rho,W_\sigma)<\frac{1}{600}$. Then, the trace distance between the Gaussian states can be upper bounded in terms of the total variation distance between the Wigner functions as
    \bb
        D_\mr{tr}(\rho,\sigma)\le 200\left(\frac{1}{\sqrt{2}} + \frac{1+\sqrt 3}{4}\sqrt{2n}\right) \mr{TV}(W_\rho,W_\sigma)\,.
    \ee
\end{lemma}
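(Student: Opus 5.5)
The plan is to combine the trace-distance perturbation bound of Lemma~\ref{le:perturbation} with the tight Gaussian TV characterization of Lemma~\ref{le:mahalanobis}. Write $\rho=\rho(\mu_1,\Sigma_1)$ and $\sigma=\rho(\mu_2,\Sigma_2)$, and set
\[
\Delta\coleq\max\left\{\|\Sigma_1^{-1/2}\Sigma_2\Sigma_1^{-1/2}-I_{2n}\|_\mr{F},\ \|\Sigma_1^{-1/2}(\mu_2-\mu_1)\|_2\right\}.
\]
The Wigner functions are $\mc N(\mu_i,\Sigma_i)$, and by hypothesis $\mr{TV}(W_\rho,W_\sigma)<1/600$. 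Lemma~\ref{le:mahalanobis} therefore applies, with the labels arranged so that $\Sigma_1$ is the normalizing matrix, and gives $\Delta\le 200\,\mr{TV}(W_\rho,W_\sigma)$. It thus suffices to prove
\[
D_\mr{tr}(\rho,\sigma)\le \Big(\tfrac1{\sqrt2}+\tfrac{1+\sqrt3}{4}\sqrt{2n}\Big)\Delta .
\]

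The mean term is immediate. Lemma~\ref{le:perturbation} contributes $\frac1{\sqrt2}\|\Sigma_1^{-1/2}(\mu_2-\mu_1)\|_2\le\frac1{\sqrt2}\Delta$.

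For the covariance term we must bound $\frac{1+\sqrt3}{8}\Tr\big((\Sigma_1^{-1}+\Sigma_2^{-1})|\Sigma_1-\Sigma_2|\big)$ by $\frac{1+\sqrt3}{4}\sqrt{2n}\,\Delta$, which amounts to showing
\[
\Tr\big((\Sigma_1^{-1}+\Sigma_2^{-1})|\Sigma_1-\Sigma_2|\big)\le 2\sqrt{2n}\,\Delta .
\]
Set $A\coleq\Sigma_1^{-1/2}\Sigma_2\Sigma_1^{-1/2}$, so that $\|A-I\|_\mr{F}\le\Delta\le 200/600=1/3$, and hence the eigenvalues of $A$ lie in $[2/3,4/3]$.

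The difficulty is that $|\cdot|$ does not commute with the congruence by $\Sigma_1^{1/2}$. I would get around this with duality: $\Tr(P|X|)=\sup_{-I\le U\le I}\Tr(PUX)$ is not quite right for non-commuting $P$, so instead I would use the bound $\Tr(P|X|)\le\|P^{1/2}|X|P^{1/2}\|_1$ together with the operator inequality $|X|\le$ something. A cleaner route is to bound the trace norm directly: write
\[
\Tr(\Sigma_1^{-1}|\Sigma_1-\Sigma_2|)\le\|\Sigma_1^{-1/2}(\Sigma_1-\Sigma_2)\Sigma_1^{-1/2}\|_1,
\]
where I expect the inequality to follow from the standard fact that $\Tr(P|X|)\le\|P^{1/2}XP^{1/2}\|_1$ fails in general. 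If that is the case, I would fall back on the polar decomposition $\Sigma_1-\Sigma_2=V|\Sigma_1-\Sigma_2|$ and Hölder's inequality. This step is the main obstacle, and I expect the authors' constant $2\sqrt{2n}$ to come from exactly $\|I-A\|_1\le\sqrt{2n}\|I-A\|_\mr{F}$ applied once for each of the two terms.

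Granting the congruence bound, the first term gives $\|I-A\|_1\le\sqrt{2n}\,\Delta$. The second term, $\Tr(\Sigma_2^{-1}|\cdot|)$, is treated symmetrically using $\Sigma_2^{-1/2}$ together with $\|A^{-1}-I\|_\mr{F}\le\frac{3}{2}\|A-I\|_\mr{F}$. Alternatively, one can apply Lemma~\ref{le:mahalanobis} with the labels swapped, which by the note in that lemma yields the symmetric quantity also bounded by $200\,\mr{TV}$. Summing the two pieces gives $2\sqrt{2n}\,\Delta$, and multiplying by the factor $200$ from Lemma~\ref{le:mahalanobis} yields the claimed inequality.
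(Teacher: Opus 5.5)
You have put your finger on the critical step, but it cannot be closed. The inequality you need, $\Tr(\Sigma_1^{-1}|\Sigma_1-\Sigma_2|)\le C_n\|\Sigma_1^{-1/2}\Sigma_2\Sigma_1^{-1/2}-I_{2n}\|_\mr{F}$ with $C_n$ depending only on $n$, is false, so no polar-decomposition or H\"older manipulation will produce it. The valid comparison goes the other way: writing $X=X_+-X_-$ gives $\|P^{1/2}XP^{1/2}\|_1\le\Tr(P|X|)$ for $P>0$.

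Here is a one-mode counterexample. Take $\Sigma_1=\mr{diag}(k,1/k)$ and $\Sigma_2=\Sigma_1-\delta\begin{pmatrix}0&1\\1&0\end{pmatrix}$. Both are valid covariance matrices for $\delta\le\sqrt3/2$, since they are positive definite with determinant at least $1/4$. Then $|\Sigma_1-\Sigma_2|=\delta I_2$, so $\Tr(\Sigma_1^{-1}|\Sigma_1-\Sigma_2|)=\delta(k+1/k)$. By contrast, $\Sigma_1^{-1/2}(\Sigma_1-\Sigma_2)\Sigma_1^{-1/2}=\delta\begin{pmatrix}0&1\\1&0\end{pmatrix}$ has trace norm $2\delta$. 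The squeezing $\mr{diag}(k^{-1/2},k^{1/2})$ maps this pair to $(I_2,\,I_2-\delta\begin{pmatrix}0&1\\1&0\end{pmatrix})$, so both the trace distance and the Wigner TV distance are $O(\delta)$. Meanwhile the right-hand side of Lemma~\ref{le:perturbation} is at least of order $\delta k$. So Lemma~\ref{le:perturbation} in its stated form is not invariant under symplectic congruence, whereas the TV distance is. Any argument starting from that final form therefore loses a factor that grows with the squeezing, and the $\Sigma_2^{-1}$ term, treated ``symmetrically'', has the same problem.

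The missing idea is to go one step back, into the proof of \cite[Theorem~8]{bittel2025energy}, which gives
\[
D_\mr{tr}(\rho,\sigma)\le\frac{1}{\sqrt2}\|\Sigma_1^{-1/2}(\mu_2-\mu_1)\|_2+\frac{1+\sqrt3}{4}\int_0^1\big\|\Sigma_\alpha^{-1/2}(\Sigma_1-\Sigma_2)\Sigma_\alpha^{-1/2}\big\|_1\,\mr{d}\alpha,\qquad \Sigma_\alpha\coleq\Sigma_1+\alpha(\Sigma_2-\Sigma_1).
\]
Lemma~\ref{le:perturbation} follows from this estimate exactly through the lossy step $\|P^{1/2}XP^{1/2}\|_1\le\Tr(P|X|)$ plus operator convexity of the inverse. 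That is precisely the step you were trying to reverse.

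The integrand above is congruence invariant. With $B\coleq\Sigma_1^{-1/2}\Sigma_2\Sigma_1^{-1/2}-I_{2n}$ having eigenvalues $\lambda_j$, the integrand has eigenvalues $-\lambda_j/(1+\alpha\lambda_j)$. The paper's proof then proceeds as follows:
\begin{enumerate}
\item Bound the trace norm by $\sqrt{2n}$ times the Frobenius norm, and apply Cauchy--Schwarz in $\alpha$.
\item Compute $\int_0^1\sum_j\lambda_j^2(1+\alpha\lambda_j)^{-2}\,\mr{d}\alpha=\sum_j\lambda_j^2/(1+\lambda_j)\le\|B\|_\mr{F}\|\tilde B\|_\mr{F}$, where $\tilde B\coleq\Sigma_2^{-1/2}\Sigma_1\Sigma_2^{-1/2}-I_{2n}$ has eigenvalues $-\lambda_j/(1+\lambda_j)$.
\item Apply Lemma~\ref{le:mahalanobis} in both label orders to bound $\|B\|_\mr{F}$, $\|\tilde B\|_\mr{F}$ and the mean term by $200\,\mr{TV}(W_\rho,W_\sigma)$.
\end{enumerate}
This gives exactly the stated constant; the factor $\sqrt{2n}$ enters once, not once per term. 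Your treatment of the mean term and your use of the hypothesis $\mr{TV}<1/600$ are fine.
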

\begin{proof}
    Let $\mu_1$ and $\Sigma_1$ denote the first and second moments of $\rho$, respectively, and let $\mu_2$ and $\Sigma_2$ denote the first and second moments of $\sigma$, respectively.  By exploiting the proof of~\cite[Theorem~8]{bittel2025energy}, we obtain that
    \bb\label{eq_bmem}
        &D_\mr{tr}(\rho,\sigma)\le \frac{1}{\sqrt{2}}\|\Sigma_1^{-1/2}(\mu_2-\mu_1)\|_2\\
        &\quad+ \frac{1+\sqrt 3}{4}\int_{0}^1\mathrm{d}\alpha\left\|\big[\Sigma_1+\alpha (\Sigma_2-\Sigma_1)\big]^{-1/2}(\Sigma_1-\Sigma_2)\big[\Sigma_1+\alpha (\Sigma_2-\Sigma_1)\big]^{-1/2}\right\|_1 \,.
    \ee
    We can bound the second term as
    \bb\label{eq_1_bound}
        &\int_{0}^1\mathrm{d}\alpha\left\|\big[\Sigma_1+\alpha (\Sigma_2-\Sigma_1)\big]^{-1/2}(\Sigma_1-\Sigma_2)\big[\Sigma_1+\alpha (\Sigma_2-\Sigma_1)\big]^{-1/2}\right\|_1 \\
        &\le \sqrt{2n} \int_{0}^1\mathrm{d}\alpha\left\|\big[\Sigma_1+\alpha (\Sigma_2-\Sigma_1)\big]^{-1/2}(\Sigma_1-\Sigma_2)\big[\Sigma_1+\alpha (\Sigma_2-\Sigma_1)\big]^{-1/2}\right\|_F\\
        &\le  \sqrt{2n\int_{0}^1\mathrm{d}\alpha\left\|\big[\Sigma_1+\alpha (\Sigma_2-\Sigma_1)\big]^{-1/2}(\Sigma_1-\Sigma_2)\big[\Sigma_1+\alpha (\Sigma_2-\Sigma_1)\big]^{-1/2}\right\|_F^2}\\
        &= \sqrt{2n\int_{0}^1\mathrm{d}\alpha\Tr\left[\big[\Sigma_1+\alpha (\Sigma_2-\Sigma_1)\big]^{-1}(\Sigma_1-\Sigma_2)[(\Sigma_1+\alpha (\Sigma_2-\Sigma_1)\big]^{-1}(\Sigma_1-\Sigma_2)\right]}\\
        &=\sqrt{2n\int_{0}^1\mathrm{d}\alpha\,\Tr\big[(I+\alpha B )^{-1}B(I+\alpha B)^{-1}B\big]}\,,
    \ee
    where we defined the symmetric matrix $B\coloneqq \Sigma_1^{-1/2}\Sigma_2\Sigma_1^{-1/2}-I$. 
    Denote the eigenvalues of $B$ as $\{\lambda_j\}_{j=1}^{2n}$. 
    Also define $\tilde B\coleq \Sigma_2^{-1/2}\Sigma_1\Sigma_2^{-1/2}-I$ whose eigenvalues are $\{-\lambda_j/(1+\lambda_j)\}_{j=1}^{2n}$, which can be seen by noticing that $\Sigma_2^{-1/2}\Sigma_1\Sigma_2^{-1/2}$ and $(\Sigma_1^{-1/2}\Sigma_2\Sigma_1^{-1/2})^{-1}$ are conjugate.
    We thus obtain that
    \bb\label{eq_2_bound}
        \int_{0}^1\mathrm{d}\alpha\,\Tr\big[(I+\alpha B )^{-1}B(I+\alpha B)^{-1}B\big]
        &=\sum_{j=1}^{2n} \lambda_j^2 \int_{0}^1\mathrm{d}\alpha \,\frac{1}{(1+\alpha\lambda_j)^2}\\
        &= \sum_{j=1}^{2n} \frac{\lambda_j^2}{1+\lambda_j}\\
        &\leqt{(i)} \| B\|_F \| \tilde B\|_F
    \ee
    where (i) uses Cauchy-Schwarz.
    Without loss of generality, we assume $\| \tilde B\|_F\le \| B\|_F$. Otherwise, we can just swap the labels of $\sigma$ and $\rho$. This will not make any difference as both the trace distance and TV distance are symmetric.
    Consequently, by substituting \eqref{eq_1_bound} and  \eqref{eq_2_bound} into  \eqref{eq_bmem}, we have that
    \bb
        D_\mr{tr}(\rho,\sigma)\le \frac{1}{\sqrt{2}}\|\Sigma_1^{-1/2}(\mu_2-\mu_1)\|_2 + \frac{1+\sqrt 3}{4}\sqrt{2n} \| \Sigma_1^{-1/2}\Sigma_2\Sigma_1^{-1/2}-I\|_F\,.
    \ee
    Finally, by Lemma~\ref{le:mahalanobis}, we conclude that 
    \bb
        D_\mr{tr}(\rho,\sigma)\le 200\left(\frac{1}{\sqrt{2}} + \frac{1+\sqrt 3}{4}\sqrt{2n}\right) \mr{TV}(W_\rho,W_\sigma)\,.
    \ee
\end{proof}

The following lemma provides a useful upper bound on the trace distance between two pure Gaussian states in terms of the TVD between the corresponding classical Gaussian distributions\footnote{
An related upper bound on the trace distance between two pure Gaussian states has been derived in~\cite{holevo2024estimates}. However, that upper bound does not seem to imply Lemma~\ref{le:pure-trace-tvd}.
}.

\begin{lemma}\label{le:pure-trace-tvd}
  For any two $n$-mode pure Gaussian states $\rho(\mu_1,\Sigma_1)$ and $\rho(\mu_2,\Sigma_2)$,
  \bb
    D_\mr{tr}(\rho(\mu_1,\Sigma_1),\rho(\mu_2,\Sigma_2)) < 150\cdot \mr{TV}(\mc N(\mu_1,\Sigma_1),\mc N(\mu_2,\Sigma_2)),
  \ee
  given that $\mr{TV}(\mc N(\mu_1,\Sigma_1),\mc N(\mu_2,\Sigma_2))<1/600$.
\end{lemma}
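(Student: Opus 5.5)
For pure states we have the exact fidelity formula of \cite{spedalieri2012limit}, together with $D_\mr{tr}(\psi_1,\psi_2)=\sqrt{1-F}$ with $F=\Tr(\rho_1\rho_2)$. So the plan is to upper bound $1-F$ by a constant times $\Delta^2$, where $\Delta$ is the quantity of Lemma~\ref{le:mahalanobis}, and then invoke the lower bound $\Delta\le 200\,\mr{TV}$ from that same lemma, which applies because $\mr{TV}<1/600$.

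Write $\Sigma_2=\Sigma_1^{1/2}(I+B)\Sigma_1^{1/2}$ with $B\coleq\Sigma_1^{-1/2}\Sigma_2\Sigma_1^{-1/2}-I$, and set $\delta\coleq\mu_2-\mu_1$. Then $\det(\Sigma_1+\Sigma_2)=\det\Sigma_1\cdot\det(2I+B)$. Purity gives $\det\Sigma_1=4^{-n}$ (note $2n$ dimensions), so $\det(\Sigma_1+\Sigma_2)=\det(I+B/2)$. Hence
\[
F=\det(I+B/2)^{-1/2}\exp\!\left(-\tfrac12\delta^T\Sigma_1^{-1/2}(2I+B)^{-1}\Sigma_1^{-1/2}\delta\right).
\]
The mean term is easy. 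Since $\mr{TV}<1/600$, we have $\Delta<1/3$, so $\|B\|_\mr{op}\le\|B\|_F<1/3$. The exponent is therefore at most $\tfrac12\cdot\tfrac{1}{2-1/3}\|\Sigma_1^{-1/2}\delta\|^2\le\Delta^2/3$, and $1-e^{-x}\le x$.

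The key step, and the main obstacle, is the determinant term. A first-order expansion gives $\log\det(I+B/2)\approx\tfrac12\Tr B$, which is linear in $B$ and not obviously $O(\|B\|_F^2)$. This is exactly where purity must enter, since the bound fails for mixed states by the $\sqrt n$ example. The idea is that both $\Sigma_i$ are pure, so $\det\Sigma_2=\det\Sigma_1$, i.e.\ $\det(I+B)=1$, i.e.\ $\sum_j\log(1+\lambda_j)=0$, where $\lambda_j$ are the eigenvalues of $B$, all in $(-1/3,1/3)$. Then
\[
\log\det(I+B/2)=\sum_j\left[\log(1+\lambda_j/2)-\tfrac12\log(1+\lambda_j)\right].
\]
The function $g(\lambda)=\log(1+\lambda/2)-\tfrac12\log(1+\lambda)$ satisfies $g(0)=g'(0)=0$ and $g''(\lambda)=\tfrac{1}{2(1+\lambda)^2}-\tfrac{1}{(2+\lambda)^2}$. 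On $|\lambda|\le1/3$ this gives $0\le g(\lambda)\le c\lambda^2$, with $c$ an explicit constant of order $1/8$ to $1/4$. Hence $\det(I+B/2)^{-1/2}\ge\exp(-\tfrac c2\|B\|_F^2)$. Note that this step does not need symplectic structure beyond equality of determinants.

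Combining the two terms, $1-F\le 1-\exp\bigl(-(\tfrac c2+\tfrac13)\Delta^2\bigr)\le C\Delta^2$ with $C<1$. Therefore $D_\mr{tr}\le\sqrt C\,\Delta\le 200\sqrt C\,\mr{TV}$. It remains to track the constants so that $200\sqrt C<150$, which requires $C<0.5625$. With $c\approx 1/4$ we get $C\approx 1/8+1/3<0.5625$, which is fine. If the constants turn out tight, I would use the sharper fact that only one of the two $\max$ terms in $\Delta$ needs to be large, or the factor $\tfrac{1}{2-1/3}$ more carefully.
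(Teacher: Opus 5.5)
Your proposal is correct, but it handles the determinant factor differently from the paper.

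\textbf{Determinant factor.} The paper uses the symplectic structure. It shows, via the Euler decomposition, that $\frac{1}{\sqrt2}\Sigma_2^{-1/2}$ is symplectic. Hence $\frac12\Sigma_2^{-1/2}\Sigma_1\Sigma_2^{-1/2}$ is again a pure covariance matrix, and its eigenvalues come in reciprocal pairs $\alpha_i,\alpha_i^{-1}$. The determinant factor then becomes the exact product $\prod_{i=1}^n\bigl(1+\frac{(\alpha_i-1)^2}{4\alpha_i}\bigr)^{-1/2}$, which is visibly $1-O(\|\cdot\|_F^2)$. You use only $\det\Sigma_1=\det\Sigma_2$, i.e.\ $\sum_j\log(1+\lambda_j)=0$, to cancel the linear term, and you bound the remainder through the convex function $g$. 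Your route is more elementary, since it needs no symplectic or Euler-decomposition argument. It also isolates which consequence of purity is used beyond the fidelity formula itself. The cost is a Taylor estimate in place of the paper's closed form.

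\textbf{Displacement factor.} The paper bounds $(\Sigma_1+\Sigma_2)^{-1}\le\frac14(\Sigma_1^{-1}+\Sigma_2^{-1})$ by operator convexity of inversion. It therefore needs the Mahalanobis bound in both metrics. You use $\|(2I+B)^{-1}\|_{\mathrm{op}}\le 3/5$, which needs only one metric plus the smallness of $B$. Your $\Delta$ should be read in the label-swapped form allowed by Lemma~\ref{le:mahalanobis}, which is exactly what you use.

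\textbf{Constants.} Your constants close with room to spare. Even the crude bound $c=\frac12\max_{|\lambda|\le 1/3}g''(\lambda)=\frac12\bigl(\frac98-\frac{9}{25}\bigr)\approx 0.38$ gives $C\approx 0.52<0.5625$, so $200\sqrt C\approx 145$. The sharper value $\sup_{|\lambda|\le 1/3}g(\lambda)/\lambda^2=9\,g(-1/3)\approx 0.18$ gives $200\sqrt{C}\approx 130$. This is comparable to the paper's $\frac{\sqrt7}{4}\cdot 200\approx 132$.
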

\begin{proof}
  Define $\delta\mu\coleq\mu_2-\mu_1$. Let $\mr{TV}(\mc N(\mu_1,\Sigma_1),\mc N(\mu_2,\Sigma_2)) = \varepsilon \le 1/600$. Lemma~\ref{le:mahalanobis} implies that
  \bb\label{eq:tvd-maha}
    \max\left\{\|\Sigma_2^{-1/2}\Sigma_1\Sigma_2^{-1/2}- I_{2n}\|_\mr{F},~\|\Sigma_1^{-1/2}\delta\mu\|_2,~  \|\Sigma_2^{-1/2}\delta\mu\|_2\right\} \le c\varepsilon,
  \ee
  where $c\coleq 200$.
  Denote $\rho_x \coleq \rho(\mu_x,\Sigma_x)$ for $x=1,2$. The trace distance between two pure states satisfies $D_\mr{tr}(\rho_1,\rho_2) = \sqrt{1 - F(\rho_1,\rho_2)}$, so we just need to lower bound the fidelity. Recall the fidelity formula for pure Gaussian states,
  \bb
    F(\rho_1,\rho_2) = \frac{1}{\sqrt{\det(\Sigma_1 + \Sigma_2)}} \exp\left(-\frac12 \delta\mu^T(\Sigma_1 + \Sigma_2)^{-1}\delta\mu\right).
  \ee
  { For the second factor, we have that
  \bb
  \exp\left(-\frac12 \delta\mu^T(\Sigma_1 + \Sigma_2)^{-1}\delta\mu\right) &\geqt{(i)}\exp\left(-\frac18\delta\mu^T(\Sigma_1^{-1} 
  + \Sigma_2^{-1})\delta\mu\right)\\
  &\geqt{(ii)}\exp\left(-\frac14c^2\varepsilon^2\right)\\
  &\geq 1 - \frac14 c^2\varepsilon^2\,,
  \ee
    where in (i) we used the operator convexity of $x\mapsto x^{-1}$, which establishes that for any two positive-definite matrices $A,B$, one has~\cite[Eq.~(1.33)]{BHATIA}
\begin{equation}
\left(\frac{A+B}{2}\right)^{-1}\le \frac{A^{-1}+B^{-1}}{2}\,;
\end{equation}
and in~(ii) we exploited Eq.~\eqref{eq:tvd-maha}.}

  To bound the first factor, note that $\frac12\Sigma_2^{-1/2}\Sigma_1\Sigma_2^{-1/2}$ is a valid pure Gaussian covariance matrix. Indeed, 
  \bb
    \frac12\Sigma_2^{-1/2}\Sigma_1\Sigma_2^{-1/2} + \frac12i\Omega ~\eqt{(i)}~ \frac12\Sigma_2^{-1/2}\left(\Sigma_1 + \frac12i\Omega\right)\Sigma_2^{-1/2}
    ~\geqt{(ii)}~ 0.
  \ee
  (i) uses that $\frac{1}{\sqrt2}\Sigma_2^{-1/2}$ is symplectic and hence $\Sigma_2^{-1/2}\Omega\Sigma_2^{-1/2} = 2\Omega$. %
  {One way to see that $\frac{1}{\sqrt2}\Sigma_2^{-1/2}$ is symplectic is as follows. Since $\Sigma_2$ is the covariance matrix of a pure Gaussian state, there exists a symplectic matrix $S$ such that $\Sigma_2=\tfrac12\,SS^\intercal$, and by the Euler (Bloch--Messiah) decomposition we can write $S=O_1 Z O_2$ with $O_1,O_2$ orthogonal symplectic and $Z=\mathrm{diag}(z_1,\ldots,z_n,z_1^{-1},\ldots,z_n^{-1})$, hence $\Sigma_2=\tfrac12\,O_1 Z^2 O_1^\intercal$ and therefore $\frac{1}{\sqrt{2}}\Sigma_2^{-1/2}=O_1 Z^{-1} O_1^\intercal$, which is symplectic because it is a product of symplectic matrices.} Moreover, (ii) uses that $\Sigma_1$ is a valid Gaussian covariance matrix. One also have $\det(\frac12\Sigma_2^{-1/2}\Sigma_1\Sigma_2^{-1/2})=4^{-n}$. This confirms it is indeed a pure Gaussian state covariance matrix. Consequently, the eigenvalues of $\Sigma_2^{-1/2}\Sigma_1\Sigma_2^{-1/2}$ take the form $\{\alpha_1,\cdots,\alpha_n,\alpha_1^{-1},\cdots,\alpha_n^{-1}\}$. Eq.~\eqref{eq:tvd-maha} implies
  \begin{gather}
    \sum_{i=1}^n (\alpha_i - 1)^2 \le \|\Sigma_2^{-1/2}\Sigma_1\Sigma_2^{-1/2}-I_{2n}\|_\mr{F}^2 \le c^2\varepsilon^2.
    \\ \forall i\in[n],\quad \alpha_i \ge 1 - c\varepsilon \ge \frac23.
  \end{gather}
  We are ready to bound the first term:
  \bb
    \frac{1}{\sqrt{\det(\Sigma_1 + \Sigma_2)}} &= \left(\det(\Sigma_2)\det(\Sigma_2^{-1/2}\Sigma_1\Sigma_2^{-1/2}+I_{2n})\right)^{-1/2}
    \\&= \left(4^{-n} \prod_{i=1}^n (1+\alpha_i)(1+\alpha_i^{-1})\right)^{-1/2}
    \\&= \prod_{i=1}^n \left(1 + \frac{(\alpha_i-1)^2}{4\alpha_i}\right)^{-1/2}
    \\&\ge \prod_{i=1}^n \left(1 - \frac{(\alpha_i-1)^2}{8\alpha_i}\right)
    \\&\geqt{(i)} 1 - \sum_{i=1}^n \frac{3}{16}(\alpha_i-1)^2
    \\&\ge 1 - \frac{3}{16} c^2 \varepsilon^2.
  \ee
  Here, (i) uses $\prod_{i=1}^n(1-x_i) \ge 1-\sum_{i=1}^n x_i$ for $x_i\in[0,1]$. Combining the two factors, we have
  \bb
    F(\rho_1,\rho_2) \ge \left(1 - \frac{3}{16} c^2 \varepsilon^2\right)\left(1 - \frac14 c^2\varepsilon^2\right) \ge 1 - \frac{7}{16} c^2 \varepsilon^2.
  \ee
  Therefore,
  \bb
    D_\mr{tr}(\rho_1,\rho_2)\le \frac{\sqrt 7}{4}c\varepsilon < 150\varepsilon.
  \ee
  This completes the proof of Lemma~\ref{le:pure-trace-tvd}.
\end{proof}

\begin{proof}[Proof for Theorem~\ref{th:wigner_relation}]
    The three inequalities follows from Lemma~\ref{le:WignerTV-smaller-than-Dtr}, Lemma~\ref{le:at-most-sqrtn}, and Lemma~\ref{le:pure-trace-tvd}. 
    The sharpness (i.e., exist $\rho,\sigma$ such that $D(\rho,\sigma)=\Theta(\sqrt n)\,\mr{TV}(W_\rho,W_\sigma)$) has been proven by the example discussed in the main text Sec.~\ref{sec:main_wigner}.
\end{proof}

\medskip

\subsection{Nearly-tight bound on learning Gaussian Wigner distributions} \label{sec:wigner_learning}

\begin{theorem}[Nearly-optimal learning of Gaussian Wigner distributions]\label{th:wigner}
    $N=\Omega(n^2/\varepsilon^2)$ copies of an $n$-mode Gaussian state $\rho(\mu,\Sigma)$ are necessary to learn its Wigner distribution to $\varepsilon$ TV distance with probability $2/3$ using any measurements. Furthermore, there exists a Gaussian measurement scheme that achieves this using $N=O\!\left(n^2/\varepsilon^2 + (n +\log\log\log \bar E )\log\log \bar E\right)$ copies given the promise that $\|\Sigma\|_\mr{op}\le \bar E$.
\end{theorem}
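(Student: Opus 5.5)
The upper bound and the lower bound are proved separately.

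\emph{Upper bound.} I would run the adaptive general-dyne algorithm of~\cite{bittel2025energy} only for its seed-finding phase. Over $O(\log\log\bar E)$ rounds, each using $O(n+\log\log\log\bar E)$ copies, it returns with high probability a seed covariance $V$ satisfying $V\le c\,\Sigma$ for a small absolute constant $c$; equivalently, $\|\Sigma^{-1/2}V\Sigma^{-1/2}\|_{\mr{op}}\le c$. We then measure $O(n^2/\varepsilon^2)$ fresh copies with the seed $V$. The outcomes are i.i.d.\ samples from $\mc N(\mu,\Sigma+V)$, and since $V$ is known we may also hope to subtract it. I would apply the empirical mean and covariance estimator of~\cite{ashtiani2020near} to obtain $(\hat\mu,\hat\Sigma')$ with $\mr{TV}(\mc N(\hat\mu,\hat\Sigma'),\mc N(\mu,\Sigma+V))\le\varepsilon/2$, and output $\hat\Sigma=\hat\Sigma'-V$. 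By Lemma~\ref{le:mahalanobis} the error is $\varepsilon$-small in the $(\Sigma+V)$-Mahalanobis/Frobenius sense. Because $\Sigma\le\Sigma+V\le(1+c)\Sigma$, the same bound transfers up to constants to the $\Sigma$-normalized quantities, so Lemma~\ref{le:mahalanobis} again gives $\mr{TV}(\mc N(\hat\mu,\hat\Sigma),W_\rho)=O(\varepsilon)$. Positivity of $\hat\Sigma$ follows from the operator-norm closeness implied by the Frobenius bound. The main care here is pinning down the precise guarantee of~\cite{bittel2025energy} on $V$; if it only gives $V\le C\Sigma$ for a constant $C$, the argument above still goes through with worse constants.

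\emph{Lower bound.} I would use Fano's method with the Holevo bound, as sketched in the main text. For the ensemble, fix a product state $\rho_0$ of $n$ zero-mean single-mode thermal or vacuum states with distinct covariances, say $n/2$ modes at $\frac12 I$ and $n/2$ at $\frac12(1+\gamma) I$ for a constant $\gamma$. Apply passive unitaries $K_j=\exp(\eta A_j)$, where each $A_j$ is a generator of the orthogonal-symplectic algebra chosen from a Gilbert--Varshamov packing of $2^{\Omega(n^2)}$ sign patterns on the off-diagonal block that mixes the two groups. With $\eta\asymp\varepsilon/n$, each $\Sigma_j=K_j\Sigma_0K_j^T$ differs from $\Sigma_0$ by a first-order term $\eta[A_j,\Sigma_0]$. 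The packing then gives pairwise $\|\Sigma_j^{-1/2}\Sigma_k\Sigma_j^{-1/2}-I\|_F\gtrsim\varepsilon$, hence pairwise Wigner TV at least $\Omega(\varepsilon)$ by Lemma~\ref{le:mahalanobis}. A learner achieving $\varepsilon/3$ therefore identifies $j$, and Fano yields mutual information $\Omega(n^2)$.

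The Holevo quantity of $\{\rho_j^{\otimes N}\}$ is bounded by $\chi\le N\max_j D(\rho_j\|\rho_0)$, using $\chi\le\frac1M\sum_j D(\rho_j^{\otimes N}\|\sigma^{\otimes N})$ with reference state $\sigma=\rho_0$, which is faithful since all thermal components have full rank. The relative entropy between Gaussian states has a closed form, and to second order in the perturbation it is a quadratic form in $\Sigma_j-\Sigma_0$ whose coefficients depend only on the constant symplectic eigenvalues. Thus $D(\rho_j\|\rho_0)=O(\|\eta[A_j,\Sigma_0]\|_F^2)=O(\eta^2n^2)=O(\varepsilon^2)$. Combining the two bounds gives $N\varepsilon^2\gtrsim n^2$.

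I expect the main obstacle to be this relative-entropy bound: the quantum relative entropy of nearby Gaussian states must be shown to be $O(\|\delta\Sigma\|_F^2)$ with constants independent of $n$. Keeping all symplectic eigenvalues bounded away from $1/2$ (by using only mixed thermal modes, e.g.\ values $1$ and $1+\gamma$) makes the log-ratio Hessian uniformly bounded, which should settle it.
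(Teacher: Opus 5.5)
Your upper bound is the paper's argument. Lemma~\ref{le:unsqueeze} gives $\|\Sigma'^{-1}\|_\mr{op}\le 4$, i.e.\ an effective seed $V\le 2\Sigma$, so your fallback case with $C=2$ is the one that actually occurs. The transfer of the Frobenius/Mahalanobis guarantee from $\Sigma+V$ to $\Sigma$ is done exactly as you describe.

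Your lower bound is correct but takes a genuinely different route. The paper reuses the pure squeezed ensemble $\Sigma'_a$, built from the subspace packing of Lemma~\ref{le:ashtiani6-4}, rescaled by $2$ so that every state has all symplectic eigenvalues equal to $1$. The pairwise Wigner-TV separation then comes from the pure-state fidelity computation, via scale invariance of TV and Lemma~\ref{le:pure-trace-tvd}. The Hamiltonian matrices $H_a$ are explicit, and $\chi$ is bounded by joint convexity plus the symmetrized identity $D(\rho_1\|\rho_2)+D(\rho_2\|\rho_1)=\Tr[(\Sigma_1-\Sigma_2)(H_2-H_1)]$ (Lemma~\ref{le:marco_holevo_bound}), with no perturbation theory.

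You instead take a local Gilbert--Varshamov packing of passive rotations around a two-temperature thermal state, and compare each state to the center $\rho_0$. This costs you first-order perturbative control. In exchange, your ensemble is passive and has pairwise Wigner TV $\Theta(\varepsilon)$. Combined with Lemma~\ref{le:WignerTV-smaller-than-Dtr}, the same construction therefore also yields the passive trace-distance bound of Theorem~\ref{th:non-gaussian-passive-lower}, which the paper proves with a separate ensemble and a photon-counting argument.

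Three points to tighten.
\begin{itemize}
\item Drop the vacuum ($\tfrac12 I$) blocks from your first choice. With them, $\rho_0$ is not faithful and $D(\rho_j\|\rho_0)=\infty$. You need $\tfrac12<d_1<d_2$, as you say at the end.
\item The obstacle you anticipate is in fact exact. Since $D_0$ takes only two values, $f(D_0)=\alpha I+\beta D_0$ with $\beta=\frac{f(d_2)-f(d_1)}{d_2-d_1}<0$. Because $K_j$ is orthogonal symplectic, $H_j=K_jf(D_0)K_j^T=\alpha I+\beta\Sigma_j$. Hence $D(\rho_j\|\rho_0)\le\Tr[(\Sigma_j-\Sigma_0)(H_0-H_j)]=|\beta|\,\|\Sigma_j-\Sigma_0\|_\mr{F}^2=O(\varepsilon^2)$, with $|\beta|=O(1)$ and no Taylor remainder to control.
\item For the separation, note that $\|A_j\|_\mr{op}\le\|B_j\|_\mr{F}=O(n)$, so $\eta\|A_j\|_\mr{op}=O(\varepsilon)$. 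The higher-order terms of $e^{\eta A_j}\Sigma_0e^{-\eta A_j}$ are then $O(\varepsilon^2)$ in Frobenius norm and do not spoil the first-order $\Omega(\varepsilon)$ separation, after a constant rescaling of the learner's accuracy.
\end{itemize}
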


The lower bound proof will be postponed to App.~\ref{sec:any-lower-stronger}, where we will also prove Theorem~\ref{th:non-gaussian-lower-stronger} that shows an $\Omega(n^2/\varepsilon^2)$ lower bound for learning Gaussian states to $\varepsilon$ trace distance. The two proofs are almost identical.

\medskip

For the upper bound, we need the following Lemma from \cite{bittel2025energy} shows that one can efficiently learn to unsqueeze any unknown Gaussian state using adaptive Gaussian measurements. 
\begin{lemma}\cite{bittel2025energy}\label{le:unsqueeze}
  For any unknown Gaussian state $\rho(\mu,\Sigma)$, there exists an adaptive Gaussian measurement protocol using {$$O\!\left(\left(n + \log \!\left(\frac{\log\log\|\Sigma^{-1}\|_\mr{op}}{\delta}\right)\right)\log\log\|\Sigma^{-1}\|_\mr{op}\right)$$}%
  copies of $\rho$ to learn a symplectic operator $S$ such that, with probability at least $1-\delta$, the unsqueezed state {$\rho(\mu,\Sigma') \coleq U_{S^{-1}}\rho(\mu,\Sigma) U_{S^{-1}}^\dagger$ }%
  satisfies $\|{\Sigma'}^{-1}\|_\mr{op} \le 4$.
    \end{lemma}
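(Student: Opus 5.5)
The plan is an iterative ``unsqueezing'' protocol with $T=O(\log\log\|\Sigma^{-1}\|_\mr{op})$ rounds. Each round uses $m=O(n+\log(T/\delta))$ fresh copies and one heterodyne measurement per copy. We track the potential $\Phi_t\coleq\log\max\{\|\Sigma_t^{-1}\|_\mr{op},4\}$, where $\Sigma_t\coleq S_t^{-1}\Sigma S_t^{-T}$ is the covariance of the partially unsqueezed state $U_{S_t^{-1}}\rho U_{S_t^{-1}}^\dagger$ and $S_0=I$. The goal is to show a halving recursion $\Phi_{t+1}\le \tfrac12\Phi_t+O(1)$. With that recursion, $T=O(\log\log\|\Sigma^{-1}\|_\mr{op})$ rounds bring $\|\Sigma_T^{-1}\|_\mr{op}$ down to a constant. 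Sharpening the additive constant, or adding one final round, gives the bound $4$.

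\emph{Step 1 (measurement model and estimation).} In round $t$ we perform the general-dyne measurement with seed $\frac12 S_tS_t^T$ on $\rho$. Undoing the displacement and applying $S_t^{-1}$ to the outcome yields exactly a sample from $\mc N(S_t^{-1}\mu,\Sigma_t+\frac12 I)$, so each round is effectively heterodyne on the current unsqueezed state and only single-copy Gaussian measurements are used. By standard covariance concentration for Gaussian vectors, $m=O(n+\log(T/\delta))$ samples give an estimate $\hat M_t$ with $\frac12(\Sigma_t+\frac12I)\le\hat M_t\le 2(\Sigma_t+\frac12I)$ with probability at least $1-\delta/T$. A union bound over the $T$ rounds yields the stated total copy count
\[
O\bigl((n+\log(\log\log\|\Sigma^{-1}\|_\mr{op}/\delta))\log\log\|\Sigma^{-1}\|_\mr{op}\bigr).
\]

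\emph{Step 2 (choosing the update).} Heterodyne noise hides the small eigenvalues of $\Sigma_t$, since they sit below $\frac12$. The large eigenvalues are visible, however, and they are linked to the small ones through symplectic structure. By the pairing fact from App.~\ref{sec:pre}, $\Sigma_t\ge\frac12 SS^T$ for a symplectic $S$ whose eigenvalues come in pairs $z_i,z_i^{-1}$. A direction squeezed to $\lambda\ll1$ therefore forces, in the corresponding symplectic partner direction, a variance of at least roughly $1/(4\lambda)$, which $\hat M_t$ resolves to a constant factor. The update is as follows. Take the Williamson/Euler decomposition of $\hat M_t$, form $Z_t=\mr{diag}(\zeta_1,\dots,\zeta_n,\zeta_1^{-1},\dots,\zeta_n^{-1})$ with $\zeta_i\approx(\text{estimated large variance}_i)^{1/4}$ (a deliberately \emph{partial}, square-root-sized unsqueeze), and set $S_{t+1}=S_tK_tZ_t$, where $K_t$ is the orthogonal symplectic rotation from the decomposition. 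Shrinking each large direction by only a square root means that a constant-factor error in the estimate costs only an $O(1)$ additive error in $\Phi$. It also guarantees that no direction gets over-squeezed in the opposite sense: mixedness can make a large variance thermal rather than the partner of a squeezed direction, and the partial unsqueeze still leaves variance at least about $\sqrt{\cdot}\gtrsim 1$ there.

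\emph{Step 3 (main obstacle: the contraction inequality).} The hard part is proving $\|\Sigma_{t+1}^{-1}\|_\mr{op}\le C\|\Sigma_t^{-1}\|_\mr{op}^{1/2}$ for general mixed $\Sigma_t$ with non-commuting structure. My first attempt would be to work with the lower bound $\Sigma_t\ge\frac12 S S^T$ and its Euler form $O_1Z^2O_1^T$. The argument would show that the top eigenspace of $\hat M_t$ approximately contains the anti-squeezed subspace of $O_1Z^2O_1^T$, up to constant spectral distortion. Operator monotonicity of $x\mapsto x^{-1}$ and $x\mapsto x^{1/2}$ would then transfer the bound from $\frac12SS^T$ to $\Sigma_t$. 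The delicate point is the misalignment between the eigenbasis of $\hat M_t$ and the true symplectic basis. The plan is to control it with a Davis--Kahan-type argument restricted to eigenvalues above a threshold $\gg1$, treating all directions with variance $O(1)$ as already harmless. If this becomes too messy, the fallback is to first reduce to the pure case via $\Sigma_t\ge\frac12SS^T$: unsqueezing steps only act monotonically on this lower bound, so it suffices to analyze $\frac12 SS^T$.
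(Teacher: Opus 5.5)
\textbf{Verdict: genuine gap.} The one-round contraction inequality, which is the whole content of the lemma, is not proved.

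\textbf{What is right.} Your outer framework matches the protocol of \cite{bittel2025energy}, which the paper imports without reproducing a proof. Seeds $\frac12S_tS_t^T$ turn each round into a heterodyne measurement on $\Sigma_t$. Then $O(n+\log(T/\delta))$ copies give a multiplicative estimate $\hat M_t$ of $\Sigma_t+\frac12 I$. A recursion $\log\|\Sigma_{t+1}^{-1}\|_{\mathrm{op}}\le\frac12\log\|\Sigma_t^{-1}\|_{\mathrm{op}}+O(1)$ would then give $O(\log\log\|\Sigma^{-1}\|_{\mathrm{op}})$ rounds.

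\textbf{The gap.} Step~3 only lists strategies for that contraction, and neither works as stated.
\begin{itemize}
\item Your fallback is unsound. The update is computed from the mixed matrix $\Sigma_t+\frac12 I$, not from its pure lower bound, so the transported bound $\frac12S_t^{-1}SS^TS_t^{-T}$ need not improve. Take one mode with $\Sigma=\frac12\mathrm{diag}(z,z^{-1})+\mathrm{diag}(0,N)$ and $N\gg z\gg1$. The Williamson symplectic of $\Sigma+\frac12I$ is $\mathrm{diag}(s,s^{-1})$ with $s^2\approx\sqrt{z/(2N)}\ll1$, so it unsqueezes the pure part in the wrong direction. It maps $P=\mathrm{diag}(z,z^{-1})$ to a matrix of norm $\approx\sqrt{2Nz}\gg z$, even though the new state, $\approx\sqrt{Nz/2}\,I$, is perfectly fine.
\item The Davis--Kahan route needs spectral gaps in $\hat M_t$ that need not exist. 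Eigenvector alignment is also not what drives the contraction.
\item Your rule $\zeta_i\approx(\text{large variance}_i)^{1/4}$ is not well defined for mixed states. For a thermal state it prescribes squeezing, while the Williamson symplectic of $\hat M_t$ is trivial.
\end{itemize}

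\textbf{One clean way to close it.} Update by exactly the Williamson symplectic $\hat S_t$ of $\hat M_t$. For pure $\Sigma_t=\frac12KZ^2K^T$ this is $KZ^{1/2}$, so it already is the square-root unsqueeze. Then control it with the matrix geometric mean rather than perturbation theory.
\begin{itemize}
\item If $W=SDS^T$ is a Williamson decomposition, then $\Omega^TW^{-1}\Omega=SD^{-1}S^T$. Congruence covariance of the geometric mean then gives $SS^T=W\#(\Omega^TW^{-1}\Omega)$.
\item Set $\lambda=\lambda_{\min}(\Sigma_t)$ and $W=\Sigma_t+\frac12I$. Then $W\le(1+\frac1{2\lambda})\Sigma_t$.
\item Also $\Omega^TW^{-1}\Omega\le\Omega^T\Sigma_t^{-1}\Omega\le4\Sigma_t$, using the uncertainty relation $\Sigma^{-1}\le4\Omega^T\Sigma\Omega$.
\end{itemize}
If $(1-\eta)W\le\hat M_t\le(1+\eta)W$, joint monotonicity and homogeneity of $\#$ give
\[
\hat S_t\hat S_t^T\le\gamma\sqrt{4+2/\lambda}\;\Sigma_t,
\]
where $\gamma=\sqrt{(1+\eta)/(1-\eta)}$. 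Hence, for $\Sigma_{t+1}=\hat S_t^{-1}\Sigma_t\hat S_t^{-T}$,
\[
\|\Sigma_{t+1}^{-1}\|_{\mathrm{op}}\le\gamma\sqrt{4+2\|\Sigma_t^{-1}\|_{\mathrm{op}}}.
\]
This handles mixedness and misalignment at once, with no eigenvector alignment needed.

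\textbf{The final constant.} This recursion also shows your remark about reaching $4$ needs care. With your factor-$2$ accuracy ($\gamma=2$), the bound only settles near $4+4\sqrt2$. Reaching $4$ requires a small constant relative accuracy, e.g.\ $\eta<1/7$, which still costs only $O(n+\log(T/\delta))$ copies per round. Adding an extra round alone does not suffice.
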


\medskip

\begin{proof}[Proof for the upper bound of Theorem~\ref{th:wigner}]
The upper bound follows from analyzing the adaptive learning algorithm in \cite{bittel2025energy}. For completeness, we describe their algorithm in below:
\begin{enumerate}
    \item Run the learn-to-unsqueeze protocol~\cite{bittel2025energy} in Lemma~\ref{le:unsqueeze} using {\bb
    \Theta\!\left(\left(n + \log \!\left(\frac{\log\log\bar E}{\delta}\right)\right)\log\log\bar E\right)
    \ee}%
    copies of $\rho$ to learn a symplectic matrix $S$ such that {$\rho_\mr{unsqueezed}(\mu,\Sigma')\coleq U_{S^{-1}}\rho(\mu,\Sigma) U^\dagger_{S^{-1}}$ }%
    satisfies $\|{\Sigma'}^{-1}\|_\mr{op}\le 4$ with probability at least $1-\delta/2$. 
    Note that $\|\Sigma^{-1}\|_\mr{op}\le4\|\Sigma\|_\mr{op}\le\bar E$. {Indeed, by Williamson’s decomposition we can write $\Sigma=SDS^\intercal$ with $S$ symplectic and $D\ge I/2$, hence $D^{-1}\le 4D$ and therefore $\Sigma^{-1}=S^{-\intercal}D^{-1}S^{-1}\le 4\,S^{-\intercal}DS^{-1}$; using $S^{-\intercal}=\Omega S\Omega^\intercal$, $S^{-1}=\Omega S^\intercal\Omega^\intercal$, and the fact that $D$ commutes with $\Omega$, we get $S^{-\intercal}DS^{-1}=\Omega S D S^\intercal\Omega^\intercal=\Omega\Sigma\Omega^\intercal$, so $\Sigma^{-1}\le 4\,\Omega\Sigma\Omega^\intercal$, and taking operator norms yields $\|\Sigma^{-1}\|_\mr{op}\le 4\|\Sigma\|_\mr{op}$, since $\Omega$ is orthogonal.}

    In the following step, we replace all copies of $\rho$ by $\rho_\mr{unsqueezed}$. 
    \item Run heterodyne measurements on $2m = O((n^2+n\log\delta^{-1})/\varepsilon^2)$ copies of $\rho_\mr{unsqueezed}$ to get outcomes $v_1,\cdots,v_{2m} \in \mbb R^{2n}$. Define the estimators,
    \bb
      \hat\mu \coleq \frac{1}{m}\sum_{i=1}^m v_i,\quad \hat\Sigma\coleq \frac{1}{2m}\sum_{i=1}^m (v_{2i}-v_{2i-1})(v_{2i}-v_{2i-1})^T - \frac12 I_{2n}.
    \ee
    Return $\mc N(\hat\mu,S\hat\Sigma S^T)$ as an estimate of the Wigner function of $\rho$.
  \end{enumerate}
  Conditioned on Step 1 succeeding, since TV distance is preserved under invertible transformations, we just need to learn the Wigner function of $\rho_\mr{unsqueezed}$ to $\varepsilon$ TV distance. For notational simplicity, we will just relabel $\rho_\mr{unsqueezed}$ by $\rho(\mu,\Sigma)$, with the guarantee that $\|\Sigma^{-1}\|_\mr{op}\le\bar E$.  
  Following standard concentration results, see e.g. \cite[Appendix C]{ashtiani2020near}, the estimators defined in Step 2 with probability at least $1-\delta/2$ satisfy
  \begin{gather}
    (\hat\mu - \mu)^T (\Sigma +\frac12 I_{2n})^{-1} (\hat\mu - \mu) \le \varepsilon^2/2,\\
    \left\|(\Sigma+\frac12 I_{2n})^{-1/2}(\hat\Sigma_0 +\frac12 I_{2n})(\Sigma+\frac12 I_{2n})^{-1/2} - I_{2n}\right\|_\mr{F}^2 \le \varepsilon^2/2.
  \end{gather}
  Consider the spectrum decomposition $\Sigma=\sum_{i=1}^{2n}\lambda_i\ketbra{e_i}{e_i}$ which satisfies $\lambda_i^{-1}\le 4$.
  Therefore,
  \bb
    (\Sigma+\frac12 I_{2n})^{-1} &= \sum_{i=1}^{2n}(\lambda_i + \frac12)^{-1}\ketbra{e_i}{e_i}
    \ge \frac13 \sum_{i=1}^{2n} \lambda_i^{-1}\ketbra{e_i}{e_i} 
    = \frac13 \Sigma^{-1}.
  \ee
  Consequently,
  \bb
    \|\Sigma^{-1/2}(\hat\mu-\mu)\|_2^2&=(\hat\mu-\mu)^T\Sigma^{-1}(\hat\mu-\mu)\le \frac32\varepsilon^2.\\
    \|\Sigma^{-1/2}\hat\Sigma\Sigma^{-1/2}-I\|_\mr{F}^2 &=\Tr\left(\Sigma^{-1}(\hat\Sigma-\Sigma)\Sigma^{-1}(\hat\Sigma-\Sigma)\right)\\
    &\le 9 \Tr\left((\Sigma+\frac12 I)^{-1}(\hat\Sigma-\Sigma)(\Sigma+\frac12 I)^{-1}(\hat\Sigma-\Sigma)\right)
    \\&\le \frac92\varepsilon^2.
  \ee
  We then use Lemma~\ref{le:mahalanobis} to conclude that
  \bb
    \mr{TV}(\mc N(\mu,\Sigma),\mc N(\hat\mu,\hat\Sigma)) \le \frac32\varepsilon.
  \ee
  Rescaling $\varepsilon$ by a factor of $\frac32$ achieves the learning yields what we want. Note that, by choosing $\delta=1/3$ and applying the union bound, the whole protocol succeed with at least $2/3$ chance.
\end{proof}

\section{Lower bounds on Gaussian state learning}

\subsection{Nearly-tight lower bounds for general Gaussian measurements}\label{sec:gaussian-lower}

\begin{theorem}\label{th:gaussian-lower}
  For any algorithm that learns an unknown zero-mean $n$-mode Gaussian state $\rho(0,\Sigma)$ to trace distance precision $\varepsilon \le 1/36$ with probability at least $2/3$ using general (possibly adaptive, ancilla-assisted, multi-copy) Gaussian measurements, the number of copies must satisfy $N = \Omega\left({\frac{n^3}{\varepsilon^{2}}}\right)$. This holds even with the promise that $\|\Sigma\|_\mr{op}=O(1)$ and/or that $\rho$ is a passive Gaussian state.
\end{theorem}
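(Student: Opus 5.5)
The plan is to combine three ingredients: the classical simulation of classical (positive-Wigner) measurements by Wigner samples~\cite{mari2012positive}, a packing of passive Gaussian states, and Fano's inequality.

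\emph{Step 1: reduction to Wigner samples.} Any Gaussian measurement, possibly adaptive, multi-copy, and assisted by Gaussian ancillas, is a classical measurement. Gaussian ancillas and Gaussian POVM seeds have non-negative Wigner functions, and adaptivity only composes Markov kernels. As explained in the proof sketch of Theorem~\ref{th:main_lo_gaussian}, the outcome distribution on $\rho^{\otimes N}$ equals $\int \Pr(E|r)\prod_i W_\rho(r_i)\,\mr{d}r$. Hence any such learner yields a classical algorithm that takes $N$ i.i.d.\ samples from $W_\rho=\mc N(0,\Sigma)$ and outputs an $\varepsilon$-close state with probability $2/3$.

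\emph{Step 2: the ensemble.} Let $m=n/2$ and take a family $\{U_j\}$ of $n\times n$ unitaries, i.e.\ orthogonal symplectic $K_j$. Let $P$ be the projector onto the first $m$ modes. Set $\Sigma_j=\frac12 I_{2n}+\eta\,K_j(P\oplus P)K_j^T$ with $\eta=c\varepsilon/n$. Each such state is passive: it is a passive unitary applied to $m$ vacua and $m$ thermal modes with mean photon number $\eta$. Moreover $\|\Sigma_j\|_\mr{op}=O(1)$.

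By a standard Grassmannian packing (volume argument on $\mr{Gr}(m,\mbb C^n)$), we can choose $2^{\Omega(n^2)}$ subspaces $\Pi_j=U_jPU_j^\dagger$ with $\|(I-\Pi_j)\Pi_k\|_\mr{F}^2\ge c' n$ for $j\neq k$.

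For the trace-distance separation, apply $U_j^\dagger$ and measure photon number on the complement of the first $m$ modes. Under $\rho_j$ this outcome is vacuum with certainty. Under $\rho_k$, the reduced state on those modes is a zero-mean Gaussian with covariance $\frac12 I+\eta A$, where $A$ is the compression of $\Pi_k$ and $\Tr A=\|(I-\Pi_j)\Pi_k\|_\mr{F}^2\ge c'n$. Its vacuum probability is $\det(I+\eta A/2\cdot\ldots)^{-1}\le 1-\Omega(\eta\, c' n)$, using that the eigenvalues of $\eta A$ are at most $\eta\le 1$. Hence $D_\mr{tr}(\rho_j,\rho_k)\ge\Omega(\varepsilon)$, and tuning $c$ makes this exceed $2\varepsilon$. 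Consequently an $\varepsilon$-learner identifies $j$ with probability $2/3$.

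\emph{Step 3: information bounds.} Fano's inequality gives $I(J;r_1,\dots,r_N)\ge \frac23\log|\mc S|-\log 2=\Omega(n^2)$. On the other hand,
\[
I\le N\max_j \mr{KL}\bigl(\mc N(0,\Sigma_j)\,\|\,\mc N(0,\tfrac12 I)\bigr)\le N\cdot O\bigl(\|2\Sigma_j-I\|_\mr{F}^2\bigr)=N\cdot O(\eta^2 n)=N\cdot O(\varepsilon^2/n),
\]
where the middle step uses the KL formula together with $\log(1+x)\ge x-x^2/2$. Combining the two bounds gives $N=\Omega(n^3/\varepsilon^2)$.

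The main obstacle is Step 2. It requires simultaneously (a) a packing of size $2^{\Omega(n^2)}$ with Frobenius separation of order $\sqrt n$, and (b) the conversion of that separation into an $\Omega(\varepsilon)$ trace-distance gap. The KL bound, by contrast, only sees $\eta^2 n$. Part (a) I will handle with a volume or random-subspace argument, using concentration of $\|(I-\Pi_j)\Pi_k\|_\mr{F}^2$ around $m(n-m)/n$ for Haar-random $\Pi_k$ and a union bound. Part (b) is handled by the vacuum-probability computation above, using $\langle 0|\rho(0,\frac12I+\eta A)|0\rangle=\det(I+\eta A)^{-1/2}\cdots$ in the form given by the pure-state fidelity lemma~\cite{spedalieri2012limit}. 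Finally, the requirement $\varepsilon\le 1/36$ is what keeps $\eta\le 1$ and places us in the linear regime of these estimates.
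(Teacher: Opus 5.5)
Your proposal is correct and follows essentially the same route as the paper. Both arguments reduce to i.i.d.\ Wigner samples via the positive-Wigner simulation lemma, build a $2^{\Omega(n^2)}$-size passive ensemble of weak thermal modes rotated by a subspace packing, use a photon-counting vacuum witness to get an $\Omega(\varepsilon)$ trace-distance separation, and close with Fano and an $O(\varepsilon^2/n)$ per-sample KL bound. The differences are cosmetic: the paper takes real orthogonal rotations with $s=\lceil n/9\rceil$ and cites the packing from Ashtiani et al.\ (Lemma 6.4) instead of proving a complex Grassmannian packing, and it bounds the mutual information by the average pairwise KL rather than the KL to the vacuum reference, both of which are $O(n\eta^2)$.
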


The above theorem actually applies for any (collective) POVM whose elements all have non-negative Wigner functions, including Gaussian measurements as a special case. The key is that such measurements can be simulated using samples from the Wigner distribution of the unknown Gaussian state.

Theorem~\ref{th:gaussian-lower} shows that the protocol in \cite{bittel2025energy} with sample complexity {\bb
N=O\!\left( \frac{n^3}{\varepsilon^2}+ (n+\log\log\log E)\log\log E \right)
\ee}%
is sample-optimal among all Gaussian measurement schemes, up to a double-logarithmic factor in $E$. {Here $E$ is a known parameter such that the unknown covariance matrix is promised to satisfy $\|\Sigma\|_{\mathrm{op}}\le E$.} 
Also note that the sample complexity to learn a classical $2n$-dimensional Gaussian distribution to $\varepsilon$ TVD is $\widetilde\Theta(n^2/\varepsilon^2)$, different from the quantum case by a factor of $n$. 
Finally, we emphasize that the lower bound does not apply for non-Gaussian measurements. Actually, it seems plausible from the proof that non-Gaussian measurements could potentially yield a better sample complexity. 
In Section~\ref{sec:passive-upper}, we show that non-Gaussian measurements can indeed achieve $O(n^2/\varepsilon^2)$ sample complexity for learning passive Gaussian states, breaking the lower bound in Theorem~\ref{th:gaussian-lower} and shows a separation between Gaussian and non-Gaussian measurements.

\begin{proof}[Proof of Theorem~\ref{th:gaussian-lower}]
  We will lift the construction of a large family of hard-to-distinguish Gaussian distributions from \cite{ashtiani2020near} to Gaussian quantum states. The key observation is that, when the Gaussian distributions in that family have pairwise TVD $\Theta(\varepsilon_0)$, the corresponding Gaussian states can have pairwise trace distance as large as $\Omega(\sqrt n\varepsilon_0)$. 
  The way to establish the trace distance lower bound is to construct an explicit non-Gaussian POVM and use the data-processing property. 
  We then use the Gaussian measurement assumptions to connect our problem to learning classical Gaussian distributions, and thereby use Fano's inequality to conclude hardness.
  
  \medskip
  Consider the following $2n$-by-$2n$ positive definite matrices:
  \bb
    \Sigma_a \coleq \frac12 I_{2n} + \frac{\varepsilon}{2n}\begin{pmatrix}
      U_aU_a^T &0_n \\ 0_n& U_aU_a^T
    \end{pmatrix}
  \ee
  Here $\varepsilon\le 5$. $U_a$ is an $n$-by-$s$ real matrix with orthogonal columns. We set $s = \ceil{n/9}$. Since $\Sigma\ge\frac12 I$, it is a valid covariance matrix of an $n$-mode Gaussian state. Here we arrange the quadrature operators as $(\hat x_1,\cdots,\hat x_n, \hat p_1,\cdots,\hat p_n)$, and define $\rho_a$ as a Gaussian state with zero mean and covariance matrix $\Sigma_a$.  
  Note that $\rho_a$ is a passive Gaussian state. To see this, define $V_a = [U_a; W_a]$ as an $n$-by-$n$ orthogonal matrix that contains $U_a$ as its first $s$ columns, and apply the following orthogonal symplectic transformation,
  \bb
  \Sigma_a&\mapsto
    \begin{pmatrix}
      V_a^T & 0_n\\ 0_n & V^T_a
    \end{pmatrix}  \Sigma_a
    \begin{pmatrix}
      V_a & 0_n\\ 0_n &V_a
    \end{pmatrix} = \frac12 I_{2n} + \frac{\varepsilon}{2n}\begin{pmatrix}
      \begin{pmatrix}
        I_s & 0 \\ 0 & 0
      \end{pmatrix}
      &0_n \\ 0_n& \begin{pmatrix}
        I_s & 0 \\ 0 & 0
      \end{pmatrix}
    \end{pmatrix}
  \ee
  The R.H.S. is clearly a thermal state covariance matrix, thus $\rho_a$ is passive.
  
  \medskip

  We will use the following results from \cite{ashtiani2020near}:
  \begin{lemma}\cite[Lemma 6.4]{ashtiani2020near}\label{le:ashtiani6-4}
    Suppose $n > 9$. There exists $M=2^{\Omega(n^2)}$ $n$-by-$\ceil{n/9}$ matrices $\{U_a\}_{a=1}^M$ with orthogonal columns and satisfy $\|U_a^TU_b\|_{\mr F}^2\le \frac n{18}$ for all distinct pairs $a,b\in[M]$.
  \end{lemma}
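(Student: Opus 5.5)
The plan is to use the probabilistic method. Draw the matrices $U_a$ independently and uniformly, under the Haar measure, from the Stiefel manifold of $n\times s$ real matrices with orthonormal columns, where $s=\lceil n/9\rceil$. I will show that any fixed pair violates the constraint with probability at most $e^{-cn^2}$, and then take a union bound over the $M^2$ pairs.

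\emph{Step 1 (mean).} Fix $U_b$. By orthogonal invariance, $U_a^TU_b$ has the same law as $U_a^T\binom{I_s}{0}$, i.e.\ the top $s\times s$ block of $U_a$. Each entry of a Haar-random orthonormal column has second moment $1/n$, so
\begin{equation*}
\mathbb E\|U_a^TU_b\|_\mr F^2=\frac{s^2}{n}.
\end{equation*}
With $s\le n/9+1$ and $n>9$, this is roughly $n/81$, which is a constant factor below the target $n/18$. Consequently $\mathbb E\|U_a^TU_b\|_\mr F\le s/\sqrt n\approx\sqrt n/9$, while the threshold is $\sqrt{n/18}\approx 0.236\sqrt n$. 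This leaves a gap $t=\Theta(\sqrt n)$.

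\emph{Step 2 (concentration).} Define $f(U)\coleq\|U^TU_b\|_\mr F$. It is $1$-Lipschitz with respect to the Frobenius metric, since
\begin{equation*}
\|(U-U')^TU_b\|_\mr F\le\|U-U'\|_\mr F\|U_b\|_\mr{op}=\|U-U'\|_\mr F.
\end{equation*}
The Stiefel manifold is the image of $O(n)$ under the $1$-Lipschitz map $O\mapsto O\binom{I_s}{0}$. Haar measure on $SO(n)$ has Ricci curvature $\gtrsim n$, so Gromov--Milman concentration gives
\begin{equation*}
\Pr[f-\mathbb Ef>t]\le \exp(-c\,n t^2).
\end{equation*}
Handling the two components of $O(n)$ costs only a factor of $2$. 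Plugging in $t=\Theta(\sqrt n)$ gives a failure probability $\exp(-c'n^2)$ per pair.

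If I want to avoid manifold concentration, the fallback is to write $U=G(G^TG)^{-1/2}$ with $G$ an $n\times s$ standard Gaussian matrix. I would then combine Gaussian Lipschitz concentration for $\|G^TU_b\|_\mr F$ with the standard bound $\sigma_{\min}(G)\ge \sqrt n-\sqrt s-u$, which fails with probability at most $e^{-u^2/2}$. Taking $u=\Theta(\sqrt n)$ again gives $e^{-\Omega(n^2)}$ tails.

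\emph{Step 3 (union bound).} Choose $M=\lfloor e^{c'n^2/3}\rfloor=2^{\Omega(n^2)}$. The expected number of bad pairs is at most $M^2e^{-c'n^2}<1$, so some realization has no bad pair. Alternatively, draw $2M$ matrices and delete one matrix from each bad pair.

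The main obstacle is bookkeeping the constants in Step 2. The concentration constant $c$ must survive the specific gap between $s^2/n$ and $n/18$, including the ceiling in $s$ when $n$ is only slightly above $9$. If this fails for small $n$, I would handle small $n$ separately: for bounded $n$ the claim $M=2^{\Omega(n^2)}$ only requires $M\ge 2$, and two matrices with orthogonal column spans satisfy $\|U_a^TU_b\|_\mr F=0$.
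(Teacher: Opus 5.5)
The paper does not prove this lemma itself; it imports it from \cite{ashtiani2020near}, so there is no internal argument to compare against. Your primary argument is a correct self-contained proof. It has four ingredients: Haar-random frames with $\E\|U_a^TU_b\|_\mr F^2=s^2/n$; concentration of the $1$-Lipschitz map $U\mapsto\|U^TU_b\|_\mr F$ with $\exp(-c\,nt^2)$ tails at $t=\sqrt{n/18}-s/\sqrt n=\Theta(\sqrt n)$; a union bound over pairs; and two frames with orthogonal spans for bounded $n$. A small simplification: since $s<n$, $SO(n)$ already acts transitively on the Stiefel manifold. So Haar measure on $SO(n)$ pushes forward to the uniform measure there, and you never need to handle the two components of $O(n)$.

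Your Gaussian fallback, however, fails as written. With $u=\Theta(\sqrt n)$ the bound $\Pr[\sigma_{\min}(G)\le\sqrt n-\sqrt s-u]\le e^{-u^2/2}$ is only $e^{-\Theta(n)}$, not $e^{-\Theta(n^2)}$. You also cannot take $u=\Theta(n)$, because $\sqrt n-\sqrt s-u$ becomes negative. This is not slack in the inequality. For any constant $c<1$ one has $\Pr[\sigma_{\min}(G)\le c\sqrt n]\ge\Pr[\|Ge_1\|_2^2\le c^2n]=e^{-\Theta(n)}$, since $\|Ge_1\|_2^2\sim\chi^2_n$. So any per-pair failure event that contains a bad $\sigma_{\min}$ cannot be union-bounded over $2^{\Omega(n^2)}$ pairs.

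The repair is your own deletion trick, applied to single matrices for the singular-value event:
\begin{enumerate}
\item Draw $2M$ Gaussian matrices.
\item Discard every $G_a$ with $\sigma_{\min}(G_a)<(1-\delta)(\sqrt n-\sqrt s)$; this removes only an expected $e^{-\Theta(n)}$ fraction.
\item Union-bound only the pairwise event $\|G_a^TU_b\|_\mr F>(1+\delta)s$, then delete one matrix from each bad pair. Given $U_b$, $G_a^TU_b$ is an $s\times s$ standard Gaussian matrix, so this event has probability at most $e^{-\delta^2s^2/2}=e^{-\Omega(n^2)}$.
\end{enumerate}
On the surviving matrices, $\|U_a^TU_b\|_\mr F\le\|G_a^TU_b\|_\mr F/\sigma_{\min}(G_a)\le\frac{1+\delta}{1-\delta}\cdot\frac{s}{\sqrt n-\sqrt s}\approx\frac{1+\delta}{1-\delta}\cdot\frac{\sqrt n}{6}$. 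This is below $\sqrt{n/18}$ once $\delta\le 0.17$ and $n$ is large.
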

  We can thus pick $\{U_a\}_{a=1}^M$ from the above lemma to define a family of $n$-mode Gaussian states $\{\rho_a\}_{a=1}^M$ of size $M=2^{\Omega(n^2)}$. Now we show that, for any distinct pair $a,b\in[M]$, it holds that $D_{\tr}(\rho_a,\rho_b)\ge \varepsilon/90$. 
  To see this, apply the same transformation as before, now on $\rho_b$:
  \bb
    \Sigma_b&\mapsto
    \begin{pmatrix}
      V_a^T & 0_n\\ 0_n & V_a^T
    \end{pmatrix}  \Sigma_b
    \begin{pmatrix}
      V_a & 0_n\\ 0_n &V_a
    \end{pmatrix} = \frac12 I_{2n} + \frac{\varepsilon}{2n}\begin{pmatrix}
      V_a^TU_bU_b^TV_a &0_n \\ 0_n & V_a^TU_bU_b^TV_a
    \end{pmatrix}.
  \ee
  Denoted the states after the transformation as $\rho'_a$ and $\rho'_b$, respectively.
  Now, we trace out the first $s$ modes and apply Fock basis measurement (i.e. photon number counting) on the remaining $n-s$ modes. 
  For $\rho'_a$, the outcome is obviously vacuum with probability 1. For $\rho'_b$, the probability of obtaining vacuum is
  \bb
    \Pr(0_{n-s}|b) &= \braket{0_{n-s}|\Tr_{\le s}\rho'_b|0_{n-s}}\\
    &\eqt{(i)} \left(\det\left(I_{2(n-s)} + \frac{\varepsilon}{2n}\begin{pmatrix}
      W_a^TU_bU_b^TW_a &0_n \\ 0_n & W_a^TU_bU_b^TW_a
    \end{pmatrix}\right)  \right)^{-1/2}\\
    &\eqt{} \left(\det\left(I_{n-s} + \frac{\varepsilon}{2n}W_a^TU_bU_b^TW_a\right)  \right)^{-1}\\
    &\leqt{(ii)} \left(1 + \frac{\varepsilon}{2n}\Tr(W_a^TU_bU_b^TW_a)\right)^{-1}\\
    &\leqt{(iii)} 1 - \frac{\varepsilon}{3n}\Tr(W_a^TU_bU_b^TW_a)\\
    &\eqt{(iv)} 1 - \frac{\varepsilon}{3n}(s - \|U_a^TU_b\|_{\mr F}^2)\\
    &\leqt{(v)} 1 - \frac{\varepsilon}{54}.
  \ee
  Here, (i) uses the fact that vacuum state is a Gaussian state, and thus the probability can be computed using the fidelity formula between two Gaussian states; (ii) uses $\det(I+X)\ge 1+\Tr X$ for any positive semidefinite matrix $X$; (iii) uses $(1+x)^{-1}\le 1 - \frac{2}{3}x$ for $0\le x\le 1/2$. Note that $\varepsilon\le5$ suffices; (iv) uses the fact that $\Tr(U_a^TU_bU_b^TU_a)+\Tr(W_a^TU_bU_b^TW_a) = \Tr(U_bU_b^T) = s$; (v) uses Lemma~\ref{le:ashtiani6-4} and the choice of $s=\ceil{n/9}$. Using the data-processing inequality of trace distance, we conclude that
  \bb
    D_\mr{tr}(\rho_a,\rho_b) \ge \varepsilon/54,\quad~\text{for all}~ a\neq b\in[M].
  \ee
  Next, we use the following standard fact about POVM with non-negative Wigner functions~\cite{wigner1932quantum,hillery1984distribution}. In particular, it holds for all Gaussian measurements.
  \begin{lemma}\cite{mari2012positive}\label{le:positive-POVM}
    Given $N$ i.i.d. copies of an unknown $n$-mode Gaussian state $\rho(\mu,\Sigma)$ and a POVM $\{M_x\}_x$ on the $N \times n$-mode Hilbert space, such that each POVM element $M_x$ has non-negative Wigner function, then the measurement outcome probability distribution $p(x|\rho^{\otimes N})=\Tr(M_x\rho^{\otimes N})$ can be exactly simulated by sampling $N$ i.i.d. samples from the Gaussian distribution $\mc N(\mu, \Sigma)$ followed by classical post-processing independent of $(\mu,\Sigma)$.
  \end{lemma}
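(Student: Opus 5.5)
The plan is to write the outcome probability in phase space and read off a sampling procedure. By the overlap formula from Sec.~\ref{sec:main_pre},
\[
p(x|\rho^{\otimes N})=\Tr(M_x\rho^{\otimes N})=(2\pi)^{nN}\int \mr{d}^{2nN}r\,W_{M_x}(r)\,W_{\rho^{\otimes N}}(r).
\]
Two facts then drive the argument. First, the Wigner function factorizes on tensor products, so $W_{\rho^{\otimes N}}(r_1,\dots,r_N)=\prod_{i=1}^N W_\rho(r_i)$. Second, since $\rho=\rho(\mu,\Sigma)$ is Gaussian, each factor $W_\rho$ is exactly the density of $\mc N(\mu,\Sigma)$. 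Substituting, we get $p(x|\rho^{\otimes N})=\int \mr{d}^{2nN}r\,q(x|r)\prod_i W_\rho(r_i)$ with $q(x|r)\coleq(2\pi)^{nN}W_{M_x}(r)$.

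The key step is to check that $q(\cdot|r)$ is a genuine conditional distribution, independent of $(\mu,\Sigma)$. Nonnegativity is the hypothesis $W_{M_x}\ge0$. For normalization, use $\sum_x M_x=I$ and linearity of the Wigner transform to get $\sum_x W_{M_x}=W_I$. Then compute $W_I$ from the definition: $\Tr[\hat D(\xi)]=(2\pi)^{nN}\delta^{2nN}(\xi)$, which gives $W_I\equiv(2\pi)^{-nN}$, so $\sum_x q(x|r)=1$ for every $r$. For a continuous outcome set the sum becomes an integral against the POVM measure.

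The simulation is then immediate. Draw $r_1,\dots,r_N$ i.i.d.\ from $\mc N(\mu,\Sigma)$ and output $x\sim q(\cdot|r)$. The resulting marginal on $x$ is exactly $p(x|\rho^{\otimes N})$, and the post-processing $q$ depends only on the POVM.

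The main obstacle is rigor in infinite dimensions. $M_x$ need not be trace class, so $W_{M_x}$ must be understood as a tempered distribution. The claim "nonnegative Wigner function" should mean a nonnegative measure, and the exchange of sum and integral together with the overlap formula need justification. I would handle this by pairing against the Schwartz-class Gaussian $W_\rho$ and using Tonelli, which applies because everything is nonnegative, to swap integration and summation. Where needed, I would first approximate $M_x$ by finite-rank truncations, or simply cite \cite{mari2012positive}, where this is established for positive-Wigner POVMs. Adaptive and ancilla-assisted Gaussian schemes also fit this framework: the overall collective POVM on $\rho^{\otimes N}$ is still Gaussian in the relevant sense, hence has nonnegative Wigner functions, so the same argument applies to it.
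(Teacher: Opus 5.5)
Your proposal is correct and follows the paper's proof essentially verbatim. Both arguments write $p(x|\rho^{\otimes N})$ via the Wigner overlap formula, interpret $W_{\rho^{\otimes N}}$ as the density of $\mc N(\mu,\Sigma)^{\otimes N}$, and then sample $x$ from $(2\pi)^{nN}W_{M_x}(r)$, whose normalization follows from $W_I\equiv(2\pi)^{-nN}$ (the paper's footnote); your remarks on treating $W_{M_x}$ as a nonnegative tempered distribution and using Tonelli add rigor the paper leaves implicit without changing the argument.
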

  \begin{proof}
    The outcome probability can be written as
    \bb
      p(x|\rho^{\otimes N}) =  (2\pi)^{nN}\int \mathrm{d}^{2nN}\!\alpha~ W_{M_x}(\alpha) W_{\rho^{\otimes N}}(\alpha),
    \ee
    where $W_{O}(\alpha)$ is the Wigner function of operator $O$. By our assumptions on $M_x$ and the Gaussianity of $\rho$, both Wigner functions are non-negative. In particular, $W_{\rho^{\otimes N}}$ can be interpreted as a probability density function over $\alpha$. Thus, this distribution can be simulated by first sampling $\alpha$ from $W_{\rho^{\otimes N}}(\alpha) = \mc N(\mu,\Sigma)^{\otimes N}$, and then sampling $x$ from the conditional distribution $p(x|\alpha) = (2\pi)^{nN} W_{M_x}(\alpha)$,\footnote{In our convention, the identity operator has a Wigner function of constant value $(2\pi)^{-nN}$.} which is independent of $(\mu,\Sigma)$.
  \end{proof}
  
  Consider the following communication task: Suppose Alice and Bob both know exactly the description of $\{\Sigma_a\}_{a=1}^M$. Now, Alice uniformly randomly chooses $a\in[M]$, and then prepares $N$ i.i.d. (classical) samples $x_{1:N}$ from $p_a\coleq\mc N(0,\Sigma_a)$ and sends them to Bob. Bob then tries to guess Alice's choice of $a$ by processing the samples. 
  Suppose there exists a Gaussian measurement protocols that uses $N$ i.i.d. copies of $\rho(0,\Sigma)$ to learn the state to trace distance precision $\varepsilon'=\varepsilon/180 \le 1/36$ with probability at least $2/3$. Then, thanks to Lemma~\ref{le:positive-POVM} and the fact that $D_\mr{tr}(\rho_a,\rho_b)\ge \varepsilon/90$, Bob can learn $\rho_a$ from the classical samples he received and hence guess $a$ correctly with probability at least $2/3$. Using Fano's inequality~\cite{cover1999elements}, this implies
  \bb
    I(x_{1:N}:a) \ge \frac23 \log M - \log 2 = \Omega(n^2).
  \ee
  On the other hand, the mutual information can be upper bounded by
  \bb
    I(x_{1:N}:a) &\eqt{(i)} \E_{a} \mr{KL}( p_a^{\otimes N} \| \E_{b} p_b^{\otimes N} )\\
    &\leqt{(ii)} \E_{a,b} \mr{KL}( p_a^{\otimes N} \| p_b^{\otimes N} )\\
    &\eqt{(iii)} \E_{a,b}N~\mr{KL}( p_a\| p_b).
  \ee
  Here, (i) is by definition of mutual information; (ii) applies the convexity of KL divergence; (iii) uses the chain rule of KL divergence. The last step is to upper bound $\mr{KL}(p_a\|p_b)$, which is essentially done in \cite{ashtiani2020near} but we include the calculation for completeness. Note that we only need to consider $a\neq b$. Also notice that $\Sigma_a,\Sigma_b$ have the same eigenvalues,
  \bb\label{eq:KL-calculation}
    \mr{KL}(p_a\|p_b) &\equiv \mr{KL}(\mc N(0,\Sigma_a)\|\mc N(0,\Sigma_b))\\
    &\eqt{(i)} \Tr(\Sigma_b^{-1}\Sigma_a-I_{2n})\\
    &\eqt{} 2\Tr\left((I_n-\frac{\varepsilon}{n+\varepsilon}U_bU_b^T)(I_n+\frac{\varepsilon}{n}U_aU_a^T)-I_n\right)\\
    &\eqt{} - \frac{2\varepsilon s }{n+\varepsilon} + \frac{2\varepsilon s}{n} - \frac{2\varepsilon^2}{n(n+\varepsilon)}\|U_b^TU_a\|_{\mr F}^2\\
    &\leqt{} \frac{2\varepsilon^2 s}{n(n+\varepsilon)}\\
    &\eqt{(ii)} O\left(\frac{\varepsilon^2}{n}\right).
  \ee
  Here (i) uses the formula of KL divergence between two zero-mean Gaussian distributions. Note that the determinant term vanishes; (ii) uses $s=\ceil{n/9}$. Putting everything together, we conclude that
  \bb
    N = \Omega\left(\frac{n^3}{\varepsilon^2}\right).
  \ee
  Note that all Gaussian states used in the proof have $\|\Sigma_a\|_\mr{op} = O(1)$.  This completes the proof. 
\end{proof}

\subsection{Lower bounds for general POVM measurements}\label{sec:any-lower}

We first prove the lower bound of $\Omega(n^2/(\varepsilon^2\log(n/\varepsilon))$ that holds even with the promise of pure states. Then, in Sec.~\ref{sec:any-lower-stronger}, we prove the tighter lower bound of $\Omega(n^2/\varepsilon^2)$ for learning general Gaussian states, together with the same lower bound for learning Gaussian Wigner distributions as in Theorem~\ref{th:wigner}.

\begin{theorem}\label{th:non-gaussian-lower}
  For any algorithm that learns an unknown zero-mean $n$-mode Gaussian state $\rho(0,\Sigma)$ to trace distance precision $\varepsilon\le0.03$ with probability at least $2/3$ using any POVM measurements, the number of copies must satisfy $N = \Omega\left({\frac{n^2}{\varepsilon^{2}\log(n/\varepsilon^2)}}\right)$. This holds even with the promise that $\|\Sigma\|_\mr{op}=O(1)$ and/or the unknown state is a zero-mean pure state.
\end{theorem}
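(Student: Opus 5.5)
We follow the template of Theorem~\ref{th:gaussian-lower}, but replace the classical communication task by a quantum one and bound the Holevo quantity instead of a classical mutual information. The proof has three steps: build a pure-state packing, bound the Holevo quantity of its $N$-copy ensemble, and apply Fano.

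\textbf{Step 1: the packing.} Take $s=\ceil{n/9}$ and the matrices $\{U_a\}_{a=1}^M$, $M=2^{\Omega(n^2)}$, from Lemma~\ref{le:ashtiani6-4}. For each $a$, complete $U_a$ to an orthogonal $V_a=[U_a;W_a]$ and let $K_a=V_a\oplus V_a$ be the corresponding passive beam-splitter network. Define
\[
\ket{\psi_a}\coleq U_{K_a}\bigl(\ket{\xi}^{\otimes s}\otimes\ket{0}^{\otimes(n-s)}\bigr),
\]
where $\ket{\xi}$ is a single-mode squeezed vacuum with covariance $\frac12\mr{diag}(1+\eta,(1+\eta)^{-1})$. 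The state $\ket{\psi_a}$ is a pure zero-mean Gaussian state with covariance
\[
\Sigma_a=\frac12 I_{2n}+\frac12 U_aU_a^T\oplus\Bigl(\frac{1}{1+\eta}-1\Bigr)U_aU_a^T ,
\]
so $\|\Sigma_a\|_\mr{op}=O(1)$. We choose $\eta=\Theta(\varepsilon/\sqrt n)$; this scaling is forced below.

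\textbf{Step 2: pairwise separation.} For $a\neq b$, the pure-state fidelity formula gives
\[
1-F(\psi_a,\psi_b)\;\gtrsim\;\sum_i(\alpha_i-1)^2 .
\]
Here the $\alpha_i$ are the reciprocal-pair eigenvalues of $\Sigma_b^{-1/2}\Sigma_a\Sigma_b^{-1/2}$, exactly as in the proof of Lemma~\ref{le:pure-trace-tvd} read in reverse. The deviation from identity has Frobenius norm squared of order $\eta^2\bigl(s-\|U_a^TU_b\|_\mr F^2\bigr)\ge\eta^2 n/18$. Since trace distance equals $\sqrt{1-F}$ for pure states, $D_\mr{tr}(\psi_a,\psi_b)=\Omega(\eta\sqrt n)=\Omega(\varepsilon)$. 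Alternatively, Lemma~\ref{le:WignerTV-smaller-than-Dtr} together with Lemma~\ref{le:mahalanobis} gives the same separation.

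\textbf{Step 3: Fano versus Holevo.} Alice picks $a$ uniformly and sends $\psi_a^{\otimes N}$. A learner with accuracy $c\varepsilon$ lets Bob identify $a$ with probability $2/3$, so Fano gives $I\ge\Omega(n^2)$. The Holevo theorem gives $I\le\chi=S(\bar\rho_N)$ with $\bar\rho_N=\E_a\psi_a^{\otimes N}$, because the signal states are pure. Writing $\ket{\psi_a}=U_{K_a}\ket{\Phi_0}$, the average state $\bar\rho_N$ is $U_{K}^{\otimes N}\Phi_0^{\otimes N}U_K^{\dagger\otimes N}$ averaged over $K$, so it is supported near the vacuum. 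The main obstacle is bounding $S(\bar\rho_N)$ by $N\cdot O(\varepsilon^2\log(n/\varepsilon))$. We would first try the maximum-entropy principle. The total expected photon number in $\bar\rho_N$ is $N s\sinh^2 r=O(Ns\eta^2)=O(N\varepsilon^2)$, since each squeezed mode carries $\Theta(\eta^2)$ photons. The von Neumann entropy of any state of $nN$ modes with total mean photon number $\bar m$ is at most the Gibbs entropy $\le \bar m\log(e\,nN/\bar m)+O(\bar m)$. With $\bar m=O(N\varepsilon^2)$ this yields
\[
S(\bar\rho_N)\le O\bigl(N\varepsilon^2\log(n/\varepsilon^2)\bigr).
\]
We would double-check that the Gibbs bound is applied to the total number operator on all $nN$ modes, so the logarithm is $\log(nN/\bar m)=\log(n/\varepsilon^2)$ and $N$ cancels inside it. Combining with Fano gives $N=\Omega\bigl(n^2/(\varepsilon^2\log(n/\varepsilon^2))\bigr)$. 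The range $\varepsilon\le 0.03$ ensures the constants in the separation and in Fano go through.
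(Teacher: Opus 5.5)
Your proposal is correct. Steps 1--2 and the Fano/Holevo framework are the same as in the paper. Your ensemble is exactly the paper's $\Sigma'_a$ with $\eta=\varepsilon/\sqrt n$, up to swapping the roles of $\hat x$ and $\hat p$. Your fidelity-based separation is the same computation the paper does via $\det(I+X)\ge 1+\Tr X$. There is a typo in your displayed covariance: the first block should be $\eta\,U_aU_a^T$, not $U_aU_a^T$.

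The genuine difference is how you bound $S(\bar\rho_N)$.
\begin{itemize}
\item The paper uses extremality of Gaussian states at fixed covariance, $S(\bar\sigma_N)\le N\,S(\sigma^\star)$. It then computes the averaged covariance $\Sigma^\star$ and diagonalizes it symplectically through the spectrum $b_i\in[0,1]$ of $\frac1M\sum_a U_aU_a^T$, obtaining $\nu_i\le\frac12(1+\varepsilon^2/(8n))$.
\item You use only that the passive networks $U_{K_a}$ conserve total photon number. Hence $\bar\rho_N$ has mean photon number $\bar m=Ns\sinh^2 r=\Theta(N\varepsilon^2)$, whatever the averaging does. The Gibbs variational principle for $\sum_{j=1}^{nN}\hat n_j$ then bounds the entropy by $nN\,g(\bar m/(nN))=O(N\varepsilon^2\log(n/\varepsilon^2))$. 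This is the same order as the paper's bound, with a comparable constant.
\end{itemize}

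Your route is more elementary: it needs no covariance averaging and no symplectic eigenvalues, only $D(\rho\|\gamma_\beta)\ge 0$ for the Gibbs state $\gamma_\beta$. It also applies verbatim to any ensemble generated from a fixed low-photon-number pure state by passive unitaries.

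The paper's route is finer because it sees how much the ensemble actually spreads: the excess $\nu_i-\frac12$ scales like $b_i(1-b_i)$. Squeezing shared by all members ($b_i$ near $0$ or $1$) therefore costs no entropy there, whereas your photon-number bound charges for every photon. For this particular packing the averaged projector is spread out, so this refinement buys nothing beyond constants.
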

Up to a logarithmic factor, this lower bound has the same scaling as learning a $2n$-dimensional classical Gaussian distribution $\widetilde{\Omega}(n^2/\varepsilon^2)$~\cite{ashtiani2020near}. Also note that this theorem allows collective and non-Gaussian POVMs on multiple copies of the unknown states.

\begin{proof}
  The proof is similar to that of Theorem~\ref{th:gaussian-lower}, except that we will use a slightly different family of \emph{pure} Gaussian states, and use the Holevo theorem to upper bound the mutual information in the Fano's inequality step.

  \medskip

  Consider the following $2n$-by-$2n$ positive definite matrices:
  \bb\label{eq:def_of_sigma_prime}
    \Sigma'_a \coleq \frac12 \begin{pmatrix}
      I_n - \frac{\varepsilon}{\sqrt n + \varepsilon}U_aU_a^T&0_n \\ 0_n& I_n + \frac\varepsilon{\sqrt n} U_aU_a^T.
    \end{pmatrix}
  \ee
  Here we choose $\varepsilon\le1$ and $\{U_a\}_{a=1}^M$ is the same ensemble of $n$-by-$\ceil{n/9}$ column-orthogonal matrices of size $M=2^{\Omega(n^2)}$ such that $\|U_a^T U_b\|_\mr{F}^2\le n/18$ for all distinct pairs $a,b\in[M]$ from Lemma~\ref{le:ashtiani6-4}. Denote $s=\ceil{n/9}$. We define $\sigma_a$ to be the zero-mean $n$-mode Gaussian state with covariance matrix $\Sigma'_a$. Note that $\Sigma_a'$ represents a pure Gaussian state, as $\Sigma'_a + \frac12i\Omega \ge 0$ and $\det\Sigma'_a = 4^{-n}$.

  We first show the trace distance between any distinct pair of states within the ensemble is $\Omega(\varepsilon)$:
  \bb
    1 - D^2_\mr{tr}(\sigma_a,\sigma_b) &\eqt{(i)} {F(\sigma_a,\sigma_b)}\\
    &\eqt{(ii)} {\left(\det\begin{pmatrix}
      I_n - \frac{\varepsilon}{2(\sqrt n+\varepsilon)}(U_aU_a^T + U_bU_b^T) & 0_n \\ 0_n & I_n + \frac{\varepsilon}{2\sqrt n}(U_aU_a^T + U_bU_b^T)
    \end{pmatrix}\right)^{-\frac12}}\\
    &=\left(\det \left(I_n + \frac{\varepsilon^2}{4(n+\varepsilon\sqrt n)}(U_aU_a^T-U_bU_b^T)^2\right)\right)^{-1/2}\\
    &\leqt{(iii)} \left( 1 + \frac{\varepsilon^2}{8n}\Tr\left((U_aU_a^T-U_bU_b^T)^2\right)\right)^{-1/2}\\
    &\leqt{(iv)} 1 - \frac{\varepsilon^2}{20n}\Tr\left((U_aU_a^T-U_bU_b^T)^2\right)\\
    &\eqt{} 1 - \frac{\varepsilon^2}{20n} \cdot \left(2s - 2\|U_a^TU_b\|_\mr{F}^2\right)
    \\&\leqt{(v)} 1 - \frac{\varepsilon^2}{180}.
  \ee
  Here, (i) uses the relation between trace distance and fidelity between two pure states; (ii) uses the fidelity formula between pure Gaussian states; (iii) uses $\det(I + X)\ge 1+ \Tr X$ for all $X\ge 0$; (iv) uses $(1+x)^{-1/2}\le 1 - \frac{2}{5}x$ for $0\le x\le 1/3$. Note that assuming $\varepsilon\le1$ suffices; (v) uses the property of our state ensemble from Lemma~\ref{le:ashtiani6-4}. Therefore, we have
  \bb
    D_\mr{tr}(\sigma_a,\sigma_b) \ge \frac{\varepsilon}{6\sqrt5},\quad~\text{for all}~ a\neq b\in[M].
  \ee
  Next, we again consider a communication task: Suppose Alice and Bob both know the ensemble $\{\sigma_a\}_{a=1}^M$. Now, Alice uniformly randomly chooses $a\in[M]$, and then prepares $N$ i.i.d. copies of $\sigma_a$ and sends them to Bob (Note the difference from the proof of Theorem~\ref{th:gaussian-lower} -- here Alice sends quantum states instead of classical samples). Bob then tries to guess Alice's choice of $a$ by performing any POVM measurement on the quantum states and conducting any classical post-processing. Denote Bob's guess by $\hat a$. Suppose there exists a learning protocol that uses $N$ copies to learn a zero mean Gaussian states to trace distance precision $\varepsilon'=\varepsilon/(12\sqrt5)$ with probability at least $2/3$. Then, Bob can use this protocol to learn $\sigma_a$ and hence guess $a$ correctly with probability at least $2/3$. Using Fano's inequality, this implies
  \begin{equation}
    I(\hat a : a) \ge \frac23 \log M - \log 2 = \Omega(n^2).
  \end{equation}
  On the other hand, Holevo theorem \cite{holevo1973bounds} upper bounds the amount of information one can access from a quantum state ensemble via any POVM measurements:
  \bb
    I(\hat a : a) &\le S\!\left(\frac1M\sum_{a=1}^M\sigma_a^{\otimes N}\right) - \frac1M\sum_{a=1}^M S(\sigma_a^{\otimes N})\\
    &= S(\bar{\sigma}_N).
  \ee
  Here $S$ denotes the von Neumann entropy. We define $\bar\sigma_N\coleq\frac1M\sum_{a=1}^M\sigma_a^{\otimes N}$. The second term vanishes as $\sigma_a$ is pure. Computing $S(\bar{\sigma}_N)$ is challenging as $\bar{\sigma}_N$ is not a Gaussian state. However, one can still compute the mean values and covariance matrix of $\bar\sigma_N$:
  \bb
    \mu_{\bar\sigma_N} = 0,\quad
    \Sigma_{\bar\sigma_N} = \frac1M\sum_{a=1}^M {{\Sigma'_a}}^{\oplus N}.
  \ee
  Now we define $\sigma^\star_N$ as the Gaussian state with zero mean and covariance matrix $\Sigma_{\bar\sigma_N}$. 
  {Since $\frac1M\sum_{a=1}^M {{\Sigma'_a}}^{\oplus N}=\left(\frac1M\sum_{a=1}^M {{\Sigma'_a}}\right)^{\oplus N}$, it follows that }%
  $\sigma^\star_N = (\sigma^{\star})^{\otimes N}$ where $\sigma^\star$ is an $n$-mode Gaussian states with mean zero and covariance matrix $\frac1M\sum_{a=1}^M \Sigma'_a$.
  Using the fact that Gaussian states maximize the von Neumann entropy among all states with the same mean and covariance matrix~\cite{holevo1999capacity,wolf2006extremality}, we have that
  \bb\label{eq:bound_the_entropy}
    S(\bar\sigma_N) \le S(\sigma^\star_N) = N\cdot S(\sigma^\star).
  \ee
  Recall the entropy of a Gaussian state is given by $S(\rho(0,\Sigma)) = \sum_{i=1}^n g(\nu_i - 1/2)$ where $\{\nu_i\}_{i=1}^n$ are the symplectic eigenvalues of $\Sigma$ and $g(x) = (x+1)\log(x+1) - x\log x$~\cite{weedbrook2012gaussian}. Note that $g(x)$ is monotonically increasing for $x\ge 0$. The covariance matrix of $\sigma^\star$ is
  \bb
    \Sigma^\star = \frac12 \begin{pmatrix}
      I_n - \frac{\varepsilon}{\sqrt n + \varepsilon}\frac1M\sum_{a=1}^MU_aU_a^T&0_n \\ 0_n& I_n + \frac\varepsilon{\sqrt n} \frac1M\sum_{a=1}^MU_aU_a^T
    \end{pmatrix}\,.
  \ee
  {Let us determine the symplectic eigenvalues of $\Sigma^\star$. Define $B \coloneqq \frac{1}{M}\sum_{a=1}^M U_aU_a^\top$, and let $
OBO^\top=\mathrm{diag}(b_1,\dots,b_n)$ be its spectral decomposition, with $O\in\mathbb{R}^{n\times n}$ orthogonal. Since $B$ is an average of orthogonal projectors, each eigenvalue $b_i$ satisfies $b_i\in[0,1]$ for all $i$. Moreover, $S\coloneqq I_2\otimes O$ is an (orthogonal) symplectic matrix, and satisfies
\begin{equation}
S\,\Sigma^\star S^\top
=\frac12
\begin{pmatrix}
\mathrm{diag}\!\left(1-\frac{\varepsilon}{\sqrt n+\varepsilon}b_1,\dots,1-\frac{\varepsilon}{\sqrt n+\varepsilon}b_n\right) & 0_n\\[2pt]
0_n & \mathrm{diag}\!\left(1+\frac{\varepsilon}{\sqrt n}b_1,\dots,1+\frac{\varepsilon}{\sqrt n}b_n\right)
\end{pmatrix}.
\end{equation}
Hence, $\Sigma^\star$ has the same symplectic eigenvalues as the direct sum 
\begin{equation}
\bigoplus_{i=1}^n
\frac12
\begin{pmatrix}
1-\frac{\varepsilon}{\sqrt n+\varepsilon}b_i & 0\\
0 & 1+\frac{\varepsilon}{\sqrt n}b_i
\end{pmatrix}.
\end{equation}
Consequently, since a $2\times2$ positive-definite matrix $V$ has a single symplectic eigenvalue equal to $\sqrt{\det V}$, the symplectic eigenvalues of $\Sigma^\star$ are given, for $i=1,\dots,n$, by
\begin{align}
\nu_i
&=\frac12\sqrt{\left(1-\frac{\varepsilon}{\sqrt n+\varepsilon}b_i\right)\left(1+\frac{\varepsilon}{\sqrt n}b_i\right)} \\
&=\frac12\sqrt{1+\frac{\varepsilon^2}{n+\varepsilon\sqrt n}\,b_i(1-b_i)} \\
&\le \frac12\left(1+\frac{\varepsilon^2}{8n}\right),
\end{align}
where the final bound follows from $b_i(1-b_i)\le \tfrac14$ and $n+\varepsilon\sqrt n\ge n$.
} %
  Thus,
  \bb
    S(\sigma^\star) &\le n~g\!\left(\frac{\varepsilon^2}{16n}\right)\\
    &= n \left(\log(1+\frac{\varepsilon^2}{16n}) + \frac{\varepsilon^2}{16n}\log(1+\frac{16n}{\varepsilon^2}) \right)\\
    &= O\left(\varepsilon^2\log\frac{n}{\varepsilon^2}\right)\\
    &= O\left(\varepsilon^2\log\frac{n}{\varepsilon}\right).
  \ee
  Putting everything together, we conclude that
  \bb
    N = \Omega\left(\frac{n^2}{\varepsilon^2\log(n/\varepsilon)}\right).
  \ee
  Note that all $\rho_a$ are pure Gaussian states with $\|\Sigma_a\|_\mr{op} = O(1)$. This completes our proof.
\end{proof}

\subsubsection{An improved lower bound}~\label{sec:any-lower-stronger}
In this section, we prove a lower bound of $\Omega(n^2/\varepsilon^2)$ as in the following theorem. We also prove the same lower bound for learning Gaussian Wigner distributions to $\varepsilon$ TV distance as in Theorem~\ref{th:wigner}.
\begin{theorem}\label{th:non-gaussian-lower-stronger}
  For any algorithm that learns an unknown zero-mean $n$-mode Gaussian state $\rho(0,\Sigma)$ to trace distance precision $\varepsilon$ with probability at least $2/3$ using any POVM measurements, the number of copies must satisfy $N = \Omega({{n^2}/{\varepsilon^{2}}})$. This holds even with the promise that $\|\Sigma\|_\mr{op}=O(1)$.
\end{theorem}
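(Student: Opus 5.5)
We will run the Fano--Holevo argument of Theorem~\ref{th:non-gaussian-lower} again, with two changes. First, we use a \emph{mixed} ensemble for which the pairwise trace distance is $\Omega(\varepsilon)$. Second, we bound the Holevo quantity by $N\cdot O(\varepsilon^2)$ with no logarithmic loss.

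\textbf{Ensemble.} We take the ensemble of Fig.~\ref{fig:construction}: $s=\lceil n/9\rceil$ squeezed modes and $n-s$ vacuum modes, mixed by a beam-splitter network indexed by the matrices $\{U_a\}_{a=1}^M$ of Lemma~\ref{le:ashtiani6-4}. To make the states mixed but close to the Wigner-level picture, we will instead use a covariance built directly from $U_aU_a^T$ with the two quadrature blocks independent. Concretely, set
\[
\Sigma_a=\tfrac12(1+\eta)\,I_{2n}+\tfrac{\varepsilon}{\sqrt n}\,c\begin{pmatrix}U_aU_a^T&0\\0&-\kappa\,U_aU_a^T\end{pmatrix}.
\]
Here $c,\kappa$ are constants and $\eta$ is a constant thermal margin. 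The margin guarantees $\Sigma_a+\tfrac i2\Omega\ge0$ for small $\varepsilon$, since the perturbation has operator norm $O(\varepsilon/\sqrt n)$, and it also keeps $\|\Sigma_a\|_\mr{op}=O(1)$.

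\textbf{Separation.} The Wigner TV distance between members is $\Omega(\varepsilon)$. By Lemma~\ref{le:mahalanobis}, it is controlled by $\|\Sigma_b^{-1/2}\Sigma_a\Sigma_b^{-1/2}-I\|_\mr F\asymp\frac{\varepsilon}{\sqrt n}\|U_aU_a^T-U_bU_b^T\|_\mr F=\Theta(\varepsilon)$, using $\|U_a^TU_b\|_\mr F^2\le n/18$. Lemma~\ref{le:WignerTV-smaller-than-Dtr} then gives $D_\mr{tr}(\rho_a,\rho_b)\ge c_1\,\mr{TV}=\Omega(\varepsilon)$. The same separation in Wigner TV gives the lower bound of Theorem~\ref{th:wigner}, so both claims follow from one argument. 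By Fano's inequality, a learner with precision $\varepsilon/C$ yields $I(\hat a:a)\ge\Omega(n^2)$.

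\textbf{Holevo bound.} Here $\chi=S(\bar\rho_N)-\frac1M\sum_a S(\rho_a^{\otimes N})$. As before, $S(\bar\rho_N)\le N\,S(\rho^\star)$, where $\rho^\star$ is the Gaussian state with the averaged covariance $\bar\Sigma=\frac1M\sum_a\Sigma_a$. Since every $\Sigma_a$ has the same spectrum and the same symplectic spectrum (they are related by passive orthogonal symplectic rotations $I_2\otimes V$), $S(\rho_a)=S(\rho_0)$ for all $a$. Hence
\[
\chi\le N\bigl(S(\rho^\star)-S(\rho_0)\bigr).
\]
This difference of Gaussian entropies is a smooth function of the symplectic eigenvalues, and those are bounded away from $1/2$ thanks to the margin $\eta$. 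This is precisely what removes the $\log(n/\varepsilon)$ factor that the pure construction suffered from, since there $g$ has infinite slope at $0$. We expand $S$ to second order around $\tfrac12(1+\eta)I$. The first-order terms cancel because the traces of the perturbation agree. The second-order term is $O\bigl(\|\Sigma_a-\bar\Sigma\|_\mr F^2\bigr)=O\bigl(\frac{\varepsilon^2}{n}\cdot s\bigr)=O(\varepsilon^2)$, using the concavity of entropy and the uniform boundedness of its Hessian on this region. So $\chi=N\cdot O(\varepsilon^2)$, which yields $N=\Omega(n^2/\varepsilon^2)$.

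\textbf{Main obstacle.} The delicate step is making the second-order entropy estimate rigorous. We need to express the symplectic eigenvalues of $\bar\Sigma$ and $\Sigma_a$, and show that $S(\rho^\star)-S(\rho_a)\le C\,\mathbb E_a\|\Sigma_a-\bar\Sigma\|_\mr F^2$. The first move would be to choose the block structure so that $x$ and $p$ blocks commute after the common rotation diagonalizing $\bar B=\frac1M\sum_aU_aU_a^T$, which makes $\bar\Sigma$'s symplectic eigenvalues explicit as in Theorem~\ref{th:non-gaussian-lower}. Then we bound $S(\rho_a)$ from below via the explicit eigenvalues $\tfrac12\sqrt{(1+\eta+c\varepsilon u/\sqrt n)(1+\eta-c\kappa\varepsilon u/\sqrt n)}$ with $u\in\{0,1\}$, and compare them termwise using a Taylor expansion of $g$ at $\eta/2$.
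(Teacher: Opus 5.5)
Your proposal is correct, but it bounds the Holevo quantity by a genuinely different mechanism than the paper.

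\textbf{What the paper does.} It takes the pure ensemble of Theorem~\ref{th:non-gaussian-lower} and doubles it, $\Sigma_a=2\Sigma_a'$, so that every state has all symplectic eigenvalues equal to $1$; this doubling plays the role of your margin $\eta$. It obtains the $\Omega(\varepsilon)$ separation indirectly, by chaining Lemma~\ref{le:WignerTV-smaller-than-Dtr}, scale invariance of the Wigner TV distance, Lemma~\ref{le:pure-trace-tvd}, and the pure-state fidelity computation. It then bounds $\chi$ by joint convexity, $\chi\le N\,\mathbb E_{a,a'}D(\rho_a\|\rho_{a'})$, and uses the closed form $D(\rho_a\|\rho_{a'})+D(\rho_{a'}\|\rho_a)=\Tr[(\Sigma_a-\Sigma_{a'})(H_{a'}-H_a)]$ from Lemma~\ref{le:marco_holevo_bound}. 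Since all $\Sigma_a$ share the Williamson diagonal $I$, one has $H_a=\tfrac{\log 3}{2}\Sigma_a^{-1}$, so the estimate is an exact computation with no remainder terms.

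\textbf{What you do differently.} You keep the maximum-entropy step of Theorem~\ref{th:non-gaussian-lower}, subtract the now nonzero $S(\rho_a)$, and control $S(\rho^\star)-S(\rho_0)$ by Taylor expansion. Your separation argument, applying Lemma~\ref{le:mahalanobis} directly, is shorter than the paper's detour through pure states. Your route also shows clearly why the logarithm disappears: the margin removes the $x\log(1/x)$ singularity of $g$ at $0$, and the linear terms cancel because $\Tr\bar B=s$. The price is a remainder estimate.

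\textbf{Closing the remainder estimate.} A literal termwise comparison is not available, since $\{b_i\}\subset[0,1]$ and $\{u_i\}\subset\{0,1\}$ do not correspond. Instead write $S(\rho^\star)-S(\rho_0)=\sum_i[\phi(b_i)-\phi(u_i)]$, where $\phi(t)\coleq g(\nu(t)-\tfrac12)$, $\nu(t)^2=(\tfrac{1+\eta}{2}+\delta t)(\tfrac{1+\eta}{2}-\kappa\delta t)$, and $\delta\coleq c\varepsilon/\sqrt n$.
\begin{itemize}
\item Expanding at $t=0$, the linear terms cancel because $\sum_ib_i=\sum_iu_i=s$.
\item For small $\varepsilon$ one has $\nu(t)-\tfrac12\ge\eta/4$, and $g',g''$ are bounded on $[\eta/4,\infty)$. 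Hence $\sup_{t\in[0,1]}|\phi''(t)|=O(\delta^2)$.
\item The total remainder is therefore $O(\delta^2 s)=O(\varepsilon^2)$.
\end{itemize}

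\textbf{Alternative closure.} Gaussian entropy is concave in $\Sigma$ (maximum entropy plus concavity of von Neumann entropy), and $\nabla_\Sigma S=H$. Hence $S(\rho^\star)-\mathbb E_aS(\rho_a)\le\mathbb E_{a,a'}\Tr[\Sigma_a(H_{a'}-H_a)]=\mathbb E_{a,a'}D(\rho_a\|\rho_{a'})$. This is exactly the intermediate quantity the paper reaches via joint convexity, so the two routes meet there.

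\textbf{A bonus.} Taking $\kappa=-1$ makes your ensemble passive and $\nu(t)$ linear in $t$. The same argument then also yields Theorem~\ref{th:non-gaussian-passive-lower}.
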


Here we introduce some additional facts.
A Gaussian state is non-degenerate if and only if $\Sigma+\frac{i}2\Omega>0$. In this case, it can be written as a Gibbs state of a Hamiltonian operator which is quadratic in the position and momentum operators. In the zero-mean case, this reads: $\rho=\frac{e^{-\hat{R}{}^{\intercal}H\hat{R}}}{\Tr[e^{-\hat{R}{}^{\intercal}H\hat{R}}]}$, with $H$ being a real positive-definite $2n\times 2n$ matrix. The matrix $H$ is completely determined by the covariance matrix. We will use that if the Williamson form of $\Sigma$ is $\Sigma=S^{-1}DS^{-\intercal}$, the Hamiltonian matrix is $H=S^{\intercal}f(D)S$, with $x=(\frac{1}{2}\coth(f(x))))$, i.e. $f(x)=\frac{1}{2}\log\left(\frac{2x+1}{2x-1}\right)$. We will need an expression for the symmetrized relative entropy of non-degenerate Gaussian states, as shown in Proposition 4.1 in~\cite{fanizza2025} and reproduced here for convenience:

\begin{lemma}
Let $\rho_1$, $\rho_2$ be two zero-mean Gaussian states of covariance matrices $\Sigma_1$, $\Sigma_2$, respectively, and with the corresponding Hamiltonian matrices $H_1$, $H_2$. Then
\begin{equation}
D(\rho_1\|\rho_2)+D(\rho_2\|\rho_1)=\Tr[(\Sigma_1-\Sigma_2)(H_2-H_1)].
\end{equation}
\end{lemma}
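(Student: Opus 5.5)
The plan is to write both relative entropies in terms of the Gibbs form and add them, so that the normalizations cancel and only a linear functional of the Hamiltonian difference remains. Since both states are non-degenerate, write $\rho_j = e^{-\hat R^\intercal H_j\hat R}/Z_j$ with $Z_j=\Tr[e^{-\hat R^\intercal H_j\hat R}]$. Then $\log\rho_j = -\hat R^\intercal H_j\hat R - \log Z_j$. Substituting this into $D(\rho_1\|\rho_2)=\Tr[\rho_1(\log\rho_1-\log\rho_2)]$ gives
\begin{equation*}
D(\rho_1\|\rho_2)=\Tr\big[\rho_1\,\hat R^\intercal(H_2-H_1)\hat R\big]+\log Z_2-\log Z_1,
\end{equation*}
and the same expression with $1\leftrightarrow2$ holds for $D(\rho_2\|\rho_1)$. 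When the two are added, the $\log Z$ terms cancel. What remains is $\Tr[(\rho_1-\rho_2)\,\hat R^\intercal(H_2-H_1)\hat R]$.

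Next I evaluate the expectation of a quadratic form in the quadratures. For a zero-mean state $\rho$ and a real symmetric matrix $K$,
\begin{equation*}
\Tr[\rho\,\hat R^\intercal K\hat R]=\sum_{k,l}K_{kl}\Tr[\rho\hat R_k\hat R_l]=\sum_{k,l}K_{kl}\Big(\Sigma_{kl}+\tfrac{i}{2}\Omega_{kl}\Big).
\end{equation*}
This uses $\hat R_k\hat R_l=\frac12\{\hat R_k,\hat R_l\}+\frac i2\Omega_{kl}$ together with the definition of $\Sigma$ at zero mean. The $\Omega$ term vanishes because $K$ is symmetric and $\Omega$ is antisymmetric. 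So the expectation equals $\Tr[K\Sigma]$. Taking $K=H_2-H_1$ gives exactly $\Tr[(\Sigma_1-\Sigma_2)(H_2-H_1)]$.

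The only step that needs real care is justifying the manipulations for unbounded operators. Specifically, I need $\log\rho_j$ to be the quadratic Hamiltonian plus a constant, and the relative entropies to be finite and equal to these trace expressions. I would argue this as follows. Non-degeneracy ($\Sigma_j+\frac i2\Omega>0$, i.e.\ all symplectic eigenvalues exceed $1/2$) makes $f(D)$ finite. Each $\rho_j$ is then a Gaussian unitary conjugate of a product of thermal states with full-rank spectrum, so $\log\rho_j$ is a well-defined quadratic operator. Gaussian states have finite moments of all orders, so every trace involved converges absolutely. If needed, the identity can be checked first on finite-dimensional Fock truncations and then passed to the limit.
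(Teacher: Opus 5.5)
Your proposal is correct and follows essentially the same route as the paper: substitute the Gibbs form $\log\rho_j=-\hat R^\intercal H_j\hat R-\log Z_j$, let the log-partition terms cancel in the sum, and evaluate the quadratic expectation as $\Tr[\Sigma K]$, using the symmetry of $K=H_2-H_1$ to drop the commutator term. Your remarks on non-degeneracy and on justifying the manipulations for unbounded operators go beyond what the paper writes, but they do not change the argument.
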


\begin{proof}
First, note that
\begin{align}
D(\rho_1\|\rho_2)&=\Tr[\rho_1 \log \rho_1]-\Tr[\rho_1 \log \rho_2]\\&=\Tr[\rho_1 (-\hat{R}{^{\intercal}}H_1\hat{R})]-\log \Tr[e^{-\hat{R}{^{\intercal}}H_1\hat{R}}]-\Tr[\rho_1 (-\hat{R}{^{\intercal}}H_2\hat{R})]+\log \Tr[e^{-\hat{R}{^{\intercal}}H_2\hat{R}}]\\
&=\Tr[\rho_1(\hat{R}{}^{\intercal}(H_2-H_1)\hat{R})]-\log \Tr[e^{-\hat{R}{^{\intercal}}H_1\hat{R}}]+\log \Tr[e^{-\hat{R}{^{\intercal}}H_2\hat{R}}]\\
&=\Tr[\rho_1(\hat{R}{}^{\intercal}(H_2-H_1)\hat{R})]-\log \Tr[e^{-\hat{R}{^{\intercal}}H_1\hat{R}}]+\log \Tr[e^{-\hat{R}{^{\intercal}}H_2\hat{R}}],
\end{align}
and that 
\begin{align}
\Tr[\rho_1(\hat{R}{}^{\intercal}(H_2-H_1)\hat{R})]&=\sum_{i,j}\Tr[\rho_1\hat{R}_i\hat{R}_j(H_2-H_1)_{ij}]\\
&=\frac{1}{2}\sum_{ij}\Tr[\rho_1\{\hat{R}_i,\hat{R}_j\}(H_2-H_1)_{ij}]\\
&=\Tr[\Sigma_1(H_2-H_1)].
\end{align}
Similarly,
\begin{align}
D(\rho_2\|\rho_1)&=\Tr[\Sigma_2(H_1-H_2)]-\log \Tr[e^{-\hat{R}{^{\intercal}}H_2\hat{R}}]+\log \Tr[e^{-\hat{R}{^{\intercal}}H_1\hat{R}}].
\end{align}
Putting all together,
\begin{equation}
D(\rho_1\|\rho_2)+D(\rho_2\|\rho_1)=\Tr[(\Sigma_1-\Sigma_2)(H_2-H_1)].
\end{equation}
\end{proof}

\noindent We can now prove the following bound on the Holevo quantity:

\begin{lemma}\label{le:marco_holevo_bound}
Let $(p_a,\sigma_a^{\otimes N})$, $a\in[M]$, $N\in\mbb N_+$ be an ensemble of $N$ i.i.d. copies of Gaussian states, with the corresponding single-copy covariance matrices $\Sigma_a$ and Hamiltonian matrices $H_a$, with Holevo quantity 
\begin{equation}\chi(\{p_a,\sigma_a^{\otimes N}\})=D\left(\sum_{a=1}^{M}p_a\ketbra{a}{a}\otimes \sigma_a^{\otimes N}\Big\| \sum_{a=1}^{M}p_a\ketbra{a}{a}\otimes \sum_{a'=1}^{M}p_{a'} \sigma_{a'}^{\otimes N}\right).
\end{equation} 
Then
\begin{equation}
\chi(\{p_a,\sigma_a^{\otimes N}\})\leq N\sum_{a,a'=1}^{M}p_{a}p_{a'}\Tr[(\Sigma_a-\Sigma_{a'})(H_{a'}-H_{a})].
\end{equation}
\end{lemma}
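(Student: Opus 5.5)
The plan is to bound the Holevo quantity by a convexity argument and then tensorize. First rewrite $\chi$ in its standard form $\chi=\sum_a p_a D(\sigma_a^{\otimes N}\|\bar\sigma_N)$, where $\bar\sigma_N\coleq\sum_{a'}p_{a'}\sigma_{a'}^{\otimes N}$. This follows directly from the definition given in the statement, since the classical register is block diagonal in both arguments of the relative entropy. Next use the joint convexity of relative entropy in its second argument, which gives $D(\rho\|\sum_{a'}p_{a'}\tau_{a'})\le\sum_{a'}p_{a'}D(\rho\|\tau_{a'})$. Alternatively, one can use the identity $\sum_a p_a D(\rho_a\|\bar\rho)=\sum_{a,a'}p_ap_{a'}D(\rho_a\|\rho_{a'})-\sum_{a'}p_{a'}D(\bar\rho\|\rho_{a'})$, which yields the same inequality. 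Either way we obtain
\begin{equation*}
\chi\le\sum_{a,a'}p_ap_{a'}D(\sigma_a^{\otimes N}\|\sigma_{a'}^{\otimes N}).
\end{equation*}

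Then apply additivity of relative entropy under tensor products, $D(\sigma_a^{\otimes N}\|\sigma_{a'}^{\otimes N})=N\,D(\sigma_a\|\sigma_{a'})$. Because the double sum is symmetric under swapping $a\leftrightarrow a'$, we can symmetrize: $\sum_{a,a'}p_ap_{a'}D(\sigma_a\|\sigma_{a'})=\frac12\sum_{a,a'}p_ap_{a'}\big[D(\sigma_a\|\sigma_{a'})+D(\sigma_{a'}\|\sigma_a)\big]$. Applying the preceding lemma on the symmetrized relative entropy (Proposition 4.1 of~\cite{fanizza2025}) turns each bracket into $\Tr[(\Sigma_a-\Sigma_{a'})(H_{a'}-H_a)]$. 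The result is in fact a factor $\frac12$ stronger than what is claimed. The claimed bound then follows because each summand is nonnegative, being a sum of relative entropies.

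The main point requiring care is the applicability of the preceding lemma. It needs non-degenerate states, so that the Hamiltonian matrices $H_a$ exist and are finite, and it needs all relative entropies to be finite. If some $\sigma_a$ is degenerate (for instance pure), the right-hand side is undefined or infinite and the statement is vacuous. Beyond that, only the convexity step and the additivity step are used, and both are standard.
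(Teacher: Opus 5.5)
Your proposal is correct and follows the paper's proof. The paper likewise uses convexity of the relative entropy in its second argument, phrased as joint convexity applied to the classical--quantum states after writing $\sigma_a^{\otimes N}=\sum_{a'}p_{a'}\sigma_a^{\otimes N}$, then additivity under tensor powers, then the symmetrized Gaussian relative-entropy formula. The one difference is that you symmetrize the double sum and gain a factor $\frac12$, whereas the paper just uses $D(\sigma_a\|\sigma_{a'})\le D(\sigma_a\|\sigma_{a'})+D(\sigma_{a'}\|\sigma_a)$; your caveat about non-degeneracy, and implicitly zero mean as the preceding lemma requires, is a fair point the paper leaves unstated.
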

\begin{proof}
First, note that $\sigma_a^{\otimes N}=\sum_{a'=1}^{M}p_{a'}\sigma_a^{\otimes N}$.
By joint convexity of the relative entropy,
\bb
\chi(\{p_a,\sigma_a^{\otimes N}\})&=D\left(\sum_{a=1}^{M}p_a\ketbra{a}{a}\otimes \sum_{a'=1}^{M}p_{a'}\sigma_a^{\otimes N}\Big\| \sum_{a=1}^{M}p_a\ketbra{a}{a}\otimes \sum_{a'=1}^{M}p_{a'} \sigma_{a'}^{\otimes N}\right)\\
&\leq \sum_{a,a'=1}^{M}p_{a}p_{a'}D\left(\ketbra{a}{a}\otimes \sigma_{a}^{\otimes N}\big\|\ketbra{a}{a}\otimes \sigma_{a'}^{\otimes N}\right)\\
&\eqt{(i)}N\sum_{a,a'=1}^{M}p_{a}p_{a'}D\left(\sigma_{a}\| \sigma_{a'}\right)\\
&\le N\sum_{a,a'=1}^{M}p_{a}p_{a'}\Tr[(\Sigma_a-\Sigma_{a'})(H_{a'}-H_{a})].
\ee
Here (i) uses the additivity of the quantum relative entropy.
\end{proof}

\noindent We are ready to prove the main result:
\begin{proof}[Proof for Theorem~\ref{th:non-gaussian-lower-stronger} and the lower bound of Theorem~\ref{th:wigner}]
    We start with the same ensemble as in the previous section, as in Eq.~\eqref{eq:def_of_sigma_prime}, but multiply all covariance matrices by a factor of $2$. That is,
    \bb
        \Sigma_a \coleq 2\Sigma_a'= \begin{pmatrix}
      I_n - \frac{\varepsilon}{\sqrt n + \varepsilon}U_aU_a^T&0_n \\ 0_n& I_n + \frac\varepsilon{\sqrt n} U_aU_a^T
      \end{pmatrix}.
    \ee
    With slight recycling of notations, let $\rho_a$ be the zero-mean Gaussian state with covariance matrix $\Sigma_a$. Still use $\sigma_a$ to denote the zero-mean pure Gaussian state with covariance matrix $\Sigma_a'$. The pairwise trace distance of the new ensemble satisfies, given that $\varepsilon$ is smaller than some constant threshold,
    \bb
        D_\mr{tr}(\rho_a,\rho_b) &\geqt{(i)} \frac{\sqrt2}{400}\mr{TV}(W_{\rho_a},W_{\rho_b})\\
        &\eqt{(ii)}\frac{\sqrt2}{400}\mr{TV}(W_{\sigma_a},W_{\sigma_b})\\
        &\geqt{(iii)}\frac{\sqrt2}{60000}D_\mr{tr}(\sigma_a,\sigma_b)\\
        &\eqt{(iv)}\Omega(\varepsilon).
    \ee
    Here (i) uses Lemma~\ref{le:WignerTV-smaller-than-Dtr}. Note that we can always assume $\varepsilon$ to be upper bounded by a sufficiently small constant to ensure Lemma~\ref{le:WignerTV-smaller-than-Dtr} can be applied. (ii) uses the fact that rescaling the covariance matrices of two classical zero-mean Gaussian distributions by a common positive factor does not change their TV distance; (iii) uses Lemma~\ref{le:pure-trace-tvd} and the fact that $\sigma_a,\sigma_b$ are pure Gaussian states. (iv) uses the trace distance bound from the last section.

    Thanks to this, one can consider exactly the same quantum communication task (but with the new ensemble) as defined in the previous section. The mutual information between Alice and Bob needs to be at least $\Omega(n^2)$ if a learning algorithm exists. 
    Note that, we have shown both the pairwise trace distance and the pairwise Wigner TV distance are at least $\Omega(\varepsilon)$, thus the above holds true both for the setting of Theorem~\ref{th:non-gaussian-lower-stronger} and Theorem~\ref{th:wigner}.
    Now we upper bound the Holevo quantity. 
    Rewrite $\Sigma_a$ as
    \bb\label{eq:tighter-random-eq1}
        \Sigma_a=(I_2\otimes V_a)S^{-1}S^{-T}(I_2\otimes V_a^T).
    \ee
    Here $V_a$ is an orthogonal $n\times n$ matrix that contains $U_a$ as the first $s\coleq \lceil n/9\rceil$ columns. $S$ is a symmetric and symplectic matrix defined by
    \bb
    S^{-1}\coloneqq\begin{pmatrix}\sqrt{1-\frac{\varepsilon}{\sqrt{n}+\varepsilon}}& 0\\ 0 &\sqrt{1+\frac{\varepsilon}{\sqrt{n}}}\end{pmatrix}\otimes \Pi_{s}+I_2\otimes(I- \Pi_{s}),
    \ee
    where $\Pi_{s}$ is a projector on the first $s$ coordinates. Eq.~\eqref{eq:tighter-random-eq1} is in the Williamson form, thus the Hamiltonian matrix is given by
    \bb
        H_a &= (I_2\otimes V_a)S^{T}f(I_{2n})S(I_2\otimes V_a^T)
        \\&= \frac{\log3}2 \begin{pmatrix}
      I_n + \frac\varepsilon{\sqrt n} U_aU_a^T&0_n
      \\ 0_n& I_n - \frac{\varepsilon}{\sqrt n + \varepsilon}U_aU_a^T \end{pmatrix}.
    \ee
    Now use Lemma~\ref{le:marco_holevo_bound}. We have
    \bb
        \chi(\{p_a,\sigma_a^{\otimes N}\})&\le N \max_{a\neq a'}\Tr[(\Sigma_a-\Sigma_{a'})(H_{a'}-H_a)]\\
        &= N\frac{\varepsilon^2\log3}{n+\varepsilon\sqrt n}\max_{a\neq a'}\Tr\left[(U_aU_a^T-U_{a'}U_{a'}^T)^2\right]\\
        &\le N\frac{\varepsilon^2\log3}{n+\varepsilon\sqrt n}\cdot 2s\\
        &= N\cdot O(\varepsilon^2),
    \ee
    which upper bounds the mutual information thanks to the Holevo theorem~\cite{holevo1973bounds}. Putting everything together, we obtain
    \bb
        N = \Omega(n^2/\varepsilon^2).
    \ee
    This completes our proof.
\end{proof}

\subsubsection{An alternative improved lower bound for passive Gaussian states}~\label{sec:any-lower-stronger-passive}
In this section, we provide an alternative proof for the $\Omega(n^2/\varepsilon^2)$ lower bound. The advantage is that this lower bound works for passive Gaussian states. 

\begin{theorem}\label{th:non-gaussian-passive-lower}
  For any algorithm that learns an unknown zero-mean \textbf{passive} $n$-mode Gaussian state $\rho(0,\Sigma)$ to trace distance precision $\varepsilon$ with probability at least $2/3$ using any POVM measurements, the number of copies must satisfy $N = \Omega({{n^2}/{\varepsilon^{2}}})$.
\end{theorem}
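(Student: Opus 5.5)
We follow the Fano-plus-Holevo framework of Theorem~\ref{th:non-gaussian-lower-stronger}, using the passive ensemble of Theorem~\ref{th:gaussian-lower} with a larger perturbation. Concretely, we set
\[
\Sigma_a \coleq \tfrac12 I_{2n} + \tfrac{\varepsilon}{2\sqrt n}\begin{pmatrix} U_aU_a^T & 0_n\\ 0_n & U_aU_a^T\end{pmatrix}, \qquad a\in[M],
\]
with $\{U_a\}_{a=1}^M$ the $M=2^{\Omega(n^2)}$ column-orthogonal matrices of Lemma~\ref{le:ashtiani6-4} and $s=\ceil{n/9}$. The $1/\sqrt n$ scaling replaces the $1/n$ scaling of Theorem~\ref{th:gaussian-lower}. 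As shown there, each $\rho_a\coleq\rho(0,\Sigma_a)$ is passive, being a passive rotation of a product of $s$ thermal modes (mean photon number $\sim\varepsilon/\sqrt n$) and $n-s$ vacua.

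\textbf{Step 1 (separation).} We need $D_\mr{tr}(\rho_a,\rho_b)=\Omega(\varepsilon)$. The photon-counting argument of Theorem~\ref{th:gaussian-lower} only gives $\Omega(\varepsilon/\sqrt n)\cdot s/n$-type bounds at this scaling, so we use Lemma~\ref{le:WignerTV-smaller-than-Dtr} instead. Since $\Sigma_a,\Sigma_b$ are $\frac12 I$ plus perturbations of size $\varepsilon/\sqrt n$,
\[
\|\Sigma_b^{-1/2}\Sigma_a\Sigma_b^{-1/2}-I\|_\mr{F} \asymp \tfrac{\varepsilon}{\sqrt n}\,\|U_aU_a^T-U_bU_b^T\|_\mr{F}\cdot\sqrt2 \asymp \tfrac{\varepsilon}{\sqrt n}\sqrt{2s-2\|U_a^TU_b\|_\mr{F}^2}=\Theta(\varepsilon).
\]
Lemma~\ref{le:mahalanobis} then gives $\mr{TV}(W_{\rho_a},W_{\rho_b})=\Omega(\varepsilon)$, and Lemma~\ref{le:WignerTV-smaller-than-Dtr} upgrades this to $D_\mr{tr}(\rho_a,\rho_b)=\Omega(\varepsilon)$, assuming $\varepsilon$ is below a small constant so both lemmas apply. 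With this separation, Fano's inequality gives $I(\hat a:a)=\Omega(n^2)$, exactly as before.

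\textbf{Step 2 (Holevo bound).} We apply Lemma~\ref{le:marco_holevo_bound}, which gives
\[
\chi\le N\max_{a\ne a'}\Tr\bigl[(\Sigma_a-\Sigma_{a'})(H_{a'}-H_a)\bigr].
\]
Diagonalizing by $I_2\otimes V_a$, the Hamiltonian matrix is
\[
H_a=f(\tfrac12+\tfrac{\varepsilon}{2\sqrt n})\,(I_2\otimes U_aU_a^T)+f(\tfrac12)\,(I_2\otimes(I-U_aU_a^T)).
\]
Here lies the main obstacle: $f(x)=\frac12\log\frac{2x+1}{2x-1}$ diverges at $x=\frac12$, because the vacuum modes are degenerate. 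We therefore regularize by adding thermal noise to every mode, i.e.\ using $\Sigma_a=\frac{\nu}{1}\cdot\frac12 I+\dots$ with a constant $\nu>1$, for instance $\nu=2$ as in Theorem~\ref{th:non-gaussian-lower-stronger}. Concretely, we take $\Sigma_a=I_{2n}+\frac{\varepsilon}{\sqrt n}(I_2\otimes U_aU_a^T)$. This state is still passive, and Step~1 is unaffected, since the relative Frobenius quantity is unchanged up to constants. Now $f$ is smooth near $1$, so
\[
H_a = f(1)\,I + f'(1)\,\tfrac{\varepsilon}{\sqrt n}\,(I_2\otimes U_aU_a^T) + O(\varepsilon^2/n),
\]
and hence
\[
\Tr\bigl[(\Sigma_a-\Sigma_{a'})(H_{a'}-H_a)\bigr]=|f'(1)|\,\tfrac{\varepsilon^2}{n}\,2\,\Tr\bigl[(U_aU_a^T-U_{a'}U_{a'}^T)^2\bigr]\bigl(1+o(1)\bigr)\le O(\varepsilon^2)\cdot\tfrac{4s}{n}=O(\varepsilon^2).
\]
It is exact because $H_a$ is a function of the same projector. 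Indeed, $f$ applied to a matrix of the form $\alpha P+\beta(I-P)$ is $f(\alpha)P+f(\beta)(I-P)$, so the trace equals
\[
\bigl(f(1)-f(1+\tfrac{\varepsilon}{\sqrt n})\bigr)\tfrac{\varepsilon}{\sqrt n}\cdot 2\,\|U_aU_a^T-U_{a'}U_{a'}^T\|_\mr{F}^2,
\]
and the mean value theorem bounds it by $O(\varepsilon^2)$.

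\textbf{Step 3 (conclusion).} The Holevo theorem then gives $\Omega(n^2)\le I(\hat a:a)\le\chi\le N\cdot O(\varepsilon^2)$, hence $N=\Omega(n^2/\varepsilon^2)$. Finally we note that $\|\Sigma_a\|_\mr{op}=O(1)$ and that all states in the ensemble are passive.
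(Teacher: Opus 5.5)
Your proof is correct, but it uses a different hard ensemble from the paper's, and the two ensembles sit in opposite regimes.

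\textbf{The paper's route.} The paper stays near the vacuum, with $\Sigma_a=\frac{1+n^{-1}}{2}I_{2n}+\frac{\varepsilon}{2n}(U_aU_a^T)^{\oplus 2}$. This is the ensemble of Theorem~\ref{th:gaussian-lower}, lifted just enough to be non-degenerate. The $\Omega(\varepsilon)$ separation comes from rotating by $V_a$, discarding the first $s$ modes and counting photons. This exploits the $\sqrt n$ amplification of trace distance over Wigner TV near the vacuum. On the Holevo side, the covariance gap is only $O(\varepsilon/n)$ but the Hamiltonian gap is $\Theta(\varepsilon)$, because $|f'|\approx n$ at $\frac12+\frac1{2n}$. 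The product is $O(\varepsilon^2)$.

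\textbf{Your route.} You sit at a constant thermal level, $\Sigma_a=I_{2n}+\frac{\varepsilon}{\sqrt n}(I_2\otimes U_aU_a^T)$, where trace distance and Wigner TV are comparable. Separation then follows from Lemmas~\ref{le:mahalanobis} and~\ref{le:WignerTV-smaller-than-Dtr}: concretely, $\|U_aU_a^T-U_bU_b^T\|_{\mr F}^2\ge n/9$ gives $\Delta\ge\frac{\sqrt2\,\varepsilon}{3(1+\varepsilon/\sqrt n)}$. Both gaps are $\Theta(\varepsilon/\sqrt n)$ and $|f'(1)|=2/3$, so each Holevo term is at most $\frac{8s}{3n}\varepsilon^2$.

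\textbf{What each buys.} Your route avoids the photon-counting computation and has the same structure as Theorem~\ref{th:non-gaussian-lower-stronger}. Since its pairwise Wigner TV is $\Omega(\varepsilon)$, it also gives a passive hard instance for learning Wigner distributions. The paper's route gives something else. Essentially the same near-vacuum passive family yields $\Omega(n^3/\varepsilon^2)$ against classical measurements (its Wigner KL is only $O(\varepsilon^2/n)$) and $\Omega(n^2/\varepsilon^2)$ in general. That is exactly the mechanism behind the Gaussian versus non-Gaussian separation. Your ensemble cannot give the classical $n^3$ bound, since its Wigner TV is already $\Theta(\varepsilon)$.

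\textbf{A wrong aside in Step 1.} It is not load-bearing, but for the vacuum-based version with perturbation $\frac{\varepsilon}{2\sqrt n}$, photon counting gives separation $\Omega(\min\{1,\varepsilon\sqrt n\})$, because $\Tr(W_a^TU_bU_b^TW_a)\ge n/18$. That is more separation, not less. What actually rules that ensemble out is the Holevo side. There the single-copy Holevo quantity is not $O(\varepsilon^2)$; it cannot be, since together with that separation it would contradict the passive upper bound of Theorem~\ref{th:pure-upper_passive}. So moving to a constant thermal base is a substantive change, not just a cure for the divergence of $f$ at $\frac12$. Your final argument uses the thermal-base ensemble in both steps, so it is fine.

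\textbf{A minor hypothesis check.} The smallness hypothesis of Lemma~\ref{le:WignerTV-smaller-than-Dtr} is on $D_{\mr{tr}}$, not on $\varepsilon$. Handle it with a trivial case split: if $D_{\mr{tr}}(\rho_a,\rho_b)$ exceeds that constant, the separation is already $\Omega(1)$.
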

\begin{proof}
  In the following, we assume $\varepsilon$ is upper bounded by a sufficiently small constant.
  Consider the passive Gaussian ensemble $\{\rho_a\}_{a=1}^M$ defined in the proof of Theorem~\ref{th:gaussian-lower} with a small twist:
  \bb
    \Sigma_a\coleq \frac{1+n^{-1}}{2}I_{2n} + \frac{\varepsilon}{2n}\begin{pmatrix}
      U_aU_a^T &0_n \\ 0_n& U_aU_a^T,
    \end{pmatrix}
  \ee
  where $\{U_a\}_{a=1}^M$ with $M=2^{\Theta(n^2)}$ is the same orthogonal matrix ensemble used throughout this paper. 
  The additional $n^{-1}$ term ensures the Gaussian states are non-degenerate. 
  We first show this modified ensemble still have pairwise trace distance $\Omega(\varepsilon)$: For any two distinct $a,b\in[M]$, consider the following measurement: First apply $V_a^\dagger\oplus V_a^\dagger$ (where $V_a \coleq [U_a;W_a]$ is any orthogonal matrix containing $U_a$ as the first $s$ columns), trace out the first $s$ modes, and measure the number of photon in the last $n-s$ modes.
  For $a$, the marginal covariance matrix on the last $n-s$ modes after applying $V_a^\dagger$ is simply $\frac{1+n^{-1}}{2}I_{2(n-s)}$. The probability of obtaining vacuum is thus
  \bb
    \Pr(\mr{Vac}|a) = \mr{det}^{-\frac12}\!\left(\left(1+\frac{1}{2n}\right)I_{2(n-s)}\right) = \left(1 +\frac1{2n}\right)^{-(n-s)}.
  \ee
  For $b$, the marginal covariance matrix on the last $n-s$ modes after applying $V_a^\dagger$ is given by
  \bb
    \frac{1+n^{-1}}{2}I_{2(n-s)} + \frac{\varepsilon}{2n}(W_a^T U_b U_b^T W_a)^{\oplus 2}.
  \ee
  The probability of obtaining vacuum is thus
  \bb
    \Pr(\mr{Vac}|b) &= \mr{det}^{-\frac12}\!\left(\left(1+\frac{1}{2n}\right)I_{2(n-s)}+ \frac{\varepsilon}{2n}(W_a^T U_b U_b^T W_a)^{\oplus 2}\right) 
    \\&= \left(1 +\frac1{2n}\right)^{-(n-s)}\mr{det}^{-1}\!\left(I + \frac{\varepsilon}{2n+1}W_a^T U_b U_b^T W_a\right)
    \\&\le \left(1 +\frac1{2n}\right)^{-(n-s)} \left(1 + \frac{\varepsilon}{2n+1}\Tr(W_a^T U_b U_b^T W_a) \right)^{-1}
    \\&\le \left(1 +\frac1{2n}\right)^{-(n-s)} \left(1 - \frac{\varepsilon}{8n}\Tr(W_a^T U_b U_b^T W_a) \right)
    \\&= \left(1 +\frac1{2n}\right)^{-(n-s)} \left(1 - \frac{\varepsilon}{8n}(s - \|U_b^TU_a\|^2_{\mr F})) \right)
    \\&\le \left(1 +\frac1{2n}\right)^{-(n-s)} \left(1 - \varepsilon/144 \right)
  \ee 
  Therefore,
  \bb
    D_\mr{tr}(\rho_a,\rho_b) \ge |\Pr(\mr{Vac}|b)-\Pr(\mr{Vac}|a)|\ge \left(1 +\frac1{2n}\right)^{-(n-s)} \frac{\varepsilon}{144} = \Omega(\varepsilon).
  \ee
  The last equality can be seen by
  \bb
    \left(1 +\frac1{2n}\right)^{-(n-s)} =\exp\!\left(-(n-s)\log(1+\frac1{2n})\right) \ge \exp(-1/2).
  \ee
  This completes the proof that $\{\rho_a\}$ is an $\varepsilon$-packing net in trace distance.

  \medskip
  \noindent  Now consider the uniformly-distributed $N$-copy ensemble $\{\frac1M,\rho_a^{\otimes N}\}$. By Lemma~\ref{le:marco_holevo_bound}, the Holevo $\chi$-quantity is upper bounded by
  \bb
    \chi(\{\frac1M,\rho_a^{\otimes N}\}) &\le N\! \max_{a\neq b\in[M]}\Tr[(\Sigma_a-\Sigma_b)(H_b-H_a)],
  \ee
  where $H_a$ is the Hamiltonian of $\rho_a$ as defined in the previous section. 
  Note that 
  \bb
    \Sigma_a &= \left(V_a \begin{pmatrix}
      (\frac12 + \frac{1+\varepsilon}{2n})\Pi_{\le s} & 0\\0 & (\frac12 + \frac{1}{2n})\Pi_{> s}
    \end{pmatrix}V_a^T\right)^{\oplus 2}
    \\&= \left(\frac{1+n^{-1}}{2}I + \frac{\varepsilon}{2n}U_aU_a^T\right)^{\oplus 2}
  \ee
  Thus,
  \bb
      H_a &= \left(V_a \begin{pmatrix}
      f(\frac12 + \frac{1+\varepsilon}{2n})\Pi_{\le s} & 0\\0 & f(\frac12 + \frac{1}{2n})\Pi_{> s}
    \end{pmatrix}V_a^T\right)^{\oplus 2}
    \\&=\left(V_a \begin{pmatrix}
      \frac12\log(1+\frac{2n}{\varepsilon+1})\Pi_{\le s} & 0\\0 & \frac12\log(1+2n)\Pi_{> s}
    \end{pmatrix}V_a^T\right)^{\oplus 2}
    \\&= \left(\frac12\log(1+2n)I + \frac12\log(\frac{1+\frac{2n}{\varepsilon+1}}{1+2n})U_aU_a^T\right)^{\oplus 2}
  \ee
  Therefore,
  \bb
    \Tr[(\Sigma_a-\Sigma_b)(H_b-H_a)] &= \frac{\varepsilon}{2n}\log(1 +\frac{2n\varepsilon}{1+\varepsilon+2n})\Tr[(U_aU_a^T - U_bU_b^T)^2]
    \\&\le \frac{\varepsilon}{2n}\cdot \varepsilon\cdot 2s
    \\&\le \varepsilon^2/9.
  \ee
  This implies an upper bound of $N\cdot O(\varepsilon^2)$ on the Holevo $\chi$ quantity. Combining this with Fano's argument and the Holevo theorem, we conclude that $N = \Omega(n^2/\varepsilon^2)$ copies are necessary to complete the learning task with $2/3$ success probability using any measurements.
\end{proof}

\subsection{Tight lower bound for heterodyne measurements}\label{sec:heterodyne}

\begin{theorem}\label{th:heterodyne-lower}
  For any algorithm that learns an unknown zero-mean $n$-mode Gaussian state $\rho(0,\Sigma)$, with the promise that $\|\Sigma\|_\mr{op}\le \bar{E}$ for some $\bar E>1$, to trace distance precision $\varepsilon<0.005$ with probability at least $2/3$ using only heterodyne measurements, the number of copies must satisfy $N = \Omega\left({{\bar{E}^2n^3}/{\varepsilon^{2}}}\right)$. 
\end{theorem}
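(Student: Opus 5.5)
Heterodyne measurement on $\rho(0,\Sigma)$ yields i.i.d. samples from $\mc N(0,\Sigma+\frac12 I_{2n})$. Since each copy is measured separately and the measurement is fixed, any heterodyne learner is a post-processing of $N$ classical samples $x_{1:N}\sim p_a\coleq\mc N(0,\Sigma_a+\frac12 I)$. I will follow the Fano template of Theorem~\ref{th:gaussian-lower}, with a Gaussian ensemble chosen to contain highly squeezed pure states. The choice should make the trace-distance separation $\Omega(\varepsilon)$, while heterodyne noise $\frac12 I$ masks the signal along the squeezed quadratures and shrinks the classical KL divergence by a factor of about $\bar E^{-2}$.

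For the construction, I take the $U_a$ from Lemma~\ref{le:ashtiani6-4}, with $s=\ceil{n/9}$ and $M=2^{\Omega(n^2)}$. I define pure covariance matrices
\bb
\Sigma_a\coleq\frac12\begin{pmatrix} \bar E^{-1}\left(I_n-\frac{\delta}{1+\delta}U_aU_a^T\right) & 0_n\\ 0_n & \bar E\left(I_n+\delta U_aU_a^T\right)\end{pmatrix},
\ee
with $\delta\coleq\varepsilon/\sqrt n$. This is exactly $Z\Sigma'_aZ^T$, where $Z=\mr{diag}(\bar E^{-1/2}I_n,\bar E^{1/2}I_n)$ is a symplectic squeezer and $\Sigma'_a$ is the pure ensemble of Theorem~\ref{th:non-gaussian-lower} (Eq.~\eqref{eq:def_of_sigma_prime}). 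The matrices therefore remain valid pure covariance matrices with $\|\Sigma_a\|_\mr{op}=O(\bar E)$; rescaling $\bar E$ by a constant handles the promise. Trace distance is unitarily invariant, so the earlier computation gives $D_\mr{tr}(\rho_a,\rho_b)\ge \varepsilon/(6\sqrt5)$ for $a\neq b$. A heterodyne learner with accuracy $c\varepsilon$ then lets Bob identify $a$ with probability $2/3$, and Fano gives $I(x_{1:N}:a)=\Omega(n^2)$.

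For the upper bound on mutual information, as before,
\bb
I\le N\,\E_{a,b}\mr{KL}(p_a\|p_b)\le N\,\E_{a,b}\chi^2(p_a\|p_b),
\ee
and the task is to show $\mr{KL}(p_a\|p_b)=O(\varepsilon^2/(\bar E^2 n))$. The outcome covariance is block-diagonal. The $p$-block is $\frac12(\bar E+1)I+\frac12\bar E\delta U_aU_a^T$, a relative perturbation of size $\delta\bar E/(\bar E+1)$. The $x$-block is $\frac12(1+\bar E^{-1})I-\frac{\bar E^{-1}\delta}{2(1+\delta)}U_aU_a^T$, a relative perturbation of size only about $\delta/\bar E$.

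This is where I expect the main obstacle. The $p$-quadrature still carries an $O(\delta)$ relative perturbation, contributing KL about $\delta^2 s=\Theta(\varepsilon^2)$ with no $\bar E$ suppression, which is far too large. I need to repair the ensemble. First, I would squeeze the other way, so that the perturbation sits only on the squeezed quadrature. Keeping purity forces a reciprocal change on the antisqueezed quadrature, so I would then try making the antisqueezed direction insensitive. The alternative is a mixed ensemble: keep the $p$-block fixed at $\frac12\bar E I$ and perturb only the $x$-block,
\bb
\frac{1}{2\bar E}\left(I+\delta U_aU_a^T\right)\quad\text{or}\quad \frac{1}{2\bar E}\left(I+\delta U_aU_a^T\right)^{-1}.
\ee
This remains valid since the symplectic eigenvalues are at least $1/2$ when $\delta\ge0$ is handled appropriately; otherwise I add a small thermal term as in Theorem~\ref{th:non-gaussian-passive-lower}.

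For the repaired mixed ensemble, the heterodyne KL is about $N\cdot O(\delta^2 s/\bar E^2)=N\cdot O(\varepsilon^2/(n\bar E^2))$. Fano then gives $N=\Omega(\bar E^2 n^3/\varepsilon^2)$, provided the trace distance is $\Omega(\varepsilon)$ despite the tiny absolute perturbation. I would establish that separation as in Theorem~\ref{th:gaussian-lower}. Apply the unsqueezing symplectic $\mr{diag}(\sqrt{\bar E}I,\bar E^{-1/2}I)$ together with $V_a^T\oplus V_a^T$, trace out the first $s$ modes, and measure the vacuum probability on the rest. The perturbation then becomes a relative $O(\delta)$ change on $n-s$ modes near vacuum, which gives a gap of $\Omega(\varepsilon)$ via $\det(I+X)\ge1+\Tr X$ and $\|U_a^TU_b\|_\mr{F}^2\le n/18$, exactly as in the earlier vacuum-probability calculation. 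Tuning $\delta$ so that this separation holds with a single mode-averaged $\varepsilon$ while the KL stays at $\varepsilon^2/(n\bar E^2)$ is the delicate balance to verify.
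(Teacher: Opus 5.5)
Your repaired ensemble, with $x$-block $\frac{1}{2\bar E}(I_n+\delta U_aU_a^T)$ and $p$-block $\frac{\bar E}{2}I_n$, is, up to swapping $x$ and $p$, exactly the paper's ensemble $\Sigma_a=\frac12\mr{diag}(\lambda I_n,\lambda^{-1}(I_n+\frac{\varepsilon}{n}U_aU_a^T))$ with $\lambda=2\bar E$. Your remaining two steps are also the paper's: unsqueeze, rotate by $V_a^T\oplus V_a^T$, trace out $s$ modes and compare vacuum probabilities, then bound the heterodyne KL on the perturbed block only. You are right to abandon the pure ensemble. Purity forces a reciprocal relative change on the antisqueezed quadrature, and heterodyne sees that change with no $\bar E$ suppression. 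Of your two mixed variants, only $\frac{1}{2\bar E}(I_n+\delta U_aU_a^T)$ with $\delta\ge0$ is valid as written. The inverse variant violates the uncertainty relation along the range of $U_a$ unless you add noise.

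The concrete gap is the size of the perturbation. You keep $\delta=\varepsilon/\sqrt n$ from the pure construction, and with that choice the claimed identity $O(\delta^2 s/\bar E^2)=O(\varepsilon^2/(n\bar E^2))$ is false: the left side is $\Theta(\varepsilon^2/\bar E^2)$. Fano with an $\Omega(\varepsilon)$ packing then gives only $\Omega(n^2\bar E^2/\varepsilon^2)$. Meanwhile the actual separation is of order $\sqrt n\,\varepsilon$ (capped at a constant), so the two sides of Fano are not normalized to the same precision. What you are missing is that, for this near-vacuum mixed ensemble, trace distance is governed by a trace norm, not a Frobenius norm. Set $Y\coleq W_a^TU_bU_b^TW_a$, so that $\Tr Y=s-\|U_a^TU_b\|_\mr{F}^2\ge n/18$. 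The vacuum-probability gap is then $1-\det(I+\frac{\delta}{2}Y)^{-1/2}\ge\frac{\delta}{5}\Tr Y\ge\frac{\delta n}{90}$. The heterodyne KL is at most $\delta^2 s/(\bar E+1)^2$, after rescaling the outcome block to $\frac12 I+\frac{\delta}{2(\bar E+1)}U_aU_a^T$ and arguing as in Eq.~\eqref{eq:KL-calculation}. So there is no delicate balance. Taking $\delta=\varepsilon/n$, as the paper does, gives separation $\ge\varepsilon/90$ and per-copy KL $O(\varepsilon^2/(n\bar E^2))$ simultaneously. Fano with $\log M=\Omega(n^2)$ then yields $N=\Omega(\bar E^2n^3/\varepsilon^2)$. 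The linear-in-$\Tr Y$ (trace-norm) versus Frobenius scaling is exactly the $\sqrt n$ gap between trace distance and Wigner TV distance, and it supplies the extra factor of $n$ relative to Theorem~\ref{th:non-gaussian-lower}.
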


Since $\|\Sigma\|_\mr{op}\le\bar E$ implies $\Tr\Sigma\le2n\bar E$, the heterodyne upper bound derived in~\cite{bittel2025energy} thus gives $N=O(\bar E^2n^3/\varepsilon^2)$, matching this lower bound.

\begin{proof}
  Consider the following Gaussian states ensemble:
  \bb
    \Sigma_a\coleq \frac12\begin{pmatrix}
      \lambda I_n & 0\\
      0 & \lambda^{-1} \left(I_n + \frac{\varepsilon}{n}U_aU_a^T\right)
    \end{pmatrix}
  \ee
  Here $\varepsilon\le 1$ and $\{U_a\}_{a=1}^M$ is the same ensemble of $n$-by-$\ceil{n/9}$ column-orthogonal matrices of size $M=2^{\Omega(n^2)}$ such that $\|U_a^T U_b\|_\mr{F}^2\le n/18$ for all distinct pairs $a,b\in[M]$ from Lemma~\ref{le:ashtiani6-4}. Denote $s=\ceil{n/9}$. We define $\rho_a$ to be the zero-mean $n$-mode Gaussian state with covariance matrix $\Sigma_a$. 
  Note that by choosing $\lambda = 2\bar E>2$, we have $\|\Sigma_a\|_\mr{op} = \bar E$.

  We first show the trace distance between any distinct pair of states within the ensemble is $\Omega(\varepsilon)$. Thanks to the unitary invariance of trace distance, we can consider the following unsqueezed states instead:
  \bb
    \Sigma'_a\coleq
    \frac12\begin{pmatrix}
      I_n & 0\\
      0 & I_n + \frac{\varepsilon}{n}U_aU_a^T
    \end{pmatrix}.
  \ee
  Using the same procedure of orthogonal symplectic transformation followed by Fock basis measurement as in the proof of Theorem~\ref{th:gaussian-lower}, we can show that 
  \bb
    D_\mr{tr}(\rho_a,\rho_b) &\ge 1 - \left(\det\left(I + \frac{\varepsilon}{2n}W_a^TU_bU_b^TW_a\right)\right)^{-1/2} \\
    &\ge \frac{\varepsilon}{5n}\Tr(W_a^TU_bU_b^TW_a)\\
    &\ge \frac\varepsilon{90}.
  \ee
  Here we use $(1+x)^{-1/2} \le 1 - \frac{2}{5}x$ for $0\le x\le 1/3$ and the property of our state ensemble from Lemma~\ref{le:ashtiani6-4}.

  Next, consider the communication task that Alice uniformly randomly pick $a\in[M]$, prepares $N$ i.i.d. copies of $\rho_a$, applies heterodyne measurement on each copy to get classical outcomes $x_{1:N}$, and sends them to Bob. Bob then tries to guess Alice's choice of $a$ by processing the classical samples he received. Suppose there exists a scheme using $N$ copies to learn any $\rho$ to trace distance precision $\varepsilon'=\varepsilon/180$ with probability at least $2/3$. Then, Bob can use this scheme to learn $\rho_a$ and hence guess $a$ correctly with probability at least $2/3$. Using Fano's inequality, this implies 
  \bb
    I(x_{1:N}:a) \ge \frac23 \log M - \log 2 = \Omega(n^2).
  \ee
  On the other hand, heterodyne measurements on $N$ copies of $\rho_a$ produce $N$ i.i.d. samples from the classical Gaussian distribution $\widetilde p_a \coleq \mc N(0,\Sigma_a + \frac12 I_{2n})$. The pairwise KL divergence can be upper bounded as
  \bb
    \mr{KL}(\widetilde p_a \| \widetilde p_b) &\eqt{} \mr{KL}\left(\mc N\!\left(0, \frac{1+\lambda^{-1}}2I +\frac{\lambda^{-1}\varepsilon}{2n}U_aU_a^T\right)~\|~ \mc N\!\left(0, \frac{1+\lambda^{-1}}2I +\frac{\lambda^{-1}\varepsilon}{2n}U_bU_b^T\right) \right) \\
    &\eqt{(i)} \mr{KL}\left(\mc N\!\left(0, \frac{1}2I +\frac{\varepsilon}{2n(1+\lambda)}U_aU_a^T\right)~\|~ \mc N\!\left(0, \frac{1}2I +\frac{\varepsilon}{2n(1+\lambda)}U_bU_b^T\right) \right)\\
    &\eqt{(ii)} O\left(\frac{\varepsilon^2}{n\lambda^2}\right).
  \ee
  Here, (i) uses the fact that KL divergence is invariant under rescaling; (ii) follows from the same calculation as Eq.~\eqref{eq:KL-calculation} with $\varepsilon$ replaced by $\varepsilon/(1+\lambda)$. Therefore, the mutual information can be upper bounded by $I(x_{1:N}:a) \le N \cdot O\left({\varepsilon^2}/{(n\lambda^2)}\right)$. Putting everything together, we conclude that
  \bb
    N=\Omega\left(\frac{\lambda^2n^3}{\varepsilon^2}\right).
  \ee
  By substituting $\lambda = 2\bar E$, we complete the proof.
\end{proof}

\section{Upper bounds on Gaussian state learning}

\subsection{Nearly-optimal upper bound for pure Gaussian states}\label{sec:pure-upper}

\begin{theorem}\label{th:pure-upper}
  There exists a protocol using adaptive general-dyne measurements that learns an unknown $n$-mode pure Gaussian state $\rho(\mu,\Sigma)$, with the promise that {$\|\Sigma\|_\mr{op}\le E$}, %
  to trace distance precision $\varepsilon$ with probability at least $2/3$ using 
  {$O\!\left(n^2/\varepsilon^2 + (n +\log\log\log  E )\log\log  E\right)$ }%
  copies and runs in $\mr{poly}(n,\varepsilon^{-1},\log\log E)$ time.
\end{theorem}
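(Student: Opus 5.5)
We follow the three-step route from the proof sketch of Theorem~\ref{th:main_up_pure}: learn the Wigner distribution, project onto pure Gaussian covariances, and convert to trace distance. First we run the Gaussian-measurement algorithm from the upper bound of Theorem~\ref{th:wigner}, with target Wigner TV precision $\varepsilon_0=c\varepsilon$ for a small absolute constant $c$. This uses Lemma~\ref{le:unsqueeze} to find a symplectic $S$, then heterodyne on the unsqueezed copies. With probability at least $2/3$ and $O(n^2/\varepsilon^2+(n+\log\log\log E)\log\log E)$ copies, it outputs $(\hat\mu,\hat\Sigma)$ with $\mr{TV}(\mc N(\mu,\Sigma),\mc N(\hat\mu,\hat\Sigma))\le\varepsilon_0$. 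By Lemma~\ref{le:mahalanobis}, this gives $\|\Sigma^{-1/2}\hat\Sigma\Sigma^{-1/2}-I\|_\mr F\le 200\varepsilon_0$ and $\|\Sigma^{-1/2}(\hat\mu-\mu)\|_2\le200\varepsilon_0$. Everything, including the later projection, is a polynomial-time linear-algebra computation.

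The raw estimate $\hat\Sigma$ need not be a valid pure covariance matrix, so we project it. We work in the unsqueezed frame. Symplectic conjugation preserves TV distance, purity, and trace distance, so we may assume $\|\Sigma^{-1}\|_\mr{op}\le4$; since $\Sigma$ is pure, its eigenvalues pair as $\lambda,\frac1{4\lambda}$, so also $\|\Sigma\|_\mr{op}\le 1$. Set $A\coleq 2\hat\Sigma$, a symmetric matrix close to the symplectic matrix $2\Sigma$ in the relative Frobenius sense. Take the Williamson decomposition $A=\hat S D\hat S^T$ (valid once $A>0$, which holds for small $\varepsilon_0$) and output $\tilde\Sigma\coleq\frac12\hat S\hat S^T$, which is always a pure covariance matrix.

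The main obstacle is showing that $\|\Sigma^{-1/2}\tilde\Sigma\Sigma^{-1/2}-I\|_\mr F=O(\varepsilon_0)$, i.e.\ that this projection has error comparable to the distance to the true pure state. After conjugating by $(2\Sigma)^{-1/2}$, which is symplectic as argued in Lemma~\ref{le:pure-trace-tvd}, the reference becomes $I$ and the estimate becomes a positive matrix $I+\Delta$ with $\|\Delta\|_\mr F=O(\varepsilon_0)$.

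We then must show that the pure covariance $\frac12\hat S\hat S^T$ extracted from the Williamson form of $I+\Delta$ satisfies $\|\hat S\hat S^T-I\|_\mr F=O(\|\Delta\|_\mr F)$. The first approach is a first-order perturbation argument. Split $\Delta$ into its symplectic-Lie-algebra part, $\Omega\Delta$ Hamiltonian-symmetric, i.e.\ $\Delta$ anticommuting with $\Omega$. Split off the part commuting with $\Omega$ as well. The first part gives $\hat S\hat S^T\approx e^{\Delta_-}$. The second part only shifts the symplectic eigenvalues $D$, by an amount bounded by $\|\Delta_+\|_\mr F$. Symplectic eigenvalues are $1$-Lipschitz in operator norm in this normalized frame, so second-order terms are $O(\|\Delta\|_\mr F^2)$.

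If this perturbation analysis gets messy, the fallback is simpler. The fidelity formula shows that any pure $\tilde\Sigma$ minimizing $\|\Sigma_\text{est}^{-1/2}\tilde\Sigma\Sigma_\text{est}^{-1/2}-I\|_\mr F$ over pure covariances is within $2\times$ the distance of the true $\Sigma$ by the triangle inequality. Here the distance is measured in the metric $\hat\Sigma$ itself, which is equivalent to $\Sigma$'s up to constants, and we use that minimizer. An approximate minimizer via the Williamson construction suffices. With $\tilde\Sigma$ in hand, Lemma~\ref{le:mahalanobis} gives $\mr{TV}(\mc N(\mu,\Sigma),\mc N(\hat\mu,\tilde\Sigma))=O(\varepsilon_0)$ (the mean term is unchanged). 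Lemma~\ref{le:pure-trace-tvd} then yields $D_\mr{tr}(\rho,\rho(\hat\mu,\tilde\Sigma))\le150\cdot O(\varepsilon_0)\le\varepsilon$ for a suitable $c$. Finally, we undo the unsqueezing by $S$, which preserves trace distance.
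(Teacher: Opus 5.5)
Your algorithm is the paper's: unsqueeze via Lemma~\ref{le:unsqueeze}, run heterodyne, project with $\tilde\Sigma=\tfrac12\hat S\hat S^T$, then apply Lemma~\ref{le:mahalanobis} and Lemma~\ref{le:pure-trace-tvd}. However, the step you flag as the main obstacle is left unproved, and that step is the whole analysis. Because you pass through the TV guarantee of Theorem~\ref{th:wigner}, your only input is $\|\Sigma^{-1/2}\hat\Sigma\Sigma^{-1/2}-I\|_\mathrm{F}=O(\varepsilon_0)$. You therefore need the Williamson projection to be Lipschitz in Frobenius norm with a dimension-free constant.

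Your first-order sketch does not establish this. At the reference point the symplectic spectrum is fully degenerate, so $\hat S$ is determined only up to orthogonal symplectic factors, and any perturbation expansion of it involves inverse symplectic-eigenvalue gaps. The claimed $O(\|\Delta\|_\mathrm{F}^2)$ remainder is asserted, not derived. The one quantitative tool you cite, operator-norm Lipschitzness of symplectic eigenvalues, gives only $\|\hat D-\tfrac12 I\|_\mathrm{F}\le\sqrt{2n}\,O(\varepsilon_0)$ from your Frobenius input. That is exactly the $\sqrt n$ loss the theorem cannot afford. The fallback is circular: certifying the Williamson output as an approximate minimizer requires the same stability bound. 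Computing an exact minimizer over the nonconvex set of pure covariance matrices is not shown to be polynomial-time, which the theorem also claims.

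The claim itself is true, and there are two ways to close the gap.

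\textbf{The paper's route.} It does not pass through TV. From the same $O(n^2/\varepsilon^2)$ heterodyne samples it extracts the stronger operator-norm guarantee $\|\hat\Sigma_0-\Sigma\|_\mathrm{op}=O(\varepsilon/\sqrt n)$. It controls symplectic eigenvalues in operator norm via $-\Omega V\Omega V$ and Weyl's inequality. It then uses $\hat\Sigma_0-\tilde\Sigma=\hat S(\hat D-\tfrac12 I)\hat S^T$ to get $\|\tilde\Sigma-\Sigma\|_\mathrm{op}=O(\varepsilon/\sqrt n)$, and only then pays $\sqrt{2n}$ to return to Frobenius.

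\textbf{A repair of your Frobenius-only route.} Use the same identity to get $\|\hat\Sigma-\tilde\Sigma\|_\mathrm{F}\le\|\hat S\|_\mathrm{op}^2\|\hat D-\tfrac12 I\|_\mathrm{F}$, where $\|\hat S\|_\mathrm{op}^2\le\|\hat\Sigma\|_\mathrm{op}/\min_i\hat D_{ii}=O(1)$. Then bound $\|\hat D-\tfrac12 I\|_\mathrm{F}$ by Hoffman--Wielandt applied to the Hermitian matrices $\hat\Sigma^{1/2}(i\Omega)\hat\Sigma^{1/2}$ and $\Sigma^{1/2}(i\Omega)\Sigma^{1/2}$, whose spectra are $\{\pm\hat\nu_j\}$ and $\{\pm\tfrac12\}$. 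In the unsqueezed frame all spectra lie in a constant interval, so this bound is $O(\|\hat\Sigma^{1/2}-\Sigma^{1/2}\|_\mathrm{F})=O(\|\hat\Sigma-\Sigma\|_\mathrm{F})=O(\varepsilon_0)$.

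Your perturbative picture can also be made rigorous through the closed form $\hat S\hat S^T=A\#(\Omega^TA^{-1}\Omega)$, where $\#$ is the matrix geometric mean. This map is analytic and symplectically covariant, and its derivative at $I$ is $\Delta\mapsto\Delta_-$. With either fix, the remaining steps of your proposal go through.
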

\noindent Up to logarithmic factors in $n,\varepsilon$ and a double-logarithmic energy term, this matches the pure Gaussian state learning lower bound of $\widetilde\Omega(n^2/\varepsilon^2)$ from Theorem~\ref{th:non-gaussian-lower-stronger} that holds for any POVM measurements.
This means non-Gaussian measurements provide no (super-logarithmic) advantage for learning pure Gaussian states.

\medskip

To prove Theorem~\ref{th:pure-upper}, we need the adaptive learn-to-unsqueeze subroutine of~\cite{bittel2025energy} as reviewed in Lemma~\ref{le:unsqueeze}.
Another lemma we will use is Lemma~\ref{le:pure-trace-tvd} which says the trace distance between pure Gaussian states is upper bounded by their Wigner TV distance up to an absolute constant.

\begin{proof}[Proof of Theorem~\ref{th:pure-upper}]
  Our protocol contains the following steps:
  \begin{enumerate}
    \item Run the learn-to-unsqueeze protocol~\cite{bittel2025energy} in Lemma~\ref{le:unsqueeze} using {\bb\label{eq:copies_un}
    N=O\!\left(\left(n + \log \!\left(\frac{\log\log\|\Sigma\|_\mr{op}}{\delta}\right)\right)\log\log\|\Sigma\|_\mr{op}\right)
    \ee}%
    copies of $\rho$ to learn a symplectic matrix $S$ such that {$\rho_\mr{unsqueezed}(\mu,\Sigma')\coleq U_{S^{-1}}\rho(\mu,\Sigma) U^\dagger_{S^{-1}}$ }%
    satisfies $\|{\Sigma'}^{-1}\|_\mr{op}\le 4$ with probability at least $1-\delta/2$. {In Eq.~\eqref{eq:copies_un} we used Lemma~\ref{le:unsqueeze} together with the fact that the covariance matrix $\Sigma$ of a pure Gaussian state satisfies \bb\|\Sigma^{-1}\|_\mr{op}=4\|\Sigma\|_\mr{op}\,.
        \ee}In the following steps we replace all copies of $\rho$ by $\rho_\mr{unsqueezed}$. For notational simplicity, we relabeled the unsqueezed state by $\rho(\mu,\Sigma)$.
    \item Run heterodyne measurements on $2m = O((n^2+n\log\delta^{-1})/\varepsilon^2)$ copies of $\rho$ to get outcomes $v_1,\cdots,v_{2m} \in \mbb R^{2n}$. Define the estimators,
    \bb
      \hat\mu \coleq \frac{1}{m}\sum_{i=1}^m v_i,\quad \hat\Sigma_0\coleq \frac{1}{2m}\sum_{i=1}^m (v_{2i}-v_{2i-1})(v_{2i}-v_{2i-1})^T - \frac12 I_{2n}.
    \ee
    \item Project $\hat\Sigma_0$ to a pure Gaussian covariance matrix $\hat\Sigma$ by the following precedure: 
    write down the Williamson decomposition $\hat\Sigma_0 = \hat S \hat D \hat S^T$ and then define $\hat\Sigma\coleq \frac12\hat S\hat S^T$. Return $\hat\rho(\hat\mu,\hat\Sigma)$ as the final estimate of the unknown state (after unsqueezing).
  \end{enumerate}
  By taking $\delta=1/3$, the total number of copies used is $N = O(n\log\log E + n^2/\varepsilon^2)$ as desired. The run time is obviously polynomial in $n,\varepsilon^{-1},\log E$.

  Conditioned on Step 1 succeeding, we have $\|\Sigma^{-1}\|_\mr{op}\le 4$.   
  Since $\Sigma$ is a pure Gaussian covariance matrix, its eigenvalues take the form $\{\lambda_1,\cdots,\lambda_n,(4\lambda_1)^{-1},\cdots,(4\lambda_n)^{-1}\}$. We thus have $\lambda_i\in[1/4,1]$.
  Following the concentration analysis in \cite[Appendix C]{ashtiani2020near}, the estimators defined in Step 2 with probability at least $1-\delta/2$ satisfy
  \begin{gather}
    (\hat\mu - \mu)^T (\Sigma +\frac12 I_{2n})^{-1} (\hat\mu - \mu) \le \varepsilon^2/2,\label{eq:random1-1}\\
    \left\|(\Sigma+\frac12 I_{2n})^{-1/2}(\hat\Sigma_0 +\frac12 I_{2n})(\Sigma+\frac12 I_{2n})^{-1/2} - I_{2n}\right\|_\mr{op} \le \varepsilon/{\sqrt {4n}}.\label{eq:random1-2}
  \end{gather}
  Note that
  \bb
    (\Sigma+\frac12 I_{2n})^{-1} &= \sum_{i=1}^{2n}(\lambda_i + \frac12)^{-1}\ketbra{e_i}{e_i}
    \ge \frac13 \sum_{i=1}^{2n} \lambda_i^{-1}\ketbra{e_i}{e_i} 
    = \frac13 \Sigma^{-1}.
  \ee
  The inequality uses $\lambda_i^{-1}\le 4$. Thus, Eq.~\eqref{eq:random1-1} implies,
  \begin{gather}\label{eq:mean-bound}
    (\hat\mu-\mu)^T\Sigma^{-1}(\hat\mu-\mu)\le \frac32\varepsilon^2\quad\Leftrightarrow\quad \|\Sigma^{-1/2}(\hat\mu-\mu)\|_2 \le \sqrt{\frac32}\varepsilon.
  \end{gather}
  On the other hand, Eq.~\eqref{eq:random1-2} implies,
  \bb
    \|\hat\Sigma_0 - \Sigma\|_\mr{op} = \left\|(\hat\Sigma_0+\frac12 I) - (\Sigma+\frac12 I)\right\|_\mr{op}
    \le \frac{\varepsilon}{\sqrt{4n}}\cdot\left\|\Sigma+\frac12 I\right\|_\mr{op}\le \frac{3\varepsilon}{4\sqrt n}.
  \ee
  Denote $\xi \coleq \frac{3\varepsilon}{4\sqrt n}$. Next, we analyze the error in Step 3,  
  \bb
    \|\hat\Sigma - \hat\Sigma_0\|_\mr{op} &= \left\|\hat S (\hat D-\frac12 I) \hat S^T\right\|_\mr{op}
    \\&\leqt{} \|\hat D - \frac12 I_{2n}\|_\mr{op}\|\hat S\|^2_\mr{op}.
  \ee
  For the first factor\footnote{
  An alternative way to bound this factor is to apply the perturbation bound for symplectic eigenvalues~\cite[Theorem 3.1]{idel2017perturbation}.
  }, we make use of the following known result: for any real positive-definite matrix $V$, there exists a positive-definite matrix $A(V)\coleq -\Omega V\Omega V$ whose eigenvalues are the squares of $V$'s symplectic eigenvalues, each appears twice. Define $\Delta\coleq \hat\Sigma_0-\Sigma$. Then,
  \begin{gather}
    A(\hat\Sigma_0) - A(\Sigma) = -\Omega\Delta\Omega\Sigma -\Omega\Sigma\Omega\Delta-\Omega\Delta\Omega\Delta.\\
    \Longrightarrow\quad \|A(\hat\Sigma_0) - A(\Sigma)\|_\mr{op} \le 2\|\Sigma\|_\mr{op}\|\Delta\|_\mr{op} + \|\Delta\|_\mr{op}^2\le 3\xi.
  \end{gather}
  Using $\lambda_i(X)$ to denote the $i$-th largest eigenvalue of a matrix $X\ge0$, Weyl's inequality implies
  \bb
    |\lambda_i(A(\hat\Sigma_0)) - \lambda_i(A(\Sigma))| \le \|A(\hat\Sigma_0) - A(\Sigma)\|_\mr{op} \le 3\xi.
  \ee
  Further note that $\Sigma$ is a pure Gaussian covariance matrix, so its symplectic eigenvalues are all $1/2$. the above becomes $\max_i |\hat D_{ii}^2 - 1/4 |\le 3\xi$, which then implies 
  \bb
    \|\hat D - \frac12 I_{2n}\|_\mr{op} 
    =\max_{i\in[n]} |\hat D_{ii} - 1/2| = \max_{i\in[n]} \frac{|\hat D_{ii}^2 - 1/4|}{\hat D_{ii}+1/2}\le 6\xi.
  \ee
  For the second factor, 
  \bb
    \|\hat S\|^2_\mr{op} &= \|\hat S\hat S^T\|_\mr{op}\le \frac{\|\hat S\hat D\hat S^T\|_\mr{op}}{\min_i\hat D_{ii}} \le \frac{1 + \xi}{1/2 - 6\xi}\le 4.
  \ee
  The second inequality uses our bounds on $\hat D$ and $\hat\Sigma_0$. The last inequality holds given that $\xi\le 1/25$. Putting things together and using the triangle inequality, we have
  \bb
    \|\hat\Sigma - \Sigma\|_\mr{op} &\le \|\hat\Sigma - \hat\Sigma_0\|_\mr{op} + \|\hat\Sigma_0 - \Sigma\|_\mr{op}\\
    &\le 25\xi = \frac{75\varepsilon}{4\sqrt n}.
  \ee
  which implies 
  \bb\label{eq:cov-bound}
  \|\Sigma^{-1/2}\hat\Sigma\Sigma^{-1/2}-I_{2n}\|_\mr{F}&\le \sqrt{2n}\cdot\|\Sigma^{-1/2}\hat\Sigma\Sigma^{-1/2}-I_{2n}\|_\mr{op}\\
  &\le \sqrt{2n}\|\Sigma^{-1}\|_\mr{op}\|\hat\Sigma - \Sigma\|_\mr{op}
  \\&\le 75\sqrt2\varepsilon.
  \ee
  Combining Eq.~\eqref{eq:mean-bound} and Eq.~\eqref{eq:cov-bound}, Lemma~\ref{le:mahalanobis} yields that
  \bb
    \mr{TV}(\mc N(\mu,\Sigma),\mc N(\hat\mu,\hat\Sigma)) \le 75\varepsilon,
  \ee
  And finally Lemma~\ref{le:pure-trace-tvd} implies that
  \bb
    D_\mr{tr}(\rho(\mu,\Sigma),\rho(\hat\mu,\hat\Sigma)) \le 11250 \varepsilon.
  \ee
  Rescaling $\varepsilon$ by a constant and choosing $\delta=1/3$, we complete the proof of Theorem~\ref{th:pure-upper}.
\end{proof}

\subsection{Non-Gaussian advantages in learning passive Gaussian states}\label{sec:passive-upper}
By definition, a passive Gaussian state is any state that can be prepared by applying passive Gaussian unitaries (i.e.~Gaussian unitaries with no squeezing, also known as energy-preserving Gaussian unitaries) to a multimode thermal state. Equivalently, a passive Gaussian state is a Gaussian state with a vanishing first moment and a covariance matrix $\Sigma$ that admits a Williamson decomposition 
\bb\label{eq:will_passive}
    \Sigma=ODO^\intercal
\ee
via a symplectic matrix $O$ that is also orthogonal.

{The following theorem shows that the optimal sample complexity for learning \(n\)-mode passive Gaussian states using \emph{Gaussian} protocols is \(\Theta(n^3/\varepsilon^2)\), with no dependence at all on the energy. Moreover, the optimal measurement strategy is heterodyne detection.

\begin{theorem}\label{th:sample_passive}
    The sample complexity of learning \(n\)-mode passive Gaussian states via Gaussian operations is \(\Theta(n^3/\varepsilon^2)\). In particular, heterodyne tomography performed on \(O(n^3/\varepsilon^2)\) copies of an unknown passive Gaussian state learns the state to trace-distance accuracy \(\varepsilon\) with probability at least \(2/3\). Conversely, any protocol restricted to Gaussian operations requires \(\Omega(n^3/\varepsilon^2)\) copies.
\end{theorem}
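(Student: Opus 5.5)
The statement has two halves, and I would prove them separately.

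\emph{Lower bound.} This half is already essentially contained in Theorem~\ref{th:gaussian-lower}. That theorem gives $\Omega(n^3/\varepsilon^2)$ for any Gaussian (indeed any classical) measurement, even under the promise that $\rho$ is passive with $\|\Sigma\|_\mr{op}=O(1)$. Any protocol built from Gaussian operations (Gaussian channels, ancillas and general-dyne detection, possibly adaptive) is an overall POVM whose elements have non-negative Wigner functions. Lemma~\ref{le:positive-POVM} therefore still applies to it. So I would only remark that Gaussian operations followed by general-dyne measurements compose into a classical POVM, and then cite Theorem~\ref{th:gaussian-lower}.

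\emph{Upper bound.} The plan is to show that plain heterodyne on $N=O(n^3/\varepsilon^2)$ copies suffices, with no energy dependence. The key is that passive states have $\mu=0$ and $\Sigma=ODO^T$ with $D\ge\frac12 I$, so $\Sigma\ge\frac12 I$. The state is therefore already ``unsqueezed'', and the learn-to-unsqueeze step of Lemma~\ref{le:unsqueeze} can be skipped entirely.

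The steps are as follows.
\begin{enumerate}
\item Heterodyne yields samples from $\mc N(0,\Sigma+\frac12 I)$. Take the empirical covariance $\hat\Sigma_0=\frac1N\sum_i v_iv_i^T-\frac12 I$; the mean is known to be zero, but one may also estimate it.
\item Let $\Sigma_h:=\Sigma+\frac12 I$. Standard covariance concentration, as used in \cite[Appendix C]{ashtiani2020near}, gives
\[
\|\Sigma_h^{-1/2}(\hat\Sigma_0-\Sigma)\Sigma_h^{-1/2}\|_\mr{F}\le\eta
\quad\text{with } N=O(n^2/\eta^2).
\]
\item Since $\Sigma\ge\frac12 I$, we have $\Sigma_h\le 2\Sigma$. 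Hence $\|\Sigma^{-1/2}(\hat\Sigma_0-\Sigma)\Sigma^{-1/2}\|_\mr{F}\le 2\eta$, uniformly in the energy. This is exactly where passivity removes the energy factor, in contrast with Theorem~\ref{th:heterodyne-lower}.
\item Convert to trace distance. Feed the matrix $B=\Sigma^{-1/2}\hat\Sigma_0\Sigma^{-1/2}-I$ into the bound~\eqref{eq_bmem} as manipulated in Lemma~\ref{le:at-most-sqrtn}. This gives
\[
D_\mr{tr}\le O\bigl(\sqrt{n}\,\|B\|_\mr{F}\bigr)=O(\sqrt n\,\eta).
\]
Alternatively, use Lemma~\ref{le:perturbation} with $\Tr(\Sigma^{-1}|\Delta|)\le\sqrt{2n}\,\|\Sigma^{-1/2}\Delta\Sigma^{-1/2}\|_\mr{F}$.
\item Choosing $\eta=\varepsilon/\sqrt n$ gives $N=O(n^3/\varepsilon^2)$.
\end{enumerate}

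The main obstacle is making sure the estimate is a valid covariance matrix, and that $\hat\Sigma_0$ is close enough for the perturbation bounds to apply. This requires $\|B\|_\mr{op}\le 2\eta\ll1$, so $\hat\Sigma_0\ge(1-2\eta)\Sigma$.

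The estimate may slightly violate $\hat\Sigma_0+\frac i2\Omega\ge0$. I would fix this by outputting $\hat\Sigma:=(1+c\eta)\hat\Sigma_0$, or $\hat\Sigma_0+c\eta\Sigma$-style padding via $\hat\Sigma_0+2\eta\hat\Sigma_0$. Then
\[
\hat\Sigma\ge(1+c\eta)(1-2\eta)\Sigma\ge\Sigma\ge\tfrac12 I,
\]
which is a valid covariance matrix. The rescaling only adds $O(\eta)$ in operator norm to $B$, i.e.\ $O(\sqrt n\,\eta)$ in Frobenius norm, so the bound survives.

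Finally, I would check that Lemma~\ref{le:perturbation} does not require the estimate to be passive. It applies to arbitrary Gaussian states, so the output need not be projected onto passive form. Taking constant failure probability $1/3$ completes the proof.
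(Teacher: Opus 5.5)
Your lower-bound half matches the paper, which also just invokes Theorem~\ref{th:gaussian-lower}. For the upper bound the paper does not re-derive anything. It quotes \cite[Theorem~10]{bittel2025energy}, which says heterodyne on $O\bigl((n+\Tr\Sigma^{-1})n^2/\varepsilon^2\bigr)$ copies suffices, and then uses $\Sigma^{-1}\le 2I_{2n}$ for passive states. Your self-contained route rests on the same fact ($\Sigma\ge\frac12 I$, hence $\Sigma_h\le 2\Sigma$), and your steps 1--5 are sound. The gap is the final validity repair, which as written destroys the $O(\varepsilon)$ guarantee.

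\textbf{Why the repair fails.} With $\eta=\varepsilon/\sqrt n$ and $\hat\Sigma=(1+c\eta)\hat\Sigma_0$ (you need $c>2$), the relative error becomes $B'=(1+c\eta)B+c\eta I_{2n}$. Its Frobenius norm is $O(\sqrt n\,\eta)$, as you say. But your own conversion $D_\mr{tr}\le O(\sqrt n\,\|B'\|_\mr{F})$ then gives $O(n\eta)=O(\sqrt n\,\varepsilon)$, not $O(\varepsilon)$: the isotropic shift alone contributes $2nc\eta=2c\sqrt n\,\varepsilon$ to $\Tr B'$. This is not slack in the analysis. For the vacuum $\Sigma=\frac12 I_{2n}$, your output has vacuum probability $e^{-c\sqrt n\varepsilon/2+O(\varepsilon)}$, so $D_\mr{tr}\gtrsim\min\{1,\sqrt n\,\varepsilon\}$. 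This is exactly the vacuum-versus-thermal example of Sec.~\ref{sec:main_wigner}.

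\textbf{How to fix it.} The inflation must be of order $\varepsilon/n$, not $\varepsilon/\sqrt n$, and for that you need operator-norm concentration rather than Frobenius. For $N=\Theta(n^3/\varepsilon^2)$ one has $\|\Sigma_h^{-1/2}(\hat\Sigma_0-\Sigma)\Sigma_h^{-1/2}\|_\mr{op}=O(\sqrt{n/N})=O(\varepsilon/n)$ with constant probability, hence $\|B\|_\mr{op}=O(\varepsilon/n)$. Inflating by $1+c\varepsilon/n$ then already gives $\hat\Sigma\ge\Sigma\ge\frac12 I$, so $\hat\Sigma$ is a valid covariance matrix. Since $\hat\Sigma-\Sigma\ge0$ and $\hat\Sigma^{-1}\le\Sigma^{-1}$, Lemma~\ref{le:perturbation} yields $D_\mr{tr}\le\frac{1+\sqrt3}{4}\Tr B'\le\frac{1+\sqrt3}{4}\bigl((1+c\varepsilon/n)\sqrt{2n}\,\|B\|_\mr{F}+2c\varepsilon\bigr)=O(\varepsilon)$.

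Alternatively, clip the eigenvalues of $\hat\Sigma_0$ below $\frac12$ up to $\frac12$. Using $\Sigma\ge\frac12 I$, the added positive part $X$ satisfies $\Tr(\Sigma^{-1}X)\le 2\Tr X=O(\|B\|_1)=O(\varepsilon)$, and you can finish with \eqref{eq_bmem} kept in trace-norm form.

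\textbf{A related slip.} Your ``alternative'' inequality $\Tr(\Sigma^{-1}|\Delta|)\le\sqrt{2n}\,\|\Sigma^{-1/2}\Delta\Sigma^{-1/2}\|_\mr{F}$ is false for indefinite $\Delta$ when $\Sigma$ is ill-conditioned, which passive states allow (condition number up to order $E$). In general only $\Tr(\Sigma^{-1}|\Delta|)\ge\|\Sigma^{-1/2}\Delta\Sigma^{-1/2}\|_1$ holds. Your inequality is valid once $\Delta\ge 0$, which both your inflated estimator and the repaired one satisfy.
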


\begin{proof}
    The lower bound \(\Omega(n^3/\varepsilon^2)\) follows immediately from Theorem~\ref{th:gaussian-lower}.
    
    The upper bound \(O(n^3/\varepsilon^2)\) follows from \cite[Theorem~10]{bittel2025energy}. Indeed, \cite[Theorem~10]{bittel2025energy} implies that heterodyne measurements on
    \bb\label{eq_N_suff_passive}
        N=O\left( \frac{(n+\Tr\Sigma^{-1})\,n^2}{\varepsilon^2} \right)
    \ee
    copies of an unknown Gaussian state \(\rho(\mu,\Sigma)\) are sufficient to learn the state to trace-distance precision \(\varepsilon\) with probability at least \(2/3\). For passive Gaussian states, we have
    \bb
        \Sigma^{-1}=O D^{-1} O^\intercal \le 2 I_{2n}\,,
    \ee
    where the equality uses the Williamson decomposition for passive Gaussian states in Eq.~\ref{eq:will_passive}, and the inequality follows from \(D\ge \tfrac12 I_{2n}\). Therefore,
    \(\Tr\Sigma^{-1}\le 4n\), and substituting this into Eq.~\eqref{eq_N_suff_passive} yields \(N=O(n^3/\varepsilon^2)\). Hence, \(O(n^3/\varepsilon^2)\) copies suffice for tomography of passive Gaussian states, completing the proof.
\end{proof}

In summary, the theorem above identifies \(\Theta(n^3/\varepsilon^2)\) as the optimal sample complexity achievable by \emph{Gaussian} protocols for learning unknown passive Gaussian states. This naturally raises the question: can one surpass this fundamental bound by allowing \emph{non-Gaussian} strategies? In the following, we show that this is indeed possible, and the key ingredient is the recently introduced \emph{passive random purification channel}~\cite{walter_purif_2025,mele_purif_2025}.

}

In Refs.~\cite{walter_purif_2025,mele_purif_2025}, it has been shown that, for any natural numbers~$N$ and~$n$, one can convert $N$ copies of an $n$-mode \emph{mixed} passive Gaussian state~$\rho$ into $N$ copies of a \emph{randomly} chosen $2n$-mode Gaussian purification. More precisely, let $\mathcal{H}_A$ and $\mathcal{H}_P$ be two $n$-mode systems. There exists a (non-Gaussian) channel, dubbed the \emph{passive random purification channel},
\bb\label{eq_random_purif}
\Lambda^{(N)}:\mathcal{H}_A^{\otimes N}\longrightarrow \left(\mathcal{H}_A\otimes \mathcal{H}_P\right)^{\otimes N},
\ee
such that, for any $n$-mode passive Gaussian state~$\rho_A$, one has
\bb
\Lambda^{(N)}\!\left(\rho_A^{\otimes N}\right)
=
\mathbb{E}_{O}\!\left[
\left(I_A\otimes U^{(P)}_O\right)
\ketbra{\psi_\rho}{\psi_\rho}_{AP}
\left(I_A\otimes U^{(P)}_O\right)^\dagger
\right]^{\otimes N},
\ee
where 
\bb\label{eq_pur}
\ket{\psi_\rho}_{AP}\coloneqq \sqrt{\rho_A}\otimes I_P\ket{\Gamma}_{AP}
\ee
is the standard purification of~$\rho_A$ (which is Gaussian~\cite{walter_purif_2025,mele_purif_2025}), $\ket{\Gamma}_{AP}$ is the unnormalised maximally entangled state with respect to the Fock basis, the expectation value is taken over uniformly random orthogonal symplectic matrices
$O\in \mathrm{Sp}(2n)\cap \mathrm{O}(2n)$, and $U_O$ denotes the (passive) Gaussian unitary associated with~$O$. In other words, the channel $\Lambda^{(N)}$ outputs $N$ copies of the Gaussian purification
$\left(I_A\otimes U^{(P)}_O\right)\ket{\psi_\rho}_{AP}$, with $O$ chosen uniformly at random.

Moreover, as observed in~\cite{mele_purif_2025}, each state
$\left(I_A\otimes U^{(P)}_O\right)\ket{\psi_\rho}_{AP}$
has mean photon number exactly twice that of $\rho_A$. Equivalently, since both states have vanishing first moments, letting $\Sigma$ and $\Sigma_{\mathrm{pure}}$ denote the covariance matrices of $\rho_A$ and of its purification $\left(I_A\otimes U^{(P)}_O\right)\ket{\psi_\rho}_{AP}$, respectively, one has
\bb\label{eq_cond_trace}
\Tr \Sigma_{\mathrm{pure}} = 2\,\Tr \Sigma\,.
\ee{We now bound the operator norm of $\Sigma_{\mathrm{pure}}$ in terms of that of $\Sigma$. This is established in the following lemma.

\begin{lemma}
Let $\Sigma$ and $\Sigma_{\mathrm{pure}}$ denote the covariance matrices of $\rho_A$ and of its purification $\left(I_A\otimes U^{(P)}_O\right)\ket{\psi_\rho}_{AP}$, respectively, where $\ket{\psi_\rho}_{AP}$ is the standard purification of $\rho_A$ defined in~\eqref{eq_pur} and $U^{(P)}_O$ is a passive Gaussian unitary acting on the purifying system. Then,
\[
\|\Sigma_{\mathrm{pure}}\|_{\mathrm{op}} \le 4\,\|\Sigma\|_{\mathrm{op}}.
\]
\end{lemma}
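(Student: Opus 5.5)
The plan is to compute $\Sigma_{\mathrm{pure}}$ explicitly in a convenient basis and then bound its operator norm block by block. Two reductions come first. The passive unitary $U^{(P)}_O$ acts only on the purifying system, so it transforms the covariance matrix by the orthogonal symplectic matrix $I_{2n}\oplus O$ (ordering the modes as $A$ then $P$). Conjugation by an orthogonal matrix leaves the operator norm unchanged, so it suffices to treat the standard purification $\ket{\psi_\rho}_{AP}$. Next, $\rho_A$ is passive, so $\rho_A=U_K\,\tau\,U_K^\dagger$ with $\tau=\bigotimes_i \tau_{\nu_i}$ a product of thermal states and $K$ orthogonal symplectic. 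For the standard purification we use the identity $(X\otimes I)\ket{\Gamma}=(I\otimes X^T)\ket{\Gamma}$ with respect to the Fock basis, which gives
\[
\sqrt{\rho_A}\otimes I\ket{\Gamma}=(U_K\otimes \overline{U_K})\,(\sqrt{\tau}\otimes I)\ket{\Gamma}.
\]
Since $U_K$ is passive, $U_K^T=\overline{U_K}^{\,\dagger}$ is again a passive Gaussian unitary, because passive unitaries preserve photon number and complex conjugation in the Fock basis maps passive unitaries to passive unitaries (it corresponds to $p\mapsto -p$ conjugation of $K$). The whole transformation is therefore an orthogonal symplectic map on $AP$, and it again preserves the operator norm. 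We are reduced to $\rho_A=\tau$, whose purification is $\bigotimes_i$ of two-mode squeezed vacua $\sum_k \sqrt{p_k^{(i)}}\ket{k}\ket{k}$.

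For a single pair with thermal symplectic eigenvalue $\nu\ge 1/2$, the two-mode squeezed vacuum has the covariance matrix
\[
\begin{pmatrix}\nu I_2 & \sqrt{\nu^2-1/4}\,Z\\ \sqrt{\nu^2-1/4}\,Z & \nu I_2\end{pmatrix},\qquad Z=\mathrm{diag}(1,-1),
\]
with the sign pattern fixed by the real Fock-basis coefficients. Its eigenvalues are $\nu\pm\sqrt{\nu^2-1/4}\le 2\nu$. The full $\Sigma_{\mathrm{pure}}$ is, up to a permutation of coordinates (which is orthogonal), a direct sum of these blocks. Hence
\[
\|\Sigma_{\mathrm{pure}}\|_{\mathrm{op}}\le 2\max_i\nu_i=2\|\Sigma\|_{\mathrm{op}}\le 4\|\Sigma\|_{\mathrm{op}}.
\]
Here $\|\Sigma\|_{\mathrm{op}}=\max_i\nu_i$ because $\Sigma=KDK^T$ with $K$ orthogonal.

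Alternatively, one can avoid explicit diagonalization with a blockwise argument. Write $\Sigma_{\mathrm{pure}}=\begin{pmatrix}\Sigma & C\\ C^T & \Sigma_P\end{pmatrix}$. Then $\Sigma_P$ has the same symplectic spectrum as $\Sigma$ and is itself passive, so $\|\Sigma_P\|_{\mathrm{op}}=\|\Sigma\|_{\mathrm{op}}$. Positivity of $\Sigma_{\mathrm{pure}}$ gives $\|C\|_{\mathrm{op}}\le\sqrt{\|\Sigma\|_{\mathrm{op}}\|\Sigma_P\|_{\mathrm{op}}}$. The triangle inequality on the diagonal and off-diagonal parts then yields $\|\Sigma_{\mathrm{pure}}\|_{\mathrm{op}}\le 2\|\Sigma\|_{\mathrm{op}}$, comfortably within the claimed factor $4$. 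This version is perhaps cleaner and is the one I would write.

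The main obstacle is justifying that the standard purification in the Fock basis of a \emph{passive} (not diagonal) state is related to the product of two-mode squeezed vacua by a passive Gaussian unitary on $AP$. Equivalently, we need that the reduced state on $P$ is $\rho_A^T$, that this state is passive Gaussian with the same symplectic eigenvalues, and that its covariance matrix has the same operator norm. I would handle this via the transpose trick above, checking that Fock-basis transposition maps $\rho(0,\Sigma)$ to $\rho(0,\Lambda\Sigma\Lambda)$ with $\Lambda=I_n\oplus(-I_n)$, which is orthogonal and hence norm-preserving.
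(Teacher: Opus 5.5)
Your argument is correct, and it proves the sharper bound $\|\Sigma_{\mathrm{pure}}\|_{\mathrm{op}}\le 2\|\Sigma\|_{\mathrm{op}}$.

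The blockwise version you say you would write is essentially the paper's proof. The paper also reduces to $U_O=I$ and writes $\Sigma_{\mathrm{pure}}$ in block form. It bounds the off-diagonal block via the Schur complement, $\|X\|_{\mathrm{op}}^2\le\|\Sigma\|_{\mathrm{op}}\|\Sigma_{\mathrm{tr}}\|_{\mathrm{op}}$. It then uses that the $P$-marginal is $\rho_A^{\intercal}$, whose covariance matrix is $\Sigma$ conjugated by the orthogonal sign flip $p\mapsto -p$.

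The only difference is in the triangle inequality. The paper applies it to all four blocks separately and gets $4$. You split into the block-diagonal part, of norm $\max\{\|\Sigma\|_{\mathrm{op}},\|\Sigma_P\|_{\mathrm{op}}\}$, and the off-diagonal part, of norm $\|C\|_{\mathrm{op}}$, and get $2$.

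Note also that this argument never needs passivity, provided you justify $\|\Sigma_P\|_{\mathrm{op}}=\|\Sigma\|_{\mathrm{op}}$ through the transpose relation rather than through ``same symplectic spectrum and passive.'' It therefore holds for the standard purification of any state with finite second moments.

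Your first route is genuinely different: it reduces via $U_K\otimes\overline{U_K}$ to a product of two-mode squeezed vacua. It relies on passivity, but in exchange gives the exact spectrum $\{\nu_i\pm\sqrt{\nu_i^2-1/4}\}$. Hence $\|\Sigma_{\mathrm{pure}}\|_{\mathrm{op}}=\max_i\bigl(\nu_i+\sqrt{\nu_i^2-1/4}\bigr)$, which shows that $2$ is the optimal constant, approached as $\max_i\nu_i\to\infty$.
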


\begin{proof}
Since the operator norm is invariant under orthogonal transformations, it suffices to consider the case $U^{(P)}_O=I_P$.

We first compute the covariance matrix $\Sigma_{\mathrm{pure}}$ of the standard purification $\ket{\psi_\rho}_{AP}$. Note that the reduced state on the purifying system $P$ is the transposed state $\rho_A^{\intercal}$; we denote its covariance matrix by $\Sigma_{\mathrm{tr}}$. Consequently, the covariance matrix of the purification has the block form
\[
\Sigma_{\mathrm{pure}}
=
\begin{pmatrix}
\Sigma & X \\
X^\intercal & \Sigma_{\mathrm{tr}}
\end{pmatrix},
\]
for some real matrix $X$. Since $\Sigma_{\mathrm{tr}}$ and $\Sigma_{\mathrm{pure}}$ are positive definite, Schur's complement theorem~\cite[Theorem~1.3.3]{BHATIA} implies that
\bb
\Sigma \ge X\,\Sigma_{\mathrm{tr}}^{-1} X^\intercal .
\ee
In particular, this yields
\bb
\Sigma \ge \frac{X X^\intercal}{\|\Sigma_{\mathrm{tr}}\|_{\mathrm{op}}},
\ee
and hence
\bb\label{eq_X}
\|X\|_{\mathrm{op}}^2 \le \|\Sigma\|_{\mathrm{op}} \,\|\Sigma_{\mathrm{tr}}\|_{\mathrm{op}} .
\ee
Next, recall that the covariance matrix of the transposed state satisfies~\cite{serafini2023quantum}
\[
\Sigma_{\mathrm{tr}} = (I_n \otimes Z)\,\Sigma\,(I_n \otimes Z),
\]
where $Z=\mathrm{diag}(1,-1)$ flips the sign of all momentum quadratures. Since $I_n\otimes Z$ is orthogonal, it follows that
\[
\|\Sigma_{\mathrm{tr}}\|_{\mathrm{op}} = \|\Sigma\|_{\mathrm{op}} .
\]
Substituting this into~\eqref{eq_X} gives $\|X\|_{\mathrm{op}} \le \|\Sigma\|_{\mathrm{op}}$. Finally, using the triangle inequality for the operator norm, we obtain
\bb
\|\Sigma_{\mathrm{pure}}\|_{\mathrm{op}}
=
\left\|
\begin{pmatrix}
\Sigma & X \\
X^\intercal & \Sigma_{\mathrm{tr}}
\end{pmatrix}
\right\|_{\mathrm{op}}
\le
\|\Sigma\|_{\mathrm{op}} + \|X\|_{\mathrm{op}} + \|X^\intercal\|_{\mathrm{op}} + \|\Sigma_{\mathrm{tr}}\|_{\mathrm{op}}
\le
4\,\|\Sigma\|_{\mathrm{op}} .
\ee
\end{proof}

}
In summary, there is a procedure that transforms $N$ copies of an unknown $n$-mode mixed passive Gaussian state satisfying $\|\Sigma\|_\mathrm{op} \le E$ into $N$ copies of an unknown Gaussian purification satisfying $\|\Sigma\|_\mathrm{op} \le 4E$. This observation, together with Theorem~\ref{th:pure-upper}, naturally leads to the following tomography algorithm for learning mixed passive Gaussian states with $\|\Sigma\|_\mathrm{op} \le E$ up to precision $\varepsilon$ in trace distance with probability at least $2/3$:
\begin{enumerate}
    \item Given $N = O\!\left(n^2/\varepsilon^2 + (n +\log\log\log  E )\log\log  E\right)$ copies of the unknown passive Gaussian state, apply the passive random purification channel in \eqref{eq_random_purif}.
    \item Apply the pure Gaussian-state tomography algorithm of Theorem~\ref{th:pure-upper} to the output of the passive random purification channel.
    \item Output the reduced state of the estimated pure state.
\end{enumerate}
The correctness of the above protocol is proved in the following theorem.
\begin{theorem}[Non-Gaussian algorithm for tomography of passive Gaussian states]\label{th:pure-upper_passive}
There exists a (non-Gaussian) tomography algorithm that learns an unknown $n$-mode passive Gaussian state, with the promise that its covariance matrix satisfies $\|\Sigma\|_\mathrm{op} \le E$, to trace distance precision $\varepsilon$ with probability at least $2/3$ using
$N = O\!\left(n^2/\varepsilon^2 + (n +\log\log\log  E )\log\log  E\right)$ copies.
\end{theorem}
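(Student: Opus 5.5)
We analyze the three-step protocol stated just before the theorem. First apply the passive random purification channel $\Lambda^{(N)}$ of \eqref{eq_random_purif} to $\rho_A^{\otimes N}$. By its defining property the output is $\mathbb{E}_O\big[\ketbra{\Psi_O}{\Psi_O}^{\otimes N}\big]$, where $\ket{\Psi_O}\coleq(I_A\otimes U_O^{(P)})\ket{\psi_\rho}_{AP}$. Each $\Psi_O$ is a pure $2n$-mode Gaussian state, and by the preceding lemma its covariance matrix has $\|\Sigma_{\mathrm{pure}}\|_\mathrm{op}\le 4E$.

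Next run the algorithm of Theorem~\ref{th:pure-upper} on $2n$ modes with energy bound $4E$. This gives a pure Gaussian estimate $\hat\Psi$, and we output $\hat\rho\coleq\Tr_P\hat\Psi$, which is a Gaussian state on $A$.

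The correctness argument conditions on the random $O$. The mixture over $O$ is classical randomness that is fixed across all $N$ copies. So for every fixed $O$, the learner receives exactly $N$ i.i.d.\ copies of the pure Gaussian state $\Psi_O$, which satisfies the promise of Theorem~\ref{th:pure-upper}. That theorem then guarantees
\[
D_\mr{tr}(\hat\Psi,\Psi_O)\le\varepsilon
\]
with probability at least $2/3$. Averaging over $O$ keeps the overall success probability at least $2/3$, because the success event holds conditionally for every $O$.

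On that event, monotonicity of the trace distance under partial trace gives
\[
D_\mr{tr}(\hat\rho,\rho_A)=D_\mr{tr}(\Tr_P\hat\Psi,\Tr_P\Psi_O)\le\varepsilon,
\]
since $\Tr_P\Psi_O=\rho_A$ for every $O$: the passive unitary acts only on $P$.

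For the copy count, substitute $2n$ modes and energy $4E$ into Theorem~\ref{th:pure-upper}. This gives
\[
O\!\left((2n)^2/\varepsilon^2+(2n+\log\log\log 4E)\log\log 4E\right),
\]
which is $O\!\left(n^2/\varepsilon^2+(n+\log\log\log E)\log\log E\right)$ as claimed. The step needing care is the conditioning argument: the channel output is a mixture and not i.i.d.\ copies of a single state. The success guarantee of Theorem~\ref{th:pure-upper} holds uniformly over all pure Gaussian inputs satisfying the energy promise, and the promise holds for every $O$. By linearity of the learner's measurement statistics in the input state, the success probability on the mixture equals the average of the conditional success probabilities, each of which is at least $2/3$.
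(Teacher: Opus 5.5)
Your proposal is correct and follows the paper's proof: purify with the passive random purification channel, run the pure-state learner of Theorem~\ref{th:pure-upper} under the energy bound $4E$, and trace out $P$ using monotonicity of the trace distance. Your explicit conditioning on $O$ and your accounting for the $2n$ modes spell out what the paper leaves implicit. The conditioning argument reads most cleanly when stated for the $O$-independent event $D_\mr{tr}(\hat\rho,\rho_A)\le\varepsilon$.
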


\begin{proof}
We analyze the protocol step-by-step.
\begin{enumerate}
\item Apply the passive random purification channel \eqref{eq_random_purif} to $N$ copies of the unknown passive Gaussian state $\rho_A$. By \eqref{eq_cond_trace}, this produces $N$ copies of an unknown Gaussian purification $\ket{\Psi}_{AP}$ such that $\|\Sigma\|_\mathrm{op} \le 4E$. 

\item Apply the pure Gaussian-state tomography algorithm of Theorem~\ref{th:pure-upper} to these $N$ copies of $\ket{\Psi}_{AP}$. Since
\bb
O\!\left(n^2/\varepsilon^2 + (n +\log\log\log  (4E) )\log\log(4E)\right)
=O\!\left(n^2/\varepsilon^2 + (n +\log\log\log  E )\log\log  E\right)
=
N,
\ee
Theorem~\ref{th:pure-upper} guarantees that the output of the algorithm, denoted by $\ket{\widetilde{\Psi}}_{AP}$, satisfies
\bb
D_{\mathrm{tr}}\!\left(\ketbra{\Psi}{\Psi}_{AP},\ketbra{\widetilde{\Psi}}{\widetilde{\Psi}}_{AP}\right)\le \varepsilon
\ee
with probability at least $2/3$.

\item Output the reduced state $\Tr_P \ketbra{\widetilde{\Psi}}{\widetilde{\Psi}}_{AP}$. 
\end{enumerate}
Note that, by the monotonicity of the trace distance under the partial trace and since $\ket{\Psi}_{AP}$ is a purification of $\rho_A$, one has
\bb
D_{\mathrm{tr}}\!\left(\rho_A,\Tr_P \ketbra{\widetilde{\Psi}}{\widetilde{\Psi}}_{AP}\right)
\le 
D_{\mathrm{tr}}\!\left(\ketbra{\Psi}{\Psi}_{AP}, \ketbra{\widetilde{\Psi}}{\widetilde{\Psi}}_{AP}\right).
\ee
Consequently, it follows that the returned state $\Tr_P \ketbra{\widetilde{\Psi}}{\widetilde{\Psi}}_{AP}$ satisfies
\bb
D_{\mathrm{tr}}\!\left(\rho_A,\Tr_P \ketbra{\widetilde{\Psi}}{\widetilde{\Psi}}_{AP}\right)
\le \varepsilon,
\ee
with probability at least $2/3$. This concludes the proof.
\end{proof}
Thanks to Theorem~\ref{th:pure-upper_passive}, there exists a non-Gaussian tomography algorithm that learns an unknown $n$-mode passive Gaussian state to trace-distance precision $\varepsilon$ with probability at least $2/3$ using
\bb
N = \widetilde O\!\left(\frac{n^2}{\varepsilon^2}\right)
\ee
copies, where the $\widetilde O(\cdot)$ notation suppresses polylogarithmic factors in $n$, $\varepsilon^{-1}$, and the energy constraint.
On the other hand, Theorem~\ref{th:gaussian-lower} shows that any Gaussian tomography protocol requires at least
\bb
N = \Omega\!\left(\frac{n^3}{\varepsilon^2}\right)
\ee
copies to achieve the same task (and Theorem~\ref{th:sample_passive} shows that heterodyne tomography matches this lower bound). This establishes a rigorous polynomial separation between Gaussian and non-Gaussian algorithms for tomography passive Gaussian states. We summarize this result in the following quotable theorem.
\begin{theorem}[Separation between Gaussian and non-Gaussian algorithms for tomography of passive Gaussian states]\label{th:gaussian-nongaussian-separation}
Consider the task of learning an unknown $n$-mode passive Gaussian state $\rho$ with covariance matrix $\Sigma$ satisfying $\| \Sigma\|_{\mathrm{op}} \le E$, to trace-distance precision $\varepsilon$ with success probability at least $2/3$. It holds that:

\begin{enumerate}[label=(\roman*)]
\item There exists a non-Gaussian tomography algorithm that accomplishes this task using
\bb
N = \widetilde O\!\left(\frac{n^2}{\varepsilon^2}\right)
\ee
copies of $\rho$.

\item Any tomography algorithm that uses only Gaussian operations requires at least
\bb
N = \Omega\!\left(\frac{n^3}{\varepsilon^2}\right)
\ee
copies of $\rho$.
\end{enumerate}
Consequently, non-Gaussian algorithms provide a strict polynomial advantage over Gaussian algorithms for tomography of passive Gaussian states.
\end{theorem}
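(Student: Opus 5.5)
The statement has two parts, and the plan is to assemble each from results already proved in this section, checking that the hypotheses line up.

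\textbf{Part (i).} This is exactly Theorem~\ref{th:pure-upper_passive}. The protocol has three steps: apply the passive random purification channel $\Lambda^{(N)}$; run the adaptive general-dyne pure-state learner of Theorem~\ref{th:pure-upper} on the resulting $N$ copies of $(I_A\otimes U^{(P)}_O)\ket{\psi_\rho}_{AP}$; then output the partial trace over $P$. I would verify three points.

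\begin{enumerate}[label=(\alph*)]
\item The purification is Gaussian and pure, with $\|\Sigma_{\mathrm{pure}}\|_{\mathrm{op}}\le 4E$ by the preceding lemma. The energy promise needed by Theorem~\ref{th:pure-upper} therefore holds with $E$ replaced by $4E$, and this changes the copy count only by constants inside the $\log\log$ terms.
\item The randomness of $O$ is harmless. Conditioned on $O$, the input is an i.i.d. product of a fixed pure Gaussian state, so the success guarantee of Theorem~\ref{th:pure-upper} holds for every $O$, and hence also on average over $O$.
\item Monotonicity of trace distance under $\Tr_P$ transfers the $\varepsilon$ guarantee from $\ket{\Psi}_{AP}$ to $\rho_A$.
\end{enumerate}

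This gives $N=O\!\left(n^2/\varepsilon^2+(n+\log\log\log E)\log\log E\right)=\widetilde O(n^2/\varepsilon^2)$.

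\textbf{Part (ii).} I would invoke Theorem~\ref{th:gaussian-lower}. The ensemble $\Sigma_a=\frac12 I_{2n}+\frac{\varepsilon}{2n}(U_aU_a^T)^{\oplus 2}$ used there is passive and has $\|\Sigma_a\|_{\mathrm{op}}=O(1)$, so it satisfies the promise $\|\Sigma\|_{\mathrm{op}}\le E$ for any $E$ at least a suitable constant. For smaller $E$ I would rescale the thermal part accordingly, or simply note that the bound is claimed in the regime $E=\Omega(1)$. The argument then needs three ingredients:
\begin{itemize}
\item the pairwise trace distance is $\Omega(\varepsilon)$, which the proof of Theorem~\ref{th:gaussian-lower} establishes via photon counting;
\item the classical-simulation Lemma~\ref{le:positive-POVM};
\item Fano's inequality combined with the KL bound $\mathrm{KL}(p_a\|p_b)=O(\varepsilon^2/n)$.
\end{itemize}
Together these give $N=\Omega(n^3/\varepsilon^2)$.

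The one point that needs care is what ``Gaussian operations'' means. It should cover adaptive, ancilla-assisted protocols built from Gaussian unitaries, Gaussian ancillas and Gaussian measurements. I would argue that any such protocol, composed end to end, is a collective POVM on $\rho^{\otimes N}$ whose elements have nonnegative Wigner functions. The reason is that Gaussian unitaries act as symplectic changes of phase-space variables, Gaussian ancillas have Gaussian Wigner functions, and each general-dyne POVM element is a displaced Gaussian state. Adaptivity is covered because each step conditions on earlier outcomes and so only multiplies nonnegative kernels. Lemma~\ref{le:positive-POVM} then applies, as remarked after Theorem~\ref{th:gaussian-lower}.

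Combining (i) and (ii) gives the separation $\widetilde O(n^2/\varepsilon^2)$ versus $\Omega(n^3/\varepsilon^2)$. The main obstacle is conceptual rather than computational: part (ii) needs the reduction from general adaptive Gaussian protocols to classical (nonnegative-Wigner) POVMs stated carefully. Part (i) needs the independence of the pure-state learner's guarantee from the random $O$, which is immediate once one conditions on $O$. Everything else is a direct citation.
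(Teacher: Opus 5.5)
Your proposal is correct and follows the paper's own argument. Part (i) is exactly Theorem~\ref{th:pure-upper_passive}: the passive random purification, then the pure-state learner of Theorem~\ref{th:pure-upper} under the energy promise $4E$, then the partial trace. Part (ii) is Theorem~\ref{th:gaussian-lower}, using the nonnegative-Wigner simulation of Lemma~\ref{le:positive-POVM}.

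Your extra checks are all sound and spell out points the paper leaves implicit:
\begin{itemize}
\item conditioning on $O$, so that the $O$-independent event $D_{\mathrm{tr}}(\hat\rho_A,\rho_A)\le\varepsilon$ inherits the $2/3$ guarantee;
\item citing the operator-norm lemma rather than the trace identity to bound the energy of the purification;
\item arguing that adaptive, ancilla-assisted Gaussian protocols compose to POVMs with nonnegative Wigner functions.
\end{itemize}
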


\section{On energy dependence and adaptivity}

\subsection{Lower Bounds for non-adaptive single-copy Gaussian measurements (1 mode)}\label{sec:non-ada-lower}

\begin{theorem}\label{th:non-ada-lower}
  For any algorithm that learns an unknown zero-mean single-mode Gaussian state $\rho(0,\Sigma)$ to trace distance precision $\varepsilon \le 0.04$ with probability at least $2/3$, given the promise that $\Tr \Sigma\le E$ for some $E>2$, using non-adaptive, single-copy Gaussian seed measurements, the number of copies must satisfy $N = \Omega\left({E/\varepsilon^{2}}\right)$.  
\end{theorem}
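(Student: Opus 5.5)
The plan follows the proof sketch in the main text: a continuous family of binary hypothesis tests averaged over a random squeezing angle, in the spirit of~\cite{chen2024adaptivity}. Set $a\coleq \Theta(E)$ so that $\Tr\Sigma\le E$ holds. For $\phi\in[0,\pi)$ define $\Sigma_{0,\phi}=\frac12R_\phi\mr{diag}(a,a^{-1})R_\phi^T$ and $\Sigma_{1,\phi}=\frac12R_\phi\mr{diag}(a,(1+2\varepsilon)a^{-1})R_\phi^T$. Write $u_\phi\coleq R_\phi e_2$ and $w_\phi\coleq R_\phi e_1$, so that $\Sigma_{1,\phi}-\Sigma_{0,\phi}=\varepsilon a^{-1}u_\phi u_\phi^T$.

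\textbf{Step 1 (packing and reduction to testing).} Undo the squeezing by the Gaussian unitary that maps $\Sigma_{0,\phi}$ to $\frac12 I$. The pair then becomes vacuum versus a thermal state with one quadrature variance $(1+2\varepsilon)/2$. A vacuum-projection (fidelity) computation as in Theorem~\ref{th:gaussian-lower} then gives $D_\mr{tr}(\rho_{0,\phi},\rho_{1,\phi})\ge c\varepsilon$ for an absolute constant $c>0$. Consequently a learner with precision $c\varepsilon/3$ and success probability $2/3$ distinguishes each pair with probability at least $2/3$, for every $\phi$, and in particular on average over a uniformly random $\phi$.

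\textbf{Step 2 (non-adaptivity).} A non-adaptive single-copy Gaussian scheme fixes seeds $V_1,\dots,V_N$ with $V_i+\frac i2\Omega\ge0$ in advance. Internal randomness over seed choices is a mixture, and the averaging argument below still applies after conditioning on the seeds. The data are then distributed as $\prod_i\mc N(0,\Sigma_{x,\phi}+V_i)$. From $\E_\phi \mr{TV}\ge 1/6$, Pinsker's inequality, KL additivity and Jensen's inequality, I will obtain
\[
\sum_{i=1}^N \E_\phi\, \mr{KL}\!\left(\mc N(0,\Sigma_{1,\phi}+V_i)\,\middle\|\,\mc N(0,\Sigma_{0,\phi}+V_i)\right)=\Omega(1).
\]
The theorem then follows if every valid seed $V$ satisfies $\E_\phi \mr{KL}=O(\varepsilon^2/a)$. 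Since the perturbation is rank one, the exact Gaussian KL formula gives $\mr{KL}\le O\big((\varepsilon a^{-1}q_\phi)^2\big)$ with $q_\phi\coleq u_\phi^T(\Sigma_{0,\phi}+V)^{-1}u_\phi$. The reason is that $\varepsilon a^{-1}q_\phi\le 2\varepsilon$ is small, because $q_\phi\le u_\phi^T\Sigma_{0,\phi}^{-1}u_\phi=2a$.

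\textbf{Step 3 (main obstacle: averaging over $\phi$).} I need $\E_\phi[a^{-2}q_\phi^2]=O(1/a)$ uniformly over all valid $V$. The crude bound $q_\phi\le\min(2a,\,u^TV^{-1}u)$ is too weak: it allows $q_\phi\approx 2a$ for most $\phi$. I will instead write $q_\phi$ exactly as the inverse Schur complement in the $(u_\phi,w_\phi)$ basis:
\[
q_\phi=\Big(\tfrac1{2a}+m_\phi\Big)^{-1},\qquad m_\phi\coleq v_{uu}-\frac{v_{uw}^2}{a/2+v_{ww}}\ \ge\ \frac{\det V}{a/2+v_{ww}}+\ldots,
\]
where $v_{uu},v_{uw},v_{ww}$ are the entries of $V$ in that basis. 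The key is to lower bound $m_\phi$ using $\det V\ge 1/4$. Diagonalize $V$ with eigenvalues $s$ and $1/(4s)$ or larger (the worst case is pure), and let $t$ be the angle offset between $u_\phi$ and the squeezed axis of $V$. This gives roughly $m_\phi\gtrsim \min\big(s+\tfrac{t^2}{4s},\ \tfrac{1}{2a}\cdot\text{const}\big)$, where the second regime covers a very antisqueezed $v_{ww}$. Then
\[
\E_\phi a^{-2}q_\phi^2\lesssim\int\! dt\,\big(1+2as+\tfrac{at^2}{2s}\big)^{-2}\approx\sqrt{s/a}\,(1+as)^{-3/2}=O(1/a),
\]
with the maximum attained at $s\sim1/a$. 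This step is where I expect the real work: handling the Schur-complement correction when $v_{ww}$ is huge, and non-pure $V$ (which I will reduce to pure seeds via monotonicity $V\ge\frac12SS^T$). Combining the steps gives $N\cdot O(\varepsilon^2/a)\ge\Omega(1)$, i.e.\ $N=\Omega(E/\varepsilon^2)$.
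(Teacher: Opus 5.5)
Your Steps 1 and 2 match the paper's proof. You use the same ensemble $\Sigma_{0,\phi},\Sigma_{1,\phi}$, the same Le Cam game with $\phi$ revealed only after measuring, and then Jensen, Pinsker and additivity over copies. Two of your choices differ slightly and are both fine: you bound the rank-one Gaussian KL directly by $(\varepsilon a^{-1}q_\phi)^2/4$ instead of passing to $\chi^2$, and you reduce mixed seeds to pure ones via monotonicity of $q_\phi$ in $V$.

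The gap is in Step 3, which carries all the content, and the pointwise bound you sketch there is false. Your own Schur complement gives exactly $m_\phi=\frac{\det V}{a/2+v_{ww}}+v_{uu}\,\frac{a/2}{a/2+v_{ww}}$. So the large piece $t^2/(4s)$ of $v_{uu}$ is multiplied by $\frac{a/2}{a/2+v_{ww}}\approx 2as$. When the seed is more squeezed than the state ($as\ll 1$), that piece is suppressed and what survives is $m_\phi\approx s+\frac a2 t^2$. For instance, $s=a^{-3}$ and $t=a^{-3/2}$ give $m_\phi\approx 1/(2a^2)$, whereas your bound $\min\bigl(s+\frac{t^2}{4s},\frac{\mathrm{const}}{2a}\bigr)$ equals $\frac{\mathrm{const}}{2a}$. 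Accordingly, $\mathbb{E}_\phi[a^{-2}q_\phi^2]$ does not decay like $\sqrt{s/a}$ as $s\to 0$: near-ideal homodyne gives $\Theta(1/a)$, just as $s\sim 1/a$ does. The target $O(1/a)$ is still true, but your sketch does not prove it in this regime.

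There are two ways to close the gap.

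\emph{Keeping your route.} Take a pure seed with eigenvalues $s\le\frac{1}{4s}$ at angle $t$ from $u_\phi$. Multiply numerator and denominator of $m_\phi$ by $4s$, and use $4s^2\le 1$ in the denominator, to get
\[
1+2a\,m_\phi\;\ge\;(1+2as)+\frac{a^2(1-4s^2)\sin^2 t}{1+2as}.
\]
Set $\beta=1+2as$ and $B=a^2(1-4s^2)$, so that $\beta^2+B=1+4as+a^2\ge a^2$. Since $a^{-2}q_\phi^2=4(1+2am_\phi)^{-2}$ and
\[
\frac{1}{\pi}\int_0^\pi\frac{\beta^2\,dt}{(\beta^2+B\sin^2 t)^2}\le\frac{1}{\beta\sqrt{\beta^2+B}},
\]
this yields $\mathbb{E}_\phi[a^{-2}q_\phi^2]\le 4/a$ uniformly in $s$.

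\emph{The paper's route.} This is shorter and avoids any pointwise analysis. Since $q_\phi\le 2a$, it linearizes $a^{-2}q_\phi^2\le 2q_\phi/a$. It then computes the first moment in closed form, $\mathbb{E}_\phi q_\phi=\frac{2a}{1+a}$, which is independent of the seed squeezing. In the paper's notation this is $\mathbb{E}_\phi(A^{-1})_{22}=\frac{a}{1+a}$ with $(A^{-1})_{22}=q_\phi/2$.

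Either route supplies the missing uniform bound.
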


\noindent Combined with the adaptive learning protocol proposed in~\cite{bittel2025energy} which achieves an upper bound of $N=O\left(\log\log E\,(\log\log\log E + 1/\varepsilon^{2})\right)$, we see that adaptivity provides a double-exponential advantage in the dependence on the energy of the unknown state.
The best known non-adaptive upper bound is $O(E^2/\varepsilon^{2})$ using heterodyne measurements~\cite{bittel2025energy}, which we has proven to be tight for heterodyne. In Sec.~\ref{sec:non-ada-upper}, we show that randomized homodyne measurements can reach the $\Omega(E/\varepsilon^2)$ non-adaptive lower bound up to logarithmic factors.

We also note that the lower bound only assume a Gaussian seed measurement, which encompass any measurement that can be realized by introducing arbitrarily many ancillary gaussian states and apply a joint Gaussian measurement. The only thing it does not include is ancilla-assisted adaptive Gaussian measurement, which is an open problem to be addressed.

\begin{proof}
  The intuition is that, to learn a highly squeezed Gaussian state to small trace-distance error, one needs to estimate the small-variance (i.e., squeezed) direction with high precision. This requires using a Gaussian measurement whose seed state is also highly squeezed in the same direction. However, if the highly-squeezed direction is uniformly random, then any non-adaptive Gaussian measurement protocol cannot align to that direction with high probability, thus leading to a large energy-dependent sample complexity. To make this intuition rigorous, we use Le Cam's method. Our construction is in particular inspired by~\cite{chen2024adaptivity} which shows adaptivity can provide exponential advantage in shadow estimation. 

  Let $\varepsilon\le 1/3$. Consider the following task: A referee first samples a random angle $\phi\sim\mathrm{Unif}[0,\pi)$. Then, he chooses one of the following two covariance matrices with equal probability:
  \bb
  \Sigma_{0,\phi} = R_\phi \begin{pmatrix}
      \frac12 a & 0\\0 & \frac12 a^{-1}
  \end{pmatrix} R_\phi^T,\quad\text{or}\quad 
  \Sigma_{1,\phi} = R_\phi \begin{pmatrix}
    \frac12 a & 0\\0 & (\frac12+\varepsilon) a^{-1}
  \end{pmatrix}R_\phi^T,
  \ee
  Here, $a>1$. $R_\phi$ is the rotation matrix by angle $\phi$. Then, a player is given $N$ copies of the Gaussian state $\rho_{x,\phi}\coleq\rho(0,\Sigma_{x,\phi})$ for $x\in\{0,1\}$ reflecting the referee's choice. The player can then run any non-adaptive, single-copy Gaussian measurement protocol. After the player has completed all measurements, the referee reveals the value of $\phi$. The player is then challenged to guess the value of $x$. 

  We first show that, a protocol that satisfies the assumption of Theorem~\ref{th:non-ada-lower} can be used to solve the above task with average success probability at least $2/3$. Note that, for any $\phi$, the trace distance between the two states satisfies
  \bb
  D_{\tr}(\rho_{1,\phi},\rho_{0,\phi}) &\geqt{(i)} 1 - F(\rho_{1,\phi},\rho_{0,\phi})\\
  &\eqt{(ii)} 1 - \frac{1}{\sqrt{\det\left(\Sigma_{0,\phi}+\Sigma_{1,\phi}\right)}}\\
  &= 1 - (1+\varepsilon)^{-\frac12}\\
  &\geqt{(iii)} \frac{\varepsilon}3. 
  \ee
  Here, (i) is due to Fuchs-van de Graaf inequality and the fact that $\rho_{0,\phi}$ is pure; (ii) uses the formula of fidelity between two Gaussian states~\cite{scutaru1998fidelity,banchi2015quantum} in the special case that one state is pure and both states have the same mean; (iii) holds thanks to the assumption that $\varepsilon \le 1/3$. Thus, one can just run the learning protocol with trace distance error $\varepsilon'\le\varepsilon/7$. Then, after the referee reveals $\phi$, the player can perfectly distinguish the two cases, conditioned on the learning being successful. Thus, the average success probability is at least $2/3$.

  \medskip

  Now, for any non-adaptive, single-copy Gaussian measurement protocol on $N$ samples, we can denote the outcome probability distribution conditioned on $\phi\in[0,\pi)$ and $x\in\{0,1\}$ by $P_{x,\phi}\coleq p_{x,\phi}^{(1)}p_{x,\phi}^{(2)}\cdots p_{x,\phi}^{(N)}.$\footnote{Allowing probabilistic mixture of such measurements will not increase TV, as can be seen via the joint convexity.} The average success probability can then be upper bounded by
  \bb
  p_{\mathrm{succ}} &\le \frac12 + \frac12 \E_\phi\mr{TV}(P_{1,\phi}, P_{0,\phi}).
  \ee
  Therefore, one must have $\E_\phi\mr{TV}(P_{1,\phi}, P_{0,\phi})\ge 1/3$ to possibly have $p_{\mathrm{succ}}\ge 2/3$. Thus,
  \bb
  \frac29 &\le 2\left(\E_\phi \mr{TV}(P_{1,\phi}, P_{0,\phi})\right)^2 
  \\&\leqt{(i)} 2 \E_\phi \left(\mr{TV}(P_{1,\phi}, P_{0,\phi})\right)^2
  \\&\leqt{(ii)} \E_\phi \mr{KL}(P_{1,\phi}\|P_{0,\phi})
  \\&\eqt{(iii)} \sum_{k=1}^N \E_\phi\mr{KL}(p_{1,\phi}^{(k)}\|p_{0,\phi}^{(k)})
  \\&\leqt{(iv)} \sum_{k=1}^N \E_\phi\chi^2(p_{1,\phi}^{(k)}\|p_{0,\phi}^{(k)})
  \\&\leqt{(v)} N \sup_{V:\,V+\frac12 i\Omega\ge 0} \E_\phi\chi^2\left(\mc N(0,\Sigma_{1,\phi}+V)\|\mc N(0,\Sigma_{0,\phi}+V)\right).
  \ee
  Here, (i) is Jensen; (ii) is Pinsker; (iii) uses the chain rule of KL divergence and the non-adaptivity assumption; (iv) uses the fact that $\mr{KL}(P\|Q)\le \chi^2(P\|Q)$; (v) explicitly writes the outcome probability distribution $p_{x,\phi}^{(k)}$ using a Gaussian seed covariance matrix $V$ and maximizes over it.

  Using the formula of $\chi^2$-divergence between two Gaussian distributions, we have
  \bb
    &\E_\phi\chi^2\left(\mc N(0,\Sigma_{1,\phi}+V)\|\mc N(0,\Sigma_{0,\phi}+V)\right)
    \\\eqt{(i)}\,& \E_\phi \frac{\det(\Sigma_{0,\phi}+V)}{\sqrt{\det(\Sigma_{1,\phi}+V)\det(2\Sigma_{0,\phi}-\Sigma_{1,\phi}+V)}} -1 
    \\\eqt{(ii)}\,& \E_\phi \frac{\det\left(\begin{pmatrix}
      a&0\\0&a^{-1}
    \end{pmatrix}
      +2R_\phi^T V R_\phi\right)}{\sqrt{\det\left(\begin{pmatrix}
        a&0\\0&(1+2\varepsilon)a^{-1}
      \end{pmatrix}+2R_\phi^T V R_\phi\right)\det\left(2\begin{pmatrix}
        a&0\\0&(1-2\varepsilon)a^{-1}
      \end{pmatrix}+2R_\phi^T V R_\phi\right)}} -1 
    \\\eqt{(iii)}\,& \E_\phi \frac{\det\left(A\right)}{\sqrt{\det\left(A + 2\varepsilon a^{-1} \bm e_2\bm e_2^T\right)\det\left(A - 2\varepsilon a^{-1} \bm e_2\bm e_2^T\right)}} -1 
  \ee
  Here, for (i) to holds, it requires $2\Sigma_{0,\phi}-\Sigma_{1,\phi}+V\ge 0$, which can be easily verified; (ii) uses the rotational invariance of determinant; (iii) defines $$A\coleq \begin{pmatrix}
    a & 0\\0 & a^{-1}
  \end{pmatrix} + 2R_\phi^T V R_\phi$$ and $\bm e_2 = [0,1]^T$.
  Now use the matrix determinant lemma: $\det(A+uu^T)=(1+u^TA^{-1}u)\det(A)$,
  \bb
  &\E_\phi\chi^2\left(\mc N(0,\Sigma_{1,\phi}+V)\|\mc N(0,\Sigma_{0,\phi}+V)\right)
  \\=\,& \E_\phi \frac{1}{\sqrt{\left(1 + 2\varepsilon a^{-1} \bm e_2^T A^{-1} \bm e_2\right)\left(1 - 2\varepsilon a^{-1} \bm e_2^T A^{-1} \bm e_2\right)}} -1
  \\\eqt{(i)}\,&\E_\phi \left({{1 - 4\varepsilon^2 a^{-2} (A^{-1})^2_{22}}}\right)^{-\frac12} -1
  \\\leqt{(ii)}\,& \E_\phi 4\varepsilon^2 a^{-2} (A^{-1})^2_{22}.
  \ee
  Here (i) introduce the notation $(A^{-1})_{22}\coleq \bm e_2^T A^{-1} \bm e_2$; (ii) uses the inequality $(1-x)^{-\frac12}-1\le x$ for $x\in[0,4/9]$ and the fact that $\varepsilon\le 1/3$ and $(A^{-1})_{22}\le a$. The latter can be seen from
  \bb
    A \ge \begin{pmatrix}
      a & 0\\0 & a^{-1}
    \end{pmatrix} \implies A^{-1} \le \begin{pmatrix}
      a^{-1} & 0\\0 & a
    \end{pmatrix} \implies (A^{-1})_{22}\le a.  
  \ee
  The last step is to upper bound $\E_\phi (A^{-1})^2_{22}$. Without loss of generality, one can assume $V$ is diagonal as the rotation part can be absorbed into $R_\phi$. One can also assume $V$ is pure, as any measurement with mixed seed can be simulated by a measurement with pure seed followed by classical post-processing, which can only decrease the $\chi^2$-divergence by its data-processing inequality. Thus, we can write $$
    V = \begin{pmatrix}
      \frac12b & 0 \\ 0  & \frac12b^{-1}
    \end{pmatrix}, \quad \text{for some}~b>0.
  $$
  So we have
  \bb
  \E_\phi (A^{-1})^2_{22}&\leqt{(i)} a \E_\phi (A^{-1})_{22}
  \\&= a\,\frac{1}{\pi}\int_0^\pi\mr{d}\phi~ \frac{ab(a+b\cos^2\phi+b^{-1}\sin^2\phi)}{(a+b)^2+(a^2-1)(b^2-1)\sin^2\phi}
  \\&= \frac{a^2}{1+a}.
  \ee
  Here, (i) uses $(A^{-1})_{22}\le a$ again (We note that this relaxation is only loose by a factor of $1/2$ asymptotically as $b\rightarrow\infty$ for a fixed $a$).
  The other steps are just direct calculation. Putting everything together, we have
  \bb
  \frac29 &\le  N \cdot 4\varepsilon^2 a^{-2} \cdot \frac{a^2}{1+a} = N\cdot\frac{4\varepsilon^2}{1+a}.
  \ee
  That is
  \begin{equation}
    N \ge \frac29 \cdot \frac{1+a}{4\varepsilon^2}.
  \end{equation}
  Finally, by setting $a = E$, $\Tr \Sigma_{x,\phi} \le E$ for any $E>2$ and $x\in\{0,1\}$. We hence conclude that to learn an unknown Gaussian state with the promise $\Tr\Sigma\le E$ to to trace distance precision $\varepsilon' = \varepsilon/7 \le 1/21$ with probability at least 2/3, the necessary number of samples is
  \bb
    N \ge \frac29 \cdot \frac{1+E/2}{4(7\varepsilon')^2} = \Omega\left(E\varepsilon'^{-2}\right).
  \ee
  By renaming $\varepsilon'$ back to $\varepsilon$, we finish the proof.
\end{proof}

\subsection{Nearly-optimal non-adaptive Gaussian measurement protocol (1 mode)}\label{sec:non-ada-upper}

\begin{theorem}\label{th:non-ada-upper}
  There exists a non-adaptive, single-copy Gaussian measurement protocol that learns any single-mode Gaussian state $\rho(\mu,\Sigma)$ to trace distance precision $\varepsilon$ with probability at least $2/3$, given the promise that the condition number of $\Sigma$ satisfies $\kappa(\Sigma)\coleq \lambda_{\max}(\Sigma)/\lambda_{\min}(\Sigma) \le E^2$ for some $E>2$, using $N = O\left({E\log E/\varepsilon^{2}}\right)$ copies and running in $\mr{poly}(E, \varepsilon^{-2})$ time.
\end{theorem}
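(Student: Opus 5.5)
The plan follows the protocol sketched in the main text. We draw $K=\Theta(E)$ i.i.d.\ angles $\theta_1,\dots,\theta_K\sim\mathrm{Unif}[0,\pi)$. For each angle we perform (approximately ideal) homodyne measurements of the rotated quadrature $\hat x_\theta=\cos\theta\,\hat x+\sin\theta\,\hat p$ on $m=\Theta(\log E/\varepsilon^2)$ copies. We also spend $\Theta(1/\varepsilon^2)$ copies on heterodyne detection. A homodyne measurement along $\theta$ produces samples from $\mc N(u_\theta^T\mu,\,u_\theta^T\Sigma u_\theta)$ with $u_\theta=(\cos\theta,\sin\theta)^T$, so each angle yields one-dimensional empirical estimates of $u_\theta^T\mu$ and $\sigma^2(\theta)\coleq u_\theta^T\Sigma u_\theta$ with relative accuracy $O(\varepsilon/\sqrt{\log E})$. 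A union bound over the $K=O(E)$ angles, together with the $\log E$ factor in $m$, makes every estimate accurate with probability $\ge 5/6$. The total copy count is $Km=O(E\log E/\varepsilon^2)$. The finite-squeezing seed that approximates ideal homodyne adds a known variance $\tfrac12 b^{-1}$ in direction $\theta$. We take $b=\mr{poly}(E/\varepsilon)$ and subtract this variance, so its effect on the relative error of $\sigma^2(\theta)$ is negligible because $\sigma^2(\theta)\ge\lambda_{\min}\ge 1/(4E)$.

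Write $\Sigma=R_\phi\,\mr{diag}(\lambda_+,\lambda_-)R_\phi^T$, so that $\sigma^2(\theta)=\lambda_-+(\lambda_+-\lambda_-)\sin^2(\theta-\phi-\tfrac\pi2)$.

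\textbf{Case 1: well-conditioned state.} The ratio $\max_k\hat\sigma^2(\theta_k)/\min_k\hat\sigma^2(\theta_k)$ estimates $\kappa$ up to constants, because some $\theta_k$ lands within $O(1/E)$ of each principal direction. If this ratio is $O(1)$, we are in the bounded-$\Tr\Sigma^{-1}$ regime. The heterodyne estimator of \cite[Theorem~10]{bittel2025energy}, run on $O(1/\varepsilon^2)$ copies and transported through Lemma~\ref{le:perturbation}, then already suffices.

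\textbf{Case 2: ill-conditioned state.} Otherwise we proceed in three steps. First, the angle minimizing $\hat\sigma^2$ estimates the squeezed direction. Second, we fit the sinusoid above to the data from the angles nearest to that minimizer. Third, we post-select the $\Theta(1)$ angles whose offset $\delta_k=\theta_k-\phi-\tfrac\pi2$ satisfies $|\delta_k|\lesssim\sqrt{\lambda_-/\lambda_+}\le 1/\sqrt\kappa$, so that $\sigma^2(\theta_k)=\Theta(\lambda_-)$ at these angles. The offset window has width about $\kappa^{-1/2}\ge 1/E$, so $K=\Theta(E)$ random angles guarantee that several such angles exist; this is exactly where the factor $E$ enters. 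In addition we use a few angles with offsets of order $\sqrt{\lambda_-/\lambda_+}$, which resolve the off-diagonal term, and angles near $\phi$, which estimate $\lambda_+$. At these angles the quantities $\lambda_-$, $\lambda_+$, $\phi$ and the means $\mu$ can be solved from a constant-size linear system in the rotated frame. The system is well-conditioned when each quantity is measured in units of its own standard deviation. The resulting estimate $\hat\Sigma$ should satisfy $\|\Sigma^{-1/2}\hat\Sigma\Sigma^{-1/2}-I\|_\mr{F}=O(\varepsilon)$ and $\|\Sigma^{-1/2}(\hat\mu-\mu)\|_2=O(\varepsilon)$.

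\textbf{Conversion to trace distance.} For one mode, Lemma~\ref{le:at-most-sqrtn} (applied via Lemma~\ref{le:perturbation}) turns these bounds into $D_\mr{tr}=O(\varepsilon)$, since its $\sqrt{n}$ factor is a constant here. The running time is clearly polynomial.

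\textbf{Main obstacle.} The delicate part is the error propagation in Case 2. The angle $\phi$ must be estimated to accuracy $O(\varepsilon\sqrt{\lambda_-/\lambda_+})$, because a misalignment $\delta$ leaks variance $\lambda_+\delta^2$ into the squeezed direction. The estimated $\hat\Sigma$ must also be valid and close in the relative (Mahalanobis) metric in every direction at once. The plan is to parametrize $\Sigma$ in the frame $R_{\hat\phi}$ by the entries $(\Sigma_{11},\Sigma_{12},\Sigma_{22})$. Each entry is then read off from homodyne angles at offsets $0$, $\pm c\sqrt{\lambda_-/\lambda_+}$, and $\pi/2$. This reduces the error of each entry to $O(\varepsilon)$ relative to $\sqrt{\Sigma_{ii}\Sigma_{jj}}$, which is precisely the Frobenius-relative guarantee needed. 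If the clean version of this step fails, a fallback is to spend the extra $O(1/\varepsilon^2)$ heterodyne copies on refining the long axis $\lambda_+$, whose relative precision heterodyne detection achieves cheaply.
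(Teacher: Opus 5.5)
Your protocol has the same architecture as the paper's Algorithm~\ref{alg:non-ada}:
\begin{itemize}
\item $\Theta(E)$ random homodyne angles with $\Theta(\log E/\varepsilon^2)$ shots each;
\item a heterodyne fallback when the empirical condition number is $O(1)$;
\item a rough squeezed axis from $\phi_{\min}=\arg\min_k\hat\sigma^2(\theta_k)$;
\item sampled angles at offsets $\Theta(\kappa^{-1/2})$ on both sides of it.
\end{itemize}
Your covariance estimator in the ill-conditioned branch, however, is genuinely different, and it is correct.

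The paper fits the nonlinear model $(\theta,a,b)$. It inverts the ratio map $f$ to get $\hat\theta$ to accuracy $O(\varepsilon\kappa^{-1/2})$, which needs the lower bound on $|f'|$. It then recovers $a-b$ from the curvature $\hat\Delta$ and $b$ from $\hat\Sigma_{\min}$, and finally proves an operator sandwich around $R_{\hat\theta}\mathrm{diag}(\hat b,\hat a)R_{\hat\theta}^T$.

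You instead take the entries $(s_{11},s_{12},s_{22})$ of $\Sigma$ in the frame $R_{\phi_{\min}}$ as unknowns. Each homodyne variance is linear in them, with exactly known coefficients $(\cos^2t_k,\,2\sin t_k\cos t_k,\,\sin^2t_k)$ where $t_k=\theta_k-\phi_{\min}$. This works for two reasons:
\begin{itemize}
\item After rescaling the unknowns by $(\lambda_-,\sqrt{\lambda_-\lambda_+},\lambda_+)$, the rows at $t=0$ and $t\approx\pm[3,4]\kappa^{-1/2}$ form a uniformly well-conditioned $3\times3$ system. Its right-hand sides are $O(1)$ and known to $O(\varepsilon)$, so $|\hat s_{ij}-s_{ij}|=O(\varepsilon\sqrt{s_{ii}s_{jj}})$.
\item Since $\phi_{\min}$ lies within $0.16\kappa^{-1/2}$ of the squeezed axis, $|s_{12}|/\sqrt{s_{11}s_{22}}\le 0.16$. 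This upgrades the entrywise bound to $\|\Sigma^{-1/2}\hat\Sigma\Sigma^{-1/2}-I\|_\mathrm{F}=O(\varepsilon)$.
\end{itemize}
So the $O(\varepsilon\kappa^{-1/2})$ angle accuracy you call the main obstacle comes for free, and no fallback is needed. Angles near the long axis matter only for the mean, where any sampled angle well separated from $\phi_{\min}$ suffices.

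Your route is more elementary: one linear inversion replaces the $f^{-1}$ and $\hat\Delta$ error propagation, and it extends naturally to least squares over many angles. The paper's route instead gives an explicitly diagonalized, positive-definite estimate, plus the $\hat\theta$ it reuses to pick the mean angles $\varphi_1,\varphi_2$.

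One step you flagged but did not supply is physicality of the output. A relative $O(\varepsilon)$ error does not ensure $\det\hat\Sigma\ge 1/4$; it fails, for example, for pure $\rho$ with $\hat\Sigma=(1-\varepsilon)\Sigma$. Also, Lemma~\ref{le:at-most-sqrtn} is stated only for states. Do what the paper does with $\check\Sigma$ in Eq.~\eqref{eq:solve-sigma}. The Frobenius bound gives $(1-\eta)\Sigma\le\hat\Sigma\le(1+\eta)\Sigma$ with $\eta=O(\varepsilon)$. Then $\check\Sigma=\hat\Sigma/(1-\eta)\ge\Sigma$ is a valid covariance matrix that is still $O(\varepsilon)$-close in the relative sense.

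Finally, convert to trace distance through Lemma~\ref{le:at-most-sqrtn} or through this one-sided sandwich. Do not plug an unsigned difference $\hat\Sigma-\Sigma$ into Lemma~\ref{le:perturbation} directly. An off-diagonal error of size $\varepsilon\sqrt{\lambda_+\lambda_-}$, which is exactly what an angle error of $\varepsilon\kappa^{-1/2}$ produces, makes $\Tr(\Sigma^{-1}|\hat\Sigma-\Sigma|)$ of order $\varepsilon\sqrt\kappa$ rather than $O(\varepsilon)$.
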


Note that any single mode Gaussian states $\rho(\mu,\Sigma)$ that satisfies $\Tr\Sigma \le E_\mr{tot}$ also satisfies $\kappa(\Sigma)\le 4E_\mr{tot}^2$, thanks to the physical covariance matrix condition $\det{\Sigma}\ge 1/4$. More explicitly, 
\bb
  \frac{1}{4\lambda_{\min}^2}\le\kappa\equiv\frac{\lambda_{\max}}{\lambda_{\min}} \le 4\lambda_{\max}^2 \le 4E_\mr{tot}^2.
\ee
Thus, the above theorem provides a matching upper bound for Theorem~\ref{th:non-ada-lower} up to logarithmic factors, showing that $\widetilde\Theta(E/\varepsilon^2)$ copies of $\rho$ are necessary and sufficient for any non-adaptive, single-copy Gaussian measurement protocol to learn a single-mode Gaussian state with energy constraint $E$. 
Though homodyne measurement is a well-studied tool for reconstructing bosonic quantum states, to our knowledge, our algorithm and analysis is the first non-adaptive protocol that achieves nearly-linear-in-$E$ sample complexity for learning single-mode Gaussian states, which is provably more efficient than heterodyne protocol that needs $\Theta(E^2/\varepsilon^2)$ copies (The upper bound is shown in~\cite{bittel2025energy} and the lower bound is proven in Theorem~\ref{th:heterodyne-lower}). 

\begin{figure}[!htp]
  \begin{algorithm}[H]
  \begin{algorithmic}[1]
      \caption{Non-adaptive learning of single-mode Gaussian states}
      \label{alg:non-ada}
      \Require{$E>2$. $\varepsilon>0$. $N= 2KT + N_0$ i.i.d. copies of $\rho(\mu,\Sigma)$ such that $\kappa(\Sigma)\le E^2$. Here $K = \Theta(E)$, $T= \Theta(\log E/\varepsilon^2)$, $N_0 = \Theta(1/\varepsilon^2)$.}
      \Ensure{With probability at least $2/3$, output a Gaussian state $\hat\rho$ such that $D_\mr{tr}(\hat\rho,\rho)\le \varepsilon$.}
      \Procedure{Non-adaptive Data Collection}{}
      \State Sample $K$ angles $\{\phi_i\}_{i=1}^{K}$ i.i.d. uniformly from $[0,\pi)$.
      \For{$i=1\cdots K$}
        \State Conduct homodyne measurement along $\phi_i$ on $2T$ copies of $\rho$, obtaining outcomes $\{z_{i,j}\}_{j=1}^{2T}$.
      \EndFor
      \State Conduct heterodyne measurement on $N_0$ copies of $\rho$, obtaining outcomes $\{(x_i,p_i)\}_{i=1}^{N_0}$.
      \EndProcedure
      \Procedure{Classical data processing}{}
      \For{$i=1\cdots K$}
        \State $\hat\mu_{\phi_i} \coleq \frac{1}{T}\sum_{j=1}^T z_{i,j}$; \quad $\hat\Sigma_{\phi_i}\coleq \frac{1}{2T}\sum_{j=1}^T (z_{i,j}-z_{i,j+T})^2 - E^{-1}$.
      \EndFor
      \State $\hat{\Sigma}_{\min}\coleq \min_{i\in[K]} \hat\Sigma_{\phi_i}$;\quad $\hat{\Sigma}_{\max}\coleq \max_{i\in[K]} \hat\Sigma_{\phi_i}$; \quad $\hat\kappa\coleq \hat\Sigma_{\max}/\hat\Sigma_{\min}$.
      \If{$\hat\kappa < 24$}

      \Return $\hat\rho$ via the non-adaptive heterodyne algorithm from \cite{bittel2025energy} using $\{(x_i,p_i)\}_{i=1}^{N_0}$.
      \Else
      \State $\phi_{\min} \coleq \mr{argmin}_{\phi_i:\,i\in[K]} \hat\Sigma_{\phi_i}$. %
      \State From $\{\phi_i\}_{i=1}^K$ find a $\phi_- \in [\phi_{\min} - {4}{{\hat\kappa}^{-\frac12}}, \phi_{\min} - {3}{{\hat\kappa}^{-\frac12}}]_{\pi}$ and a $\phi_+ \in [\phi_{\min} + {3}{{\hat\kappa}^{-\frac12}}, \phi_{\min} + {4}{{\hat\kappa}^{-\frac12}}]_{\pi}$. Abort the algorithm if either such $\phi_\pm$ does not exist.
      \State Obtain $\check\Sigma$ from $\left((\phi_{\min},\hat\Sigma_{\min}),(\phi_+,\hat\Sigma_{\phi_+}),(\phi_-,\hat\Sigma_{\phi_-})\right)$ using estimators defined in Eq.~\eqref{eq:solve-sigma}.
      \State $\varphi_1 \coleq \mr{argmin}_{\phi_i:i\in[K]} |\phi_i - \hat\theta|_{\pi};\quad \varphi_2 \coleq \mr{argmin}_{\phi_i:i\in[K]} |\phi_i - \hat\theta-\frac\pi2|_\pi.$
      \State $\hat\mu \coleq \hat\mu_{\varphi_1}\begin{pmatrix}\cos\varphi_1\\\sin\varphi_1\end{pmatrix} 
      + \dfrac{\hat\mu_{\varphi_2} - \hat\mu_{\varphi_1}\cos(\varphi_2-\varphi_1)}{\sin(\varphi_2-\varphi_1)}\begin{pmatrix}-\sin\varphi_1\\\cos\varphi_1\end{pmatrix}$. 

      \Return{$\hat\rho\coleq\hat\rho(\hat\mu,\check\Sigma)$.}
      \EndIf
      \EndProcedure
  \end{algorithmic}
  \end{algorithm}
  \end{figure}

\begin{figure}[!htp]
    \centering
    \includegraphics[width=0.98\linewidth]{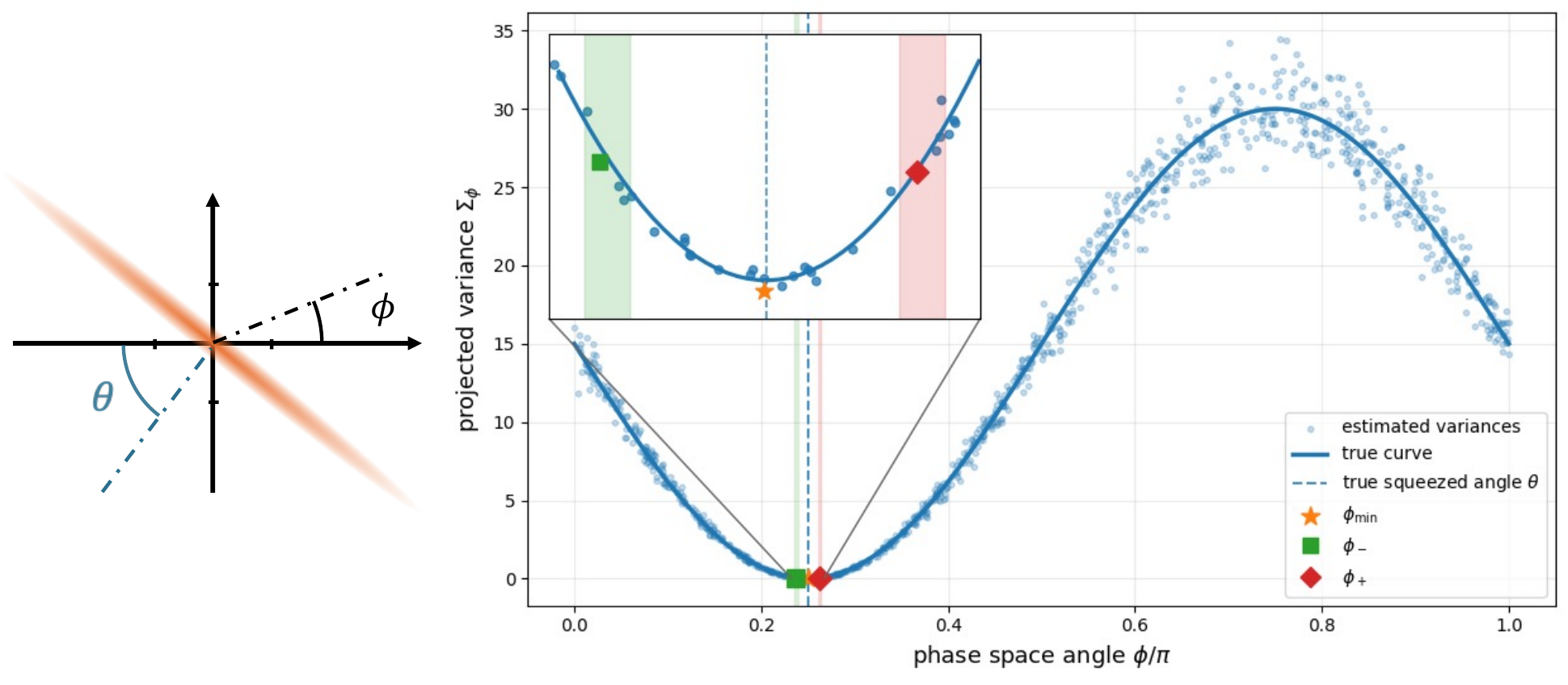}
    \caption{Example of Algorithm~\ref{alg:non-ada} on learning a one-mode zero-mean Gaussian state. The state is shown on the left hand side with parameter $E=60$, $b=1/2E$, $a=E/2$, and $\theta=\pi/4$. Here $\phi$ represents the angle of homodyne measurements (with added variance $1/E$). On the right hand side, the solid line is the true quadrature-projected variance $\Sigma_\phi$ scanned over $\phi\in[0,\pi)$.
    We sample $K=1000$ uniformly random homodyne angles and measure each with $500$ shots to obtain the dots (using Lemma~\ref{le:homodyne-concen}). 
    The inset illustrates how Algorithm~\ref{alg:non-ada} will pick three sampled angles $\phi_\mr{min}$, $\phi_-$ and $\phi_+$, which have sufficiently small shot noise and are sufficiently separated, to solve for the model parameters.
    }
    \label{fig:non-ada-alg}
\end{figure}

  Our protocol is described in Algorithm~\ref{alg:non-ada}. At a high level, the protocol conduct homodyne measurements along a few random phase space directions on $\Theta(E\log E/\varepsilon^2)$ copies and heterodyne measurements on $\Theta(1/\varepsilon^2)$ copies. Based on the homodyne outcomes, the algorithm first roughly estimates $\kappa(\Sigma)$. Denote the estimator by $\hat\kappa$. If $\hat\kappa$ is smaller than a constant, one simply uses the heterodyne estimator from~\cite{bittel2025energy} which suffices to solve the learning tasks using $\Theta(1/\varepsilon^2)$ copies; Otherwise, one uses the randomized homodyne data to first roughly estimates the highly squeezed direction of $\Sigma$, and then solves for $\Sigma$ and $\mu$ using homodyne outcomes along a few carefully chosen directions. The overhead of $E$ in sample complexity is to ensure such ``good'' homodyne angles exists among the random angles sampled with high probability. 
  In Fig.~\ref{fig:non-ada-alg}, we provide an intuitive illustration about how Algorithm~\ref{alg:non-ada} works.

  \medskip

  Before proving the correctness of Algorithm~\ref{alg:non-ada}, let us introduce a few definitions and lemmas. For a general single-mode Gaussian state $\rho(\mu,\Sigma)$, let $a\coleq\lambda_{\max}$ and $b\coleq\lambda_{\min}$ for notational simplicity, the covariance matrix can be parametrized as
  \bb
    \Sigma = R_\theta \begin{pmatrix}
      b & 0\\ 0 & a
    \end{pmatrix} R_\theta^T,\quad\text{where}~R_\theta = \begin{pmatrix}
      \cos\theta & -\sin\theta\\
      \sin\theta & \cos\theta
    \end{pmatrix},\quad\theta\in[0,\pi).
  \ee
  For any $\phi\in[0,\pi)$, an ideal homodyne measurement along angle $\phi$ is a Gaussian measurement with seed $V_{\phi,r} = R_\phi \mr{Diag}[c^{-1},c] R_\phi^T$ with $c\to\infty$, and then project the measurement outcome along $\bm e_{\phi}$.  
  For our purpose, it suffices to consider non-ideal homodyne with, say, $c = E$. We make this choice of $c$ for all homodyne measurement used in the following.
  \begin{lemma}\label{le:homodyne-outcomes}
    Let $z\in\mbb R$ be the outcome of a homodyne measurement with squeezing parameter $c=E$ along $\phi$ on $\rho(\mu,\Sigma)$. Then, $z\sim\mc N(\mu_\phi,\Sigma_\phi+ E^{-1})$ where 
    \bb
      \mu_\phi &= \bm e_{\phi}^T\mu = \mu_x\cos\phi + \mu_p\sin\phi,\\ \Sigma_{\phi} &= \bm e_{\phi}^T\Sigma \bm e_{\phi} = b + (a-b)\sin^2(\theta-\phi).
    \ee
  \end{lemma}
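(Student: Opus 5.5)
The plan is to use the rule stated earlier for Gaussian measurements: applying a general-dyne measurement with seed covariance $V$ to $\rho(\mu,\Sigma)$ yields outcomes distributed as $\mc N(\mu,\Sigma+V)$. The seed here is $V=R_\phi\,\mr{Diag}[c^{-1},c]\,R_\phi^T$ with $c=E$. The full two-dimensional outcome $r$ is therefore distributed as $\mc N(\mu,\Sigma+V)$. The homodyne outcome is the linear projection $z=\bm e_\phi^T r$ with $\bm e_\phi=(\cos\phi,\sin\phi)^T$, and a linear projection of a Gaussian vector is again Gaussian. Hence $z\sim\mc N(\bm e_\phi^T\mu,\ \bm e_\phi^T(\Sigma+V)\bm e_\phi)$, and the proof reduces to computing these two scalars.

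For the mean, $\bm e_\phi^T\mu=\mu_x\cos\phi+\mu_p\sin\phi$ follows by expanding the inner product. For the seed contribution, note that $R_\phi^T\bm e_\phi=\bm e_1$, since $\bm e_\phi$ is the first column of $R_\phi$. This gives $\bm e_\phi^T V\bm e_\phi=\bm e_1^T\mr{Diag}[c^{-1},c]\bm e_1=c^{-1}=E^{-1}$, so the squeezed, low-noise quadrature of the seed is aligned with the measured direction. For the state contribution, write $\Sigma=R_\theta\,\mr{Diag}[b,a]\,R_\theta^T$ and use $R_\theta^T\bm e_\phi=(\cos(\phi-\theta),\sin(\phi-\theta))^T$. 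Then $\bm e_\phi^T\Sigma\bm e_\phi=b\cos^2(\phi-\theta)+a\sin^2(\phi-\theta)=b+(a-b)\sin^2(\theta-\phi)$, which is the claimed $\Sigma_\phi$.

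One point to state explicitly is that the seed is a valid Gaussian covariance matrix: $\det V=1/4$ when the diagonal is understood as $\tfrac12\mr{Diag}[c^{-1},c]$, so it is a pure squeezed seed. Any constant-factor convention difference only rescales the $E^{-1}$ noise term. No step poses a real obstacle. The only care needed is matching the convention for the seed's normalization and the orientation of $R_\phi$, so that the added variance along $\bm e_\phi$ comes out exactly as $E^{-1}$, consistent with the subtraction of $E^{-1}$ in Algorithm~\ref{alg:non-ada}.
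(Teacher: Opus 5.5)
Your proposal is correct and follows the paper's proof: you apply the general-dyne rule $\mathcal N(\mu,\Sigma+V)$ with seed $V=R_\phi\,\mathrm{Diag}[E^{-1},E]\,R_\phi^T$ and project onto $\bm e_\phi$, and your explicit rotation identities fill in the computation the paper leaves implicit. One correction to your side remark: no factor of $\tfrac12$ is needed, because $\mathrm{Diag}[E^{-1},E]$ already satisfies $V+\tfrac{i}{2}\Omega\ge 0$ (its determinant is $1\ge\tfrac14$), so it is a valid but mixed seed that adds exactly $E^{-1}$ to the variance along $\bm e_\phi$.
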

  \begin{proof}[Proof of Lemma~\ref{le:homodyne-outcomes}]
    The outcome of the Gaussian seed measurement is given by 
    \bb
    \begin{pmatrix}
      z_x \\ z_p
    \end{pmatrix} \sim \mc N\!\left(\begin{pmatrix}
      \mu_x \\ \mu_p
    \end{pmatrix},~ R_{\theta}\begin{pmatrix}
      b &0 \\ 0 & a
    \end{pmatrix}R_{\theta}^T +  R_{\phi}\begin{pmatrix}
      E^{-1} &0 \\ 0 & E
    \end{pmatrix}R_{\phi}^T\right).
    \ee
    Projecting this to $\bm e_{\phi}\coleq(\cos\phi,\sin\phi)^T$ then gives $z\sim\mc N(\mu_\phi,\Sigma_\phi + E^{-1})$ as claimed.
  \end{proof}
  \begin{lemma}\label{le:homodyne-concen}
    There exists an absolute constant $C>0$ such that the following holds: 
    Given any $\phi\in[0,\pi)$ and $\varepsilon,\delta\in(0,1)$, let $z_1,\cdots,z_{2T}$ be i.i.d. samples from the homodyne measurement along $\phi$ on $\rho(\mu,\Sigma)$. Define the estimators
    \begin{equation}
      \hat\mu_\phi\coleq \frac{1}{T}\sum_{j=1}^T z_j,\quad \hat\Sigma_\phi \coleq \frac{1}{2T}\sum_{j=1}^T (z_j - z_{j+T})^2 - E^{-1}.
    \end{equation}
    Then, by taking $T = \ceil{C\varepsilon^{-2}\log\delta^{-1}}$, it holds with probability at least $1-\delta$ that
    \bb
      |\hat\mu_{\phi} - \mu_\phi| \le \varepsilon\sqrt{\Sigma_{\phi}},\quad |\hat\Sigma_{\phi} - \Sigma_{\phi}| \le \varepsilon \Sigma_{\phi}.
    \ee
  \end{lemma}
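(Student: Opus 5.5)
By Lemma~\ref{le:homodyne-outcomes}, the samples $z_1,\dots,z_{2T}$ are i.i.d.\ draws from $\mc N(\mu_\phi,\sigma^2)$ with $\sigma^2\coleq\Sigma_\phi+E^{-1}$. So this is a one-dimensional Gaussian concentration statement, and the plan is to bound the mean and variance estimators separately, each with failure probability $\delta/2$, and then take a union bound.

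\emph{Mean.} The estimator satisfies $\hat\mu_\phi-\mu_\phi\sim\mc N(0,\sigma^2/T)$. The standard Gaussian tail bound gives $|\hat\mu_\phi-\mu_\phi|\le\sigma\sqrt{2\log(4/\delta)/T}$ with probability at least $1-\delta/2$.

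\emph{Variance.} Set $w_j\coleq(z_j-z_{j+T})/\sqrt{2\sigma^2}$ for $j=1,\dots,T$. These are i.i.d.\ standard normal, and they are independent of the first-half mean only up to dependence we do not need. Then $\hat\Sigma_\phi+E^{-1}=\frac{\sigma^2}{T}\sum_j w_j^2$, which is $\sigma^2$ times a $\chi^2_T/T$ variable. The Laurent--Massart inequality gives $|\chi^2_T/T-1|\le 2\sqrt{\log(4/\delta)/T}+2\log(4/\delta)/T$ with probability at least $1-\delta/2$. Since the shift $E^{-1}$ cancels exactly, $|\hat\Sigma_\phi-\Sigma_\phi|$ is at most $\sigma^2$ times this quantity.

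\emph{Converting $\sigma^2$ to $\Sigma_\phi$.} Both bounds are in terms of $\sigma^2=\Sigma_\phi+E^{-1}$, whereas the statement is relative to $\Sigma_\phi$. This is the only non-routine step. We need $E^{-1}\lesssim\Sigma_\phi$, and for this I would argue $\Sigma_\phi\ge b=\lambda_{\min}$. The physicality condition $\det\Sigma=ab\ge1/4$ gives $b\ge 1/(4a)$, and $\kappa=a/b\le E^2$ gives $b^2\ge a/E^2\cdot b\ldots$. More cleanly, $b^2=ab/\kappa\ge 1/(4E^2)$, so $b\ge 1/(2E)$. Hence $E^{-1}\le 2b\le2\Sigma_\phi$ and $\sigma^2\le3\Sigma_\phi$.

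Substituting this into both bounds, and choosing $T=\lceil C\varepsilon^{-2}\log\delta^{-1}\rceil$ with $C$ large enough to absorb the constants (including the $\log 4$ and the factor $3$), gives $|\hat\mu_\phi-\mu_\phi|\le\varepsilon\sqrt{\Sigma_\phi}$ and $|\hat\Sigma_\phi-\Sigma_\phi|\le\varepsilon\Sigma_\phi$ simultaneously. This uses that $\varepsilon<1$, so the $\log/T$ term is dominated by the square-root term. One caveat: the lemma as stated implicitly relies on the promise $\kappa(\Sigma)\le E^2$ from Theorem~\ref{th:non-ada-upper}, together with the physical condition $ab\ge 1/4$. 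I would state this dependence explicitly.
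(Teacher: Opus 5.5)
Your proposal is correct and follows the paper's proof. Gaussian and $\chi^2$ concentration bound the errors relative to $\Sigma_\phi+E^{-1}$, and a lower bound on $b=\lambda_{\min}$ turns these into errors relative to $\Sigma_\phi$; your $b\ge 1/(2E)$, from $\kappa\le E^2$ and $ab\ge 1/4$, is slightly sharper than the paper's $b\ge(4E)^{-1}$, and you are right that it relies on the promise of Theorem~\ref{th:non-ada-upper}. One minor caveat, shared with the paper: absorbing the additive $\log 4$ into $C\log\delta^{-1}$ only works for $\delta$ bounded away from $1$ (e.g.\ $\delta\le 1/4$), which is the only regime in which the lemma is used.
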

  \begin{proof}[Proof of Lemma~\ref{le:homodyne-concen}]
    Using standard concentration bounds for Gaussian and Chi-square distributions (see e.g. \cite[Appendix C]{ashtiani2020near}), it is easy to see that $T=\ceil{C\varepsilon^{-2}\log\delta^{-1}}$ suffices to ensure $|\hat\mu_{\phi} - \mu_\phi| \le \varepsilon\sqrt{\Sigma_{\phi}+E^{-1}}$ and $|\hat\Sigma_{\phi} - \Sigma_{\phi}| \le \varepsilon (\Sigma_{\phi}+E^{-1})$. The lemma follows by realizing that $\Sigma_\phi\ge b\ge (4E)^{-1}$ and then rescaling $\varepsilon$ by a constant.
  \end{proof}

  \medskip

  \begin{proof}[Proof of Theorem~\ref{th:non-ada-upper}]
    We analyze Algorithm~\ref{alg:non-ada}. For simplicity, we assume $K = C_1 E$, $T= C_2 \log E/\varepsilon^2$, $N_0 = C_3 /\varepsilon^2$ where $C_1,C_2,C_3$ are sufficiently large absolute constants. 
    We also without loss of generality assume $\varepsilon$ is upper bounded by a sufficiently small constant.
    We will not try to figure out the exact values of all these constants, though this can be done through a more dedicated analysis.
  
    \medskip

    \noindent\textbf{Part 1: Rough estimate of condition number $\kappa$.} We first analyze the estimators $\hat\Sigma_{\min}$, $\hat\Sigma_{\max}$, and $\hat\kappa$ defined in Step~8. 
    We claim that with high probability, there exist $i_1, i_2\in[K]$ such that 
    \bb
    |\phi_{i_1} - \theta|_{\pi}\le 0.1\kappa^{-1/2},\quad |\phi_{i_2} - \theta - \pi/2|_\pi\le 0.1\kappa^{-1/2}.
    \ee
    Here $|x|_\pi\coleq \min_{k\in\mbb Z}|x-k\pi|$ denotes the distance of $x$ to the nearest multiple of $\pi$. 
    This holds because the probability for a randomly sampled $\phi\sim\mr{Unif}[0,\pi)$ to satisfy $|\phi - \theta|_{\pi}\le 0.1\kappa^{-1/2}$ is $0.1\kappa^{-1/2}/\pi\ge \frac{0.1}{2\pi} E^{-1}$, and that we have sampled $K=C_1 E$ angles for sufficiently large $C_1$. 
    Combining this with the homodyne concentration results from Lemma~\ref{le:homodyne-concen} (assuming $\varepsilon\le 1/400$) and using a union bound (note that the factor of $\log E$ in $T$ ensures the estimates along all $K$ angles succeed simultaneously with high probability), we have with high probability that
    \bb\label{eq:sigma-bound-s1}
      \hat\Sigma_{\min} &\le \hat\Sigma_{\phi_{i_1}} \le (1+\varepsilon)\Sigma_{\phi_{i_1}}\le (1+\varepsilon)\left(b + (a-b)\sin^2(\phi_{i_1}-\theta)\right)\le 1.013 b,\\
      \hat\Sigma_{\max} &\ge \hat\Sigma_{\phi_{i_2}} \ge (1-\varepsilon)\Sigma_{\phi_{i_2}}\ge (1-\varepsilon)\left(a - (a-b)\cos^2(\phi_{i_2}-\theta)\right) \ge 0.99 a.\\
    \ee
    We have used $\sin^2x\le |x|_{\pi}^2$.
    We also have $\hat\Sigma_{\min}\ge(1-\varepsilon)b \ge 0.99b$ and $\hat\Sigma_{\max}\le(1+\varepsilon)a\le 1.01a$. We thus conclude that
    \bb
      |\hat\Sigma_{\min} - b| \le 0.02b,\quad |\hat\Sigma_{\max} - a| \le 0.02a \quad \Longrightarrow\quad |\hat\kappa-\kappa|=\left|\frac{\hat\Sigma_{\max}}{\hat\Sigma_{\min}}-\frac{a}{b}\right|\le0.05\kappa.
    \ee
    Now, if the branching condition in Step~10 (i.e. $\hat\kappa < 24$) holds true, then we must have $b^{-1}\le 2\kappa < 50$. The heterodyne protocol analyzed in \cite{bittel2025energy} then returns an estimate $\hat\rho$ such that $D_\mr{tr}(\hat\rho,\rho)\le \varepsilon$ with probability at least $0.99$ using $N_0 = \Theta(1/\varepsilon^2)$ copies, completing the proof for this case.

    \medskip
    \noindent\textbf{Part 2: Rough estimate of squeezing direction $\theta$.}
    Now consider the case when $\hat\kappa \ge 24$, which allows us to safely conclude $\kappa>20$ with high probability. We claim that $\phi_{\min}\coleq\mr{argmin}_{\phi_i}\hat\Sigma_{\phi_i}$ defined in Step~12 must satisfy
    \bb
      |\phi_{\min}-\theta|_\pi<0.16 \hat\kappa^{-1/2}.
    \ee
    Indeed, consider any sampled angle $\phi_i$ such that $d_i\coleq|\phi_i-\theta|_\pi\ge0.16\hat\kappa^{-1/2}$. Since $d_i\in[0,\pi/2]$ and $0.16\hat\kappa^{-1/2}\le0.16/\sqrt{24}<1/2$, monotonicity of $\sin^2x$ on $[0,\pi/2]$ gives
    \bb
      \sin^2d_i\ge\sin^2(0.16\hat\kappa^{-1/2})\ge0.9(0.16\hat\kappa^{-1/2})^2.
    \ee
    Consequently,
    \bb\label{eq:rough-angles-1}
      \hat\Sigma_{\phi_i}
      &\ge (1-\varepsilon)\Sigma_{\phi_i}\\
      &\ge (1-\varepsilon)\left(b+0.9(0.16)^2(a-b)\hat\kappa^{-1}\right)\\
      &> \frac{399}{400}\left(1+0.9(0.16)^2\frac{19}{21}\right)b
      >1.018b>1.013b\ge\hat\Sigma_{\min}.
    \ee
    Here we used $\varepsilon\le1/400$ and
    \bb
      \frac{a-b}{\hat\kappa b}=\frac{\kappa-1}{\hat\kappa}
      >\frac{(19/20)\kappa}{(21/20)\kappa}=\frac{19}{21},
    \ee
    which follows from $\kappa>20$ and $\hat\kappa\le1.05\kappa$. Therefore, such a $\phi_i$ cannot be $\phi_{\min}$, proving the claim by contradiction.

    We will also use the following consequence of Eq.~\eqref{eq:sigma-bound-s1} conditioned on the success event
    \bb
      \Sigma_{\phi_{\min}}
      \le \frac{\hat\Sigma_{\min}}{1-\varepsilon}
      \le \frac{1.013}{399/400}b<1.016b<1.02b.
    \ee

    \medskip
    \noindent\textbf{Part 3: Precise estimate of $\Sigma$.} Now that we have a rough estimate of $\theta$ via $\phi_{\min}$, we can post-select homodyne angles close to $\phi_{\min}$ to solve for $\Sigma$, as they have the smallest additive errors (see Lemma~\ref{le:homodyne-concen}). While there may exist more systematic methods to approach this problem such as generalized least squares~\cite{wainwright2019high}, here we adopt a more elementary approach.

    In Step~13, we seek to find two angles $\phi_+$ and $\phi_-$ from all sampled angles $\{\phi_i\}_{i=1}^K$ that satisfy
    \bb
      \phi_+ \in [\phi_{\min} + 3\hat\kappa^{-1/2}, \phi_{\min} + 4\hat\kappa^{-1/2}]_{\pi},\quad \phi_- \in [\phi_{\min} - 4\hat\kappa^{-1/2}, \phi_{\min} - 3\hat\kappa^{-1/2}]_{\pi}.
    \ee
    Here $[a,b]_{\pi}\coleq \bigcup_{k\in\mbb Z}[a+k\pi,b+k\pi]\cap[0,\pi)$, i.e., the interval wrapped around $[0,\pi)$. Similar to part 1, such $\phi_\pm$ exist with high probability thanks to our choice of $K$. 
    Also recall that $|\phi_{\min} - \theta|_\pi<0.16\hat\kappa^{-1/2}$.
    The corresponding homodyne estimators satisfy
    \bb
      \left|\hat\Sigma_{\phi_{\pm}} - \Sigma_{\phi_\pm}\right|&\le \varepsilon(b + (a-b)\sin^2(\phi_\pm-\theta)) \leqt{(i)} \varepsilon(b + 18 a \hat\kappa^{-1}) < 20\varepsilon b.\\
      \left|\hat\Sigma_{{\min}} - \Sigma_{\phi_{\min}}\right|&\le \varepsilon\Sigma_{\phi_{\min}} \le 1.02 \varepsilon b.
    \ee
    Here (i) uses $|\sin(\phi_\pm-\theta)|\le |\phi_\pm-\theta|_\pi \le |\phi_\pm-\phi_{\min}|_\pi+|\phi_{\min}-\theta|_\pi < 4.2\hat\kappa^{-1/2}$. The final inequality also uses $\hat\kappa\ge0.95\kappa$, so that $a\hat\kappa^{-1}\le(20/19)b$. Note that they indeed have very small additive errors, which are proportional to $\varepsilon b$.

    \medskip
    \noindent Define the following estimators that are used in Step~14:
    \bb\label{eq:solve-sigma}
      \hat\theta&\coleq f^{-1}\left(\frac{\hat\Sigma_{\phi_+} - \hat\Sigma_{\min}}{\hat\Sigma_{\phi_-} - \hat\Sigma_{\min}}\right),\quad
      \hat\Delta\coleq \frac{\hat\Sigma_{\phi_+}-\hat\Sigma_{\min}}{\sin^2(\phi_+-\hat\theta) - \sin^2(\phi_{\min}-\hat\theta)},\\
      \hat b &\coleq \hat\Sigma_{\min} - \hat\Delta \sin^2(\phi_{\min}-\hat\theta),\quad
      \hat a \coleq \hat b + \hat\Delta,\\
      \hat\Sigma&\coleq R_{\hat\theta}\begin{pmatrix}
        \hat b & 0\\ 0 & \hat a
      \end{pmatrix}R_{\hat\theta}^T,\\
      \check\Sigma&\coleq \frac{\hat\Sigma}{1-\eta},\quad\textrm{where~$\eta<1$~is to be fixed later.}
    \ee
    Here, 
    \bb
    f(\theta)&\coleq \frac{\Sigma_{\phi_+} - \Sigma_{\phi_{\min}}}{\Sigma_{\phi_-} - \Sigma_{\phi_{\min}}} = \frac{\sin^2(\phi_+ -\theta) - \sin^2(\phi_{\min} - \theta)}{\sin^2(\phi_- -\theta) - \sin^2(\phi_{\min} - \theta)}.
    \ee
    \begin{lemma}\label{le:estimators}
      There exists an absolute positive constant $c$ such that the following holds:
      \bb
        |\hat\theta-\theta|_\pi\le c\varepsilon\kappa^{-1/2},\quad |\hat a - a|\le c\varepsilon a, \quad |\hat b - b|\le c\varepsilon b.
      \ee
    \end{lemma}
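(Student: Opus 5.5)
We treat Lemma~\ref{le:estimators} as a perturbation analysis of the noiseless identity $\Sigma_\phi = b+\Delta\sin^2(\phi-\theta)$ with $\Delta\coleq a-b$, evaluated at the three angles $\phi_{\min},\phi_\pm$. The natural small parameter is $h\coleq\hat\kappa^{-1/2}$. Part~1 gives $\hat\kappa\in[0.95\kappa,1.05\kappa]$, so $h=\Theta(\kappa^{-1/2})$.

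We work on the success event of Parts~1--3. There, all three angles lie within $O(h)$ of $\theta$: specifically $|\phi_{\min}-\theta|_\pi<0.16h$ and $|\phi_\pm-\phi_{\min}|_\pi\in[3h,4h]$. The three measured values have additive errors $e_{\min},e_\pm$ with $|e|\le 20\varepsilon b$. We write $u_\bullet\coleq\phi_\bullet-\theta$; all these offsets are $O(h)$.

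\textbf{Step 1: control of $\hat\theta$.} Write $g(\vartheta)\coleq$ the function $f$ in Eq.~\eqref{eq:solve-sigma}, viewed as a function of the trial angle $\vartheta$. At the true angle, the numerator and denominator of $g(\theta)$ are $\Delta(\sin^2u_+-\sin^2u_{\min})$ and $\Delta(\sin^2u_--\sin^2u_{\min})$. Since $|u_\pm|\ge 2.8h$ and $|u_{\min}|\le0.16h$, both are of order $\Delta h^2\approx b$; more precisely they lie in $[c_1b,c_2b]$ for absolute constants $c_1,c_2$, using $\Delta h^2 = (\kappa-1)b/\hat\kappa\approx b$. Consequently the empirical ratio $r\coleq(\hat\Sigma_{\phi_+}-\hat\Sigma_{\min})/(\hat\Sigma_{\phi_-}-\hat\Sigma_{\min})$ satisfies $|r-g(\theta)|=O(\varepsilon)$.

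Next we invert $g$. Using $\sin^2x\approx x^2$ on this $O(h)$ window, $g(\vartheta)\approx\frac{(\phi_+-\vartheta)^2-(\phi_{\min}-\vartheta)^2}{(\phi_--\vartheta)^2-(\phi_{\min}-\vartheta)^2}$. This is a Möbius-type function of $\vartheta$, whose derivative on the window $|\vartheta-\phi_{\min}|\le h$ is of order $1/h$ and bounded away from zero. The reason is that $\phi_+$ and $\phi_-$ sit on opposite sides of $\phi_{\min}$, so moving $\vartheta$ increases one difference and decreases the other. Hence $f^{-1}$ is well defined on this window (this is how we interpret $f^{-1}$ in the algorithm) and is $O(h)$-Lipschitz. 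This gives $|\hat\theta-\theta|_\pi=O(\varepsilon h)=O(\varepsilon\kappa^{-1/2})$.

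\textbf{Step 2: control of $\hat\Delta$, $\hat b$, $\hat a$.} The numerator of $\hat\Delta$ equals $\Delta(\sin^2u_+-\sin^2u_{\min})+O(\varepsilon b)$. The denominator uses $\hat\theta$ in place of $\theta$. Since the map $\vartheta\mapsto\sin^2(\phi_+-\vartheta)-\sin^2(\phi_{\min}-\vartheta)$ has derivative $O(h)$ on the window, the denominator is perturbed by $O(h\cdot\varepsilon h)=O(\varepsilon h^2)$, which is small relative to its size $\Theta(h^2)$. Therefore $|\hat\Delta-\Delta|=O(\varepsilon\Delta)$.

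For $\hat b=\hat\Sigma_{\min}-\hat\Delta\sin^2(\phi_{\min}-\hat\theta)$, both terms are close to their true counterparts: $\hat\Sigma_{\min}$ is within $O(\varepsilon b)$ of $\Sigma_{\phi_{\min}}$, and the subtracted term is $O(\Delta h^2)=O(b)$ in size with relative error $O(\varepsilon)$. Hence $|\hat b-b|=O(\varepsilon b)$.

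Finally, $\hat a=\hat b+\hat\Delta$ gives $|\hat a-a|=O(\varepsilon b+\varepsilon\Delta)=O(\varepsilon a)$.

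\textbf{Main obstacle.} The delicate step is the quantitative inversion of $f$ in Step~1. We must establish uniform two-sided bounds on $f'$ of order $1/h$ over the relevant window, and show that the replacements $\sin^2x\approx x^2$ (exact up to a factor $1+O(h^2)$, with $h\le 24^{-1/2}$) and $\Delta h^2\approx b$ do not destroy these constants.

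Our plan is to compute $f'$ directly from $\partial_\vartheta\sin^2(\phi-\vartheta)=-\sin(2(\phi-\vartheta))$. Using the separations $3h\le|\phi_\pm-\phi_{\min}|\le4h$ and $|\phi_{\min}-\theta|\le0.16h$, we would lower bound the resulting numerator by an absolute constant times $h^3$, while upper bounding the squared denominator by a constant times $h^4$. Together these give $|f'|\gtrsim 1/h$ on the window, which is exactly the Lipschitz bound on $f^{-1}$ that Step~1 uses.
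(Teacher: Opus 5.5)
Your proposal is correct and follows essentially the paper's route: bound the empirical ratio $r$ to within $O(\varepsilon)$ of $f(\theta)$, invert $f$ using a lower bound $|f'|\gtrsim\kappa^{1/2}$ obtained from the explicit derivative, and then propagate the errors to $\hat\Delta$, $\hat b$, and $\hat a=\hat b+\hat\Delta$. Your uniform derivative bound on a window around $\phi_{\min}$, which also fixes which branch of $f^{-1}$ is meant, is slightly more careful than the paper, which bounds $f'$ only at $\theta$. One minor wording fix: in the $\hat b$ step, the subtracted term $\Delta\sin^2(\phi_{\min}-\theta)$ can be arbitrarily small, so it need not have relative error $O(\varepsilon)$; what holds, and is all you need, is that its additive error is $O(\varepsilon\Delta\kappa^{-1})=O(\varepsilon b)$.
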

    \noindent The proof is elementary. We defer it to the end of this section. Given this, we now assess $\hat\Sigma$. We first claim that for some constant $c'$:
    \bb
      (1-c'\varepsilon)\begin{pmatrix}
        \hat b & 0\\ 0 & \hat a
      \end{pmatrix}
      \le 
      R_{\hat\theta - \theta}\begin{pmatrix}
        \hat b & 0\\ 0 & \hat a
      \end{pmatrix}R^T_{\hat\theta - \theta}
      \le
      (1+c'\varepsilon)\begin{pmatrix}
        \hat b & 0\\ 0 & \hat a
      \end{pmatrix}.
    \ee
    This can be verified by brute-force calculation,
    \bb
      &~\left\|   \begin{pmatrix}
        \hat b & 0\\ 0 & \hat a
      \end{pmatrix}^{-\frac12}   R_{\hat\theta - \theta}\begin{pmatrix}
        \hat b & 0\\ 0 & \hat a
      \end{pmatrix}R^T_{\hat\theta - \theta}  \begin{pmatrix}
        \hat b & 0\\ 0 & \hat a
      \end{pmatrix}^{-\frac12} - I \right\|_\mr{op}  \\=&~ \frac{2}{-1+\sqrt{1+\frac{4\hat a\hat b}{(\hat a-\hat b)^2\sin^2(\hat\theta-\theta)}}}
       \le \frac{2}{-1 + \sqrt{1 + \frac{4(1-c\varepsilon)^2}{(1+c\varepsilon)^2c^2\varepsilon^2}}}\le c'\varepsilon.
    \ee
    Then,
    \bb
      &\left| \Sigma^{-\frac12}\hat\Sigma\Sigma^{-\frac12} - I \right| \le  c'\varepsilon R_\theta \begin{pmatrix}
        \frac{\hat b}b & 0\\
        0 & \frac{\hat a}a
      \end{pmatrix}R_\theta^T
      \le 2c'\varepsilon I,\\
      \Longrightarrow~&  (1-2c'\varepsilon)\Sigma \le\hat\Sigma \le(1+2c'\varepsilon)\Sigma.
    \ee
    Now we fix $\eta$ in the definition of $\check\Sigma$ as $\eta\coleq 2c'\varepsilon$. That is, $\check\Sigma\coleq \hat\Sigma/(1-2c'\varepsilon)$. We then have,
    \bb\label{eq:check_Sigma}
      \Sigma \le \check\Sigma \le (1 + 5c'\varepsilon)\Sigma,
    \ee
    as long as $\varepsilon$ is upper bounded by a sufficiently small constant.
    This suffices for our purpose. We will come back to this later.

    \medskip
    \noindent\textbf{Part 4: Precise estimate of $\mu$.} We now analyze the estimator for $\mu$ defined in Steps~15-16:
    \begin{equation}
    \begin{gathered}
    \hat\mu \coleq \hat\nu_{1}\begin{pmatrix}\cos\varphi_1\\\sin\varphi_1\end{pmatrix} 
    + \hat\nu_2\begin{pmatrix}-\sin\varphi_1\\\cos\varphi_1\end{pmatrix},\\
    \text{where}\quad\hat\nu_1\coleq\hat\mu_{\varphi_1}\quad\hat\nu_2\coleq\dfrac{\hat\mu_{\varphi_2} - \hat\mu_{\varphi_1}\cos(\varphi_2-\varphi_1)}{\sin(\varphi_2-\varphi_1)}.
    \end{gathered}
    \end{equation}
    Note that $\mu$ can also be decomposed as
    \begin{equation}
    \begin{gathered}
    \mu \coleq \nu_{1}\begin{pmatrix}\cos\varphi_1\\\sin\varphi_1\end{pmatrix} 
    + \nu_2\begin{pmatrix}-\sin\varphi_1\\\cos\varphi_1\end{pmatrix},\\
    \text{where}\quad \nu_1\coleq\mu_{\varphi_1}\quad\nu_2\coleq\dfrac{\mu_{\varphi_2} - \mu_{\varphi_1}\cos(\varphi_2-\varphi_1)}{\sin(\varphi_2-\varphi_1)}.
    \end{gathered}
    \end{equation}
    One can verify this by checking $\bm e_{\varphi_1}\cdot\mu = \mu_{\varphi_1}$ and $\bm e_{\varphi_2}\cdot\mu = \mu_{{\varphi_2}}$, and note that $\bm e_{\varphi_1}$ and $\bm e_{\varphi_2}$ are linearly independent. By definition of $\varphi_1$ and $\varphi_2$, the fact that $|\hat\theta-\theta|\le c\varepsilon\kappa^{-1/2}$, and our choice of $K=C_1 E$, we have with high probability that
    \bb
      |\varphi_1 - \theta|_\pi \le \kappa^{-1/2},\quad |\varphi_2 - \theta - \pi/2|_\pi \le \kappa^{-1/2}.
    \ee
    Thus, recall the concentration results from Lemma~\ref{le:homodyne-concen}:
    \bb
    |\hat\nu_1-\nu_1| &\le  \varepsilon\sqrt{\Sigma_{\varphi_1}} \le \varepsilon \sqrt{b + a\kappa^{-1}} =\varepsilon\sqrt{2b}.\\
    |\hat\nu_2 - \nu_2| &\leqt{(i)} \frac{\varepsilon\sqrt{a} + 0.43\varepsilon\sqrt{2b}}{0.9} \le 2\varepsilon\sqrt{a}.
    \ee
    Here (i) uses $|\varphi_2-\varphi_1|_\pi \ge \frac{\pi}{2} - 2\kappa^{-1/2}\ge 1.12$ to bound the sine and cosine factors.
    Combining all these, we have that
    \bb
    (\hat\mu -\mu)^T\Sigma^{-1}(\hat\mu -\mu) &= b^{-1} ((\hat\nu_1 -\nu_1)\bm e_{\varphi_1}\cdot\bm e_{\theta} + (\hat\nu_2 -\nu_2)\bm e_{\varphi_1}^\perp\cdot\bm e_{\theta})^2  \\&\quad + a^{-1} ((\hat\nu_1 -\nu_1)\bm e_{\varphi_1}\cdot\bm e_{\theta}^\perp + (\hat\nu_2 -\nu_2)\bm e_{\varphi_1}^\perp\cdot\bm e_{\theta}^\perp)^2 
    \\&\le 2b^{-1}(2\varepsilon^2b + 4\varepsilon^2a\kappa^{-1}) + 2a^{-1}(2\varepsilon^2b\kappa^{-1} + 4\varepsilon^2a)
    \\&\le 24\varepsilon^2.
    \ee

    \noindent\textbf{Part 5: Put everything together.} 
    Combining all the previous parts and use Lemma~\ref{le:perturbation}, we obtain the following with high probability
    \bb
      D_\mr{tr}(\hat\rho(\hat\mu,\check\Sigma),\rho(\mu,\Sigma)) &\le \frac{1}{\sqrt 2}\|\Sigma^{-\frac12}(\hat\mu - \mu)\|_2 + \frac{1+\sqrt3}{8}\Tr\left((\Sigma^{-1} + \check\Sigma^{-1})|\check\Sigma - \Sigma|\right)\\
      &\le 2\sqrt{3}\varepsilon + \frac{1+\sqrt3}{8}\Tr\left(2\Sigma^{-1}\cdot 5c'\varepsilon\Sigma\right)\\  
      &\le c'''\varepsilon,
    \ee
    for some constant $c'''>0$. The second line is by Eq.~\eqref{eq:check_Sigma}. Rescaling $\varepsilon$ by a constant then completes the proof for Theorem~\ref{th:non-ada-upper}.
    \end{proof}

    \noindent We finalize the proof by proving Lemma~\ref{le:estimators} in below:
    \begin{proof}[Proof of Lemma~\ref{le:estimators}]
    We now analyze the error for these estimators. 
    Taking derivative of $f$,
    \bb\label{eq:error-f-deriv}
      f'(\theta) &= -\frac{2\sin(\phi_+ - \phi_{\min})\sin(\phi_+ -\phi_-)}{\sin^2(2\theta - \phi_{\min}-\phi_-)\sin(\phi_{\min}-\phi_-)}
      \\&\leqt{(i)} - \left(\frac2\pi\right)^2 \frac{2|\phi_+-\phi_{\min}|_\pi|\phi_+-\phi_-|_\pi}{|2\theta-\phi_{\min}-\phi_-|_\pi^2|\phi_{\min}-\phi_-|_\pi}
      \\& \leqt{(ii)} - \left(\frac2\pi\right)^2 \frac{2 \cdot 3\hat\kappa^{-1/2} \cdot 6\hat\kappa^{-1/2}}{ (4.32)^2\hat\kappa^{-1} \cdot 4\hat\kappa^{-1/2}}
      \\&\le -0.19 \kappa^{1/2}.
    \ee
    Here (i) uses $\sin x \ge \frac2{\pi} x$ for all $x\in[0,\pi/2]$. (ii) uses $|2\theta-\phi_\mr{min}-\phi_-|_\pi \le |\phi_\mr{min}-\phi_-| + 2|\theta-\phi_\mr{min}|$.
    Meanwhile, denote $\delta_\pm\coleq \Sigma_{\phi_\pm}-\Sigma_{\phi_{\min}}$ and $\hat\delta_\pm\coleq \hat\Sigma_{\phi_\pm} - \hat\Sigma_{\min}$, we have
    \bb
    &\delta_\pm = (a-b)\left(\sin^2(\phi_\pm - \theta) - \sin^2(\phi_{\min} - \theta)\right) \in [2.85 b,25b].\\
    &|\hat\delta_\pm -\delta_\pm| \le |\hat\Sigma_{\phi_\pm} - \Sigma_{\phi_\pm}| + |\hat\Sigma_{\min} - \Sigma_{\phi_{\min}}| \le 27\varepsilon b.
    \ee
    Thus, for sufficiently small $\varepsilon$,
    \bb\label{eq:error-delta-ratio}
      \left|\frac{\hat\delta_+}{\hat\delta_-} - \frac{\delta_+}{\delta_-}\right| &= \left|\frac{\hat\delta_+\delta_- - \delta_+\hat\delta_-}{\delta_-\hat\delta_-}\right| \\
      &= \left|\frac{\hat\delta_+\delta_- - \delta_+\delta_- + \delta_+\delta_- -\delta_+\hat\delta_- }{\delta_-\hat\delta_-}\right| \\
      &\le \frac{\delta_-|\hat\delta_+ - \delta_+|}{|\delta_-\hat\delta_-|} + \frac{\delta_+|\hat\delta_- - \delta_-|}{|\delta_-\hat\delta_-|}
      \\&\le 110 \varepsilon.
    \ee
    Combining Eqs.~\eqref{eq:error-f-deriv} and~\eqref{eq:error-delta-ratio}, we have
    \bb
    \left|\hat\theta-\theta\right|_\pi = \left|f^{-1}\!\left(\frac{\hat\delta_+}{\hat\delta_-}\right) - f^{-1}\!\left(\frac{\delta_+}{\delta_-}\right)\right| \le 600 \varepsilon\kappa^{-1/2}.
    \ee
    Next, we analyze estimate of $\Delta\coleq a - b$. Note that:
    \begin{equation}
      \begin{gathered}
        3\kappa^{-1} \le \sin^2(\phi_+-\theta) - \sin^2(\phi_{\min}-\theta) \le 25\kappa^{-1}\\
        | \sin^2(\phi_+-\theta)-\sin^2(\phi_+-\hat\theta)|\le 63\kappa^{-1/2}|\theta-\hat\theta| \le 37800\varepsilon\kappa^{-1}.\\
        | \sin^2(\phi_{\min}-\theta)-\sin^2(\phi_{\min}-\hat\theta)|\le 6\kappa^{-1/2}|\theta-\hat\theta| \le 3600\varepsilon\kappa^{-1}.
      \end{gathered}
    \end{equation}
    Therefore, using similar steps as in Eq.~\eqref{eq:error-delta-ratio}, we have
    \bb
      |\hat\Delta - \Delta| & \le \frac{27\varepsilon b}{(3-c\varepsilon)\kappa^{-1}} + \frac{25b \cdot c\varepsilon\kappa^{-1}}{(3-c\varepsilon)\kappa^{-1}\cdot 3\kappa^{-1}} \le c' \varepsilon a.
    \ee
    Here $c,c'$ is some positive constant, and we need to assume $\varepsilon<c'$.
  
    \medskip

    For $\hat b$, we have
    \bb
      |\hat b - b| &\le |\hat\Sigma_{\min} - \Sigma_{\phi_{\min}}| + |\hat\Delta - \Delta|\sin^2(\phi_{\min}-\hat\theta) + \Delta\left|\sin^2(\phi_{\min}-\hat\theta) - \sin^2(\phi_{\min}-\theta)\right|\\
      &\le 1.02\varepsilon b + c'\varepsilon a \cdot 4\kappa^{-1} + a \cdot c\varepsilon \kappa^{-1}
      \\&\le c''\varepsilon b.
    \ee
    where $c''$ is some positive constant. Finally, for $\hat a$, 
    \bb
      |\hat a - a| \le |\hat b - b| + |\hat\Delta - \Delta| \le (c'+c'')\varepsilon a.
    \ee
  \end{proof}

\phantomsection
\addcontentsline{toc}{section}{References}
\bibliography{ref.bib}
\bibliographystyle{alpha}

\end{document}